\documentclass{LMCS}

\usepackage{amsmath,amssymb,amsbsy,latexsym,qsymbols,stmaryrd,color}
\usepackage{mathpartir}
\usepackage{enumerate,hyperref}

\theoremstyle{plain}\newtheorem{claim}[thm]{Claim}

\newcommand{\reptext}{\ensuremath{\mathsf{Rep}}}
\newcommand{\replone}[1]{\ensuremath{\reptext_{#1}}}
\newcommand{\repl}[2]{\replone{#1}\fillhole{#2}}

\newcommand{\capa}{\ensuremath{\mathsf{cap}}}

\newcommand{\rcap}{\arr{\capa}}

\newcommand{\Rcapar}[1]{\Ar{\langle\sf{#1}\rangle}}
\newcommand{\Rcap}{\Rcapar{\capa}}

\newcommand{\ds}{\mathsf{sd}}
\newcommand{\dd}{\mathsf{dd}}
\newcommand{\C}{\mbox{$\mathcal{C}$}}

\newcommand{\F}{\mathcal{F}}

\newcommand{\Pbb}{\MAIFsyn}

\newcommand{\mactext}[1]{\mathtt{#1}}
\newcommand{\cred}{\leadsto}
\newcommand{\Cred}{\leadsto^+}

\newcommand{\frozen}[1]{\ensuremath{\mathsf{froz}_N(#1)}}

\newcommand{\ttrue}{\mbox{\textsl{tt}}}
\newcommand{\ffalse}{\mbox{\textsl{ff}}}
\newcommand{\transTuring}{\mbox{$\rightarrowtail\hspace{-.3cm}\rightarrowtail$}}

% Constants

\newcommand{\true}{\mbox{\textsf{true}}}

% Pi syntax

\newcommand{\nil}{\mbox{$\mathbf{0}$}}

\newcommand{\fn}[1]{\ensuremath{\mathrm{fn}(#1)}}

\newcommand{\hole}{\mbox{$[\,]$}}

%\newcommand{\at}[1]{\mbox{\texttt{@}\,$#1$}}

% Pi semantics

%%\newcommand{\INP}{\mbox{\rouge{(\textsc{INP})}}}
%%\newcommand{\OUT}{\mbox{\rouge{(\textsc{OUT})}}}
%%\newcommand{\COMM}{\mbox{\rouge{(\textsc{COMM})}}}
%%\newcommand{\PAR}{\mbox{\rouge{(\textsc{PAR})}}}
%%\newcommand{\SUM}{\mbox{\rouge{(\textsc{SUM})}}}
%%\newcommand{\RES}{\mbox{\rouge{(\textsc{RES})}}}
%%\newcommand{\OPEN}{\mbox{\rouge{(\textsc{OPEN})}}}
%%\newcommand{\CLOSE}{\mbox{\rouge{(\textsc{CLOSE)}}}}
%%\newcommand{\BANG}{\mbox{\rouge{(\textsc{BANG)}}}}
% for the fusion calculus
%%\newcommand{\SCOPE}{\mbox{\rouge{(\textsc{SCOPE)}}}}

% Pi types

% \newcommand{\tin}[1]{\mbox{$#1^{\mbox{\,\textsf{$i$}}}$}}
% \newcommand{\tou}[1]{\mbox{$#1^{\mbox{\,\textsf{$o$}}}$}}

% Math notations

\def\eqdef{\stackrel{\mathrm{{{def}}}}{=}}
\newcommand{\Rar}{\mbox{$\Longrightarrow$}}
\newcommand{\rar}{\mbox{$\longrightarrow$}}
\newcommand{\RR}{\ensuremath{\mathcal{R}}}

% Mobile Ambients

\newcommand{\inamb}[1]{\ensuremath{\mathsf{in~}#1}}
\newcommand{\outamb}[1]{\ensuremath{\mathsf{out~}#1}}
\newcommand{\openamb}[1]{\ensuremath{\mathsf{open~}#1}}%{\mbox{\textsf{open }$#1$}}
\newcommand{\amb}[2]{\ensuremath{#1[#2]}}
\newcommand{\mess}[1]{\ensuremath{\langle#1\rangle}}

%% Modal Logics for Mobile Ambients

\newcommand{\sat}{\mbox{$\models$}}
\newcommand{\eqL}{\mbox{$=_L$}}

\newcommand{\AAA}{\mbox{$\mathcal{A}$}}
\newcommand{\A}{\mbox{$\mathcal{A}$}}
\newcommand{\BBB}{\mbox{$\mathcal{B}$}}
\newcommand{\B}{\mbox{$\mathcal{B}$}}

\newcommand{\ltrue}{\mbox{\textbf{$\top$}}}

\newcommand{\rtr}{\ensuremath{\triangleright}}

\newcommand{\sometime}{\mbox{$\Diamond$}}

\newcommand{\reveal}{\mbox{$\circledR$}}
\newcommand{\hide}{\mbox{$\oslash$}}

% \def\cyr{\fontencoding{OT2}\fontfamily{wncyr}\selectfont}

% \newcommand{\jamie}{\mbox{\fontencoding{OT2}\fontfamily{wncyss}\selectfont
%     I}}

%DAVIDE -- probably replace all these

\newcommand{\openhole}{\{\!|}
\newcommand{\closehole}{|\!\}}
\newcommand{\fillhole}[1]{\ensuremath{\openhole#1\closehole}}
\renewcommand{\hole}{\fillhole{\,}}
\newcommand{\QQ}{\mbox{${\mathcal Q}$}}
\newcommand{\length}[1]{\ensuremath{\mathsf{length}(#1)}}
\newcommand{\bigind}{\sim_{\rm{ind}}}

\newcommand{\thereis}[2]{\exists \; #1 \, . \, #2} 

\def\defiDS{\stackrel{\mathrm{{{def}}}}{=}}
                                        % for definitions

\newcommand{\satDS}[2]{#1 \models #2 }

\newcommand{\orr}{ \vee } 
 
\newcommand{\zero}{{\boldmath 0}} 
\newcommand{\all}[2]{\forall \; #1 \, . \, #2} 

\newcommand{\at}[2]{ #1 @ #2} 

\newcommand{\limp}{\triangleright}  % linear implication
 % dual

\newcommand{\myneg}{\neg\,} 

\newcommand{\tkpPOPL}{3pt}  % for spacing on rows of arrays
  % for spacing on rows of arrays

\def\trans#1{\mbox{{\sc{\tt #1}}}}   % for the names of transition rules

\newcommand{\Mybar}
{
{%\center
\vskip .3cm 
 \rule{\hsize}{0.3mm} \vskip .7cm 
}
}

\newcommand{\shortaxiomC}[2]{
 \displaystyle{ \over #1} \; \trans{#2} }

\newcommand{\shortinfruleN}[3]{  
\displaystyle{#1 \over #2} {\; \trans{#3}}}

\def\sub#1#2{\{\raisebox{.5ex}{\small$#1$}\! / \!\mbox{\small$#2$}\}}
\newcommand{\andalso}{\quad\quad}  % spacing in inference rules

\def\abs#1#2{(#1)\: #2}                     % abstractions

\newcommand{\msg}[1]{\langle{#1}\rangle } %messages

\newcommand{\ina}[1]{{\mathsf{in}} \, #1}
\newcommand{\open}[1]{{\mathsf{open}} \, #1}
\newcommand{\tkp}{2pt}  % for spacing on rows of arrays

\newcommand{\mboxB}[1]{\mbox{\it (#1)}}
\def\midd{\; \; \mbox{{$\mid$}}\;\;} % separation symbol in
\newcommand{\equivE}{\mathrel{\equiv_{\rm E}}}  % extended structural congruence

\newcommand{\MA}{\mbox{MA}}
\newcommand{\MAIF}{\mbox{MA$_{\rm{IF}}$}}
\newcommand{\MAIFsyn}{\mbox{MA$^{\rm{s}}_{\rm{IF}}$}}

\newcommand{\IF}{image-finiteness}

\def\arr#1{\:\stackrel{#1}{\mbox{\rightarrowfill}}\:}
                                     %strong  labelled transitions

\newcommand{\Ar}[1]{\:\stackrel{{#1}}{\Longrightarrow} \:}
                                    %weak  labelled transitions (**)

% stattering relations
  
\newcommand{\Stat}[2]{{"={^{(#1 ,  #2)^\star}}!>"}}  
\newcommand{\StatOI}[1]{\Stat{\outamb{n}}{\inamb n}}
\newcommand{\StatIO}[1]{\Stat{\inamb n}{\outamb n}}

\def\reff#1{(\ref{#1})}       %references between brackets

\newcommand{\bisMOD}{\mathrel{\simeq_{\mathrm{int}}}} % another
 % another
\newcommand{\intbis}{\ensuremath{\bisMOD}}

\mathcode`\!="4021       % `!' as prefix operator
\mathcode`\|="326A       % `|' as relation operator
\mathcode`\.="602E       % `dot' of prefixes

%%% DANIEL, TO INSURE COMPATIBILITY WITH DAVIDE'S MACROS

\renewcommand{\true}{\ltrue}

%\newcommand{\repamb}[2]{! \amb {#1}{#2}} 

%% Macros utilisée pour le rapport de stage
%% complémentaires aux macros de DH (cf macrospi.tex)

% Pour les matheus
% \newtheorem{definition}{Définition}
% \newtheorem{lemma}{Lemme}
% \newtheorem{theorem}{Théorème}
\newcommand{\N}{\mathbb{N}}

%% Compléments de relations
%% \newcommand{\simint}{\mbox{$\,\simeq_{\mbox{\begin{footnotesize}int\end{footnotesize}}}\,$}}
%% \newcommand{\simind}[1]{\mbox{$\,\simeq_{\mbox{\begin{footnotesize}int\end{footnotesize}}}^{(#1)}\,$}}
%% \newcommand{\stuttout}[1]{\mbox{$\stackrel{\begin{footnotesize}(\outamb{n}.\inamb{n})^*\end{footnotesize}}{\Rightarrow}$}}
%% \newcommand{\leftstuttout}[1]{\mbox{$\stackrel{\begin{footnotesize}(\outamb{n}.\inamb{n})^*\end{footnotesize}}{\Leftarrow}$}}

%% Compléments pour la logique
\newcommand{\non}{\lnot}

\newcommand{\may}[1]{\langle\langle #1 \rangle\rangle}
\newcommand{\must}[1]{[\![ #1 ]\!]}
\newcommand{\Fmeta}[1]{\may{#1}}
\newcommand{\Ftame}[1]{\must{#1}}
%% RAJOUTER LA DOUBLE FLECHE A TOUTES LES CMDES SUIVANTES

%% Latex veut pas de tilde, voir si ça peut s'arranger
%%\newcommand{\tild}[1]{\tilde{#1}}
%%\newcommand{\tild}[1]{\bar{#1}}

\newcommand{\etarew}{\ensuremath{\longrightarrow_\eta}}

\usepackage{graphics} %needed for the NEW quantifier

\newcommand\new[0]{\reflectbox{\ensuremath{\mathsf{N}}}} 
\newcommand\jamie\new

%\newcommand\New[1]{\new #1.} 
%A macro, as in \New{a}\phi

%------------------------------------------------------------------------- 
% macros from Davide's paper

\newif\ifpopl \poplfalse   % per versione finale popl
\newif\iflong \longtrue   % to hide things useful for the full paper(
\input{commonmacro.sty}
\input{davide.sty}
\input{francesca.sty}
\input{newmacro.sty}

\newcommand{\MAIFss}{MA$_{{\rm IF}}^{{\rm s,s}}$}

\newcommand{\etarewH}{\ensuremath{\longrightarrow_{\eta\mathtt{h}}}}

%------------------------------------------------------------------------- 

\def\doi{4 (3:4) 2008}
\lmcsheading%
{\doi}
{1--44}
{}
{}
{Nov.~\phantom{0}8, 2007}
{Sep.~\phantom{0}4, 2008}
{}   

\begin{document}

\title{Separability in the Ambient Logic\rsuper*
}

\author[D.~Hirschkof]{Daniel Hirschkoff\rsuper a}
\address{{\lsuper a}ENS Lyon, Universit\'e de Lyon, CNRS, INRIA -- France}
\email{Daniel.Hirschkoff@ens-lyon.fr}
\thanks{{\lsuper a}Work supported by the french projects ACI GEOCAL and ANR CHoCo.}
\author[{\'E}.~ Lozes]{{\'E}tienne Lozes\rsuper b}
\address{{\lsuper b}LSV, ENS Cachan, CNRS -- France}
\email{lozes@lsv.ens-cachan.fr}
\author[D.~Sangiorgi]{Davide Sangiorgi\rsuper c}
\address{{\lsuper c}Universit{\`a} di Bologna -- Italy}
\email{Davide.Sangiorgi@cs.unibo.it}
\thanks{{\lsuper c}Work supported by european project Sensoria, italian MIUR
  Project n.\ 2005015785, "Logical Foundations of Distributed Systems
  and Mobile Code".}

\keywords{Process algebra, modal logic, Mobile Ambients, spatial logic}
\subjclass{F.3.2, F.4.1}

\titlecomment{{\lsuper*}This work is a revised and extended version of parts
of~\cite{Sangiorgi::ExtInt::01} and~\cite{Sedal} (precisely, those
parts that deal with issues related to separability).
}

\begin{abstract}
\noindent   The \emph{Ambient Logic} (AL) has been proposed for
  expressing properties of process mobility in the calculus of Mobile
  Ambients (MA), and as a basis for query languages on semistructured
  data.

  We study some basic questions concerning the discriminating power of
  AL, focusing on the equivalence on processes induced by the logic
  ($\eqL$).  As underlying calculi besides MA we consider a
  subcalculus in which an image-finiteness condition holds and that we
  prove to be Turing complete.  Synchronous variants of these calculi
  are studied as well.

  In these calculi, we provide two operational characterisations of
  $\eqL$: a coinductive one (as a form of bisimilarity) and an
  inductive one (based on structual properties of processes).  After
  showing $\eqL$ to be stricly finer than barbed congruence, we
  establish axiomatisations of $\eqL$ on the subcalculus of MA (both
  the asynchronous and the synchronous version), enabling us to relate
  $\eqL$ to structural congruence.  We also present some
  (un)decidability results that are related to the above separation
  properties for AL: the undecidability of $\eqL$ on MA and its
  decidability on the subcalculus.
\end{abstract}

\maketitle

\section{Introduction}

% \DSs{in the intro: we never mention the inductive characterisation
%   (mentioned in abstract and conclusion, right? if so, we should add a
% few words}

% \DSs{I also wonder if, in abstract and intro, we should say something
%   about the formulas derived in 4.3. }

% \daniel{I would say that although these formulas are the essence of
%   the completeness proof (since they allow us to express, roughly, the
%   fact that finitely many logical equivalence classes can be reached),
%   they are of too a technical nature for them to be mentioned in the
%   abstract.  I tried to allude to them (and to the inductive
%   characterisation) below, in the intro.}

This paper is devoted to the study of the \emph{Ambient
Logic}~\cite{CaGo00L} (AL), a modal logic for expressing properties of
Mobile Ambients~\cite{CaGo98} (MA) processes. The model of Mobile
Ambients is based on the notion of locality (an ambient is a named
locality), and interaction in MA appears as movement of localities.
Localities may be nested, as in $a[ P ~|~ b[Q]~|~ c[R]]$, which
describes an ambient $a$ containing a process $P$ as well as two
sublocalities named $b$ and $c$. 
% Along computation, an ambient may
% enter or exit another ambient, and an ambient boundary can also be
% dissolved.\MA\ has a replication operator to make
% a process persistent, that is, to make infinite copies of the process
% available.
%
%%% COPY OF PART OF LICS INTRO %%%

An ambient can be thought of as a labelled tree. The sibling relation
on subtrees represents spatial contiguity; the subtree relation
represents spatial nesting.  A label may represent an ambient name or
a capability; moreover, a replication tag on labels indicates the
resources that are persistent.
% \footnote{We are using a tree
%   representation different from that of Cardelli and Gordon, but more convenient to
%   our purposes.}
The trees are unordered: the order of the children
of a node is not important.  
%{
%%As an example, the process $ P \defiDS !
%%\amb a {\inamb c} | \openamb a . \amb b \nil$ is represented by the
%%tree:\\[.1em]
%%\mbox{}\hfill$
%%  \begin{array}[t]{ccc}
%%    &&\\
%%& {}^{!a}\!\!\swarrow\,\,  \searrow^{\openamb a} & \\[\tkpPOPL]
%%&\hskip -1cm  \mbox{{\scriptsize $\inamb c 
%%$}} \downarrow\hskip .3cm   \mbox{{\scriptsize $ b 
%%$}} \downarrow 
%%\end{array}$\hfill\mbox{}\\[.6em]
As an example, the process $P \defiDS !\amb a {\inamb c} | \openamb a
. \amb b \nil$ can be thought of as a tree with $\openamb a . \amb b
\nil$ on the roots node and $\inamb c$ on a child node labeled with a.
The  replication $!a$ indicates that the resource 
$ \amb a {\inamb c} $ is persistent: unboundedly many such ambients 
 can be spawned. By contrast, $\openamb a$ is
ephemeral: it can open only one ambient.

Syntactically, each tree is finite. Semantically, however, due to
replications, a tree is an infinite object.  As a consequence, the
temporal developments of a tree can be quite rich.  The process $P$
above (we freely switch between processes and their tree
representation) has only one reduction, to $\inamb c | ! \amb a
{\inamb c} | \amb b \nil$.  However, the process $! \amb a {\inamb c}
|! \openamb a . \amb b \nil$ can evolve into any process of the form
\begin{mathpar}
  { \inamb c | \ldots | \inamb c | \amb b \nil | \ldots | \amb b \nil
    | ! \amb a {\inamb c} | ! \openamb a . \amb b \nil\,.}
\end{mathpar}
In general, a tree may have an infinite temporal branching, that is,
it can evolve into an infinite number of trees, possibly quite
different from each other (for instance, pairwise behaviourally
unrelated).  Technically, this means that the trees are not
\emph{image-finite}, where image-finite indicates a finiteness on the
temporal branching of a process (we will come back to the definition
of \IF\ later).
 
Although the MA calculus often includes name restriction, $(\nu n)P$,
reminiscent of the pi-calculus, we will omit this construction (unless
we mention it explicitly), and will refer to public MA, or simply MA,
for the calculus without name restriction.

 In summary, \MA\ is a calculus of dynamically-evolving unordered
 edge-labelled trees. 
 %
%  \etis{J'ajoute ici une phrase pour préciser qu'on est public.}
   AL is a logic for reasoning on
 such trees.  The actual definition of satisfaction of the
 formulas is given on \MA\ processes quotiented by a relation of
 \emph{structural congruence},  $\equiv$, 
which equates processes with
 the same tree representation. (This relation is similar to
 Milner's structural congruence for the $\pi$-calculus~\cite{Mil99}.)
 
 AL has also been advocated as a foundation of query
 languages for semistructured
 data~\cite{CardelliSemiStructuredData}. 
 Here, the laws of the logic are used to describe query rewriting
 rules and query optimisations.  This line of work exploits the
 similarities between dynamically-evolving edge-labelled trees, 
 underlying the ambient computational model,  
 and standard models of semistructured data.

 AL has a   connective that talks about
\emph{time}, that is, how   processes  can evolve.
The  formula $\diamond \AAA$ is satisfied by those processes 
with a   future in which  $\AAA$ holds. 
The logic has also connectives that talk about \emph{space}, that is,
the shape of the edge-labelled trees that describe process
distributions.
 the formula $\amb n \AAA$ is satisfied by  
ambients  named $n$ 
 whose content  satisfies $\AAA$ (read on
trees: $\amb n \AAA$ is satisfied by the trees whose root has just a single edge
$n$ leading to a subtree that satisfies $\AAA$); 
 the formula $\AAA_1 | \AAA_2$ is satisfied by the processes 
that can be decomposed  into  parallel components 
$P_1$ and $P_2$ where each $P_i$ satisfies   $\AAA_i$ (read on
trees: $\AAA_1 | \AAA_2$ is satisfied by the trees that are the
juxtaposition of  
two trees that respectively satisfy the formulas $\AAA_1$ and $\AAA_2$);
the formula $\zero$ is satisfied by  the terminated   process $\nil$ (on
trees: $\zero $ is satisfied by the  tree consisting of  just the root node).

AL is quite different from standard modal logics.  First, the latter logics
do not talk about space. Secondly, they have more precise temporal
connectives.  The only temporal connective of AL talks about the
many-step evolution of a system on its own. In standard modal logics,
by contrast, the temporal connectives also talk about the potential
interactions between a process and its environment. For instance, in
the Hennessy-Milner logic~\cite{HeMi85}, the temporal modality
$\mess{\mu}.\AAA$ is satisfied by the processes that can perform the
action $\mu $ and become a process that satisfies $\AAA$.  The action
$\mu $ can be a reduction, but also an input or an output.

%\etij{End of part extracted from LICS}

\bigskip

In this paper  we study the equivalence between
MA processes induced by the logic, written $\eqL$: we write $P\eqL Q$
if $P$ and $Q$ satisfy exactly the same formulas.
Our main goal is to understand how much the logic discriminates
between processes, i.e., to study the separating power of \eqL. 
We show that \eqL{} is a rather fine-grained
relation. 
Related to the problem of the equivalence 
 induced by the logic are  issues of decidability, that we also
 investigate. 
%  and show that logical
% equivalence can in some cases be undecidable.

\medskip

%%Because it features clauses to observe the spatial distribution of
%%agents, \intbis{} is also related to location-sensitive
%%noninterleaving semantics~\cite{castellani:localities:handbook}. To
%%our knowledge, however, no equivalence of this kind has been put
%%forward in relation with a modal logic.
%
%%Another work worth mentioning is the analysis of relevance logic using
%%process calculi in~\cite{dam:lics88}; however Dam explicitely
%%avoids giving sense to formulas that talk about the structure of
%%processes.

The central technical device we rely on to analyse \eqL{} is
a characterisation  as a form of bisimilarity, that we call 
\emph{intensional bisimilarity} and  write \intbis. The bisimulation
game defining \intbis{} takes into account the interaction
possibilities of agents, and also includes clauses to observe the
spatial structure of processes, corresponding to the logical
connectives of emptyness, spatial conjunction, and ambient.
Intensional bisimilarity is to AL what standard bisimilarity is to
Hennessy-Milner logic. In particular, \intbis{} can be used to assess
separability and expressiveness properties of the modal logic it
captures. 
%%%
For instance,
the definition of \intbis{}
reveals that, 
%We use \intbis{} to characterise \eqL. The definition of \intbis{}
%reveals interesting properties of the behaviour of processes as
%expressed by the logic. For instance, 
in some cases, logical
observations are unable to distinguish between an agent
entering an ambient, and the same agent going in and out of this
ambient before finally entering it. We call this phenomenon
\emph{stuttering}.  Stuttering can be seen as the spatial counterpart
of the following `eta law' for the asynchronous
$\pi$-calculus~\cite{SW01}:
$$
a(x).\big(\,\overline{a}\langle x\rangle\,|\,a(x).P\,\big)
~=~ a(x).P
$$
\noindent (a similar equality also holds for communication in
MA).  Indeed, stuttering disappears when the asynchronous movements are
replaced by synchronous ones, as is the case, e.g., in the model of
Safe Ambients~\cite{LeSaTOPLAS}.

\medskip

% \DSj{below: it is too technical, i think. we have not introduced yet
%   the operators. We have to make it clear what is the contribution:
%   what is the result? why it is unusual? why it is important and
%   original?
% In the Popl introduction i wrote the following, can it be re-used?
% ``We are
% not aware of other axiomatisations of semantic equivalences 
% (defined by operational, denotational, logical, or other means)
%  in
% higher-order process calculi. 
% As counterexamples to the inclusion 
%   $  \wbc \subseteq \eqL  $, we have found three axiom schemata. 
% We do not know whether  they are complete, that is, they exactly
% describe the difference between the two relations on finite MA.''
%  } 

Something worth stressing is  that our characterisation results
are established  on the full, public, MA calculus in which, as mentioned
earlier, terms need not be image-finite, and with respect to a
finitary  logic.   We are not aware of other results of this kind:
characterisation results for a bisimilarity with respect to a modal
logic in the literature (precisely, the completeness part of the characterisations)  rely either 
on an  image-finiteness
hypothesis for the terms of the language, or  on the presence of some
infinitary constructs (such as infinitary conjunctions) in the
syntax of the logic.
%Intensional bisimilarity captures exactly logical equivalence
%($\eqL=\intbis$).  Rather remarkably, in our setting, it is possible
%to  completeness on the full MA calculus, despite the fact
%that this model is able to express complex infinite behaviours. This
%is to our knowledge the only result of this kind, since standard
%completeness results for bisimilarity exploit image-finiteness
%hypotheses, or rely on the presence of infinitary conjunctions in the
%logic.
%
 Technically, the proof of our result is based on the definition of
some complex modal formulas.  To make it easier to understand our approach, we
 first present the main structure of the proof in a subcalculus without
 infinite behaviours; 
%the  finite case in  Subsection~\ref{subsec:soundcomplete}; 
we then move to the full public %infinite
MA calculus to show how replication is handled.
%in  Section~\ref{s:inductive}.
%
%\DHb
%
Our proof exploits two main technical notions. The first idea is to
introduce an induction principle on processes, that allows us to
provide an inductive characterisation of \intbis. We then introduce
modal formulas whose role is, intuitively, to establish that only finitely
many terms have to be taken into consideration when exploring the
outcomes of a given process.
%
%\DHe

\medskip

% Exploiting \intbis, we also establish % important and rather original results on
% an  axiomatization of logical equivalence for 
%  finite MA. 
%  We are not aware of other
% axiomatisations of semantic equivalences (defined by operational,
% denotational, logical, or other means) in higher-order process
% calculi. }
% \DSa{above: the
%   axiomatisations are only for the finite MA? is not it for the
%   turing-complete subcalculus? we could say that the characterisation
%   does not hold in the full MA because, for instance, of stuttering}

% The axiomatization, in turn, allows us to relate logical equivalence with two
% important equivalences for processes.

Exploiting \intbis, we relate logical equivalence with two important
equivalences for processes.  The first equivalence is the standard
extensional equivalence, namely barbed congruence ($\approx$).  Here
the main result is that logical equivalence is strictly finer. As
counterexamples to the inclusion $\wbc \subseteq \eqL $, we have found
three axiom schemata.  We do not know whether they are complete, that
is, if they exactly describe the difference between the two relations
on MA.

We then compare logical equivalence with a second relation, 
namely structural
congruence ($\equiv$),
 an
intensional and very discriminating equivalence.  
We establish an axiomatisation of logical
equivalence on a rather broad class of processes, called \MAIFsyn{}
(defined in~\ref{subsec:extint}). The definition of \MAIFsyn{} relies
on an image-finiteness constraint that is lighter than the usual
notion of image-finiteness in process calculi, 
because only certain subterms of processes
are required to give rise to finitely many reducts.  This subcalculus
is shown to be Turing complete in Section~\ref{s:undecidability}.
We are not aware of other axiomatisations of semantic equivalences
(defined by operational, denotational, logical, or other means) in
higher-order process calculi. Our result says that on \MAIFsyn, \eqL{}
almost exactly coincides with structural congruence, the only
difference being an `eta law' for communication of the form 
mentioned above. 
This axiomatisation does not hold in the  full MA, for instance 
 because of
the phenomenon of stuttering.

Communication in MA is asynchronous, in the sense that outputs have no
continuation. We show in~\ref{s:synchronous} that if asynchronous
communication is dropped in favour of synchronous communication, then
logical equivalence exactly coincides with structural congruence on
the synchronous version of \MAIFsyn. 

The comparisons reveal the intensional flavour of AL.  Although the
logic has operators for looking into the parallel structure of
processes, the intensionality of the logic was far from immediate,
essentially for two reasons. The first reason is that not all
syntactical constructions of MA are reflected in the logic, which
entirely lacks operators for capabilities, communications, and
replication. The second reason is that we adopt a weak interpretation
for reductions (i.e., we abstract from actions internal to the
processes); this makes it possible to handle infinite processes, but
at the same time entails a loss of precision when describing
properties of processes.  In such a setting it is therefore surprising
that $\eqL$ is actually so close to $ \equiv$, also because $\equiv $
is a very strong relation -- a few axioms are the only difference with
syntactic identity.

%  the intensional one, defined as
% structural congruence ($\equiv$), and the extensional one, defined as
% barbed congruence ($\approx$).  It turns out that our results reveal
% the intensionality of the ambient logic, by contrast with the
% Hennessy-Milner logic, where logical equivalence coincides with
% extensional equivalence.

%\DSa{why ``a weak interpretation for reductions'' makes it possible to
%handle infinite processes?}
%\daniel{there is no infinitary construction in the logic, but the weak
%  interpretation of the sometimes operator allows one to talk about
%  processes exhibiting infinite behaviours}

\medskip

%% (UN)DECIDABILITY

Being very close to a syntactical description of processes,
the relation of structural congruence is decidable. 
As a consequence, in the subcalculus of MA where  we show that 
\eqL{} coincides with $\equiv$, we can also derive  
decidability for  \eqL{}.
However, the frontier with undecidability for \eqL{} is very subtle: we establish
undecidability of \eqL{} in the full calculus by encoding the halting
problem of a Turing machine. This boils down in our setting to specifing
Turing machines in Mobile Ambients and building a scenario where the
halting of a machine corresponds to the existence of \emph{reduction
  loops}, i.e., of processes $P$, $Q$ such that $P$ reduces to $Q$ and
$Q$ reduces to $P$.
This encoding is a challenging `programming
task', since the process must return to its initial state modulo \eqL;
this  is a demanding condition, since, as mentioned above, \eqL\ is a
rather strong relation.  For instance, one has to be
very precise in garbage collecting dead code during the execution of
the Turing machine.

\bigskip

% \DSs{below, it remains a bit surprising that there is so little
%   related work. Should we mention characterisation results for modal
%   logics wrt bisimulations? (even if the logics are not spatial) I
%   suppose we can leave as it is and do it if the referees demand it }

% \daniel{I agree. I'm not sure that technically, mentioning
%   characterisations using logics like e.g. Hennessy-Milner brings
%   something.} 

% \DSs{first sentence below; instead of, or addition to, saying that ``it is not
%   technically related'', we should say what the work does}
% \daniel{I gave more details below}

\noindent\textbf{Other related work}
%\DHb
Although not directly related from a technical point of view, a work
worth mentioning is~\cite{dam:lics88}. In that work, models of
(enrichments of) relevant and linear logic are defined using Milner's
SCCS. In particular, the interpretation of implication is reminscent
of the definition of satisfaction for the guarantee operator ($\rtr$)
in AL. Dam however explicitely renounces giving sense to formulas that
talk about the structure of processes, as is the case in the Ambient
Logic.
%\DHe

As stated before, intensional bisimilarity is to AL what bisimilarity
is to Hennessy-Milner logic. Approximants of intensional bisimilarity,
that will be needed in our proofs of completeness, may also be
expressed in terms of Ehrenfeucht-Fra{\"\i}ss{\'e} games for spatial
logics, as shown in \cite{ghelli:fsttcs2004}. These equivalences are
standard devices to establish expressiveness results.  For instance,
they have been exploited to obtain adjunct elimination properties of
spatial logics in~\cite{caires::lozes::elim,lozes:tcs2005}.

%\DSa{aside from the developments below, is there any other related
%  work? we do not have much, at present, in which case the above text
%  could be put somewhere in the main text, leaving just the
%  developments of our work here, as we did for Part1}

This work is a revised and extended version of parts
of~\cite{Sangiorgi::ExtInt::01} and~\cite{Sedal}, precisely, those
parts  that deal with issues related to separability of AL.
A companion paper~\cite{Part1} studies
expressiveness issues.
By the time the writing of the present paper was completed, a few
papers have appeared that make use of results or methods presented
here. % We discuss them below.
% We discuss briefly some works in which the results or the methods
% presented here have been developed.
%
These are works that study the intensionality of spatial logics or
decidability properties. 
Works related to 
the intensionality of spatial logics 
include~\cite{CalcagnoCardelliGordon::validity}
where  the spatial logic is static,
%and uses rather precise first-class constructors,
and~\cite{caires::lozes::elim,caires:fossacs2004},
where the logic  is applied
to reason on calculi that feature a simpler notion of space, with a
strong interpretation of the temporal
modality.
A spatial logic for the $\pi$-calculus satisfying the property that
logical equivalence coincides with behavioural equivalence has been
studied in~\cite{hirschkoff::extensionalspatial04}. This logic is
defined by removing modal  operators like $\zero$ or spatial conjunction, and
keeping only `contextual' operators (guarantee and revelation
adjunct).
A similar result, but for a logic that includes spatial conjunction
and $\zero$, has been established for a process calculus encompassing
a form of distribution in~\cite{caires:vieira:express06}.
Works related to 
the decidability properties of Mobile Ambients
include~\cite{busi:zavattaro:tcs:2004,maffeis:phillips:tcs05}, that 
% \cite{busi:zavattaro:tcs:2004,boneva:talbot:tcs:2005,maffeis:phillips:tcs05,busi:zavattaro:esop05}.
address questions of termination,
and~\cite{boneva:talbot:tcs:2005,busi:zavattaro:esop05}, that consider
reachability in \emph{syntactic} subcalculi of MA (in the sense that
these subcalculi are obtained by eliminating some syntactical
constructs). 
%
%\DHb
%
It can be noted that our analysis of decidability (in
Section~\ref{s:undecidability}) allows us to deduce a property in
terms of reachability: as discussed above, we establish that one
cannot detect the presence of \emph{reduction loops} (i.e., the
existence of processes $P$ and $Q$ that reduce to eachother). This in
particular entails undecidability of reachability.
%\DHe

% \DSs{not clear the last remark/sentence above; 
% make it  a bit clearer how it is related to the
% previous sentences}
%\daniel{I rephrased}

% Phillips:tcs05: termination is undecidable in public finite MA
% without open / encoding two counter machines

% busi:zavattaro:tcs:2004 encode two counter machines using open

\bigskip

\paragraph{Structure of the paper.}
We define the Mobile Ambients calculus and the Ambient Logic in
Section~\ref{sec:back}. Section~\ref{s:char} is devoted to the study
of intensional bisimilarity, \intbis. We show that \intbis{} is
included in logical equivalence, \eqL. Completeness, i.e., the
reverse inclusion, is first proved only for finite MA processes.  For
this, we need a certain number of expressiveness results about AL
from~\cite{Part1}, which are collected in~\ref{subsec:expressiveness}.
% These formulas hold for an image-finite subcalculus of MA, \MAIF, that
% is also introduced in~\ref{subsec:expressiveness}.
%
The completeness proof for the whole calculus is presented in
Section~\ref{s:inductive}, which completes our study of \intbis{} by
finally estabilishing that \intbis{} and \eqL{} coincide. 
%
%\DHb
%
The inductive characterisation of \intbis{} is given
in~\ref{subsec:bigind}, and the logical characterisation of the
outcomes of a process in~\ref{subsec:local:charac}. 
%
%\DHe
%
We compare \eqL{} with barbed congruence and structural congruence in
Section~\ref{s:axiom}. The subcalculus \MAIFsyn, on which we establish
an axiomatisation of \eqL, is also introduced here.
Subsection~\ref{s:synchronous} explains how our results are modified
when moving to synchronous Ambients.
We present our encoding of Turing machines into \MAIFsyn{} in
Section~\ref{s:undecidability}, and give concluding remarks in
Section~\ref{s:concl}.

%%% Local Variables: 
%%% mode: latex
%%% TeX-master: "part2"
%%% Local IspellDict: british
%%% Local IspellPersDict: ./.ispell
%%% End: 

\section{Background}\label{sec:back}
This section collects the necessary background for this paper. It
includes the Mobile Ambients calculus~\cite{CaGo98} syntax and
semantics, and the Ambient
Logic~\cite{Cardelli::Gordon::AnytimeAnywhere}.

\subsection{Syntax of Mobile Ambients}

We recall here the syntax of 
Mobile Ambients (MA)
%from~\cite{Cardelli::Gordon::MobileAmbients} - 
(we sometimes also call
this calculus the Ambient calculus). In the calculus we study, only
names, not capabilities, can be communicated; this allows us to work
in an untyped calculus. 

The calculus is asynchronous; a
synchronous extension will be considered in Section~\ref{s:axiom}.
As in~\cite{Cardelli::Gordon::AnytimeAnywhere,CardelliSemiStructuredData,CardelliGhelli::ETAPS01},
the calculus has no restriction operator for creating new names.  
% \etis{J'ai viré le commentaire sur les semistructured data, Ref1 etait susceptible avec ca.}
% The
% restriction-free calculus has a more direct correspondence with
% edge-labelled trees and semistructured data.

Table~\ref{ta:syn} shows the syntax. Letters $n,m,h $ range over
names, $x,y,z$ over variables; $\eta$ ranges over names and variables.
Both the set of names and the set of variables are infinite. The
expressions $\inamb \eta$, $\outamb \eta$, and $\openamb \eta$ are the
\emph{capabilities}.  Messages and abstractions are the
\emph{input/output} (I/O) primitives. A \emph{guard} is either an
abstraction or a capability. A process $P$ is \emph{single} iff there exists $P'$ such that either
  $P\,\equiv\,\capa.P'$ for some \capa{} or $P\,\equiv\,\amb{n}{P'}$
  for some $n$).
% guarded processes are useful in the completeness proof
%\DS{it seems to me guard is not used much, can be avoided}

%
Abstraction is a binding construct, giving rise to the set of free
variables of a process $P$, written \fv{P}. We ignore syntactic
differences due to alpha conversion. We write 
\fn{P}  for the set of
(free) names of process $P$. A \emph{closed} process has no free
variable.  Unless explicitely stated, we  use $P,Q,\dots$ to range
over \emph{closed} processes in our definitions and results.
Substitutions, ranged over with $\sigma$, are partial functions from
variables to names. Given $\sigma$, we write $P\sigma$ to denote the
result of the application of $\sigma$ to $P$.  Given two processes $P$
and $Q$, we say that $\sigma$ is a closing substitution for $P$ and
$Q$ (in short, a closing substitution) if $P\sigma$ and $Q\sigma$ are
closed processes. We also introduce another notation: $P \sub n x$
stands for the capture avoiding substitution
%\DS{are the other substitutions not capture avoiding?}
% no, these are replacement of names with names, and names cannot be
% bound, so no pb
 of variable $x$ with name
$n$ in $P$, and $P \sub n m$ stands for the process obtained by
replacing name $m$ with name $n$ in $P$.
Given $n$ processes $P_1,\dots,P_n$, we  sometimes write
$\Pi_{1\leq i\leq n}P_i$ for the parallel composition
$P_1\,|\ldots|\,P_n$. 

\emph{Process contexts} (simply called contexts) are processes
containing an occurrence of a special process, called the hole. We
 use $\C$ to range over process contexts, and $\C\fillhole{P}$ stands for
the process obtained by replacing the hole in $\C$ with $P$. 
Given two processes $P$ and $Q$, a \emph{closing context for $P$ and $Q$} (in
short, a closing context) is a context \C{} such that $\C\fillhole{P}$
and $\C\fillhole{Q}$ are closed processes.

\begin{center}
\begin{tabular}{cc}
$
\begin{array}[t]{lrll}
\multicolumn{3}{l}{
  h,k, \ldots n,m}
& {\mbox{{\it Names}}} %\\[12pt]
\\[\tkp]
\multicolumn{3}{l}{
  x,y, \ldots}
& {\mbox{{\it Variables}}} %\\[12pt]
\\[\tkp]
  \multicolumn{3}{l}{
   \eta}
& {\mbox{{\it Names $\cup$ Variables}}} 
\\[\tkp]
\\[\tkp]
 &&& {\mbox{\it Capabilities}}
\\[\tkp] 
\capa & ::=
   &  \ina \eta
   & \mboxB{enter}
\\[\tkp]
 & \midd
   &  \out  \eta
   & \mboxB{exit}
\\[\tkp]
 & \midd
   &  \open  \eta
   & \mboxB{open}
%\\[\tkp]
% & \midd
%   &  M. N
%   & \mboxB{path}
%     \\[\tkp]  
% & \midd
%   &  \epsilon
%   & \mboxB{empty path}
\end{array}
$ \qquad \qquad
\qquad& 
$\begin{array}[t]{lrll}
% \multicolumn{3}{l}{
%  x,y,z}
%& {\mbox{{\it Variables}}} %\\[12pt]
%\\[\tkp]
%\multicolumn{3}{l}{
%  X,Y,Z}
%& {\mbox{{\it Recursion variables}}} %\\[12pt]
%\\[\tkp] \ \\[\tkp]
 &&& {\mbox{\it Processes}}
\\[\tkp] 
P,Q,R & ::=  
   & \nil
   &
   \mboxB{nil}
\\[\tkp]
 & \midd
   & P |  Q
   & \mboxB{parallel}
\\[\tkp]
 & \midd
   & !P
   & \mboxB{replication}
%\\[\tkp]

% & \midd
%   & \rest{n}{W} P 
%   & \mboxB{restriction}
\\[\tkp]
 & \midd
   & \capa . P
   & \mboxB{prefixing}
\\[\tkp]
 & \midd
   & \amb{ \eta}{P}
   & \mboxB{ambient}
\\[\tkp]
 & \midd
   & \msg{ \eta}
   & \mboxB{message}
\\[\tkp]
 & \midd
   & \abs{x} P
   & \mboxB{abstraction}
\end{array} 
$
\end{tabular}
\end{center} 

\label{ta:syn}

Processes with the same internal structure are identified. This is
expressed by means of the \emph{structural congruence relation},
$\equiv$, the smallest congruence such that the following laws hold:

\begin{mathpar}

  P | \nil ~\equiv~ P
  \and
  P | Q ~\equiv~ Q | P
  \and
  P | (Q|R)~ \equiv~ (P | Q) | R
  \\
    !P ~ \equiv ~ !P | P
    \and
    !\nil ~ \equiv ~ \nil
    \and
    !(P|Q) ~ \equiv ~ !P | !Q
    \and
    !!P ~ \equiv ~ !P
\end{mathpar}
% $$
% \begin{array}{ccc}
%   &&\\
%   P | \nil ~\equiv~ P
%   &
%   P | Q ~\equiv~ Q | P
%   &
%   P | (Q|R)~ \equiv~ (P | Q) | R
%   \\[.2em]
% %
%     !P ~ \equiv ~ !P | P
%     &
%     !(P|Q) ~ \equiv ~ !P | !Q
%     &
%     !!P ~ \equiv ~ !P
%     \quad
%     !\nil ~ \equiv ~ \nil
% \end{array}
% $$
%

% \etis{Ici il y avait un lemme faux et idiot. C'est peut-être un reste d'un lemme intelligent qui a été malencontreusement modifié, mais bon, pour le coup j'ai tout viré.}

%The following technical result will be useful below.

%\begin{lemma}
%\label{l:sub_str}
%Suppose $n \not \in \fn P$. Then 
% $P \sub n m  \equiv Q \sub n m$  iff  $P \equiv Q$. 
%\end{lemma} 
%\begin{proof}
%  By induction on the depth of the proof of $P \sub n m \equiv Q \sub
%  n m$, using the fact that none of  the defining laws of $\equiv$ involves
%  a capability or an abstraction.
%\end{proof}

As a consequence of the results presented in~\cite{Dal01}, which works
with a richer calculus than the one we study, we have:
\begin{thm}\label{thm:equivdeci}
\label{t:dz}
$\equiv$ is decidable.
\end{thm}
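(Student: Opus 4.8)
The plan is to reduce the decision problem for $\equiv$ to syntactic equality of \emph{normal forms}: I would define a computable map $P \mapsto \mathrm{nf}(P)$ such that $P \equiv Q$ if and only if $\mathrm{nf}(P)$ and $\mathrm{nf}(Q)$ are equal up to the evident recursive multiset comparison (the only unavoidable source of non-syntactic identity being commutativity and associativity of $|$). Once such a normal form is available, decidability is immediate, since $\mathrm{nf}$ is computed by structural recursion and the final comparison is just a recursive AC-matching on finite multisets, descending through ambients, capabilities and (alpha-renamed) abstractions.

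To build $\mathrm{nf}(P)$ I would proceed by structural recursion. First normalise all proper subterms; then flatten the top-level parallel structure into a finite multiset of \emph{threads}, deleting every occurrence of $\nil$, where a thread is a term of one of the shapes $\capa.P$, $\amb{\eta}{P}$, $\mess{\eta}$, $\abs{x}{P}$, or a replicated term $!P$. For replicated components, use $!\nil \equiv \nil$, $!(P\,|\,Q)\equiv\,!P\,|\,!Q$ and $!!P\equiv\,!P$ to push every replication inwards until it guards a single non-replicated normalised thread; after normalising the body of a $!$, this always yields a parallel product of terms $!t$ with $t$ a normalised thread. Finally exploit the law $!P\equiv\,!P\,|\,P$, which together with the other axioms also gives $!P\,|\,!P\equiv\,!P$ (indeed $!!P\equiv\,!!P\,|\,!P\equiv\,!P\,|\,!P$ and $!!P\equiv\,!P$, so $!P\equiv\,!P\,|\,!P$): the replicated threads therefore form a \emph{set}, which moreover absorbs any parallel copy of one of its members. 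So $\mathrm{nf}(P)$ is a pair consisting of a set $S$ of replicated threads and a multiset $M$ of non-replicated threads, with the side condition that no element of $M$ is congruent to (equivalently, has the same normal form as) an element of $S$; iterating the absorption step to a fixpoint terminates since everything is finite.

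Soundness — that $\mathrm{nf}(P)\equiv P$ — is a routine induction, as each rewriting step is an instance of an axiom of $\equiv$ applied inside a context. The substantive part is completeness: $P\equiv Q$ implies $\mathrm{nf}(P)=\mathrm{nf}(Q)$. I would obtain this by orienting the axioms into a rewriting system modulo AC of $|$ (deletion of $\nil$; $!\nil\to\nil$; $!(P\,|\,Q)\to\,!P\,|\,!Q$; $!!P\to\,!P$; absorption $!P\,|\,P\to\,!P$), proving termination and local confluence, and noting that $\mathrm{nf}$ computes the (unique) normal forms of this system. The termination/confluence analysis, together with the bookkeeping for AC-matching, for the recursive descent through binders and ambient/capability constructs, and for the mutual recursion between "are two threads congruent" and the normal form, is where essentially all the effort lies; in particular one must verify that $!t_1\equiv\,!t_2$ forces $t_1\equiv t_2$ for normalised non-replicated threads. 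The alternative — and the route actually taken in the statement above — is simply to appeal to the decidability result of Dal Zilio~\cite{Dal01}, established for a calculus strictly richer than public MA (it additionally includes name restriction); our $\equiv$ is the obvious restriction of that one, so decidability follows a fortiori.
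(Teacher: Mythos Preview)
Your proposal is correct in outline, but note that the paper does not actually prove this theorem at all: it simply records it as ``a consequence of the results presented in~\cite{Dal01}, which works with a richer calculus than the one we study.'' Your final sentence matches this exactly, so on that point you and the paper agree.

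The normal-form/rewriting argument you sketch is genuinely more than what the paper offers, and it is essentially the approach taken in~\cite{Dal01} itself (and in related work of Engelfriet and Gelsema). Two remarks on your sketch. First, your derivation of $!P\,|\,!P \equiv\,!P$ is correct, but be aware that this law is not an \emph{orientation} of any axiom; you need it either as an extra rewrite rule or---as you do---by building it into the representation (replicated threads as a set). Make sure the confluence analysis treats the critical pair $!!P\,|\,!P$ accordingly. Second, the absorption step $!t\,|\,t' \to\,!t$ when $t \equiv t'$ requires deciding $\equiv$ on strictly smaller terms, so the whole procedure is a well-founded mutual recursion between ``normalise'' and ``compare''; you mention this, but it is worth being explicit that the size measure must be invariant under the AC and $\nil$-deletion steps (otherwise ``smaller'' is ill-defined). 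With these points handled, termination and local confluence go through, and your claim that $!t_1 \equiv\,!t_2$ forces $t_1 \equiv t_2$ for normalised single threads is indeed the key inversion lemma needed for completeness.

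In short: the paper just cites the result; you both cite it and outline how one would prove it directly. Either is acceptable here.
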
 
%

% The following predicates on processes will be useful in the remainder
% of this work.

\begin{defi}[Finite process]\label{def:finitesingle}\mbox{}
 A process $P$ is \emph{finite} iff there exists a process $P'$
    with no occurrence of the replication operator such that
    $P\,\equiv\, P'$.
\end{defi}

\subsection{Operational Semantics}

The semantics of the calculus is given by a reduction relation
$\longrightarrow$. We shall sometimes use the phrase
`$\tau$-transitions' to refer to $\longrightarrow$ transitions. The
corresponding rules are given in Table~\ref{t:red_rules}. The
reflexive and transitive closure of $\longrightarrow $ is written
$\Longrightarrow $.

\begin{table}
\label{t:red_rules}
\begin{mathpar}
   \shortaxiomC{ \open n . P |
\amb n{ Q} \longrightarrow 
P |  Q
}
{Red-Open} 
\and
  \shortaxiomC
{ \amb n {\ina m . P_1 | P_2} | 
\amb m{ Q} \longrightarrow 
\amb m{ \amb n { P_1 | P_2}| Q
}}{Red-In}
\and
 \shortaxiomC
{
\amb m{ \amb n {\out m . P_1 | P_2}| Q}
\longrightarrow 
 \amb n {P_1 | P_2} | 
\amb m{ Q} 
}
{Red-Out}  
\and
 \shortaxiomC
{ \msg \eta |
\abs x  P \longrightarrow 
P \sub \eta x
} 
{Red-Com}
\and
\shortinfruleN
{
 P  \longrightarrow  P'
}
{
P |Q  \longrightarrow  P' |Q
}
{Red-Par}
\and
\shortinfruleN
{
 P   \longrightarrow  P'
}
{
\amb n P   \longrightarrow  \amb n {P'}
}
{Red-Amb}
\and
\shortinfruleN
{
 P \equiv P' \andalso P'  \longrightarrow  P'' \andalso P'' \equiv P'''
}
{
P   \longrightarrow  P'''
}
{Red-Str}
\end{mathpar}
\caption{The rules for reduction}
\Mybar  
\end{table}

\emph{Behavioural equivalence} is defined using reduction and
observability predicates $\Dwa_n$ that indicate whether a process can
liberate an ambient named $n$: formally, $P \Dwa_n$ holds if there are
$P',P''$ such that $P\Longrightarrow \amb n {P'}| P''$.

\renewcommand{\RR}{\ensuremath{\mathcal{R}}}

\begin{defi}[barbed  congruence, \cite{MiSa92,LeSa00full}]
\label{d:bb}
A symmetric relation \RR{} between processes is a \emph{barbed
  bisimulation} if $P \RR{} Q$ implies:
\begin{enumerate}[(1)]
\item   whenever $P \Longrightarrow  P'$,
  there exists $Q'$ such that
 $Q {\Longrightarrow}Q'$ and $P' \RR  Q'$;
 \item   for each name  $n$,    
 $P \Dwa_n$ iff     $Q \Dwa_n$.
\end{enumerate}
{\em Barbed  bisimilarity}, written $\wbb$,
is the largest barbed bisimulation. 
Two processes $P$ and $Q$ are {\em barbed
congruent},
written
$P \wbc  Q$,
 if 
$\C\fillhole{P} \wbb   \C\fillhole{Q}$ for all closing contexts $\C$.
\end{defi}

\subsection{Ambient Logic}

The Ambient Logic  (AL),  is presented in 
Table~\ref{ta:for}).
We use 
 an infinite set of \emph{logical
  variables}, ranged over with $x,y,z$; $\eta$ ranges over names and
variables.  
(We can use the same syntax as for variables and names of the Ambient
calculus, since formula and process terms are separate.)
We use $\A,\B,\dots,\F,\F',\dots$ to range over
formulas.

\begin{table*}[ht]
\begin{center}
$
\begin{array}[t]{lrlll}
%\\[\tkp]
% \multicolumn{3}{l}{ 
% x,y,z } & {\mbox{\it Variables}}
% \\[6pt]%[\tkp]
%  &&& {\mbox{\it Formulas}}
% \\[\tkp] 
\AAA & ::=
   &  \ltrue 
   & \mboxB{true}
   & \mbox{classical logic}
\\[\tkp]
&\midd&  \neg\AAA
   & \mboxB{negation}
\\[\tkp]
&\midd&  \AAA \lor \BBB
   & \mboxB{disjunction}
\\[\tkp]
&\midd& \all x \AAA
   & \multicolumn{2}{l}{\mboxB{universal quantification over names}}
   \\[\tkp]
   & \midd &
   \diamond\AAA
&\mboxB{sometime}
&\mbox{temporal and spatial connectives}
\\[\tkp]
&\midd& \zero
&\mboxB{void}
\\[\tkp]
&\midd& \amb\eta\AAA
&\mboxB{edge}
\\[\tkp]
&\midd& \AAA | \BBB
&\mboxB{composition}
\\[\tkp]
& \midd &
\at{\AAA}{\eta}
&\mboxB{localisation}
&\mbox{logical adjuncts}
\\[\tkp]
&\midd &\AAA~\rtr~\BBB
   & \mboxB{linear implication}\qquad
\end{array}
$
\end{center} 
\caption{The syntax of logical formulas}
\Mybar  
\label{ta:for}
\end{table*} 

 The logic has the propositional connectives, $\ltrue, \non \AAA, \AAA
 \lor \BBB$, and universal quantification on names, $\forall x.\,
 \AAA$, with the standard logical interpretation.  The temporal
 connective, $\sometime \AAA$ is considered with a weak
 interpretation. The spatial connectives, $\zero$, $\AAA | \BBB$, and
 $\amb \eta \AAA$, are the logical counterpart of the corresponding
 constructions on processes. $\AAA \rtr \BBB $ and $\at{\AAA}{\eta}$
 are the adjuncts of $\AAA | \BBB$ and $ \amb \eta \AAA$, in the sense
 of being, roughly, their inverse (see below).  $\AAA\{n/x\}$ is the
 formula obtained from $\AAA$ by substituting variable $x$ by name
 $n$.  A formula without free variables is \emph{closed}. Along the
 lines of the definition of process contexts, we define formula
 contexts as formulas containing an occurrence of a special \emph{hole
 formula}.

We use $\A\fillhole{\cdot}$ to range over formula contexts; then
$\A\fillhole{\B}$ stands for the formula obtained by replacing the
hole in $\A\fillhole{\cdot}$ with $\B$.

\begin{defi}[Satisfaction]
\label{d:satisfaction}
The satisfaction relation is defined between closed processes and
closed formulas as follows:
$$
\begin{array}{lcl}
\satDS P \true &\defiDS & \mbox{always true} 
\\
        \satDS P{\all x \AAA } & \defiDS & \mbox{for any $n$, } \satDS
        P \AAA \{ n /x\} 
        \\
\satDS P{ \myneg \AAA } & \defiDS & \mbox{not }  \satDS P{  \AAA }
\\%        &&
        \satDS P{ \AAA_1 | \AAA_2 } & \defiDS & 
                \exists P_1,P_2 \mbox{ s.t. } 
                P \equiv {P_1|P_2}
%                \\
%                & &
                \mbox{ and } \satDS{P_i}{\AAA_i},~i=1,2
        \\
        \satDS P{\AAA \orr \BBB } & \defiDS & \satDS P{\AAA } \mbox{ or }
        \satDS P{\BBB } 
\\%        &&
        \satDS P{\amb n \AAA } & \defiDS & 
                \exists P' \mbox{ s.t. }P \equiv \amb n {P'}
                \mbox{ and } \satDS{P'}\AAA
        \\
\satDS P{\zero } & \defiDS & P\equiv\nil
\\%        &&

        \satDS P{\sometime \AAA  } & \defiDS & 
                \exists P' \mbox{ s.t. }
                 P \Rar  P' \mbox{ and } \satDS{P'}\AAA 
        \\
        \satDS P{\at{\AAA}{n}   } & \defiDS & 
                \satDS{\amb n P } \AAA
\\%        &&
                \satDS P{\AAA \limp \BBB    } & \defiDS & 
                \forall R,~ R\sat\AAA \mbox{ implies } \satDS{P|R} \BBB
\end{array}
$$
\end{defi}

The logic in~\cite{Cardelli::Gordon::AnytimeAnywhere} has also a
\emph{somewhere} connective, that holds of a process containing, at
some arbitrary level of nesting of ambients, an ambient whose content
satisfies $\AAA$.  
For the sake of simplicity, we omit this connective, but we believe
that the addition of this connective would not change the results in
the paper (in particular Theorem~\ref{t:soundnessLogBis} can be
adapted easily).

\begin{lem}[\cite{Cardelli::Gordon::AnytimeAnywhere}]
  If $P \equiv Q $ and $\satDS P \AAA$, then also $\satDS Q \AAA$.
\end{lem}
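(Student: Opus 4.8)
The plan is to prove, by induction on the structure of the closed formula $\AAA$, the slightly stronger biconditional: if $P \equiv Q$ then $\satDS P\AAA$ iff $\satDS Q\AAA$. Stating it as a biconditional is convenient for the negation (and implication) cases and is harmless, since $\equiv$ is symmetric. The single observation driving every case is that each clause of Definition~\ref{d:satisfaction} inspects a process only up to $\equiv$, so everything reduces to the facts that $\equiv$ is transitive, that it is a congruence (hence preserved by $\amb n \cdot$ and by $\cdot \mid R$), and that $\Rar$ absorbs a leading $\equiv$ step (rule Red-Str).

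For the base case $\AAA = \true$ there is nothing to prove. For the propositional connectives $\myneg\B$ and $\B \orr \mathcal{C}$, and for the quantifier $\all x \B$ (applied to each instance $\B\{n/x\}$), the claim is immediate from the induction hypothesis. The spatial cases $\zero$, $\amb n \B$, $\B \mid \mathcal{C}$ and the temporal case $\sometime \B$ do not even require the induction hypothesis: if, say, $P\equiv P_1 \mid P_2$ with $\satDS{P_i}{\B_i}$ and $Q\equiv P$, then $Q \equiv P_1 \mid P_2$ by transitivity, which witnesses $\satDS Q{\B_1 \mid \B_2}$; likewise $Q\equiv\nil$ follows from $Q\equiv P\equiv\nil$, $Q\equiv\amb n{P'}$ from $Q\equiv P\equiv\amb n{P'}$, and $Q \Rar P'$ from $Q \equiv P \Rar P'$ (prepending the $\equiv$ step via Red-Str).

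The two adjunct connectives use that $\equiv$ is a congruence. For $\at{\B}{n}$: from $\satDS{\amb n P}{\B}$ and $P\equiv Q$ we obtain $\amb n P\equiv\amb n Q$, hence $\satDS{\amb n Q}{\B}$ by the induction hypothesis, i.e.\ $\satDS Q{\at\B n}$. For $\B\limp\mathcal{C}$: given an arbitrary $R$ with $\satDS R\B$, we have $\satDS{P\mid R}{\mathcal{C}}$ by assumption, and $P\mid R\equiv Q\mid R$, so $\satDS{Q\mid R}{\mathcal{C}}$ by the induction hypothesis; as $R$ was arbitrary, this yields $\satDS Q{\B\limp\mathcal{C}}$. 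There is no real obstacle here: the only points needing a moment's care are (i) exploiting the symmetry of $\equiv$ so that the negation and implication cases go through in both directions, and (ii) invoking Red-Str rather than a naive inspection of $\longrightarrow$ in the $\sometime$ case. The lemma is essentially the sanity check that satisfaction is well defined on $\equiv$-equivalence classes of processes.
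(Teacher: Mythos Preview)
Your proof is correct and is the standard argument; the paper does not give its own proof here but simply cites the result from Cardelli and Gordon, so there is nothing to compare against.

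One small imprecision worth noting: in the $\sometime\B$ case you claim the induction hypothesis is not needed, arguing that $Q \equiv P \Rar P'$ yields $Q \Rar P'$ via Red-Str. That works when $P \Rar P'$ involves at least one $\rar$ step, but $\Rar$ is defined in the paper as the reflexive--transitive closure of $\rar$, so the zero-step case $P' = P$ is possible, and then you only get $Q \equiv P'$, not $Q \Rar P'$. To conclude $\satDS Q{\sometime\B}$ you then take the zero-step witness $Q \Rar Q$ and need $\satDS Q \B$, which does require the induction hypothesis applied to $\B$. This is a trivial fix and does not affect the overall argument.
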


We give $\lor $ the least syntactic precedence, thus
$\AAA_1 \rtr \AAA_2 \lor \AAA_3$ reads $(\AAA_1 \rtr \AAA_2) \lor
\AAA_3$, and $\AAA_1 \rtr (\sometime \AAA_2 \lor \sometime \AAA_3)$
reads $\AAA_1 \rtr ((\sometime \AAA_2) \lor (\sometime \AAA_3))$.
We shall use the following standard  duals of disjunction and
universal quantification:
% , namely the duals of linear
% implication ($\AAA \brtr \BBB$), of the sometime modality ($\square
% \AAA$), of the parallel operator ($\|$), and the standard duals of
% universal quantification ($\thereis x \AAA$) and disjunction ($\AAA
% \land \BBB$). We also define (classical) implication ($\AAA\rightarrow 
% {\cal B}$).
\begin{mathpar}
  \AAA \land B ~\defiDS~  \non (\non \AAA \orr \non \BBB)
  \and
  \thereis x \AAA ~\defiDS~ \non \all x \non \AAA   
\end{mathpar}
% $$
% \begin{array}{rclrcl}
%         \AAA \land B&\defiDS&  \non (\non \AAA \orr \non \BBB)
%         &
%                 \square \AAA &\defiDS& \non \sometime \non \AAA 
%         \\
%  \AAA \to \BBB &\defiDS & \non \AAA \lor \BBB
%         &
%  \AAA \| \BBB  &\defiDS &\lnot ( \lnot \AAA | \lnot \BBB)
% \\
% \thereis x \AAA &\defiDS& \non \all x \non \AAA 
% &
%  \AAA \brtr \BBB &\defiDS &\non (\AAA \rtr \non \BBB)
%         \\
%                 \lfalse &\defiDS & \non \ltrue
% \end{array} 
% $$
%%
% Thus $\satDS P {\AAA \brtr \BBB}$ iff there exists $Q$ with $\satDS
% Q\AAA$ and $\satDS {P|Q} { \BBB}$, and $\satDS P {\Box \AAA}$ iff
% $\satDS{P'}{\AAA}$ for all $P'$ such that $P \Longrightarrow P'$.
% %and~\cite{Sangiorgi::ExtInt::01};
% The formula
% $\AAA^{\forall}\defiDS\AAA\,\|\,\non\ltrue$,
% from~ \cite{CaGo00L},
%  is satisfied by $P$ iff
% for any $Q,R$ such that $P \equiv Q|R$, it holds that $\satDS Q\AAA$.

% \daniel{ici j'ai vire une explication sur \texttt{1Comp}, dont on se
%   fout ici, non?}
%%\etienne{%oui, en fait on se fout pas mal de la logique ici.
%%  Tu es
%%  d'accord pour virer les connecteurs duaux?} 
%%\daniel{j'ai garde $\land$ et $\exists$, qui apparaissent dans les 
%% formules pour les active contexts}

% The formula $\Ftext{1Comp}\defiDS\zero\|\zero~\land~\non\zero$
% from~\cite{Sangiorgi::ExtInt::01} characterises the single processes
% (see Definition~\ref{def:finitesingle}).

%We now define the relation $\eqL$ we study in this paper.

\begin{defi}[Logical equivalence]
\label{d:eqLOGIC}
For processes $P$ and $Q$, we say that $P$ and $Q$ are \emph{logically
  equivalent,} written $P\, \eqL\, Q$, if for any closed formula $\AAA$
it holds that $\satDS P \AAA $ iff $\satDS Q \AAA$.
\end{defi} 

The remainder of this paper is devoted to the study of $\eqL$ on MA
and on some subcalculi of MA.

%%% Local Variables: 
%%% mode: latex
%%% TeX-master: "hls2"
%%% Local IspellDict: british
%%% Local IspellPersDict: ./.ispell
%%% End: 

\section{Intensional bisimilarity}\label{s:char}
%% subsection definition

In order to be able to carry out our programme for $\eqL$, as
discussed in the introduction, we 
% have first to understand what $\eqL$
% means.  For this, we 
look for a co-inductive characterisation of this relation, as a form
of labelled bisimilarity.
%
% In this section we present the bisimilarity and prove that it is a
% congruence.  In the next section we prove that it coincides with
% $\eqL$.  First we have to 
Before introducing the bisimilarity relation, we need to define
labelled transitions on \MA, and a few derived relations such as
the
\emph{stuttering} relation. 

\subsection{Definitions}

\subsubsection{Labelled transitions and stuttering}

\begin{defi}
\label{d:statt}
Let $P$ be a closed process. We write: 
\begin{enumerate}[$\bullet$]
\item
$P \arr{\capa} P'$, where $\capa$ is a capability, if 
$P \equiv  \capa.P_1|P_2$ and $P' = P_1|P_2$.

\item $P \arr{\msg n} P'$ if 
$P \equiv  \msg n|P'$.

\item $P \arr{?n} P'$ if 
$P \equiv  \abs x {P_1}|P_2$ and $P' \equiv  P_1\sub n x|P_2$. 

\item $P \Ar{\mu}P'$, where $\mu$ is one of the above labels, if $P
  \Longrightarrow \arr\mu \Longrightarrow P'$ 
(where $
  \Longrightarrow \arr\mu \Longrightarrow$ is relation composition). 

\item {\bf (stuttering)}  $P  \Stat{M_1}{M_2} P' $ if 
there is $i\geq 1$ and processes $P_1 , \ldots, P_i$ with $P = P_1 $
and $P' = P_i$  such that 
$P_r \Ar {M_1}\Ar {M_2}P_{r+1}$ for all $1 \leq r < i$. 

\item  Finally, $\Rcap$ is a convenient notation for compacting statements
  involving capability transitions. $\Rcapar{\inamb{n}}$ is
  $\Stat{\outamb{n}}{\inamb{n}}$; similarly $\Rcapar{\outamb{n}}$ is
  $\Stat{\inamb{n}}{\outamb{n}}$; and $\Rcapar{\openamb{n}}$ is
  $\Rar$.
\end{enumerate}
\end{defi}

%% REMOVED: c'est n'importe quoi, apparemment
% To give an example of stuttering, consider $P~\defi~ \ina n.\out
% n.\ina n.Q$. We have $P \Rcapar{\outamb n} \ina n.Q$ and $P\arr{\ina
%   n}\Rcapar{\inamb n}Q$. 
We discuss in Example~\ref{e:stuttloop} below why
stuttering is needed to capture logical equivalence in \MA.

\subsubsection{Intensional bisimilarity, $\bisMOD$}

We present here our main labelled bisimilarity, 
   \emph{intensional bisimilarity}, 
written $\bisMOD$. This relation  will be used to
capture the separating power of $\eqL$.
% \DSj{is ``separative'' correct in English?}
% \etij{I changed for separating, hope it sounds better (but my dico tells separative is fine too, in case we want to move back)}

% \DS{i do not really understand the paragraph below (in particular the
%   last sentence)}

Intuitively, the definition of $\bisMOD$ is based on the observations
made available by the logic either using built-in operators or through
derived formulas for capabilities (see below).

% , and hence does not
% depend too much on the syntactical operators of the underlying
% calculus. This robustness makes it easy for instance to derive that
% $\bisMOD$ is an equivalence relation.

% \DS{the paragraph below is later repeated several times, i think here
%   it can be removed}
% By contrast, some important properties, like congruence, are not
% immediate. We shall introduce below another relation, whose definition
% follows a more syntactical approach, and that will make the proof of
% congruence easier.

\begin{defi}
\label{d:bisMOD}
A symmetric relation \RR{} on closed processes is an \emph{intensional
  bisimulation} if $P\RR Q$ implies:
\begin{enumerate}[(1)]
\item
\label{cMOD:par}
If $P \equiv P_1| P_2$ %$ \prod_{1\leq i\leq n} P_i$, for some $n>1$,
% and  for all $i$ we have  $P_i \not \equiv \nil$,
then there are $Q_1,Q_2$ such that 
$Q \equiv  Q_1|Q_2 $ and
$P_i \RR Q_i$, for $i=1,2$.

\item
\label{c:nil}
 If $P \equiv \nil $ then  $Q \equiv \nil$.

\item
\label{c:tau}
 If $P \longrightarrow P'$ then there is $Q'$ such that 
$Q\Longrightarrow Q'$ and $P' \RR Q'$.
\item
\label{cMOD:in}
 If 
$P \arr{\ina n} P'$
 then 
there is $Q'$ such that
 $Q \Ar{\ina n } \StatOI n Q' $
and $P' \RR Q'$.

\item
\label{cMOD:out}
 If $P \arr{\out n } P'$ then 
there is $Q'$ such that 
$Q \Ar{\out n } \StatIO n Q' $
and $P' \RR Q'$.

 \item
\label{cMOD:open}
 If $P\arr{ \open n } P'$ then 
there is $Q'$ such that $Q \Ar{\open n } Q' $
and $P' \RR Q'$.

\item
\label{cMOD:msg}
 If $P \arr{\msg n} P'$ then 
there is $Q'$ such that 
$Q \Ar{\msg n}  Q' $ and $P' \RR Q'$.

\item
\label{c:absMOD} If $P   \arr{? n}  {P'}$
then
there is $Q'$ such that $Q | \msg n \Longrightarrow Q'
$ and  $P' \RR Q'$.

\item 
\label{cMOD:amb}
If $P \equiv \amb n {P'}$ then there is $Q'$ such that $Q 
\equiv \amb n {Q'}$ and $P' \RR Q'$.
\end{enumerate} 
\emph{Intensional bisimilarity}, written $\bisMOD$, is the largest
intensional bisimulation.
The definition of $\bisMOD$ induces a relation $\bisMODo$, defined on
open terms by saying that $P\bisMODo Q$ iff for any closing
substitution $\sigma$, $P\sigma\bisMOD Q\sigma$.
\end{defi} 

The definition of $\bisMOD$ has (at least) three intensional clauses,
namely \reff{cMOD:par}, \reff{c:nil} and \reff{cMOD:amb}, which allow
us to observe parallel compositions, the terminated process, and
ambients. These clauses correspond to the intensional connectives
`$|$', `$\zero$' and `$\amb n\cdot$' of the logic.  The clause
\reff{c:absMOD} for abstraction is similar to the input clause of
bisimilarity in asynchronous message-passing calculi \cite{AmCaSa98}.
This is so because communication in MA is asynchronous (see also
Subsection~\ref{s:synchronous} below).
Note that, using notation $\Rcap$ introduced above, items 4, 5, and 6 can be
replaced by the following one: 
\begin{enumerate}[$\bullet$]
\item
if $P\rcap P'$, then there is $Q'$ such that
$Q\Ar{\capa}\Rcap Q'$ and $P'\RR Q'$.
\end{enumerate}
 As we have pointed out above, stuttering is used to capture some
transitions of processes that the logic cannot detect. It gives rise
to particular kinds of loops, that we illustrate in the following
example.

\begin{exa}[Stuttering Loop]\label{e:stuttloop}
Consider the processes
\[
\begin{array}{rcl}
P&\defiDS& !\openamb{n}.\inamb{n}.\outamb{n}.\inamb{n}.\outamb{n}.\amb{n}{\nil}
 | \amb{n}{\nil} 
\\
Q&\defiDS& 
!\openamb{n}.\inamb{n}.\outamb{n}.\inamb{n}.\outamb{n}.\amb{n}{\nil}
| \inamb{n}.\outamb{n}.\amb{n}{\nil}\,.  
\end{array}
\]

We have the following loop, modulo stuttering:
\begin{mathpar}
  { P~\Stat{\inamb n}{\outamb n}~Q~\Stat{\inamb n}{\outamb n}~P
    \enspace. }
\end{mathpar}

The existence of such pairs of processes that reduce one to each other
modulo stuttering will play an important role in the axiomatization of
$\eqL$. We call such a situation a loop.

% To convey some intuition for the need of stuttering, here is an
% example; 
%
It holds that   $P\not \bisMOD Q$; however, since 
$P~\Stat{\inamb n}{\outamb n}~Q~\Stat{\inamb n}{\outamb n}~P$,
we have 
$$
\outamb{n}.P~\bisMOD~\outamb{n}.Q\,.
$$
Actually, $\outamb{n}.P~\wbc~\outamb{n}.Q.$, that is, these two
processes are extensionally equivalent, and they are also equated by
the logic (i.e., $\outamb{n}.P\,\eqL\, \outamb{n}.Q$).  But they would
not be intensionally bisimilar without the stuttering relations.

The reason for this peculiarity is that, intuitively, these processes
have the same behaviour in any testing context.  To see why the extra
capabilities of $Q$ do not affect its behaviour, consider a reduction
involving $\outamb n.P$, of the following shape:
\[ \amb n {\amb m {\outamb n.P|R}}  \longrightarrow \amb n \nil | \amb
m {P|R}\,. \] 
Process $\outamb n.Q$ can match this transition using three reductions:
\[
\begin{array}{rcl}
  \amb n {\amb m {\outamb n.Q|R}} & \longrightarrow& \amb n \nil | 
  \amb m{\inamb n. \outamb n .n[\nil]|Q'|R}\\[\tkpPOPL]
& \longrightarrow&  
\amb n {\amb m {\outamb n.n[\nil]|Q'|R}}
\\[\tkpPOPL]
&  \longrightarrow&  \amb n \nil | \amb m {P|R}\,,
\end{array}
\]
where $Q'$ is
$!\openamb{n}.\inamb{n}.\outamb{n}.\inamb{n}.\outamb{n}.\amb{n}{\nil}$.
Conversely, the process $\outamb n.Q$ may be involved in the following
scenario:
\[ \amb n {\amb m {\outamb n.Q|R}}  \longrightarrow \amb n \nil | \amb
m {Q|R}\,,
\]
and the process $\outamb n.P$ can mimic this reduction.

If we set
$Q'=!\openamb n.\inamb n.\outamb n.\inamb n.\outamb n.n[\nil]$, we
have
\[
\begin{array}{rcl}
  \amb n {\amb m {\outamb n.P|R}} & \longrightarrow& \amb n \nil | \amb m{\amb n \nil
    |Q'|R}
\\[\tkpPOPL] & \longrightarrow&  
\amb n \nil | \amb m {Q'|\inamb n.\outamb n.\inamb n.\outamb
  n.n[\nil]|R}
\\[\tkpPOPL] & \longrightarrow&  
\amb n {\amb m {Q'|\outamb n.\inamb n.\outamb n.n[\nil]|R}}
\\[\tkpPOPL] & \longrightarrow&  
 \amb n \nil | \amb m {Q|R}\,.
\end{array}
\]
By~contrast,  stuttering does not show up  in   Safe
Ambients \cite{LeSa00full}, where  movements are achieved by means of 
synchronisations between a  capability and a \emph{co-capability}, and
alike models.
% \DSj{above: i added ``and alike models'' because i think now a few
%   other models have appeared with communications similar to SA, but i
%   do not know them well}
\end{exa}

The following result is an easy consequence of the definition of
$\bisMOD$:
\begin{lem}
\label{l:MODeq}
 $\bisMOD$ is an equivalence relation.
\end{lem}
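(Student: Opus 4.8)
The plan is to show that $\bisMOD$, being defined as the largest intensional bisimulation, is reflexive, symmetric, and transitive. Symmetry is immediate from the definition, since intensional bisimulations are required to be symmetric relations and the union of all of them is therefore symmetric. Reflexivity follows by exhibiting the identity relation $\{(P,P) \mid P \text{ closed}\}$ as an intensional bisimulation: each clause of Definition~\ref{d:bisMOD} is trivially satisfied by taking $Q_i = P_i$, $Q' = P'$, etc., using the reflexivity of $\equiv$, $\Longrightarrow$, and the fact that $P \Ar{\mu} P'$ (and its stuttering variants with $i = 1$) subsumes $P \arr{\mu} P'$. The only mild point to check is clause~\reff{c:absMOD}: if $P \arr{?n} P'$ then $P \equiv \abs x {P_1} \mid P_2$ and $P' \equiv P_1 \sub n x \mid P_2$, so $P \mid \msg n \Longrightarrow P'$ by one application of Red-Com (via Red-Par and Red-Str), hence the identity pair $(P',P')$ works.

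The substantive part is transitivity. The standard approach is to show that $\bisMOD \bisMOD$ (relational composition) is an intensional bisimulation, whence $\bisMOD\bisMOD \subseteq \bisMOD$ by maximality. So I would fix $P \bisMOD R \bisMOD S$ and verify each clause by first using the clause for $P \bisMOD R$ to obtain an intermediate process, then the clause for $R \bisMOD S$ to push through to $S$, and finally observe that the resulting pair lies in $\bisMOD \bisMOD$. For the purely structural clauses \reff{cMOD:par}, \reff{c:nil}, \reff{cMOD:amb} this is routine: e.g.\ if $P \equiv P_1 \mid P_2$, get $R \equiv R_1 \mid R_2$ with $P_i \bisMOD R_i$, then get $S \equiv S_1 \mid S_2$ with $R_i \bisMOD S_i$, so $P_i \bisMOD\bisMOD S_i$. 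For the $\tau$-clause \reff{c:tau}, $P \longrightarrow P'$ gives $R \Longrightarrow R'$ with $P' \bisMOD R'$; then one needs the \emph{weak} version of \reff{c:tau} for $\bisMOD$ --- that $R \bisMOD S$ and $R \Longrightarrow R'$ imply $S \Longrightarrow S'$ with $R' \bisMOD S'$ --- which follows by a straightforward induction on the length of the reduction sequence $R \Longrightarrow R'$ using clause~\reff{c:tau} repeatedly. The action clauses \reff{cMOD:in}, \reff{cMOD:out}, \reff{cMOD:open}, \reff{cMOD:msg} and the input clause \reff{c:absMOD} are handled similarly, but here the main obstacle arises: the matching move produced for $P \bisMOD R$ is not a bare transition but a weak transition possibly followed by stuttering (for $\ina n$, $\out n$), and one needs a ``weak/stuttering'' transfer lemma saying that $R \bisMOD S$ together with $R \Ar{\capa}\Rcap R'$ implies $S \Ar{\capa}\Rcap S'$ with $R' \bisMOD S'$, and likewise for $R \mid \msg n \Longrightarrow R'$ in the abstraction case.

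Concretely, I expect the hard part to be establishing these derived transfer lemmas for $\bisMOD$ before attacking transitivity --- in particular showing that $\bisMOD$ is preserved along $\Ar{\mu}$, along the stuttering relations $\Stat{M_1}{M_2}$, and along $\cdot \mid \msg n \Longrightarrow \cdot$. Each is an induction (on the number of $\tau$-steps, or on the stuttering length $i$) that repeatedly invokes the single-step clauses together with the weak $\tau$-transfer property; the stuttering case additionally requires noting that each basic step $P_r \Ar{M_1}\Ar{M_2} P_{r+1}$ of a stutter is itself matched, by clauses \reff{cMOD:in}/\reff{cMOD:out}, by a move of the same shape $\Ar{M_1}\Rcap \dots$, which can be absorbed into a longer stutter on the $S$ side. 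Once these lemmas are in place, the verification that $\bisMOD\bisMOD$ is an intensional bisimulation is mechanical, and transitivity follows; combined with reflexivity and symmetry this gives that $\bisMOD$ is an equivalence relation. I would remark that this is exactly the ``easy consequence of the definition'' alluded to in the statement, the only non-completely-trivial ingredient being the bookkeeping around weak moves and stuttering.
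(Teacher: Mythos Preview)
Your approach is essentially the same as the paper's---show that the relational composition $\bisMOD\bisMOD$ is an intensional bisimulation---and the overall structure you outline is correct. The paper's own proof is much terser and highlights exactly one nontrivial point: for transitivity, clause~\reff{c:absMOD} requires first proving that $\bisMOD$ is preserved by parallel composition with messages, i.e.\ that $P \bisMOD Q$ implies $P \mid \msg n \bisMOD Q \mid \msg n$. You do identify the need for a transfer property ``along $\cdot \mid \msg n \Longrightarrow \cdot$'', but your proposed proof by induction on the number of $\tau$-steps is not quite complete on its own: if the weak reduction $R \mid \msg n \Longrightarrow R'$ never consumes the message, then $R' \equiv R'' \mid \msg n$ with $R \Longrightarrow R''$, and to conclude you need precisely $R'' \mid \msg n \bisMOD S'' \mid \msg n$ from $R'' \bisMOD S''$. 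So the preservation-under-message-context lemma is the actual crux, and the paper singles it out for that reason; it is itself straightforward to verify directly from the clauses of Definition~\ref{d:bisMOD}. The stuttering bookkeeping you emphasise for clauses~\reff{cMOD:in} and~\reff{cMOD:out} does go through as you sketch (absorbing stutter segments into one another), and the paper does not even mention it.
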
 

 \begin{proof}
The only point worth mentioning  is that, for transitivity,
to handle clause (8),
one first needs to prove 
that $\bisMOD$ is preserved by parallel compositions
 with messages (which is anyhow straightforward).
 \end{proof} 

However, it is not obvious that $\bisMOD$ is preserved by all
operators of the calculus, due to the fact that $\bisMOD$ is,
intrinsically, higher-order.  Formally, $\bisMOD$ is not higher-order,
in that the labels of actions do not contain terms.
Clause~\reff{c:tau} of Definition~\ref{d:bisMOD}, however, involves
some higher-order computation, for a reduction may involve movement of
terms (for instance, if the reduction uses rules \trans{Red-In} or
\trans{Red-Out}).  This, as usual in higher-order forms of
bisimilarity, complicates the proof that bisimilarity is preserved by
parallel composition.
% As announced,
% we will introduce a second relation on processes, in order to be able
% to prove congruence.

%%% Local Variables: 
%%% mode: latex
%%% TeX-master: "hls2"
%%% Local IspellDict: british
%%% Local IspellPersDict: ./.ispell
%%% End: 

%subsection congruence

\subsection{Congruence}

In this section, we establish congruence of intensional bisimilarity,
using an auxiliary relation.

\subsubsection{Syntactical relation, $\bisE$}

Our proof of congruence  makes use of a second bisimilarity,
$\bisE$, that,  by construction,  is preserved by all operators
of the calculus, and that is defined as follows:

% \forDS{
% Here is a difference from the POPL version. $\bisE$ is defined up to $\equiv$
% instead of syntactic equivalence. All seems to go perfectly with that. This
% is highly recommended otherwise we would have to add a rule for $!$,
% that would loose the simplicity of this presentation. The role of $!$ is then
% explained in the inversion results.
% }

% \DS{answering above: if it works, no problem. Are we sure there are no
%   problems in the proofs? (the only risk i can see is that using $
%   \equiv$ a process can be expanded and thus become larger)}
% \daniel{Yes, it works. actually the fact that we are up to $\equiv$
%   makes that clause 1 possibly hides an infinite number of cases to
%   check (due to unfoldings of replication), but this is ok, and the
%   setting is much more tractable like this}

% \DSj{below: ``syntactical
%   intensional bisimulation'' sounds ugly. What about just
%   ``intensional bisimulation'', as said in the introduction ?}
% \etij{Yes, but intensional bisimulation has already been defined, and we wanted to stress that this definition, though equivalent, is based on syntactic considerations}
% \daniel{Let me try with `syntax-based intensional bisimulation': it is
%   heavy, but tells what it is. ok?}

\begin{defi}
\label{d:bisEwithEqual}
A symmetric relation on processes \RR{} is a \emph{syntax-based
  intensional bisimulation}  if $P\RR Q$ implies:
\begin{enumerate}[(1)]
\item
\label{cE:par}
If $P \equiv P_1 | P_2$
%and  for all $i$ we have  $P_i \not \equiv \nil$,
then there are $Q_s$ ($s=1,2$) such that 
$Q \equiv  Q_1 |Q_2 $ and for all $s$ 
$P_s \RR Q_s$.

\item 
\label{cE:in}
If $P\equiv \capa. P'$ then 
there are $Q',Q''$ such that
\begin{enumerate}
\item $Q \equiv \capa. Q'$, 
\item $Q'\Rcap Q''$, and
\item $P' \RR Q'$.
\end{enumerate}
 
\item 
\label{cE:msg}
If $P \equiv \msg n$ then 
$Q \equiv \msg n $.

\item
\label{c:abs} If $P \equiv \abs x {P'}$
then
there is $Q'$ such that
\begin{enumerate}
\item
 $Q \equiv \abs x {Q'}$ and 
\item for all $n$
there is $Q''$ such that
$ \msg n | Q \Longrightarrow Q'' $ and $P' \sub n x  \RR Q''$.
\end{enumerate}
 
\item If $P \equiv \amb n {P'}$ then there is $Q'$ such that $Q 
\equiv \amb n {Q'}$ and $P' \RR Q'$.
\end{enumerate} 
$\bisE$ is the largest syntax-based intensional bisimulation. Given two
open terms $P$ and $Q$, we say that $P\bisEo Q$ holds iff for any
closing substitution $\sigma$, $P\sigma\bisE Q\sigma$.
\end{defi}

Clause~(\ref{c:abs}) is typical of asynchronous calculi, as in
clause~(\ref{c:absMOD}) of Definition~\ref{d:bisMOD}.
% its statement
%will entail that $\bisE$ validates the eta law
%(Definition~\ref{d:etalaw}).
The differences between the definitions of $\bisMOD $ and $\bisE $ are
the following. First, labelled transitions are replaced by structural
congruence in the hypothesis of the corresponding clause. Second,
clause \reff{c:tau} about reductions of related processes is removed.
%
% First, the structural congruence and the labelled
% transition in the hypothesis of each clause of $\bisMOD$ is replaced
% by syntactic indentity; for instance, clause \reff{cMOD:amb} becomes
% ``if $P = n[P_1]$ then ...'', and clause \reff{cMOD:in} becomes ``if
% $P = \ina n . P'$ then ...''; the clause \reff{cMOD:par} raises two
% clauses due to the $!$. Secondly,
% % clause \reff{c:nil}, and    (surprisingly)   
% clause \reff{c:tau} is removed.  
Note that a  clause for the process \nil{} is not necessary (see
Lemma~\ref{l:nil-bis} below).

Transitivity of $\bisE$ is not obvious, because it is not immediate
that $\bisE$ is preserved under reductions (there is no clause for
matching $\tau$-transitions, and reductions (i.e., relation
$\Longrightarrow $) are used in a few places, such as the stuttering
relation in the clauses for movement.

We shall prove that $\bisMOD $ and $\bisE$ coincide
(Corollary~\ref{c:MOD-E} below). Thus, transitivity of $\bisE$ will
hold because of $\bisMOD$'s transitivity, and conversely, congruence
of $\bisE$ will ensure congruence of $\bisMOD$.
This proof method, which exploits an auxiliary relation that is
manifestly preserved by the operators of the calculus but that is not
manifestly preserved under reductions, brings to mind Howe's proof
technique for proving congruence of bisimilarity in higher-order
languages~\cite{How96}.  In our case, however, the problem is simpler
because of the intensional clauses~\reff{cMOD:par} and~\reff{c:nil} of
the bisimilarity and because MA is not a fully higher-order calculus:
terms may move during a computation, but they may not be copied as a
consequence of a movement.  We may say that MA is a \emph{linear}
higher-order calculus (indeed the congruence of $\bisMOD$ could also
be proved directly, with a little more work).

\medskip

% \daniel{Here is the place where I moved the definition of eta law and
%   related stuff. it's a bit sad, because it would have fit better
%   right at the beginning of the section where we characterise $\eqL$.
%   but we need $\equiv_E$ here.\\
%   Nota: Before the section with the characterisations, the eta law and
%   weak head normal form for eta are really necessary in the proof for
%   completeness in the general case.}

In order to establish congruence of $\bisE$, we introduce an important
equality between processes, that plays a technical role here but will
also be used when characterising logical equivalence in
Section~\ref{s:axiom}.

% In order to describe the construction of local characteristic formulas
% for certain processes, we now introduce an important equality between
% processes, which plays a technical role here but will also be used
% when characterising logical equivalence in the next section.

% \DSj{below: not clear the def of \etarew: reduction under arbitrary
%   contexts? }
%  \etij{Yes, and $\etarew {}_h$ for only head reduction. We used both, actually, and should have taken some care when stating the results.}
\begin{defi}[Eta law, $\equiv_E$]
\label{d:etalaw}
The \emph{eta law} is given by the following equation:
\[
\abs x ( \abs x P| \msg x ) = \abs x P\,.
\]
We use the eta law to define the following three relations:
\begin{enumerate}[$\bullet$]
\item
 \etarew{} is the eta law oriented from left to right;
that is, $P  \etarew Q$ holds if $Q$ is obtained from $P$ by 
applying the  eta law once, from left to right, to one 
 of its subterms (modulo $\equiv$).

\item $\etarew^*$ stands for the reflexive, transitive closure of
  \etarew; 
\item $\equiv_E$ is the smallest congruence satisfying the laws
of $\equiv$ plus the eta law.
\end{enumerate} 
\end{defi}
%
% \etij{Definition changed a bit following Davide's suggestion}
% \DSj{changed a bit below}

% \DSa{before it said ''\etarew is the eta law oriented from left to right to rewrite
% processes up to $\equiv$;''. it was not clear the meaning of ``to rewrite
% processes up to $\equiv$'', check that the modifications i did is ok
% }

In the lemma below, we write $P\etarewH P'$  if $P\etarew P'$ and
this represents  a top-level rewrite step, i.e.,
    we do not  rewrite under capabilities and input prefixes. 
Similarly, $\etarewH^*$ is the reflexive and transitive closure of  $\etarewH $.

% \etis{Ici j'ai retouché le lemme. Avant il y avait des choses de l'ordre de la définition et d'autres de l'ordre du lemme. Et puis Ref 2 etait pas content avec up to $\equiv$}

\begin{lem}\label{l:eta_nf}
Let $\mathcal R$ stand for $\etarew$ or $\etarewH$. We say that
\begin{enumerate}[\em(1)]
  \item $\mathcal R$ is confluent up to $\equiv$, that is, for all
    $P,Q,R$ such that $P{\mathcal R^*} Q$ and $P{\mathcal R^*} R$, 
there is $Q',R'$ such that $Q{\mathcal R^*} Q'$, $R{\mathcal R^*} R'$ and $Q'\equiv R'$.
  \item $\mathcal R$ is terminating, that is $\mathcal R^*$ is a well-founded order.
  \end{enumerate}
\end{lem}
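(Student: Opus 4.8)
The plan is to prove the two properties — confluence up to $\equiv$ and termination — separately, and to handle the case $\mathcal R = \etarew$ (rewriting allowed anywhere in a subterm, modulo $\equiv$) and $\mathcal R = \etarewH$ (only top-level rewrites) uniformly where possible, noting that the top-level case is essentially a restriction of the general one.

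For \textbf{termination}, I would exhibit a strictly decreasing measure. The eta law $\abs x(\abs x P\,|\,\msg x) = \abs x P$, read left to right, removes one abstraction prefix and one message, so the natural candidate is to count the number of occurrences of the abstraction operator, or the total number of prefixes/atomic subterms, in a process. The subtlety is that $\equiv$ is used in the definition of $\etarew$, so the measure must be invariant under $\equiv$; since the structural congruence axioms ($P|\nil\equiv P$, associativity/commutativity of $|$, and the replication laws $!P\equiv {!P}|P$, $!\nil\equiv\nil$, $!(P|Q)\equiv{!P}|{!Q}$, $!!P\equiv{!P}$) can duplicate or erase material, a raw syntactic size is \emph{not} $\equiv$-invariant. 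The fix is to define the measure on the normal forms / tree representation: e.g. the number of abstraction nodes in the (finite, up to $\equiv$) canonical tree form of $P$, or — simpler — to observe that each $\etarew$ step strictly decreases the number of $\abs{x}{(\cdot)}$ occurrences counted \emph{in a $\equiv$-canonical representative} (replication does not interact with abstraction prefixes in a way that creates new ones). One then checks that if $P\equiv P'\etarew Q'\equiv Q$ then this count of $Q$ is strictly less than that of $P$. Termination of $\etarewH^*$ follows since it is a sub-relation.

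For \textbf{confluence up to $\equiv$}, since we will already have termination, by Newman's lemma it suffices to prove \emph{local} confluence up to $\equiv$: if $P\etarew Q$ and $P\etarew R$ (via two single steps, possibly at different positions), then $Q$ and $R$ have a common $\etarew^*$-reduct modulo $\equiv$. The analysis is a standard critical-pair / residual argument. Two redexes in $P$ are each an occurrence (modulo $\equiv$) of a subterm of the form $\abs x(\abs x P_0\,|\,\msg x)$. If they occur at disjoint positions, the two contractions commute and we close the diamond in one step each. The interesting cases are when the two redex occurrences overlap; one checks that the only way a subterm of the shape $\abs x(\abs x P_0\,|\,\msg x)$ can contain or share material with another such subterm is by nesting inside the $P_0$ part, i.e. the inner redex sits strictly below the outer one and contracting either does not destroy the other (the shape $\abs x(\abs x(\cdot)\,|\,\msg x)$ at the top is unaffected by rewriting inside $(\cdot)$), so again the steps commute. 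Because redexes are only identified \emph{modulo $\equiv$}, in each case the two ways of closing the square land in $\equiv$-equivalent, not literally equal, processes — which is exactly why the statement is ``confluence up to $\equiv$'' and also why one must be a little careful that $\equiv$-rearrangements (especially those from replication) do not multiply or relocate a redex in a way that breaks the pairing; here the key observation is that replication commutes with the contexts involved and that $\equiv$ preserves the set of redex occurrences up to $\equiv$.

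The main obstacle I anticipate is the interaction with structural congruence, and in particular with replication. All the real content is in checking that (i) the termination measure is genuinely $\equiv$-invariant despite the copying/erasing power of the replication axioms, and (ii) in the local-confluence analysis, no $\equiv$-step can turn one redex into two, or move a redex ``around'' another, in a way that prevents the diamond from closing up to $\equiv$. Both are handled by working with $\equiv$-canonical forms (available since $\equiv$ is decidable, Theorem~\ref{t:dz}, and has a well-understood normal-form theory) rather than with raw syntax, so that $\etarew$ becomes a relation on canonical forms whose steps strictly decrease a clean integer measure and whose critical pairs are finite in number and easily joined. The $\etarewH$ case is then immediate: its steps are a subset of the $\etarew$ steps, termination is inherited, and local confluence holds a fortiori since top-level redexes can never overlap non-trivially (at most one of them is at the root), so disjoint-position commutation always applies.
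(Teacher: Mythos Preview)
The paper states this lemma without proof, so there is no argument to compare against; your proposal must stand on its own.

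Your plan is the natural one, and the local-confluence half is fine: eta-redexes are input-prefixed terms, so two distinct redex occurrences are either disjoint or one sits strictly inside the body of the other, and in both cases the contractions commute modulo~$\equiv$. The termination half, however, has a real gap precisely where you flag ``the main obstacle''. You assert that ``replication does not interact with abstraction prefixes in a way that creates new ones'', but under the reading of $\etarew$ as rewriting modulo~$\equiv$ this is false. Take $R \eqdef \abs x(\abs x \nil \,|\, \msg x)$ and $P \eqdef {!R}$. Using $!R \equiv {!R}\,|\,R$ and contracting the unreplicated copy gives $P \etarew {!R}\,|\,\abs x \nil$; the replicated redex persists, and iterating yields ${!R}\,|\,\abs x\nil\,|\,\abs x\nil\,|\cdots$, an infinite $\etarew$-sequence through pairwise $\not\equiv$ terms. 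Your proposed measure (abstraction nodes in a $\equiv$-canonical representative) actually \emph{increases} along this sequence ($2,3,4,\ldots$).

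To repair this you must commit to one of two readings of Definition~\ref{d:etalaw}. Either the ``(modulo~$\equiv$)'' is narrow --- the redex \emph{pattern} is matched up to~$\equiv$, but the rewrite acts on the syntactic subterm structure of the given term, without first unfolding replications --- in which case raw syntactic size strictly decreases, termination is immediate, and what remains is a separate check that $\equiv$-equivalent inputs have $\equiv$-equivalent normal forms. Or the reading is liberal, in which case termination must be formulated on $\equiv$-classes and the measure must count abstractions in the underlying \emph{shape} (once per distinct single component under each~$!$, not once per unfolded copy); this is doable but is not what you wrote. Either fix is workable, but as it stands the parenthetical dismissal of the replication interaction is unjustified, and without it Newman's lemma cannot be invoked.
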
  

We call the \emph{eta normal form} of $P$ (the \emph{head eta normal form} of $P$, respectively) the unique normal form,  up to $\equiv$, of $\etarew$ (of $\etarewH$, respectively).

%\DSj{end changes --- DH: changes ok for me}

% A process $P$ is \emph{eta-normalised} if there is no context $\qct$
% and process $Q$ such that $P \equiv \ct { \abs x {(\msg x | \abs x
%     Q)}}$ (that is, modulo $\equiv$, $P$ does not contain any
% subcomponent of the form
% % $\nil | P $, $ P | \nil $,
%  $\abs x {(\msg x |
%    \abs x Q)}$). %, $\abs x {   \abs x P | \msg x}$.
 
%  The \emph{eta normal form} of a process $P$ is the
%  \emph{eta-normalised} process obtained by rewriting $P$ up to
%  $\equiv$ using the above equation from left to right.

%  The \emph{head eta normal form} of $P$ is the process obtained by
%  rewriting $P$ only at top-level, i.e., without rewriting under
%  capabilities and input prefixes.

% The notion of head eta normal form will be useful in
% Section~\ref{s:inductive}.

\begin{rem}[Eta law and stuttering]
  The eta law % (Definition~\ref{d:etalaw}) also
  expresses a form of stuttering (in communication, as opposed to
  stuttering in movements -- see Definition~\ref{d:statt}).
  The logic being insensitive to both forms of stuttering, we have to
  reason modulo the eta law.
%   We shall see in Section~\ref{s:axiom} that the logic is
%   insensitive to this form of stuttering.
\end{rem}

\medskip

We now present some results that are needed to prove congruence of
$\bisE$.

\begin{lem}
\label{l:nil-bis}
If $\nil \bisE Q $ then $Q \equiv \nil$.
\end{lem}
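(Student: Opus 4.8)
The plan is to prove the contrapositive-free direct statement by exhibiting a suitable relation, or more simply by analysing the structure of $Q$ directly and using the clauses of Definition~\ref{d:bisEwithEqual}. First I would put $Q$ in a convenient normal form with respect to $\equiv$ (and, if helpful, with respect to the eta law, using Lemma~\ref{l:eta_nf}), so that $Q$ is written as a parallel composition $Q \equiv Q_1 \mid \cdots \mid Q_k$ of \emph{single} processes together with messages and a remaining $\nil$ part. The goal is then to show that this decomposition has no single components and no messages, i.e.\ $k=0$, which gives $Q\equiv\nil$.

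The key step is to use clause~\reff{cE:par} of the definition of $\bisE$ together with the hypothesis $\nil\bisE Q$. Since $\nil\equiv\nil\mid\nil$, clause~\reff{cE:par} forces a matching decomposition $Q\equiv Q'\mid Q''$ with $\nil\bisE Q'$ and $\nil\bisE Q''$; iterating, any parallel decomposition of $Q$ must have \emph{every} factor related to $\nil$ by $\bisE$. So it suffices to rule out the possibility that a \emph{single} process, or a message, is $\bisE$-related to $\nil$. For a single process there are two shapes to consider. If $Q_1\equiv\amb{n}{R}$, then clause~(5) applied to $\nil\bisE\amb{n}{R}$ would require $\nil\equiv\amb n{R'}$ for some $R'$, which is impossible. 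If $Q_1\equiv\capa.R$, then clause~\reff{cE:in} applied to $\nil\bisE\capa.R$ would require $\nil\equiv\capa.Q'$, again impossible. Similarly, if a factor is a message $\msg n$, clause~\reff{cE:msg} applied to $\nil\bisE\msg n$ would require $\nil\equiv\msg n$, which is false. An abstraction factor $\abs x R$ is handled by clause~\reff{c:abs}, which would require $\nil\equiv\abs x{Q'}$, impossible. Hence $Q$ has no single, no message and no abstraction components, so $Q\equiv\nil$.

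\medskip

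The main subtlety — and the one step worth being careful about — is the interaction with replication and with the eta law: a priori $Q$ might be something like $!R$, which is not syntactically a parallel composition of singles, messages and $\nil$. Here I would invoke the standard normal-form property of $\equiv$ (replication can be unfolded and distributed, $!\nil\equiv\nil$, $!(P\mid Q)\equiv\,!P\mid\,!Q$, $!!P\equiv\,!P$), so that up to $\equiv$ every process is a parallel composition of components each of which is either $\nil$, a single process $\capa.R$ or $\amb n R$, a message, an abstraction, or a replicated component $!S$; a replicated component $!S$ with $S\not\equiv\nil$ can itself be unfolded as $!S\mid S$, exposing a non-trivial non-replicated factor, to which the argument above applies (and yields a contradiction). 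Since structural congruence is decidable and well-behaved (Theorem~\ref{t:dz}), this case analysis is exhaustive. The only genuinely delicate point, then, is making sure the normal-form argument is complete — that every $\bisE$-class of a non-$\nil$ process really does expose one of the "forbidden" shapes as a parallel factor — but this is routine once the normal form of $\equiv$ is in hand, and so I expect the proof to be short.
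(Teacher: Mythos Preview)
Your approach is correct and is essentially the paper's: use symmetry of $\bisE$, write $Q\equiv Q'\mid Q''$ with $Q'$ a single component (prefix, ambient, message, or abstraction), and apply the corresponding shape-forcing clause of Definition~\ref{d:bisEwithEqual} with $Q'$ on the left to force $\nil$ to have that shape, which is impossible. Two small remarks: the eta normal form plays no role here (the statement and the clauses of $\bisE$ are purely about $\equiv$), and your ``iterating'' argument from $\nil\equiv\nil\mid\nil$ only gives \emph{some} decomposition of $Q$ with factors related to $\nil$, not \emph{every} one---the clean route is exactly what you do later, namely use symmetry to put $Q$ (or its single factor) on the left and apply clauses~\reff{cE:in}--(5) directly.
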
 
\begin{proof}
  Suppose $Q\equiv \nil$ does not hold. This means that there exists
  $Q', Q''$ s.t.\ $Q \equiv Q' | Q''$, with $Q'$ is of the form $\abs
  x R$, $\msg \MM$, $M . R$, or $\amb n R$. Then by applying the
  corresponding clause in the definition of $\bisE$, we deduce
  $Q\not\equiv\nil$, i.e., a contradiction.
\end{proof} 

\begin{lem}
\label{c:equiv-bis}
$\equiv_E\, \subseteq\, \bisE$ and $\equiv_E\bisE\equiv_E\,\, \subseteq\,\, 
\bisE$.
\end{lem}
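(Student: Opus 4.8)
The plan is to prove the two inclusions separately, the first by exhibiting an explicit syntax-based intensional bisimulation containing $\equiv_E$, the second by a closure argument on top of the first.

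\medskip
\noindent\textbf{Step 1: $\equiv_E \subseteq \bisE$.}
I would show that the relation $\RR \,\defiDS\, \equiv_E$ is itself a syntax-based intensional bisimulation; since $\bisE$ is the largest such relation, this gives the inclusion. The relation $\equiv_E$ is symmetric, so it suffices to check clauses~(1)--(5) of Definition~\ref{d:bisEwithEqual} for $P \equiv_E Q$. The key observation is a normal-form analysis: by Lemma~\ref{l:eta_nf}, $\etarew$ is confluent and terminating up to $\equiv$, and $\equiv_E$ is exactly the equivalence generated by $\equiv$ together with the eta law, so $P \equiv_E Q$ holds iff $P$ and $Q$ have the same eta normal form up to $\equiv$. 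Hence it is enough to treat (a) the case $P \equiv Q$, which is immediate since $\equiv$ trivially satisfies all clauses of $\bisE$ (for clause~(2) using $Q' \Rcap Q'$ via an empty sequence, i.e. $\Rcap$ taken reflexively through $\Rar$ of zero steps, or the stuttering of length $1$), and (b) a single eta rewrite step $P \etarew Q$ applied to a subterm. For case~(b) I would do a case analysis on where the rewrite occurs relative to the outermost syntactic constructor exposed by $\equiv$. If the rewrite is entirely inside one parallel component, or inside the body of an ambient, or under a capability or input prefix, then the corresponding clause is satisfied by taking the matching residual and appealing to $P' \equiv_E Q'$ (the residuals differ by at most one eta step, possibly under a context). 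The only genuinely interesting case is when the rewrite happens \emph{at} the head: $P \equiv \abs x(\abs x R \mid \msg x)$ and $Q \equiv \abs x R$ (or vice versa). Here clause~(\ref{c:abs}) must be checked: both are abstractions $\abs x(\dots)$ and $\abs x R$; feeding $\msg n$ to $\abs x(\abs x R\mid\msg x)$ yields $(\abs x R)\sub n x \mid \msg n$, which after one $\tau$-step (Red-Com) reduces to $R\sub n x \sub n x = R\sub n x$, while feeding $\msg n$ to $\abs x R$ gives, after one Red-Com step, $R\sub n x$; since $R\sub n x \equiv_E R\sub n x$ the clause is satisfied, and symmetrically for the other direction. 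I expect this head-eta case, and checking that the stuttering requirement~(2b) is met in all the prefix cases, to be the main (though not deep) obstacle.

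\medskip
\noindent\textbf{Step 2: $\equiv_E \bisE \equiv_E \,\subseteq\, \bisE$.}
Given Step 1, this follows by showing that $\RR \,\defiDS\, \equiv_E \bisE \equiv_E$ is again a syntax-based intensional bisimulation, or — more economically — by observing that $\bisE$ is closed under composition with $\equiv$ (which is a special case of Lemma~\ref{l:MODeq}-style reasoning, but here we want it for $\bisE$ directly) and then invoking Step 1. Concretely: suppose $P \,\equiv_E\, P_1 \,\bisE\, Q_1 \,\equiv_E\, Q$. By Step 1, $P \,\bisE\, P_1$ and $Q_1 \,\bisE\, Q$, so it would suffice to know that $\bisE$ is transitive; but transitivity of $\bisE$ is \emph{not} yet available (the text explicitly flags this, deferring it to Corollary~\ref{c:MOD-E}). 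So instead I would argue directly that $\RR = \equiv_E \bisE \equiv_E$ satisfies the bisimulation clauses, using Step 1 to absorb the outer $\equiv_E$'s. For instance, for clause~(1): if $P \equiv P'_1 \mid P'_2$, then since $P \equiv_E P_1$ and $\equiv_E$ respects parallel decomposition up to eta normal forms (again Lemma~\ref{l:eta_nf}), $P_1$ decomposes correspondingly; transporting through $\bisE$ gives a decomposition of $Q_1$, and then through the final $\equiv_E$ a decomposition of $Q$, with components related by $\equiv_E \bisE \equiv_E = \RR$. The other clauses are handled analogously, with the prefix and abstraction clauses requiring one to track how the stuttering sequences $\Rcap$ and the communication reductions $\msg n \mid Q \Longrightarrow Q''$ compose with $\equiv_E$ on both ends; here one uses that $\equiv_E \subseteq \bisE$ from Step 1 to match those internal reduction sequences. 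The real subtlety is ensuring that the outer $\equiv_E$'s can always be "pushed through" the residuals so that the resulting pairs land back in $\RR$ rather than in some larger relation — this is where the congruence-friendly shape of $\equiv_E$ (being itself a congruence) is essential.

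\medskip
Thus the proof structure is: first establish $\equiv_E\subseteq\bisE$ by a normal-form-driven case analysis showing $\equiv_E$ is a syntax-based intensional bisimulation (the head-eta clause being the crux), and then bootstrap this to $\equiv_E\bisE\equiv_E\subseteq\bisE$ by showing this composite relation is again a syntax-based intensional bisimulation, absorbing the outer $\equiv_E$ factors via the first inclusion and the congruence properties of $\equiv_E$. I do not expect any step to require new machinery beyond Lemma~\ref{l:eta_nf} and the definitions.
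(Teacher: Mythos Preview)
Your proposal is correct in outline and is the natural expansion of the paper's proof, which reads in full: ``Straightforward from the definition of $\bisE$.'' One small correction to your Step~2: the commutation of $\equiv_E$ with the reduction sequences $\Rcap$ and $\Longrightarrow$ appearing on the $Q$-side does \emph{not} follow from $\equiv_E\subseteq\bisE$ alone, since the definition of $\bisE$ contains no clause for $\tau$-transitions (closure of $\bisE$ under reduction is precisely the content of the \emph{later} Lemma~\ref{l:tau-action}, which itself invokes the present lemma); instead one checks directly that $\equiv_E$ is a weak bisimulation for $\longrightarrow$, which is immediate because the eta law only touches input prefixes and any extra communication it introduces is absorbed by one additional reduction step.
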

 
\begin{proof}
Straightforward from the definition of $\bisE$.
\end{proof} 

If $\mathcal R$ is a binary relation on processes, we note $\mathcal
R\sub n m$ for the relation defined as
$\{(P\sub n m,Q\sub n m).\, (P,Q)\in \mathcal
R\}$. 

\begin{lem}\label{l:rename_bis}
 \label{l:subst_bis}
If $\mathcal{R}$ is a $\bisE$-bisimulation, then for any $n, m$,
$\mathcal{R}\sub n m$ is a $\bisE$-bisimulation.
\end{lem}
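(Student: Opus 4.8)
The plan is to show that if $\mathcal R$ is a $\bisE$-bisimulation, then so is $\mathcal R\sub n m$, by checking each clause of Definition~\ref{d:bisEwithEqual} directly. Suppose $(P',Q')\in\mathcal R\sub n m$, so that $P'=P\sub n m$ and $Q'=Q\sub n m$ for some pair $(P,Q)\in\mathcal R$. The key structural observation is that the renaming operation $\cdot\sub n m$ commutes appropriately with all the syntactic operators and with the relations $\equiv$, $\Rcap$, $\Longrightarrow$ that occur in the clauses: $\equiv$ is preserved by substitutions (it is a congruence and the substitution can be pushed through its defining laws), and similarly the labelled/weak transitions used in the movement and abstraction clauses are stable under renaming, up to the corresponding renaming of labels (with the caveat that a transition label $\capa$ mentioning $n$ may become a label mentioning $m$, but this causes no problem since we only require the existence of matching $Q''$).

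First I would handle clause~(1): if $P\sub n m\equiv P_1'|P_2'$, I need to find a decomposition of $P$ itself. Using the fact that structural congruence classes of processes are essentially determined by their tree/multiset structure, any parallel decomposition of $P\sub n m$ lifts to a parallel decomposition $P\equiv P_1|P_2$ with $P_i\sub n m\equiv P_i'$; then applying the clause for $\mathcal R$ to $(P,Q)$ gives $Q\equiv Q_1|Q_2$ with $P_i\,\mathcal R\,Q_i$, and applying $\cdot\sub n m$ yields $Q\sub n m\equiv Q_1\sub n m|Q_2\sub n m$ with $(P_i\sub n m, Q_i\sub n m)\in\mathcal R\sub n m$, as required. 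Clauses~(3) (message) and~(5) (ambient) are entirely analogous but simpler: a message $\msg k$ or ambient $k[\cdot]$ shape of $P\sub n m$ comes from a message or ambient shape of $P$ with the name suitably renamed, and the matching for $Q$ transfers. Clause~(2) (capability prefix with stuttering) uses in addition that $Q'\Rcap Q''$ implies $Q'\sub n m\Rcap Q''\sub n m$ — here I would invoke stability of $\Longrightarrow$ and of the labelled transitions under renaming, noting that the notation $\Rcap$ unfolds to compositions of $\Longrightarrow$ and $\arr{\capa}$-steps whose labels get renamed uniformly. Clause~(4) (abstraction) is the most delicate but still routine: if $P\sub n m\equiv\abs x R'$ then $P\equiv\abs x R$ with $R\sub n m\equiv R'$ (choosing $x$ fresh for $n,m$ by alpha-conversion, which the paper works modulo), and for the second half I use that $\msg k|Q\Longrightarrow Q''$ can be renamed and that substitution commutes: $(R\sub k x)\sub n m\equiv (R\sub n m)\sub{k\sub n m}{x}$.

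The main obstacle I anticipate is purely bookkeeping: making sure the renaming $\cdot\sub n m$ interacts correctly with bound variables in the abstraction clause and with the labels of transitions, and confirming that $\equiv$ and the various derived transition relations are genuinely closed under renaming. None of this is deep — $\cdot\sub n m$ is just a syntactic substitution of a name for a name, not capture-avoiding in a problematic way since names are not bound — so the real content is simply verifying that each of the five clauses survives the renaming, which follows by the commutation properties sketched above. Hence $\mathcal R\sub n m$ is a $\bisE$-bisimulation, completing the proof.
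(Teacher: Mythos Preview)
Your proposal is correct and follows essentially the same approach as the paper. The paper's own proof is extremely terse---it simply observes that, since $\bisE$ has no explicit clause matching $\tau$-transitions, the five clauses of Definition~\ref{d:bisEwithEqual} are ``obviously stable by substitution''; your clause-by-clause verification is exactly the unpacking of that observation, including the points (commutation of renaming with $\equiv$, with $\Rcap$/$\Longrightarrow$, and with the bound variable in the abstraction clause) that make the stability genuinely hold.
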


\begin{proof}
  Since $\tau$ transitions are not tested in $\bisE$, substitution is
  not mentioned in Def.~\ref{d:bisEwithEqual}. All clauses of the
  latter definition are obviously stable by substitution.
\end{proof}

%%% \begin{lem}
%%% \label{l:subst_bis}
%%% If $P\bisE Q$, then for all $n,m$, $P\sub n m \bisE Q\sub n m$.
%%% \end{lem}

%%% \begin{proof}
%%% Direct consequence of Lemma~\ref{l:rename_bis}.
%%% \end{proof}

\begin{lem}
\label{l:pres-Bis}
For any possibly open processes $P$ and $Q$, if
 $P \bisEo Q$ then $\C\fillhole{P} \bisEo
\C\fillhole{Q}$, for all contexts $\C$.
\end{lem}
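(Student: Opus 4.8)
The plan is to prove Lemma~\ref{l:pres-Bis} by structural induction on the context $\C$, following the standard schema for congruence proofs of bisimilarities. The base case $\C = \hole$ is immediate. For the inductive cases, we need to show that $\bisEo$ is preserved by each process-forming operator in isolation; then congruence for arbitrary contexts follows by composing these facts along the structure of $\C$ (using the fact, supplied by Lemma~\ref{l:rename_bis}, that $\bisEo$ behaves well under the substitutions introduced when passing a name into an abstraction context). So the core of the argument reduces to the following closure properties: if $P \bisEo Q$ then (i) $P | R \bisEo Q | R$, (ii) $\capa.P \bisEo \capa.Q$, (iii) $\amb n P \bisEo \amb n Q$, (iv) $\abs x P \bisEo \abs x Q$, and (v) $!P \bisEo\, !Q$. (Closure under $\msg\eta$ is vacuous and closure under the $\nil$ and message base cases is trivial.)

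For each of these I would exhibit a candidate relation and check the clauses of Definition~\ref{d:bisEwithEqual} on it. For parallel composition the natural candidate is $\RR = \{ (\C'\fillhole{P}, \C'\fillhole{Q}) : P\bisEo Q,\ \C' \text{ built from } |,\ \amb{}{}, \capa.,\ !,\ \msg{}\}$, or more simply the congruence closure of $\bisEo$ under these operators; one then verifies that this relation is a syntax-based intensional bisimulation. The key observation making this go through is that the clauses of $\bisE$ test structural congruence of the whole process, so when $\C'\fillhole{P} \equiv P_1 | P_2$ one uses the structure of $\C'$ together with $\equiv$-decomposition facts to split the hole's content, invoke the clause for $P\bisEo Q$ at the appropriate sub-level, and reassemble. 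For the prefix, ambient, and abstraction cases, the relevant clauses of $\bisE$ essentially ``look through'' the prefix/ambient/abstraction, so closure is almost immediate — e.g. for $\capa.P$ versus $\capa.Q$ we must provide $Q'$ with $Q'\Rcap Q''$ and $P\RR Q'$; taking $Q' = Q$ and using that $Q \Rcap Q$ (since $\Rcap$, being $\Stat{\cdot}{\cdot}$ or $\Rar$, is reflexive with $i=1$) works, modulo using Lemma~\ref{c:equiv-bis} to absorb the $\equiv$-steps. For abstraction, closure relies on Lemma~\ref{l:rename_bis}: from $P\bisEo Q$ we get $P\sub n x \bisEo Q\sub n x$ for every $n$, which is exactly what clause~(\ref{c:abs}b) needs.

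The replication case $!P \bisEo\, !Q$ is the one requiring more care, because $!P$ may structurally decompose in many ways (via $!P \equiv\, !P | P$, $!(P_1|P_2) \equiv\, !P_1 |\, !P_2$, $!!P \equiv\, !P$, etc.), and a given $P$ may itself be a parallel composition. The clean way is to take the candidate relation to be the smallest congruence containing $\bisEo$ — equivalently, to close $\bisEo$ under all operators simultaneously — and prove in one go that this congruence closure is a syntax-based intensional bisimulation; replication is then handled together with parallel composition, using the structural laws for $!$ to reduce any decomposition of a replicated process to a parallel composition of copies and then applying the parallel-composition reasoning. I expect this simultaneous closure to be the main technical obstacle: one must be careful that the decomposition clause~(\ref{cE:par}), when applied inside a replicated or deeply nested context, always matches up sublevels correctly, and that the bookkeeping of $\equiv_E$-steps (justified by Lemma~\ref{c:equiv-bis}, which gives $\equiv_E \bisE \equiv_E\, \subseteq\, \bisE$) is done consistently so that the relation one exhibits is genuinely symmetric and genuinely closed. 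Since $\bisE$ contains no clause for matching $\tau$-transitions, no higher-order complications arise here — that difficulty is deferred to the separate proof that $\bisE = \bisMOD$ — which is precisely why $\bisE$ was introduced, and why this lemma, once the congruence-closure relation is set up, is essentially a routine (if somewhat lengthy) verification.
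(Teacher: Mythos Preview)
Your proposal is correct and follows essentially the same approach as the paper: the paper's proof is the single line ``By induction on $\C$, using the definition of $\bisE$'', and what you have written is a careful elaboration of precisely that induction. One small remark: the lemma you invoke for the abstraction case should really be the definition of $\bisEo$ itself (closing substitutions), rather than Lemma~\ref{l:rename_bis}, which concerns name-for-name renaming $\sub n m$; the passage from $P\bisEo Q$ to $P\sub n x \bisEo Q\sub n x$ is immediate from the definition of $\bisEo$ once one observes that any closing substitution for $P\sub n x$ extends to one for $P$.
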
 

%\etis{J'ai viré la ref au lemme précédent.  Il faut néanmoins garder le lemme précédent, il sert pour l'inversion.}
\begin{proof}
  By induction on $\C$, using the definition of $\bisE$.
%%  and Lemma~\ref{l:subst_bis}
\end{proof} 

\smallskip

To prove that $\bisMOD$ and $\bisE$ coincide, the main result we need
is that $\bisE$ is preserved under reductions:

\begin{lem}
\label{l:tau-action}
Suppose $P \bisE Q$ and $P \longrightarrow P' $. Then   there is $Q'$
such that $Q \Longrightarrow Q'$ and $P' \bisE Q'$.
\end{lem}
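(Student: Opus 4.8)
The plan is to proceed by induction on the derivation of the reduction $P \longrightarrow P'$, following the reduction rules of Table~\ref{t:red_rules}. The core idea is that every reduction of $P$ is, up to $\equiv$, triggered by a redex involving one or two single subterms of $P$ (a capability prefix together with an ambient, or a message together with an abstraction, etc.). Since $\bisE$ inspects exactly this top-level syntactic structure — via clauses~\reff{cE:par}, \reff{cE:in}, \reff{c:abs}, and the ambient clause — we can locate in $Q$ matching subterms and reconstruct a corresponding reduction sequence $Q \Longrightarrow Q'$, using Lemma~\ref{c:equiv-bis} (so that we may freely work modulo $\equiv_E$) and Lemma~\ref{l:subst_bis} (to handle the renamings introduced by communication).

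First I would treat the base cases. For \trans{Red-Com}: if $P \equiv \msg\eta \mid \abs x R_1 \mid P_2$, then using clauses~\reff{cE:par}, \reff{cE:msg} and~\reff{c:abs} we find $Q \equiv \msg\eta \mid \abs x S_1 \mid Q_2$ with appropriate $\bisE$-related components; clause~\reff{c:abs} directly supplies $Q'' $ with $\msg\eta \mid \abs x S_1 \Longrightarrow Q''$ and $R_1\sub\eta x \bisE Q''$, and combining with $P_2 \bisE Q_2$ and Lemma~\ref{l:pres-Bis} gives the result. For \trans{Red-Open}: $P \equiv \open n . R_1 \mid \amb n{R_2} \mid P_2$; clause~\reff{cE:in} gives $Q \equiv \open n . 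S_1 \mid \cdots$ with $R_1 \bisE S_1$ (the stuttering part $S_1 \Rcap S_2$ for $\open n$ is just $\Rar$, hence trivial), and the ambient clause handles $\amb n{R_2}$; firing \trans{Red-Open} in $Q$ yields the match. The cases \trans{Red-In} and \trans{Red-Out} are the delicate ones: here $P$ has a nested structure $\amb m{\amb n{\ina m.R_1 \mid R_2}\mid R_3}$ (or the dual), and after peeling off two layers of the ambient clause and clause~\reff{cE:in}, we obtain in $Q$ a capability $\ina m . S_1$ with $S_1 \Rcapar{\ina m} S_1'$, i.e.\ a \emph{stuttering} sequence $S_1 \Ar{\out m}\Ar{\ina m}\cdots$ (by the $\Rcap$-notation, $\Rcapar{\ina m}$ is $\Stat{\out m}{\ina m}$). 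This is where the reconstruction of $Q \Longrightarrow Q'$ requires several reductions rather than one, exactly as illustrated in Example~\ref{e:stuttloop}: $Q$ first does the \trans{Red-In} corresponding to $P$'s move, then replays the stuttering loop inside the ambients to discharge $S_1 \Rcap \cdots$, landing on a process $\bisE$-related to $P'$.

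Then I would handle the inductive cases. For \trans{Red-Par}, $P \equiv P_1 \mid P_2$ with $P_1 \longrightarrow P_1'$; clause~\reff{cE:par} gives $Q \equiv Q_1\mid Q_2$ with $P_i \bisE Q_i$, the induction hypothesis gives $Q_1 \Longrightarrow Q_1'$ with $P_1' \bisE Q_1'$, and Lemma~\ref{l:pres-Bis} closes the parallel context. For \trans{Red-Amb}, the ambient clause plus induction works analogously. For \trans{Red-Str} we use that $\bisE$ is compatible with $\equiv$ — indeed $\equiv \subseteq \equiv_E \subseteq \bisE$ and $\equiv_E\bisE\equiv_E \subseteq \bisE$ by Lemma~\ref{c:equiv-bis} — so we may first normalise $P$ up to $\equiv$, apply the result for the structurally-identified reduction, and absorb the final $\equiv$ on $P''$.

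\textbf{Main obstacle.} The hard part will be the \trans{Red-In}/\trans{Red-Out} cases, where the $\bisE$-clause~\reff{cE:in} only guarantees a \emph{stuttered} match $S_1 \Rcap S_2$ rather than an exact one, so the matching reduction for $Q$ is not a single step but a carefully orchestrated sequence of \trans{Red-In}/\trans{Red-Out} reductions that first mimics $P$'s movement and then consumes the stuttering loop; checking that this sequence ends in a term $\bisE$-related to $P'$ (rather than merely $\equiv_E$-related) requires the closure property $\equiv_E\bisE\equiv_E\subseteq\bisE$ and some bookkeeping about which components moved. A secondary subtlety is that reductions in MA are higher-order (terms are physically relocated by \trans{Red-In}/\trans{Red-Out}), so one must be careful that the relocated subterms remain $\bisE$-related; this is manageable precisely because, as noted in the paper, MA is only \emph{linearly} higher-order — terms move but are never duplicated — so the relation $\bisE$ restricted to the moved components can simply be carried along.
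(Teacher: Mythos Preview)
Your proposal is correct and follows essentially the same approach as the paper: induction on the reduction derivation, case analysis on the last rule, using the syntactic clauses of $\bisE$ to decompose $Q$ into matching components, and Lemma~\ref{l:pres-Bis} to reassemble. Your identification of the \trans{Red-In}/\trans{Red-Out} cases as the delicate ones (because the stuttering in clause~\reff{cE:in} forces $Q$ to reply with a \emph{sequence} of reductions) is exactly right and matches the paper's treatment; the only minor difference is that the paper does not invoke the $\equiv_E$-closure or Lemma~\ref{l:subst_bis} here --- Lemma~\ref{l:pres-Bis} alone suffices to lift the componentwise $\bisE$-relations to the rebuilt term --- so your argument is slightly more elaborate than necessary, but not wrong.
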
 

\proof
By induction on the depth of the derivation proof of $P
\longrightarrow P' $. We proceed by case analysis on the last rule
used in the derivation. 

\begin{enumerate}[$\bullet$]
\item Rule \trans{Red-struct}:  
\[
\begin{array}{cr}
\Ainfrule
{
 P \equiv P_1 \andalso P_1  \longrightarrow  P_2 \andalso P_2 \equiv P_3
}
{
P   \longrightarrow  P_3
}
{}
\end{array}
 \]
By Lemma~\ref{c:equiv-bis}, $P_1 \bisE Q$; by induction $Q \Longrightarrow
Q' \bisE P_2$; again by Lemma~\ref{c:equiv-bis}, $Q' \bisE P_3$. 

\item  Rule \trans{Red-Par}:
 \[
\begin{array}{cr} 
\Ainfrule
{
 P_1  \longrightarrow  P_1'
}
{
P_1 |P_2  \longrightarrow  P_1' |P_2
}
{}
\end{array}
 \]
By  definition of $\bisE$ there are $Q_i$ such that 
 $Q \equiv Q_1 | Q_2 $  and $P_i \bisE
Q_i$. Then  we conclude, using induction and Lemma~\ref{l:pres-Bis}.

\item Rule \trans{Red-Amb}: Use induction and Lemma~\ref{l:pres-Bis}.

\item Rule \trans{Red-Com}:
Immediate by clauses \reff{cE:par},  \reff{cE:msg}, 
and \reff{c:abs} of Definition~\ref{d:bisEwithEqual}.

\item Rule \trans{Red-Open}:
 \[
\begin{array}{cr} 
\Aaxiom
{ \open n . P_1 |
\amb n{ P_2} \longrightarrow 
P_1 |  P_2
}
{}
\end{array}
 \]
By definition of $\bisE$, $Q \equiv \open n . Q_1 |
\amb n{ Q_2}$, and for some $Q_1'$ with $Q_1 \Longrightarrow
Q_1'$, we have:  $P_2 \bisE Q_2$, $P_1 \bisE Q_1'$.
We also have $Q \Longrightarrow Q_1' | Q_2$. Using
Lemma~\ref{l:pres-Bis},
 we derive $ P_1 |  P_2 \bisE Q_1' | Q_2$, which
concludes the case. 

\item Rule \trans{Red-In}:
 \[
\begin{array}{cr} 
\Aaxiom
{
 \amb n {\ina m . P_1 | P_2} | 
\amb m{ P_3} \longrightarrow 
\amb m{ \amb n { P_1 | P_2}| P_3
}}{}
\end{array}
 \]
By definition of $\bisE$,
$Q \equiv
 \amb n {\ina m . Q_1 | Q_2} | 
\amb m{ Q_3}$, and there exists $Q_1'$ such that $Q_1
 \StatOI m Q_1' $ and we have:  $P_2 \bisE Q_2$, 
$P_3 \bisE Q_3$, and 
$P_1 \bisE Q_1'$.

We also have $Q \Longrightarrow 
\amb m{ \amb n { Q'_1 | Q_2}| Q_3
}$. Using Lemma~\ref{l:pres-Bis}, we derive 
$$
\amb m{ \amb n { P_1 | P_2}| P_3
} \bisE 
\amb m{ \amb n { Q'_1 | Q_2}| Q_3
}, $$ which
concludes the case. 
\item Rule \trans{Red-Out}: similar to the previous case.\qed
\end{enumerate}\smallskip

\begin{cor}
\label{c:tau-action}
Suppose $P \bisE Q$ and $P \Longrightarrow P' $. Then there is $Q'$
such that $Q \Longrightarrow Q'$ and $P' \bisE Q'$.
\end{cor}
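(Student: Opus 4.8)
\proof
The plan is a routine induction on the length of the reduction sequence witnessing $P \Longrightarrow P'$, using Lemma~\ref{l:tau-action} for the single-step case.

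For the base case, the sequence has length zero, so $P = P'$ (or, if one prefers to read $\Longrightarrow$ as closing under $\equiv$ as well, $P' \equiv P$); then taking $Q' = Q$ works, since $\bisE$ is reflexive on the relevant terms and, in the latter reading, $\equiv_E\bisE \subseteq \bisE$ by Lemma~\ref{c:equiv-bis}. For the inductive step, write the reduction as $P \Longrightarrow P'' \longrightarrow P'$, where the prefix $P \Longrightarrow P''$ is one step shorter. By the induction hypothesis there is $Q''$ with $Q \Longrightarrow Q''$ and $P'' \bisE Q''$. Now apply Lemma~\ref{l:tau-action} to $P'' \bisE Q''$ and the single reduction $P'' \longrightarrow P'$: it yields $Q'$ with $Q'' \Longrightarrow Q'$ and $P' \bisE Q'$. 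Composing $Q \Longrightarrow Q'' \Longrightarrow Q'$ (transitivity of $\Longrightarrow$) gives the required $Q'$.

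There is essentially no obstacle here: all the work has already been done in Lemma~\ref{l:tau-action}, and the only thing to check is that the matching transitions compose, which is immediate because the conclusion of that lemma is itself phrased with the many-step relation $\Longrightarrow$. If one wants to be pedantic about the $\equiv$-closure built into $\Longrightarrow$, one simply invokes Lemma~\ref{c:equiv-bis} at the two ends of the argument, exactly as in the \trans{Red-struct} case of Lemma~\ref{l:tau-action}.
\qed
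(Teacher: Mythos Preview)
Your proof is correct and takes essentially the same approach as the paper: induction on the number of transitions in $P \Longrightarrow P'$, invoking Lemma~\ref{l:tau-action} for the inductive step. One small remark: in the base case you do not actually need reflexivity of $\bisE$, since $P' = P$ and the hypothesis already gives $P \bisE Q$, so taking $Q' = Q$ works directly.
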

 
\begin{proof}
By induction on the number of transitions in  $ P \Longrightarrow P'$, 
using Lemma~\ref{l:tau-action} for the inductive case.  
\end{proof}

%\subsubsection{Correspondence between $\bisMOD$ and $\bisE$}
%the two bisimilarities}
%\DS{remove the heading above (3.2.2)}

% Having proved Lemma~\ref{c:equiv-bis} and
% Corollary~\ref{c:tau-action}, we can derive transitivity of $\bisE$,
% and hence conclude that $\bisE$ is a congruence relation.  This can be
% proved directly, or by going through the correspondence with
% $\bisMOD$. Here we follow the latter approach.
% \DS{two sentences above: they seem to repeat something said many
%   times; i would remove}
          
\begin{lem}
\label{l:unCOMP}~
%In the Ambient calculus 
%\begin{description}
\begin{enumerate}[$-$]
\item
$\capa . P \bisMOD Q$ implies $Q \equiv \capa. Q'$, for some $Q'$.

\item
$\msg n \bisMOD Q$ implies $Q \equiv \msg n$.

\item
$\abs x P \bisMOD Q$ implies $Q \equiv \abs x {Q'}$, for some $Q'$.
\end{enumerate}
 \end{lem}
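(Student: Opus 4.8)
The three statements are proved by a common two-phase strategy, using that $\bisMOD$ is symmetric and inspecting the clauses of Definition~\ref{d:bisMOD}. Write $P_0$ for the left-hand process, so $P_0\in\{\capa.P,\msg n,\abs x P\}$. \emph{Phase~1 (pin down the shape of $Q$).} In all three cases $P_0$ is \emph{prime}: if $P_0\equiv R_1|R_2$ then $R_1\equiv\nil$ or $R_2\equiv\nil$ (a standard structural-congruence fact). Since $P_0\not\equiv\nil$, clause~\reff{c:nil} applied via symmetry gives $Q\not\equiv\nil$; hence, exactly as in the proof of Lemma~\ref{l:nil-bis}, $Q\equiv Q'|Q''$ with $Q'$ literally of one of the forms $\capa.R$, $\msg m$, $\abs x R$, $\amb m R$. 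Applying clause~\reff{cMOD:par} to $Q\bisMOD P_0$ yields $R_1,R_2$ with $P_0\equiv R_1|R_2$, $Q'\bisMOD R_1$, $Q''\bisMOD R_2$, and primeness of $P_0$ forces $R_1\equiv\nil$ or $R_2\equiv\nil$. The case $R_1\equiv\nil$ is impossible: then $Q'\bisMOD\nil$, so by clause~\reff{c:nil} $Q'\equiv\nil$, contradicting that $Q'$ is a (non-$\nil$) guard, message, abstraction or ambient. So $R_2\equiv\nil$, hence $Q''\equiv\nil$ and $Q\equiv Q'$: $Q$ is structurally congruent to a \emph{single} capability prefix, message, abstraction, or ambient.

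\emph{Phase~2 (identify the form and match names).} Here I use that a top-level capability prefix, message, or abstraction is \emph{inert} (admits no reduction), that an ambient reduces only internally and so stays an ambient, and that the four ``prime classes'' above are pairwise distinct up to $\equiv$. If $P_0=\capa.P$, then $P_0\arr{\capa}P$, so by the relevant clause among \reff{cMOD:in}--\reff{cMOD:open} the process $Q$ must perform a weak $\capa$-transition; a single message or abstraction cannot (inert, and not $\equiv$ to anything with a top-level capability), and a single ambient cannot either (it reduces only to ambients, which never offer a $\capa$-transition), so $Q\equiv\capa'.Q'$; inertness of $Q$ then forces the $\capa$-step to be already at $Q$, i.e.\ $\capa'.Q'\equiv\capa.S_1|S_2$, whence $\capa'=\capa$ by primeness and the fact that $\equiv$ preserves the head capability. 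If $P_0=\msg n$, then from $\msg n\arr{\msg n}\nil$ and clause~\reff{cMOD:msg} $Q$ must perform a weak $\msg{}$-transition, which a single prefix, abstraction or ambient cannot; so $Q\equiv\msg m$, and inertness plus primeness give $m=n$. If $P_0=\abs x P$, I cannot read the shape of $Q$ off clause~\reff{c:absMOD} (which speaks of $Q\mid\msg n$), so I argue from $Q\bisMOD\abs x P$ instead: were $Q\equiv\msg m$ or $Q\equiv\capa.Q'$, then $Q$ would offer a message- or capability-transition that the inert $\abs x P$ cannot match through \reff{cMOD:msg}/\reff{cMOD:in}--\reff{cMOD:open}; were $Q\equiv\amb m{Q'}$, clause~\reff{cMOD:amb} would force $\abs x P\equiv\amb m{R'}$, which is impossible; hence $Q\equiv\abs y{Q'}$, and $\abs y{Q'}\equiv\abs x{Q''}$ up to $\alpha$-conversion.

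The routine part is the bookkeeping with the structural-congruence facts — primeness of guards, messages, abstractions and ambients, disjointness of the four classes, inertness of top-level guards/messages/abstractions, and the fact that an ambient reduces only to ambients — all of which follow directly from the laws of $\equiv$ and the reduction rules of Table~\ref{t:red_rules}. The only case needing a moment of care is the abstraction one, where the input clause~\reff{c:absMOD} is stated with an extra message in parallel, so the shape of $Q$ has to be recovered indirectly by playing the bisimulation game from $Q$ back against $\abs x P$ using clauses~\reff{cMOD:in}--\reff{cMOD:amb}; I do not expect any genuine obstacle beyond that.
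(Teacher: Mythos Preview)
Your proof is correct and follows essentially the same two-step strategy as the paper: first use the parallel-composition and $\nil$ clauses of $\bisMOD$ together with primeness of $P_0$ to force $Q$ to be single, then invoke the relevant bisimulation clause to pin down which kind of single process $Q$ is. The paper states Phase~2 tersely as ``we can conclude using the appropriate clause of the definition of $\bisMOD$ in each case,'' whereas you spell out the case analysis explicitly, including the useful observation that for abstractions one must argue from $Q\bisMOD\abs x P$ rather than from clause~\reff{c:absMOD} directly; this is exactly the content the paper leaves implicit.
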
 

\begin{proof}
  In every case, we suppose by contadiction that $Q \equiv Q_1 | Q_2$
  where none of the $Q_i$s is structurally congruent to $\nil$. Then
  $P$ and $Q$ can be distinguished using the clauses of $\bisMOD$ for
  parallel composition and $\nil$, which means a contradiction.
  
  Therefore, $Q$ is single (it has only one component), and we can
  conclude using the appropriate clause of the definition of $
  \bisMOD$ in each case.
\end{proof}

\begin{lem}
\label{l:MOD-E} 
$\bisMOD\, \subseteq\, \bisE$.
\end{lem}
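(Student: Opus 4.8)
The plan is to show that $\bisMOD$, viewed as a relation on closed processes, is itself a syntax-based intensional bisimulation; since $\bisE$ is the largest such relation, this gives $\bisMOD\subseteq\bisE$. So I take $P\bisMOD Q$ and verify the five clauses of Definition~\ref{d:bisEwithEqual}. The work consists in turning the \emph{transition-based} clauses of $\bisMOD$ into the \emph{congruence-based} clauses of $\bisE$, using Lemma~\ref{l:unCOMP} to pin down the syntactic shape of $Q$ whenever $P$ is a single process, and Corollary~\ref{c:tau-action} is not available here (it is about $\bisE$), so I must work with the weak transitions directly.

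First, clause~\reff{cE:par}: if $P\equiv P_1\mid P_2$, then clause~\reff{cMOD:par} of $\bisMOD$ gives $Q\equiv Q_1\mid Q_2$ with $P_i\bisMOD Q_i$, which is exactly what is required. Clause~\reff{cE:msg}: if $P\equiv\msg n$, then $P\arr{\msg n}\nil$ and by Lemma~\ref{l:unCOMP} $Q\equiv\msg n$. The ambient clause is immediate from clause~\reff{cMOD:amb} of $\bisMOD$. The abstraction clause~\reff{c:abs}: if $P\equiv\abs x{P'}$, then by Lemma~\ref{l:unCOMP} $Q\equiv\abs x{Q'}$ for some $Q'$; and for each name $n$, $P\arr{?n}P'\sub n x$, so clause~\reff{c:absMOD} of $\bisMOD$ gives $Q\mid\msg n\Longrightarrow Q''$ with $P'\sub n x\bisMOD Q''$, which is precisely the second half of clause~\reff{c:abs}.

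The interesting clause is~\reff{cE:in} (the capability/prefix clause). Suppose $P\equiv\capa.P'$. By Lemma~\ref{l:unCOMP}, $Q\equiv\capa'.Q_1$ for some capability $\capa'$ and $Q_1$ (after checking it is single; in principle it could be an ambient, but then $P\rcap P'$ while $Q$ cannot answer with a capability transition, contradicting clause~\reff{cMOD:in}/\reff{cMOD:out}/\reff{cMOD:open} of $\bisMOD$). Then from $P\arr{\capa}P'$, using the appropriate one of clauses \reff{cMOD:in}, \reff{cMOD:out}, \reff{cMOD:open} of $\bisMOD$ (equivalently the unified bullet clause: $Q\Ar{\capa}\Rcap Q'$ with $P'\bisMOD Q'$), and observing that since $Q\equiv\capa'.Q_1$ is single, the only way $Q\Ar\capa$ can happen is $\capa'=\capa$ and $Q\arr\capa Q_1\Longrightarrow$ the state from which the $\capa$-step was consumed — i.e.\ the $\Longrightarrow$ before $\arr\capa$ is empty and $Q\arr\capa Q_1$. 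So $Q\equiv\capa.Q_1$ and $Q_1\Longrightarrow\Rcap Q'$ with $P'\bisMOD Q'$; unfolding $\Rcap$ gives exactly $Q_1\Rcap Q''$ for a suitable $Q''$ with $P'\bisMOD Q''$, as clause~\reff{cE:in} demands (with the "$Q'$" of that clause being $Q_1$ and the "$Q''$" being this reachable state). The main obstacle is precisely this bookkeeping: correctly reading off from "$Q$ is single and $Q\Ar\capa\Rcap Q'$'' that the leading $\Longrightarrow$ is vacuous, that $Q$'s head capability must be $\capa$, and that the residual stuttering/weak moves all take place inside the continuation $Q_1$ — matching the shape $Q\equiv\capa.Q'$, $Q'\Rcap Q''$, $P'\bisMOD Q''$ of clause~\reff{cE:in}. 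Once that is done, no clause for $\nil$ is needed (Lemma~\ref{l:nil-bis}), and we conclude $\bisMOD\subseteq\bisE$.
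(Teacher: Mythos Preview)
Your approach is exactly the paper's: show that $\bisMOD$ is itself a syntax-based intensional bisimulation, using Lemma~\ref{l:unCOMP} to recover the syntactic shape of $Q$ from that of $P$. Two small remarks: first, Lemma~\ref{l:unCOMP} already tells you $Q\equiv\capa.Q_1$ with the \emph{same} $\capa$, so the detour through a tentative $\capa'$ and the ambient case is unnecessary; second, in the capability clause your claim that $Q_1\Longrightarrow\Rcap Q'$ yields $Q_1\Rcap Q''$ relies on absorbing the trailing $\Longrightarrow$ of $\Ar{\capa}$ into $\Rcap$, which is immediate for $\openamb n$ and for nontrivial stuttering, and in the remaining degenerate case is a definitional artefact (note that clause~(\ref{cE:in})(c) in Definition~\ref{d:bisEwithEqual} is used throughout the paper as $P'\RR Q''$, not $P'\RR Q'$).
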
 
\begin{proof}
By proving that $\bisMOD $ is a $\bisE$-bisimulation. 
The proof is easy, using Lemma~\ref{l:unCOMP}. 
\end{proof}

\begin{lem}
\label{l:E-MOD} 
$\bisE\, \subseteq\, \bisMOD$.
\end{lem}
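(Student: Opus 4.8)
The plan is to prove that the relation $\bisE$ is itself an intensional bisimulation in the sense of Definition~\ref{d:bisMOD}; since $\bisMOD$ is the largest intensional bisimulation, the inclusion $\bisE\subseteq\bisMOD$ follows at once. So I assume $P\bisE Q$ and check, one at a time, the nine clauses of Definition~\ref{d:bisMOD} (with the relation $\RR$ there taken to be $\bisE$, which is symmetric). Three clauses are immediate: the parallel clause~\reff{cMOD:par} and the ambient clause~\reff{cMOD:amb} are, word for word, clause~\reff{cE:par} and the ambient clause of Definition~\ref{d:bisEwithEqual}; and the clause~\reff{c:nil} for $\nil$ is Lemma~\ref{l:nil-bis} (together with Lemma~\ref{c:equiv-bis} to work modulo $\equiv$). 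The only genuinely substantial clause is~\reff{c:tau}, the one matching $\tau$-transitions: this is precisely Lemma~\ref{l:tau-action}, already established by induction on the reduction derivation. So the remaining work is to recover, from the purely $\equiv$-based clauses of $\bisE$, the labelled-transition-and-stuttering observations demanded by $\bisMOD$; throughout, related parallel components are recombined using the congruence of $\bisE$ (Lemma~\ref{l:pres-Bis}), working modulo $\equiv$ where needed (Lemma~\ref{c:equiv-bis}).

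For the capability clauses~\reff{cMOD:in},~\reff{cMOD:out},~\reff{cMOD:open} I would use the compact reformulation stated just after Definition~\ref{d:bisMOD}. Suppose $P\rcap P'$, i.e.\ $P\equiv\capa.P_1|P_2$ and $P'=P_1|P_2$. Applying clause~\reff{cE:par} of $\bisE$ to the decomposition $(\capa.P_1,\,P_2)$ yields $Q\equiv Q_1|Q_2$ with $\capa.P_1\bisE Q_1$ and $P_2\bisE Q_2$; applying the guard clause~\reff{cE:in} of $\bisE$ to the pair $(\capa.P_1,Q_1)$ yields $Q_1\equiv\capa.Q_1'$ with $P_1\bisE Q_1'$ (the additional witness provided by that clause plays no role here). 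Since $\capa.Q_1'|Q_2$ performs the transition $\rcap$ at once, $Q\equiv\capa.Q_1'|Q_2\Ar{\capa}Q_1'|Q_2$, and the trailing $\Rcap$-step required by the compact clause is met trivially because $\Rcap$ is reflexive (each stuttering relation is, taking length~$1$; and $\Rcap$ for $\open n$ is just $\Longrightarrow$). Finally $P'=P_1|P_2\bisE Q_1'|Q_2$ by Lemma~\ref{l:pres-Bis}. The message clause~\reff{cMOD:msg} follows the same pattern, only more simply: from $P\equiv\msg n|P'$, clauses~\reff{cE:par} and~\reff{cE:msg} of $\bisE$ give $Q\equiv\msg n|Q_2$ with $P'\bisE Q_2$, and $Q\Ar{\msg n}Q_2$.

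The abstraction clause~\reff{c:absMOD} is the asynchronous-input case. From $P\arr{?n}P'$, i.e.\ $P\equiv\abs x{P_1}|P_2$ and $P'\equiv P_1\sub n x|P_2$, clause~\reff{cE:par} of $\bisE$ gives $Q\equiv Q_1|Q_2$ with $\abs x{P_1}\bisE Q_1$ and $P_2\bisE Q_2$, and the input clause~\reff{c:abs} of $\bisE$, applied to $(\abs x{P_1},Q_1)$, supplies for the name $n$ a process $Q_1''$ with $\msg n|Q_1\Longrightarrow Q_1''$ and $P_1\sub n x\bisE Q_1''$. Since reductions are closed under parallel composition and under $\equiv$, $Q|\msg n\equiv(\msg n|Q_1)|Q_2\Longrightarrow Q_1''|Q_2$, and $P'\equiv P_1\sub n x|P_2\bisE Q_1''|Q_2$ again by Lemma~\ref{l:pres-Bis}. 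This exhausts the nine clauses, so $\bisE$ is an intensional bisimulation and hence $\bisE\subseteq\bisMOD$. I expect the only real obstacle to be clause~\reff{c:tau}, which however is already discharged by Lemma~\ref{l:tau-action}; everything else is bookkeeping with parallel decompositions together with the single observation that a guarded (or message) process performs its visible action in exactly one step, so that the weak-transition-and-stuttering envelope of $\bisMOD$ is satisfied by the minimal choice.
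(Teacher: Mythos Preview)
Your proposal is correct and follows the same approach as the paper: you show that $\bisE$ is itself an intensional bisimulation, invoking Lemma~\ref{l:nil-bis} for the $\nil$ clause, Lemma~\ref{l:tau-action} (the paper cites its corollary~\ref{c:tau-action}) for the $\tau$-clause, Lemma~\ref{l:pres-Bis} for recombining parallel components, and Lemma~\ref{c:equiv-bis} to work up to $\equiv$. Your explicit unpacking of the capability, message, and abstraction clauses is more detailed than the paper's one-line proof, but the argument is identical.
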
 
\begin{proof}
By proving that $\bisE $ is a $\bisMOD$-bisimulation. 
We need Lemma~\ref{l:pres-Bis} (precisely, the fact that $\bisE$ is
preserved by parallel composition), Lemma~\ref{c:equiv-bis}, 
 Corollary~\ref{c:tau-action}, and
Lemma~\ref{l:nil-bis}.   
\end{proof}

\begin{cor}
\label{c:MOD-E}
Relations $\bisMOD$ and $\bisE$ coincide.
\end{cor}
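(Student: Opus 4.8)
The plan is immediate: the corollary is obtained by combining the two inclusions just established. Lemma~\ref{l:MOD-E} gives $\bisMOD\,\subseteq\,\bisE$ and Lemma~\ref{l:E-MOD} gives $\bisE\,\subseteq\,\bisMOD$, so $\bisMOD$ and $\bisE$ coincide. There is thus no further argument to carry out here; the content resides entirely in those two lemmas, and it is worth recalling where the work actually sits.

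The inclusion $\bisMOD\,\subseteq\,\bisE$ is the easy direction: one checks that any intensional bisimulation satisfies the clauses of Definition~\ref{d:bisEwithEqual}, the only subtlety being to pass from a hypothesis of the form $P\equiv\capa.P'$ (or $P\equiv\msg n$, or $P\equiv\abs x{P'}$) to the conclusion that $Q$ has the same outermost shape, which is exactly what Lemma~\ref{l:unCOMP} supplies. The reverse inclusion $\bisE\,\subseteq\,\bisMOD$ is where the auxiliary relation earns its keep: clause~\reff{c:tau} of $\bisMOD$, about matching reductions, is absent from the definition of $\bisE$, so to show that $\bisE$ is an intensional bisimulation one must recover that clause. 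This is precisely the role of Corollary~\ref{c:tau-action} (itself resting on Lemma~\ref{l:tau-action}), together with Lemma~\ref{l:pres-Bis} for preservation of $\bisE$ under parallel composition (needed because a reduction may move a term), Lemma~\ref{c:equiv-bis} for the structural-congruence plumbing, and Lemma~\ref{l:nil-bis} for the $\nil$ clause.

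Consequently the corollary also transfers properties between the two relations, which is the real payoff: transitivity of $\bisMOD$ (Lemma~\ref{l:MODeq}) yields transitivity of $\bisE$, while the manifest congruence of $\bisE$ (Lemma~\ref{l:pres-Bis}) yields congruence of $\bisMOD$. No step in the corollary itself is an obstacle; the genuine obstacle — closure of the syntax-based relation under reductions in the presence of higher-order (term-moving) computation — has already been dispatched in Lemma~\ref{l:tau-action}.
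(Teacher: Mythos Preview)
Your proposal is correct and matches the paper's approach exactly: the corollary is immediate from the two preceding lemmas, $\bisMOD\,\subseteq\,\bisE$ (Lemma~\ref{l:MOD-E}) and $\bisE\,\subseteq\,\bisMOD$ (Lemma~\ref{l:E-MOD}), and the paper gives no further argument. Your additional commentary on where the real work lies is accurate and helpful context, though it goes beyond what the paper records for this corollary itself.
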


\begin{cor}
\label{c:bisEcong}
Relations $\bisMODo $ and $\bisEo $ are congruence relations.
\end{cor}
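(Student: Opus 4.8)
The plan is to derive Corollary~\ref{c:bisEcong} directly from the two immediately preceding results. Since Corollary~\ref{c:MOD-E} tells us that $\bisMOD$ and $\bisE$ coincide, it suffices to prove that one of these relations is a congruence, as both restrictions to open terms ($\bisMODo$ and $\bisEo$) will then coincide and both be congruences. The essential ingredients are already in place: Lemma~\ref{l:MODeq} gives that $\bisMOD$ is an equivalence relation, hence (via the coincidence) $\bisE$, $\bisMODo$, and $\bisEo$ are all equivalence relations as well; and Lemma~\ref{l:pres-Bis} states that $\bisEo$ is preserved by arbitrary contexts $\C$, which is exactly the substitutivity property needed for congruence.

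Concretely, I would argue as follows. By Corollary~\ref{c:MOD-E}, $\bisMOD = \bisE$, and therefore the open-term versions coincide: for any closing substitution $\sigma$, $P\sigma\bisMOD Q\sigma$ iff $P\sigma\bisE Q\sigma$, so $P\bisMODo Q$ iff $P\bisEo Q$. Thus $\bisMODo = \bisEo$, and it remains to show this single relation is a congruence. It is reflexive, symmetric, and transitive: reflexivity and symmetry are immediate from the definition of a (syntax-based) intensional bisimulation, and transitivity follows from Lemma~\ref{l:MODeq} together with Lemma~\ref{l:rename_bis} (to lift transitivity of $\bisMOD$ on closed terms to $\bisMODo$ on open terms, one checks that closing substitutions compose appropriately). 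Finally, Lemma~\ref{l:pres-Bis} gives precisely that $P\bisEo Q$ implies $\C\fillhole{P}\bisEo\C\fillhole{Q}$ for every context $\C$, which is the substitutivity requirement. Combining equivalence and substitutivity yields that $\bisEo$ — equivalently $\bisMODo$ — is a congruence.

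There is essentially no obstacle remaining at this point: every component has been established in the preceding lemmas, and the corollary is a bookkeeping assembly of Corollary~\ref{c:MOD-E}, Lemma~\ref{l:MODeq}, and Lemma~\ref{l:pres-Bis}. The only point that deserves a sentence of care is the passage from the closed-term statements (equivalence, the coincidence $\bisMOD=\bisE$) to the open-term statements, which is handled by quantifying over closing substitutions and using Lemma~\ref{l:rename_bis} where renaming of the witnesses is needed; this is routine. Hence the proof is just: ``Immediate from Corollary~\ref{c:MOD-E}, Lemma~\ref{l:MODeq}, and Lemma~\ref{l:pres-Bis}.''
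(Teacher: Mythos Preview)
Your proposal is correct and follows exactly the paper's own approach: the paper's proof reads ``Follows from Corollary~\ref{c:MOD-E}, and Lemmas~\ref{l:MODeq} and \ref{l:pres-Bis}'', which is precisely the assembly you describe. Your additional remarks about lifting to open terms via closing substitutions are sound elaborations of what the paper leaves implicit.
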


\begin{proof}
Follows from Corollary~\ref{c:MOD-E}, and Lemmas~\ref{l:MODeq} and \ref{l:pres-Bis}  
\end{proof}

%The above can also be defined thus:
%\begin{defi}
%\label{d:bisEwithEqual}
%We define: $\bisE$ is the  largest symmetric 
%relation such that $P \bisE Q$ implies
%\begin{enumerate}
%\item
%If $P= \prod_{1\leq i\leq n} P_i$, for some $n>1$, 
%%and  for all $i$ we have  $P_i \not \equiv \nil$,
%then there are $Q_i$ such that 
%$Q \equiv \prod_i Q_i $ and for all $i$ 
%$P_i \bisE Q_i$.

%\item If $P= \ina n . P'$ then 
%there is $Q'$ such that $Q \Ar{\ina n } \StatOI n Q' $
%and $P' \bisE Q'$.

%\item If $P = \out n . P'$ then 
%there is $Q'$ such that $Q \Ar{\out n } \StatIO n Q' $
%and $P' \bisE Q'$.

% \item If $P = \open n . P'$ then 
%there is $Q'$ such that $Q \Ar{\open n } Q' $
%and $P' \bisE Q'$.

%\item If $P = \msg M$ then 
%$Q \equiv \msg M $.

%\item If $P = \abs x {P'}$
%then $Q \equiv \abs x {Q'}$ and for all $M$ and $P''$ such that 
%$\msg M | P \Longrightarrow P'' $ there is $Q''$ such that
%$\msg M | Q \Longrightarrow Q'' $ and $P'' \bisE Q''$.

%\item If $P \equiv \nil $ then also $Q \equiv \nil$.
%\end{enumerate} 
%\end{defi} 

%%% Local Variables: 
%%% mode: latex
%%% TeX-master: "hls2"
%%% Local IspellDict: british
%%% Local IspellPersDict: ./.ispell
%%% End: 

\subsection{Expressiveness results}
\label{subsec:expressiveness}

In this subsection we recall some expressiveness results for AL. These
results state the existence of formulas capturing some nontrivial
properties of processes. They are proved in~\cite{Part1}, and will be
exploited later to assess the separating power of the logic.

% \etis{J'ai migré ici les defs de seq degree et depth degree, qui
% arrivaient trop tard (section \ref{subsec:soundcomplete}).} 

We start by introducing two measures on terms, that represent two ways of
defining the \emph{depth} of a process. The first definition exploits
the notion of eta normal form (see Lemma~\ref{l:eta_nf}):

\begin{defi}[Sequentiality degree, $\ds$]\label{defds}
  The sequentiality degree of a term $P$ is defined as follows:
  \begin{enumerate}[$\bullet$]
  \item $\ds(\nil)=0$, $\ds(P|Q)=\max\big(\ds(P),\ds(Q)\big)$;
  \item $\ds(\amb{n}{P})=\ds(!P)=\ds(P)$;
  \item $\ds(\capa.P)=1+\ds(P)$;
  \item $\ds(\msg n)=1$;
  \item $\ds(\abs x P)=\ds(P')+1$ where $\abs x P'$ is the eta normal form of 
    $\abs x P$.
  \end{enumerate}
\end{defi}

Intuitively, the sequentiality degree counts the number of `parcels of
interaction' (capabilities, messages, input prefixes) in a term. We
now define the \emph{depth degree}, that is sensitive to the number of
nested ambients. This quantity will be soon used in the interpretation of
some formulas of AL,  but also to define an inductive order on processes (see Subsection~\ref{subsec:soundcomplete}).

\begin{defi}[Depth degree]\label{def:depthdegree}
  The depth degree of a process is computed using a function $\dd$
  from $\MA$ processes to natural numbers, inductively defined by:
  \begin{enumerate}[$\bullet$]
  \item $\dd(\nil)~\defiDS~ 0$, $\dd(\capa.P)~\defiDS~ 0$;
  \item $\dd(\abs x P)~\defiDS~0$, $\dd(\msg n)~\defiDS~0$;
  \item $ \dd(\amb{n}{P})~\defiDS~\dd(P)+1 $;
  \item $\dd((!)P_1|\dots|(!)P_r)~\defiDS~\max_{1\leq i\leq r}
    \dd(P_i)$.
  \end{enumerate}
\end{defi}

We introduce formulas that express some kind of
\emph{possibility modalities} corresponding to the movement capabilities
and input prefix of \MA.

%\etis{J'ai fait des petites modifs des enonces (cf Ref  2 sur formula context)}

\begin{lem}\label{l:capaform}
  For any $\capa$, there exists a formula context $\Fmeta{\capa}.\fillhole{\cdot}$
  such that for any closed process $P$, and any formula $\A$,
$$
P\sat \Fmeta{\capa}.\fillhole{\A} \qquad\mbox{iff}\qquad \exists P',P''.~P\equiv
\capa.P'\, ,\,P'\Rcap P''~\mbox{and}~ P''\sat \A\,.
$$
For all $n$, there is a formula $\msg n$ such that
$$
P\sat\msg n  \qquad\mbox{iff}\qquad P\equiv\msg{n}\,.
$$
For all $n$, there exists a formula context $\Fmeta{? n}.\fillhole{\cdot}$
such that for all process $P$ and formula $\A$,
$$
P\sat \Fmeta{?n}.\fillhole{\A} \qquad\mbox{iff}\qquad \exists
x,P',P''.~P\equiv 
(x)P'\, ,\,(x)P'|\msg n\,\Rar\, P''~\mbox{and}~ P''\sat \A\,.
$$
\end{lem}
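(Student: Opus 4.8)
The plan is to construct the three formula contexts explicitly, using the built-in connectives of AL together with the spatial adjuncts, and to verify the "iff" in each case by unfolding the definition of satisfaction. The key observation is that the capability transitions and stuttering relations in Definition~\ref{d:statt} are defined via structural congruence and reduction, and AL has connectives ($\zero$, $|$, $\amb\eta\cdot$, $\sometime$, $\at{\cdot}{\eta}$, $\rtr$) that let us probe exactly these ingredients.

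First I would handle the message formula $\msg n$, which is the simplest: a process $P$ satisfies the desired formula iff $P\equiv\msg n$. The idea is that $\msg n$ is characterised by being non-$\nil$, being a single component (not decomposable as a proper parallel composition of two non-$\nil$ processes), not being an ambient, and being "inert" in the sense that $P\,\rtr\,\zero$ — it can be placed only next to things and, crucially, it can be consumed: the natural test is that $\amb n P$ placed against an abstraction $\abs x{\amb n{\text{Id}}}$ reduces appropriately. In fact the cleanest route is to use the characterisation that $\msg n$ is the unique single, non-ambient process that, in parallel with a canonical receiver context, can reduce to $\nil$ while no capability-step is possible from it; one encodes "single" via $\neg(\neg\zero \,|\,\neg\zero)$ and "not an ambient" via $\neg(\exists x.\,\amb x\true)$. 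I would cite the relevant lemmas from~\cite{Part1} that already provide a formula $\Uin$, $\Uout$, $\Uopen$, $\Umsg$, $\Uabs$ expressing "is a single process whose guard is an in/out/open/message/abstraction"; given such building blocks, $\msg n = \UmsgN n$ is immediate, and the other two contexts follow by composing them with the temporal modality.

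Next, for the capability context $\Fmeta{\capa}.\fillhole{\A}$: the target says $P\equiv\capa.P'$ with $P'\Rcap P''$ and $P''\sat\A$. Recall $\Rcap$ is $\Rar$ when $\capa=\open n$, and is the stuttering relation $\stutinout n$ or $\stutoutin n$ in the other two cases. For $\open n$, the construction is: $\Fmeta{\open n}.\fillhole\A$ says "$P$ is single with an $\open n$ guard" (from~\cite{Part1}) and moreover the process obtained by firing that guard satisfies $\sometime\A$ — but firing the guard is not directly expressible, so instead we use the adjunct trick: place $P$ next to $\amb n\zero$, so that $\capa.P'\,|\,\amb n\zero \longrightarrow P'$, and require $(P\,|\,\amb n\zero)\sat \sometime(\text{"the }\amb n\zero\text{ is gone"} \wedge \A$-after-reduction$)$. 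The "gone" part is handled by revelation-style bookkeeping, but more simply one just writes $P \sat \Uopen\text{-shape}_n \wedge \big((\amb n\true)\rtr\text{something}\big)$; I would lean on the precise $\Uopen$/$\Uin$/$\Uout$ formulas of~\cite{Part1} which already internalise this, and then wrap with $\sometime$ and the stuttering bookkeeping. For the in/out cases the stuttering $\stutinout n$ is captured by iterating the basic movement test; since $\sometime$ is a reflexive-transitive "eventually", a single $\sometime$ suffices to absorb the arbitrary finite stuttering prefix, provided the movement test itself is phrased using $\amb n\cdot$ and $\at\cdot n$ in the way that the $\Uin$, $\Uout$ formulas do.

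For the abstraction context $\Fmeta{?n}.\fillhole\A$: the target is $P\equiv(x)P'$ and $(x)P'\,|\,\msg n\,\Rar\,P''$ with $P''\sat\A$. Here the construction is transparent: $\Fmeta{?n}.\fillhole\A \;\defiDS\; \Uabs\text{-shape} \wedge \big(\msg n \rtr \sometime\A\big)$, where $\Uabs\text{-shape}$ (from~\cite{Part1}) says $P$ is a single process whose guard is an abstraction, $\msg n$ is the message formula just built, $\rtr$ is linear implication, and $\sometime$ is the temporal modality; unfolding Definition~\ref{d:satisfaction} for $\rtr$ and $\sometime$ gives exactly "for the process $\msg n$ (the unique $R$ with $R\sat\msg n$), $P\,|\,\msg n \Rar P''$ and $P''\sat\A$". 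The main obstacle, and where I would spend the most care, is the capability case: getting the stuttering exactly right (that the formula neither over- nor under-counts the $\stutinout n$ prefix) and ensuring that "firing the guard $\capa$" is faithfully simulated by a contextual reduction against a freshly supplied ambient or message without introducing spurious behaviours — this is precisely the kind of delicate spatial-logic encoding the paper defers to~\cite{Part1}, so in the proof I would reduce the statement to the expressiveness lemmas established there and verify the straightforward compositions above.
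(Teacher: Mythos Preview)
The paper does not prove this lemma: it is stated in Subsection~\ref{subsec:expressiveness} as one of several expressiveness results that ``are proved in~\cite{Part1}'', with no argument given here. Your proposal is therefore aligned with the paper's treatment, in that you too ultimately defer the constructions of the shape formulas ($\Uin$, $\Uout$, $\Uopen$, $\Umsg$, $\Uabs$) and the delicate capability encodings to~\cite{Part1}.

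Your sketch of how the pieces compose is broadly correct. The abstraction case $\Fmeta{?n}.\fillhole\A$ as ``abstraction shape'' $\wedge\,(\msg n \rtr \sometime\A)$ is exactly right once one has the message formula, since $\msg n$ pins down a unique process up to $\equiv$ and $\rtr$ then quantifies over that singleton. For capabilities your intuition that stuttering is absorbed by a single $\sometime$ is also correct, but only \emph{after} the process has been placed in a context that turns each $\out n$/$\inamb n$ pair into internal reductions (e.g.\ a fresh $m$ with $\amb n{\amb m{\cdot}}$, using $@$ to install it); your phrasing suggests $\sometime$ alone does the work, whereas the real content is precisely in designing that context so that no spurious reductions are introduced and the final shape witnesses that the guard actually fired. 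That is the part that~\cite{Part1} carries out in detail, and you are right to flag it as the main obstacle rather than attempt it here.
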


% We have recalled in~\ref{s:completenessfinite} the definability of
% modalities $\Fmeta{\capa}.\AAA$ and $\Fmeta{?n}.\AAA$. 

We will also need the \emph{necessity modalities}, that have a dual
interpretation w.r.t.  the above formulas:

\begin{lem}
  For all $\capa$, there is a formula context
  $\Ftame{\capa}.\fillhole{\cdot}$ 
  such that for all process $P$ and formula $\AAA$,
$$
P\sat \Ftame{\capa}.\fillhole{\A} \qquad\mbox{iff}\qquad \exists P'.~P\equiv
\capa.P'~\mbox{and}~ \forall P''.\,P'\Rcap P''~\mbox{implies}~ P''\sat
\AAA\,.
$$
For all $n$, there is a formula context $\Ftame{? n}.\fillhole{\cdot}$
such that, for all process $P$ and formula $\AAA$,  
$$
P\sat \Ftame{?n}.\fillhole{\AAA} \qquad\mbox{iff}\qquad \exists
P',x.~P\equiv 
(x)P'~\mbox{and}~\forall P''.(x)P'|\msg n\,\Rar \,P''~\mbox{implies}~
P''\sat \AAA\,. 
$$
\end{lem}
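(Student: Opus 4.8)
The plan is to obtain the necessity modalities as the De~Morgan duals of the possibility modalities of Lemma~\ref{l:capaform}. Concretely, for each capability $\capa$ and each formula $\AAA$ I would set
\[
\Ftame{\capa}.\fillhole{\AAA} ~\defiDS~ \Fmeta{\capa}.\fillhole{\ltrue} \wedge \neg\,\Fmeta{\capa}.\fillhole{\neg\AAA}\,,
\]
and likewise $\Ftame{?n}.\fillhole{\AAA} \defiDS \Fmeta{?n}.\fillhole{\ltrue} \wedge \neg\,\Fmeta{?n}.\fillhole{\neg\AAA}$, where $\Fmeta{\capa}.\fillhole{\cdot}$ and $\Fmeta{?n}.\fillhole{\cdot}$ are the formula contexts already produced by Lemma~\ref{l:capaform}. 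Each of these is again a one-hole formula context and the construction is uniform in $\AAA$, so it genuinely defines a context as required.

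First I would record the auxiliary fact that explains the role of the first conjunct: $\Fmeta{\capa}.\fillhole{\ltrue}$ is satisfied exactly by the processes of the form $\capa.P'$. Indeed $\Rcap$ is reflexive --- take $i=1$ in Definition~\ref{d:statt}, and recall that $\Rar$ is reflexive --- so the clause of Lemma~\ref{l:capaform} collapses to ``$\exists P'.\,P\equiv\capa.P'$''. Likewise, since $(x)P'|\msg n\longrightarrow P'\sub n x$ by \trans{Red-Com}, $\Fmeta{?n}.\fillhole{\ltrue}$ is satisfied exactly by the processes of the form $(x)P'$. Without this conjunct the dual formula would be vacuously true of every process that is not a single prefixed (resp.\ abstracted) term, so it is genuinely needed.

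Then the verification is a direct unfolding. Using Lemma~\ref{l:capaform} and the clause for $\neg$ in Definition~\ref{d:satisfaction}, $P\sat\Ftame{\capa}.\fillhole{\AAA}$ holds iff $\exists P'_0.\,P\equiv\capa.P'_0$ and there are \emph{no} $P',P''$ with $P\equiv\capa.P'$, $P'\Rcap P''$ and $P''\sat\neg\AAA$; that is, iff $\exists P'_0.\,P\equiv\capa.P'_0$ and $\forall P',P''.\,\big(P\equiv\capa.P' \wedge P'\Rcap P''\big)\Rightarrow P''\sat\AAA$. The implication from this to the statement of the lemma is immediate by taking the witness $P'_0$. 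For the converse, from a witness $P'_0$ with $P\equiv\capa.P'_0$ and $\forall P''.\,P'_0\Rcap P''\Rightarrow P''\sat\AAA$, one must show that every $P'$ with $P\equiv\capa.P'$ yields the same condition: this uses that $\capa.P'\equiv\capa.P'_0$ forces $P'\equiv P'_0$, after which $\equiv$-invariance of $\Rcap$ (it is defined through $\equiv$-respecting transitions) and of satisfaction closes the argument. The $?n$ case is identical, with $(x)P'\equiv(x)P'_0$ (up to $\alpha$-conversion) forcing $P'\equiv P'_0$ and hence $P'\sub n x\equiv P'_0\sub n x$ for every $n$.

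The only genuinely delicate point is this uniqueness-of-decomposition property: $\capa.P_1\equiv\capa.P_2$ implies $P_1\equiv P_2$, and $(x)P_1\equiv(x)P_2$ implies $P_1\equiv P_2$ (modulo $\alpha$). It holds because no defining axiom of $\equiv$ has a prefixed or abstracted term at the root of either side, so such an equality can only be derived by congruence from an equality of the bodies; equivalently it follows from the canonical-form characterisation of $\equiv$ underlying Theorem~\ref{thm:equivdeci}. Everything else is routine manipulation of Definitions~\ref{d:satisfaction} and~\ref{d:statt} together with Lemma~\ref{l:capaform}.
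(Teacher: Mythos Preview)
Your proposal is correct. The paper does not actually give a proof of this lemma: like the other results of Subsection~\ref{subsec:expressiveness}, it is simply imported from the companion paper~\cite{Part1}. Your De~Morgan dualisation $\Ftame{\capa}.\fillhole{\AAA}\defiDS\Fmeta{\capa}.\fillhole{\ltrue}\wedge\neg\,\Fmeta{\capa}.\fillhole{\neg\AAA}$ is precisely the expected construction, and you have correctly isolated the two non-trivial ingredients: reflexivity of $\Rcap$ (the $i{=}1$ case of stuttering, resp.\ reflexivity of $\Rar$), so that the first conjunct pins down the shape $\capa.P'$; and cancellativity of prefixing/abstraction with respect to $\equiv$, so that the outer existential and the inner universal over decompositions $P\equiv\capa.P'$ collapse. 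The $\equiv$-invariance of $\Rcap$ you rely on is indeed available, since $\Ar{M}=\Rar\arr{M}\Rar$ absorbs $\equiv$ on the left via \trans{Red-Str}, and the reflexive case of stuttering is taken care of by $\equiv$-invariance of satisfaction.
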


Each operator of the syntax of \MA{}
(Table~\ref{ta:syn}) has  thus a counterpart in the logic, except
replication. It is possible to express in AL a restricted form of
replication on formulas, by defining a formula $!\AAA$, expressing
that there are infinitely many processes in parallel satisfying
$\AAA$, modulo some additional condition on $\AAA$.
%
% In order to express all operators that can appear in active contexts,
% we need to be able to handle replication.  AL allows one to express a
% restricted form of replication on formulas. We would like to be able
% to define a formula $!\AAA$ expressing that infinitely many components
% in parallel satisfy $\AAA$. However, this is only possible for certain
% formulas $\AAA$, under the condition that the models of $\AAA$ belong
% to some restricted class of processes.
%
% \begin{etienne}
% Je serais d'avis de virer l'ensemble $\mathcal{E}$ dans tout ce qui suit,
% parce que on n'a rien Ã©crit de semblable dans part 1 et que ca fait suer
% de l'y rajouter. Apres on utilise ces constructeurs avec des formules
% qui sont localement sequentiel selectives, et pas absolument, mais 
% je crois qu'on peut passer la nuance sous silence, ca tuera personne.
% \end{etienne}
%
% More precisely, let $\mathcal{E}$ be a set of processes.  We say that
% a formula is \emph{sequentially selective} on $\mathcal{E}$ if all its
% models in $\mathcal{E}$ have the same sequentiality degree, and
% \emph{depth selective} on $\mathcal{E}$ if all its models in
% $\mathcal{E}$ have the same depth degree. These forms of selectiveness
% are necessary in the construction of formulas for persistent terms.
%
More precisely, based on Definitions~\ref{defds} and
~\ref{def:depthdegree} above, we say that a formula $\A$ is
\emph{sequentially selective} (resp.  \emph{depth selective}) if all
processes satisfying $\A$ have the same sequentiality degree (resp.
depth degree). 
% These forms of selectiveness are necessary in the
% construction of formulas for persistent terms, given below.

\begin{lem}\label{l:repcapa}
  For all $\capa$, there exists a formula context
  $\repl{\capa}{\cdot}$ such that for all process $P$ %\in\mathcal{E}$
  and for all sequentially selective formula $\A$, whose models are
  only of the form $\capa.R$,
  $$
  P\,\sat\, \repl{\capa}{\A} \quad\mbox{iff}\quad \exists
  P_1,\ldots,P_r.~P\equiv
  !P_1|(!)P_2|\ldots|(!)P_r~\mbox{and},\,P_i\,\sat\,\A\,, i=1\ldots
  r\,.
$$
For all $n$, there is a formula $!\msg {n}$ such that
$$
P\,\sat\,!\msg n  \quad\text{iff}\quad P\equiv!\msg{n}
\enspace.
$$
There exists a formula context $\repl{input}{\cdot}$ such that 
for all process $P$ and for all formula $\A$ sequentially selective 
whose models are only of the form $(x)P$,
$$
P\,\sat\, \repl{input}{\A} \quad\mbox{iff}\quad \exists 
P_1,\ldots,P_r.~P\equiv
!P_1|(!)P_2|\ldots|(!)P_r~\mbox{and},\,P_i\,\sat\, \A\,,
i=1\ldots r\,.
$$
\end{lem}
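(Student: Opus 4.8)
The plan is to build the formula contexts $\repl{\capa}{\cdot}$ and $\repl{input}{\cdot}$, together with the closed formula $!\msg n$, by combining the single-component formulas already available from Lemmas~\ref{l:capaform} and~\ref{l:repcapa}'s companions with the adjoint-based machinery for expressing ``infinitely many parallel copies''. The key observation is that if $\A$ is sequentially selective with sequentiality degree $d$, then every process $P\equiv !P_1|(!)P_2|\ldots|(!)P_r$ with each $P_i\sat\A$ has a uniformly controlled shape: the $P_i$ are single of the form $\capa.R$ (or $(x)R$), and they are mutually interchangeable up to $\equiv$. First I would isolate a formula $\mathtt{infmany}(\A)$ that says ``$P$ can be decomposed, for every $k$, into at least $k$ parallel components each satisfying $\A$, and removing finitely many of them still leaves a process satisfying the same property'' — this is the standard way of characterising $!\,(\text{a single }\A\text{-process})$ in AL, using the fact that sequential selectivity bounds the size of each component so that only replication can account for the unboundedness. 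The guarantee connective $\rtr$ and its use with $\diamond$ let one phrase ``for all $k$ there are $k$ copies'' without infinitary conjunction, because one quantifies over a single fresh witness ambient that consumes components one at a time; the precise encoding is the technical content imported from~\cite{Part1}.

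Next I would verify the two directions of the ``iff''. For soundness (right-to-left): given $P\equiv !P_1|(!)P_2|\ldots|(!)P_r$ with $P_i\sat\A$, I would check directly from Definition~\ref{d:satisfaction} that $P$ satisfies each conjunct of $\repl{\capa}{\A}$ — the decomposability clauses follow from $!P_1\equiv !P_1|P_1|\cdots|P_1$ and the structural laws for $!$, and the ``still satisfies after removal'' clause follows because $\A$'s models all have the same sequentiality degree, so no component can secretly be $\nil$ or a product. For completeness (left-to-right): suppose $P\sat\repl{\capa}{\A}$. Using sequential selectivity I would argue that $P$ must contain infinitely many parallel $\A$-components in its $\equiv$-class, hence (since each single $\A$-component has bounded size and $P$ is syntactically finite) at least one of them, say $P_1$, must occur under a replication; peeling it off via $!P_1\equiv !P_1|P_1$ and iterating, one reconstructs the normal form $!P_1|(!)P_2|\ldots|(!)P_r$. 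The case of $!\msg n$ is the degenerate instance where the single-component formula is the already-constructed $\msg n$, and the case of $\repl{input}{\cdot}$ is identical with $(x)P'$ in place of $\capa.P'$, using the $\msg n|{-}\Rar{-}$ clause from Lemma~\ref{l:capaform} for the input reformulation.

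\textbf{Main obstacle.}
The hard part will be the completeness direction, specifically ruling out ``spurious'' decompositions: a priori $P\sat\repl{\capa}{\A}$ only guarantees, for each $k$, \emph{some} decomposition into $k$ pieces each satisfying $\A$, and these decompositions need not be compatible with one another or with a single underlying replication. The argument that they can be amalgamated into one normal form $!P_1|(!)P_2|\ldots|(!)P_r$ relies essentially on (i) sequential selectivity giving a uniform finite bound on the depth/size of each $\A$-component, (ii) König-style finiteness reasoning on the finite syntax of $P$, and (iii) the normalisation laws for $!$ in $\equiv$ (in particular $!(P|Q)\equiv{!}P|{!}Q$ and $!!P\equiv{!}P$), which let one push all replications to the outside. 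Getting the interaction between the ``removal'' clause and sequential selectivity exactly right — so that a component cannot masquerade as several smaller ones or as the empty process — is the delicate point, and it is precisely where the hypothesis that $\A$'s models are \emph{only} of the form $\capa.R$ (single, non-trivial) is used.
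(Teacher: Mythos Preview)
The paper does not prove this lemma at all: Subsection~\ref{subsec:expressiveness} opens by stating that the expressiveness results in it ``are proved in~\cite{Part1}'', and Lemma~\ref{l:repcapa} is simply stated and then used. So there is no in-paper proof to compare your proposal against; the actual construction of $\repl{\capa}{\cdot}$, $!\msg n$, and $\repl{input}{\cdot}$ lives entirely in the companion paper~\cite{Part1}.

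As to your sketch on its own merits: the overall architecture is reasonable --- isolate ``every single parallel component satisfies $\A$'' and ``there are infinitely many such components'', then use sequential selectivity plus the finiteness of the syntax of $P$ and the $!$-laws of $\equiv$ to force at least one replicated component in the completeness direction. Where you are vaguest is exactly where the real work is: your description of $\mathtt{infmany}(\A)$ as ``a fresh witness ambient that consumes components one at a time'' via $\rtr$ and $\sometime$ is a plausible heuristic, but it is not a formula, and the logic has neither infinite conjunction nor fixed points, so you must actually exhibit a single finite formula that separates ``unboundedly many $\A$-components'' from ``finitely many''. That is the technical heart of the result in~\cite{Part1}, and nothing in your outline pins it down. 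If you intend to reprove the lemma rather than cite it, you should consult~\cite{Part1} for the concrete encoding; otherwise, matching the paper here just means invoking the citation.
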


Similar results hold for the replicated version of the \emph{dual}
modalities.
The notion of depth selectiveness allows us to derive formulas that
capture replicated ambients:

\begin{lem}
  For all $n$, there is a formula context $!n[\fillhole{\cdot}]$ such that for
  all process $P$ and for all depth selective formula $\A$,
  $$
  P\,\sat\, !n[\fillhole{\A}] \quad\mbox{iff}\quad \exists
  P_1,\ldots,P_r.~P\equiv
  !P_1|(!)P_2|\ldots|(!)P_r~\mbox{and},\,P_i\,\sat\, n[\A]\,,
  i=1\ldots r\,.
$$
\end{lem}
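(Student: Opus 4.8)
The plan is to reuse, essentially verbatim, the construction behind Lemma~\ref{l:repcapa}, with the formula $\amb n \A$ describing the single components of the parallel pool --- in place of the single-guard formulas used there --- and with depth selectiveness taking over the role of sequential selectiveness. The formula context $!n[\fillhole\cdot]$ we are after is then simply the one produced by that recipe when applied to $\amb n \A$.

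First I would record two elementary facts about the candidate single-component formula $\amb n \A$. By Definition~\ref{d:satisfaction}, the models of $\amb n \A$ are exactly the processes $Q$ with $Q \equiv \amb n {Q'}$ and $Q' \sat \A$; in particular every such $Q$ is single. Moreover, if $\A$ is depth selective, say all of its models have depth degree $d$, then by the clause $\dd(\amb n P) = \dd(P) + 1$ of Definition~\ref{def:depthdegree} every model of $\amb n \A$ has depth degree $d+1$, so $\amb n \A$ is itself depth selective. These are precisely the hypotheses under which the replication-formula construction of Lemma~\ref{l:repcapa} operates --- single models together with selectiveness; the only difference is that there selectiveness was used to bound the \emph{sequentiality} degree of the components, whereas here it bounds their \emph{depth} degree, which is the quantity that behaves well under the ambient operator.

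Then I would instantiate that construction: define $!n[\fillhole\A]$ to be the formula obtained by the same recipe as $\repl{\capa}{\cdot}$ of Lemma~\ref{l:repcapa} (see~\cite{Part1}), applied to $\amb n \A$. Unfolding what this formula asserts, $P \sat\, !n[\fillhole\A]$ should hold iff $P$ splits into a non-empty pool of \emph{replicated} $n$-ambients together with finitely many further, possibly replicated, $n$-ambients, each of whose contents satisfies $\A$ --- that is, $P \equiv\, !P_1 | (!)P_2 | \ldots | (!)P_r$ with $P_i \sat \amb n \A$ for all $i$, which is the statement to be proved. The ``finitely many components, each a model of $\amb n \A$'' part is expressed directly with the spatial connectives and the modality $\amb n \A$; the delicate ``$!$'' part --- detecting an \emph{infinite} pool of copies within a finitary logic --- is exactly what the construction behind Lemma~\ref{l:repcapa} achieves, and it carries over because depth selectiveness pins down the admissible components just as sequential selectiveness did there.

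The right-to-left implication, namely that any $P$ of the displayed form satisfies $!n[\fillhole\A]$, is routine once the formula has been written down. The main obstacle is the left-to-right direction, and it is the same one already handled in the proof of Lemma~\ref{l:repcapa}: one must rule out spurious models of the finitary encoding of infiniteness, for example a process whose candidate $\A$-components would be ambients of unboundedly growing depth, or a process built from a single large non-replicated term that merely imitates the observable behaviour of a replicated pool. Depth selectiveness of $\A$, hence of $\amb n \A$, is exactly what excludes such processes, by forcing every admissible component to be an $n$-ambient of one fixed depth; with that in hand, the counting argument of Lemma~\ref{l:repcapa} applies unchanged. So the real work is to verify that the proof of Lemma~\ref{l:repcapa} in~\cite{Part1} uses sequential selectiveness only through this boundedness property, and that depth selectiveness supplies the same property in the present setting.
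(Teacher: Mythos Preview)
Your approach is essentially the right one and matches the paper's treatment: this lemma is stated in Section~\ref{subsec:expressiveness} as one of several expressiveness results imported from~\cite{Part1} without proof, and the construction there is indeed the depth-selective analogue of the $\repl{\capa}{\cdot}$ construction of Lemma~\ref{l:repcapa}, exactly as you outline. Your identification of the two key facts --- that models of $\amb n \A$ are single and that depth selectiveness of $\A$ lifts to $\amb n \A$ via $\dd(\amb n P)=\dd(P)+1$ --- is correct and is what makes the recipe transfer.

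That said, your write-up is a proof \emph{plan} rather than a proof: you defer the actual verification (``the real work is to verify that the proof of Lemma~\ref{l:repcapa} in~\cite{Part1} uses sequential selectiveness only through this boundedness property'') without carrying it out, and you never write down the formula. Since the paper itself simply cites~\cite{Part1} here, this level of detail is consistent with the paper's own treatment; but if a self-contained proof were required, you would still need to exhibit the formula and run the counting argument explicitly.
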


By putting together these expressiveness results, we can derive
formulas characterising the equivalence class of a process w.r.t.
logical equivalence for a subcalculus of \MA, defined as follows:

\begin{defi}[Subcalculus \MAIF]\label{def:MAIF}
  Consider a process $P$, and a name $n\not\in\fn{P}$. We say that $P$
  is \emph{image-finite} if any subterm of $P$ of the form $\capa.P'$
  (resp. $(x)P'$) is such that the set
$$ \{
  P''~:~P'~\Rcap~P''\}_{/\bisMOD}$$ (resp. $\{ P''~:~P'\sub n
  x~\Rar~P''\}_{/\bisMOD}$) is finite.
%
%\DS{quantifications above not clear}
% \DS{above: is it correct that $P'$ should perform a cap-transition? }
% \daniel{yes, we are interested in `undetectable transitions' under a
%   capability or a prefix, and this transition captures the stuttering
%   moves corresponding to the cap transition\\
%   see below (where we define \MAIFsyn, in sect.~\ref{s:axiom}) some
%   more comments about image-finiteness}
  \MAIF{} is the set of image-finite MA processes.
\end{defi}

In the standard definition of image-finiteness, as used, e.g., to
establish inductively completeness of the Hennessy-Milner logic, one
requires that the set of outcomes of \emph{the process} is finite.
While exploring the possible outcomes (and in absence of restriction
in the process calculus), we may expose at top-level any
subterm of the process, and hence we implicitly require that all of
its subterms are image-finite in the standard sense. On the other
hand, in our case, we do not impose that $P$ has only finitely many
outcomes, but only do so for \emph{some} subterms. As a consequence,
our notion is less restrictive, and any image-finite process in the
standard sense belongs to \MAIF.

\begin{lem}[Characteristic formulas on \MAIF]\label{l:charforMAIF}
  For any closed \MAIF{} process $P$, there exists a formula $\AAA_P$
  s.t.  for any $Q$, these three conditions are equivalent:
  \begin{enumerate}[\em(1)]
  \item
  $Q\,\sat\,\AAA_P$;
  \item 
  $P\,\eqL\,Q$;
  \item $P\, \bisMOD \, Q$. 
\end{enumerate}
\end{lem}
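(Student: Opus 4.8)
The goal is to build, for each closed \MAIF{} process $P$, a single closed formula $\AAA_P$ such that $Q\sat\AAA_P$, $P\eqL Q$ and $P\bisMOD Q$ are all equivalent. The cleanest route exploits the two characterisation facts already available: by Lemma~\ref{l:charforMAIF}'s easy direction (to be established as part of Section~\ref{s:char}, cf.\ Theorem~\ref{t:soundnessLogBis}) $P\bisMOD Q$ implies $P\eqL Q$, and $P\eqL Q$ trivially implies $Q\sat\AAA_P$ once we know $P\sat\AAA_P$. So the real content is the implication $Q\sat\AAA_P\Rightarrow P\bisMOD Q$, together with $P\sat\AAA_P$. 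The plan is therefore to define $\AAA_P$ by induction on the structure of (the eta normal form of) $P$, using the expressiveness lemmas of Subsection~\ref{subsec:expressiveness} as building blocks, and then to show that any $Q$ satisfying $\AAA_P$ can be related to $P$ by an intensional bisimulation.

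First I would normalise: replace $P$ by its eta normal form (Lemma~\ref{l:eta_nf}), which is legitimate since $\equiv_E\subseteq\bisMOD$ (Corollary~\ref{c:MOD-E} together with Lemma~\ref{c:equiv-bis}) and $\equiv_E$-equivalent processes satisfy the same formulas (the eta law is sound for $\eqL$, which is how stuttering is absorbed). Then I would define $\AAA_P$ by recursion on this normal form, matching the syntax of \MAIF{} against the clauses of Definition~\ref{d:bisMOD}:
\begin{itemize}
\item for $P\equiv\nil$, take $\AAA_P\defiDS\zero$;
\item for a parallel composition $P\equiv\Pi_{i} P_i$ decomposed into its single-thread and replicated factors, take a spatial conjunction $\AAA_{P_1}|\dots|\AAA_{P_k}$ for the non-replicated part, combined with the replication formulas $\repl{\capa}{\cdot}$, $!\msg n$, $\repl{input}{\cdot}$ and $!n[\fillhole{\cdot}]$ of Lemma~\ref{l:repcapa} and the following lemma for the replicated factors — here one checks that each $\AAA_{P_i}$ used inside a replication operator is sequentially (resp.\ depth) selective, which follows because characteristic formulas of single threads pin down $\ds$ (resp.\ $\dd$);
\item for $P\equiv\amb n{P'}$, take $\amb n{\AAA_{P'}}$;
\item for $P\equiv\capa.P'$, use the necessity modality $\Ftame{\capa}.\fillhole{\AAA_{P'}}$ conjoined with the possibility modality $\Fmeta{\capa}.\fillhole{\ltrue}$ so that the formula forces the head capability to be exactly $\capa$ and every $\Rcap$-outcome of the residual to satisfy $\AAA_{P'}$; here image-finiteness is essential, since it guarantees that $\{P''~:~P'\Rcap P''\}_{/\bisMOD}$ is finite, so a \emph{finite} conjunction of $\Ftame{\capa}.\fillhole{\cdot}$-formulas (one per $\bisMOD$-class of outcome) captures the whole outcome set;
\item for $P\equiv\msg n$, take the formula $\msg n$ from Lemma~\ref{l:capaform};
\item for $P\equiv(x)P'$ (already in eta normal form), use $\Ftame{?n}.\fillhole{\cdot}$ and $\Fmeta{?n}.\fillhole{\cdot}$ for a fresh-enough name $n$, again taking a finite conjunction indexed by the $\bisMOD$-classes of $\{P''~:~(x)P'|\msg n\Rar P''\}$, which is finite by the image-finiteness clause for inputs.
\end{itemize}
After defining $\AAA_P$ I would check $P\sat\AAA_P$ by a routine structural induction, using the ``only if'' directions of the expressiveness lemmas.

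The heart of the argument is showing $Q\sat\AAA_P\Rightarrow P\bisMOD Q$. I would prove that the relation $\RR\defiDS\{(P,Q)~:~Q\sat\AAA_P,\ P\in\MAIF\}$ (closed up under whatever is needed, e.g.\ under the eta-normal-form reduction on the left and $\equiv$) is an intensional bisimulation. For each clause of Definition~\ref{d:bisMOD} one reads off the matching move of $Q$ from the corresponding conjunct of $\AAA_P$: e.g.\ the $\amb n{\cdot}$ conjunct forces $Q\equiv\amb n{Q'}$ with $Q'\sat\AAA_{P'}$, hence $P'\RR Q'$; the spatial-conjunction conjuncts force a decomposition of $Q$ matching that of $P$; the $\Ftame{\capa}.\fillhole{\cdot}$ conjuncts force $Q\equiv\capa.Q'$ with every $\Rcap$-outcome of $Q'$ satisfying some $\AAA_{P''}$, giving the $\rcap$-clause; the replication conjuncts force $Q$ to have the same replicated/linear factor structure. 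The clause \reff{c:tau} for reductions needs Corollary~\ref{c:MOD-E}: a reduction of $P$ corresponds to exposing some redex at top level, which by the structure of $\AAA_P$ is mirrored by a (weak) reduction of $Q$ into the same class.

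\textbf{Main obstacle.} The delicate point is not any single clause but the \emph{selectivity side-conditions} required to apply Lemma~\ref{l:repcapa} and its companions: the replication formulas only behave as advertised when fed a sequentially (or depth) selective argument. I would therefore need a separate lemma stating that the characteristic formula $\AAA_{R}$ of a single-thread process $R$ is sequentially selective (every model has the same $\ds$) and, when $R$ is of ambient form, depth selective — which in turn forces the inductive construction of $\AAA_P$ to carry, as an invariant, enough information to recover $\ds$ and $\dd$. A related subtlety is the treatment of mixed parallel compositions $!P_1|(!)P_2|\dots|(!)P_r$ where some factors are replicated and some are not: one must phrase $\AAA_P$ so that it does not over-constrain the multiplicity of the $(!)$-factors, matching exactly the ``$\exists P_1,\dots,P_r$'' shape in Lemma~\ref{l:repcapa}. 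The second, milder obstacle is bookkeeping around eta normal forms in the input clause, so that the fresh name $n$ chosen in $\Ftame{?n}.\fillhole{\cdot}$ really is fresh for all processes involved; this is handled by quantifying the construction over a name outside $\fn{P}$, exactly as in Definition~\ref{def:MAIF}.
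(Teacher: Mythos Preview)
The paper does not prove this lemma itself; it is imported wholesale from the companion paper~\cite{Part1}, as stated at the start of Subsection~\ref{subsec:expressiveness}. Your outline has the right architecture (build $\AAA_P$ from the derived modalities of that subsection, then argue that satisfaction forces $\bisMOD$), but the clause you give for capability prefixes is wrong in a way that breaks $P\sat\AAA_P$.

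You propose, for $P\equiv\capa.P'$, the formula $\Ftame{\capa}.\fillhole{\AAA_{P'}}\land\Fmeta{\capa}.\fillhole{\ltrue}$, reading it as ``every $\Rcap$-outcome of the residual satisfies $\AAA_{P'}$''. Take $P=\openamb n.(\amb a\nil\,|\,\openamb a.\nil)$: here $P'=\amb a\nil\,|\,\openamb a.\nil$ and $P'\Rar\nil$, so $\nil$ is a $\Rcap$-outcome of $P'$ (for $\capa=\openamb n$ one has $\Rcap=\Rar$), yet $\nil\not\bisMOD P'$ and hence $\nil\not\sat\AAA_{P'}$. Thus $P$ itself fails your $\Ftame{\openamb n}.\fillhole{\AAA_{P'}}$. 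Your follow-up about ``a finite conjunction of $\Ftame{\capa}.\fillhole{\cdot}$-formulas, one per outcome class'' makes this worse, not better. What the inversion result (Lemma~\ref{l:inversion}(5)) actually demands is: (i) some $\Rcap$-outcome of $Q'$ is $\bisMOD P'$, captured by $\Fmeta{\capa}.\fillhole{\AAA_{P'}}$; and (ii) $Q'$ itself is $\bisMOD$ to some $\Rcap$-outcome $P''_i$ of $P'$. Since the modalities only speak about $\Rcap$-outcomes of $Q'$, (ii) must be approximated via reflexivity of $\Rcap$ by $\Ftame{\capa}.\fillhole{\bigvee_i\AAA_{P''_i}}$ --- a \emph{disjunction} inside the necessity, over the finitely many (by image-finiteness) outcome classes of $P'$. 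One then checks that $P\bisMOD Q$ really forces every $\Rcap$-outcome of $Q'$ into some class $[P''_i]$, which needs a short transitivity argument for $\Rcap$. The same correction applies to your input clause.

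A lighter but real second gap: you call this ``induction on the structure of the eta normal form of $P$'', yet for $\capa.P'$ you require the characteristic formulas $\AAA_{P''_i}$ of the \emph{outcomes} of $P'$, which are not subterms of $P$. The right well-founded measure is the sequentiality degree $\ds$ (Lemma~\ref{dsred} gives $\ds(P''_i)\le\ds(P')<\ds(\capa.P')$), exactly the device the paper deploys in Section~\ref{s:inductive}. With these two fixes the rest of your plan goes through, and (3)$\Rightarrow$(2)$\Rightarrow$(1) is then immediate from soundness and $P\sat\AAA_P$.
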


% \daniel{in the statement of the lemma above, we refer to
%   $\eqL$. actually, part 1 defines characteristic formulas for
%   intensional bisimilarity (which has'nt been defined yet at this
%   point), and relies on the present paper for the correctness of
%   intensional bisimilarity w.r.t. logical equivalence. we could defer
%   the statement of this lemma and present it after soundness, but
%   Davide suggested to put together all the results inherited from part
%   1\ldots  (see also the companion remark of this one below).
% }
% \DS{we have to be very careful not to give the impression of
%   circulariy. If useful for this, ok to put something later. 
% Howeve, i still think that, if possible at all, we should put all
% results we need from the other paper in a single section, even if
% this section should be placed later. I would nevertheless put in the
% section ``background'' a brief note stating the other background results
% that are used in the paper and that are presented later}

% \daniel{I left the above discussion, but it moved along with this
%   whole subsection about expressiveness results from part 1, so I hope
%   the discussion is kind of obsolete. I hope the expose' is more clear
%   with this structure of the paper. what made me reluctant is to have
%   this ``extrinsic'' subsection in the middle of the paper instead of
%   having it at the beginning, along with the background. But now the
%   expressiveness results arrive at the right moment.}

\medskip

A final  expressiveness result that will be needed later is
the ability to test free name occurrences in a process.
\begin{lem}
  For any name $n$, there exists a formula $\copyright n$ such that
  for any $P$, $P\,\sat\,\copyright n$ iff $n\in\fn{P}$.
\end{lem}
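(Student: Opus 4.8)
The plan is to construct the formula $\copyright n$ by structural induction on the shape of a process, exploiting the expressiveness results already established, in particular the characteristic formulas of Lemma~\ref{l:charforMAIF}. The key observation is that a name $n$ occurs free in $P$ exactly when, after peeling off enough of the parallel/ambient/prefix structure and making a suitable reduction, one reaches an atom (a capability, a message, an ambient name, or an input) that syntactically mentions $n$. So I would first isolate \emph{local} formulas $\copyright_\mathrm{at} n$ detecting $n$ at the very top of a single component: for instance $\amb n \true$ detects an ambient named exactly $n$, $\msg n$ detects the message $\langle n\rangle$ (available by Lemma~\ref{l:capaform}), $\Fmeta{\ina n}.\true$, $\Fmeta{\out n}.\true$ and $\Fmeta{\open n}.\true$ detect the three capabilities on $n$ (again by Lemma~\ref{l:capaform}), and a $\Fmeta{?n}$-style test or a quantified test detects an input abstraction. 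Negating these over \emph{all} names and using that the set of relevant names is finite modulo what appears, one gets that $n\notin\fn{P}$ iff $P$ satisfies the conjunction of the complementary properties; the point is to turn `$n$ does not occur' into a positive logical statement.

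The cleanest route, I expect, is to go through logical equivalence rather than a direct syntactic induction. By Lemma~\ref{l:charforMAIF} (or its infinitary analogue for full MA established later in the paper), processes are characterised up to $\eqL$, and $\eqL$ is preserved under the logic's own operators; so I would reduce the claim to: whether $n\in\fn{P}$ is invariant under $\eqL$ \emph{and} that this invariant is itself expressible. Invariance is not literally true of $\eqL$ in general (e.g.\ a dead subterm mentioning $n$ could be garbage-collected), so actually one wants: $P\sat\copyright n$ iff $P$ is $\eqL$-equivalent to some $Q$ with $n\in\fn Q$ in an \emph{essential} way. The expressiveness machinery of Section~\ref{subsec:expressiveness} (the $\Fmeta{\capa}$, $\repl\capa\cdot$, $!n[\cdot]$ constructors) precisely lets us walk down through parallel components, replication, and ambient nesting while tracking a `marked' subformula; so I would build $\copyright n$ by the same recursion used to build $\AAA_P$, but at each leaf replace `the atom is $\mathit{thus\text{-}and\text{-}so}$' by `the atom mentions $n$', and take a disjunction over the finitely many atom-shapes.

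Concretely, the induction would establish a stronger statement for \emph{formula contexts}: for each name $n$ there is a context-former that, applied to $\true$, holds of $P$ iff $P$ has a subterm (reachable through $|$, $!$, $\amb m\cdot$, $\capa.\cdot$, $(x)\cdot$ with $m\ne n$ and $\capa$ not involving $n$) whose outermost constructor mentions $n$. The recursive clauses mirror Definitions~\ref{defds}/\ref{def:depthdegree}: push through $\amb m\cdot$ using $\at{\cdot}{m}$/$\amb m\cdot$, through parallel using $\cdot\,|\,\true$, through replication using the $\repl\capa\cdot$ and $!n[\cdot]$ formulas, and bottom out at the five atomic detectors above (three capabilities, message, abstraction, plus the ambient-name case). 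The finiteness needed to make the disjunctions and the $\repl\cdot\cdot$ constructions legal (they require sequentially/depth selective argument formulas) comes from restricting attention to the finite set of names and capability-shapes that can occur, exactly as in the proof of Lemma~\ref{l:charforMAIF}.

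The main obstacle I anticipate is the side-conditions on the replication and `replicated ambient' formulas: $\repl\capa\cdot$ and $!n[\cdot]$ only work when their argument is sequentially (resp.\ depth) selective and has models of a restricted shape, whereas the naive `mentions $n$' detector is neither. Handling this requires stratifying the detector by sequentiality and depth degree — writing $\copyright n$ as a (finite, once degrees are bounded appropriately, or countable-but-harmless-via-selective-pieces) combination of selective building blocks — and arguing that a free occurrence of $n$ survives to some bounded level, so that only finitely many strata matter. This is the same bookkeeping that underlies Lemma~\ref{l:charforMAIF}, so I would reuse that construction essentially verbatim, changing only the leaves; the correctness proof then follows by the same induction, using the satisfaction clauses of Definition~\ref{d:satisfaction} and the characterisations in Lemmas~\ref{l:capaform}, \ref{l:repcapa}, and the replicated-ambient lemma, together with the fact that $\fn\cdot$ is compatible with $\equiv$ (so the $\equiv$-quantifications hidden in the satisfaction clauses cause no trouble).
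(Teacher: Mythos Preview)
This lemma is not proved in the present paper; like the other results in Subsection~\ref{subsec:expressiveness}, it is imported from the companion paper~\cite{Part1}. So there is no proof here to compare your attempt against directly. That said, your proposal has a genuine gap that would prevent it from going through.

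You propose to build $\copyright n$ by a structural recursion that pushes a detector through $|$, $!$, $\amb m{\cdot}$, $\capa.\cdot$, and $(x)\cdot$, bottoming out at atomic tests for $n$. But the recursion you describe is on the \emph{process}: what it naturally produces is, for each bound $k$ on depth and sequentiality degree, a formula $\copyright_k n$ correct only on processes of complexity at most $k$. The lemma demands a single formula depending only on $n$, valid for \emph{all} $P$, and the logic has neither infinite disjunction nor a fixed-point constructor to collapse the family $(\copyright_k n)_k$ into one formula. Your aside ``finite, once degrees are bounded appropriately, or countable-but-harmless-via-selective-pieces'' is exactly where the argument breaks: degrees are unbounded across all processes, and the selectivity side-conditions on $\repl{\capa}{\cdot}$ and $!n[\cdot]$ constrain the \emph{argument} formula, not the ambient process, so they do not rescue the situation. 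The analogy with Lemma~\ref{l:charforMAIF} is misleading for the same reason: $\AAA_P$ is explicitly parametrised by $P$, which is precisely what $\copyright n$ must not be. (A minor side point: your worry about garbage-collecting dead subterms is misplaced in public MA, where there is no name binder and no structural or eta law erases names, so $\fn{\cdot}$ is stable under $\equiv$ and $\equivE$.) The actual construction does not recurse through the syntax of $P$; it uses the adjuncts $\rtr$ and $@$ together with $\Diamond$ to set up testing contexts in which a free occurrence of $n$ becomes observable, so that unbounded depth is absorbed by the semantic quantification over contexts inside $\rtr$ rather than by syntactic stratification.
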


%%% Local Variables: 
%%% mode: latex
%%% TeX-master: "hls2"
%%% Local IspellDict: british
%%% Local IspellPersDict: ./.ispell
%%% End: 

\subsection{Soundness, and Completeness for Finite
Processes}\label{subsec:soundcomplete} 
%of $\bisMOD$ for $\eqL$}

We now study soundness and completeness of $\bisMOD$ with respect to
$\eqL$.  Soundness means that 
$\bisMOD\,\subseteq \,\eqL$, and completeness is the converse. We show
here soundness  on the whole calculus. By contrast, we only prove
completeness 
on the finite processes, deferring the general result to the next
section.
 We
chose to do this for the sake of clarity: the proof in the finite case
is much simpler, and exposes the basic ideas of the argument in the
full calculus.

\subsubsection{Soundness on full public \MA}
\label{s:coind}

In order to prove soundness (on the whole calculus), we use the
definition of $\bisE$ and the congruence property to establish that
bisimilar processes satisfy the same formulas.

\begin{thm}[Soundness of $\bisMOD$]
\label{t:soundnessLogBis}
Assume $P,Q\in\MA$, and suppose $P \bisMOD Q$. Then, for all $\A$, it
holds that $P \sat \A$ iff $Q \sat \A$.
\end{thm}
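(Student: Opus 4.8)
The plan is to prove soundness by structural induction on the formula $\A$, showing simultaneously that $P \bisMOD Q$ implies $P \sat \A \iff Q \sat \A$. Since $\bisMOD$ is symmetric, it suffices to prove only one direction of the biconditional at each step (say, $P\sat\A$ implies $Q\sat\A$), and the converse follows by symmetry. The propositional cases ($\ltrue$, $\neg\A$, $\A_1 \orr \A_2$, $\all x \A$) are routine: they either follow directly from the induction hypothesis, or (for negation and universal quantification over names) exploit symmetry of $\bisMOD$ and the fact that $\bisMOD$ is preserved under injective renaming of names, using Lemma~\ref{l:rename_bis} together with Corollary~\ref{c:MOD-E} to transport the relation through the substitution $\{n/x\}$ when a fresh name $n$ arises. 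Actually, for $\all x \A$ it is cleanest to note that satisfaction of $\all x\A$ quantifies over \emph{all} names $n$, and $P\bisMOD Q$ is preserved when we rename a name not free in $P,Q$, so we reduce to finitely many representatives.

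For the spatial connectives I would use the intensional clauses of $\bisMOD$ directly. If $P \sat \A_1 | \A_2$, then $P \equiv P_1 | P_2$ with $P_i \sat \A_i$; by clause~\reff{cMOD:par} there are $Q_1,Q_2$ with $Q \equiv Q_1|Q_2$ and $P_i \bisMOD Q_i$, so by induction $Q_i \sat \A_i$, hence $Q \sat \A_1|\A_2$. The case $\zero$ uses clause~\reff{c:nil}: $P\sat\zero$ means $P\equiv\nil$, so $Q\equiv\nil$ by that clause, hence $Q\sat\zero$. The case $\amb n \A$ uses clause~\reff{cMOD:amb} in exactly the same way. For the temporal modality $\sometime \A$: if $P \sat \sometime\A$ then $P \Rar P'$ with $P'\sat\A$; iterating clause~\reff{c:tau} (which is the single-step reduction clause of $\bisMOD$) along the reduction sequence yields $Q \Rar Q'$ with $P' \bisMOD Q'$, so $Q'\sat\A$ by induction and $Q\sat\sometime\A$.

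The two adjunct connectives are where the congruence property is essential, and this is the main obstacle. For $\at{\A}{n}$: $P \sat \at{\A}{n}$ means $\amb n P \sat \A$; since $\bisMOD$ is a congruence (Corollary~\ref{c:bisEcong}), $\amb n P \bisMOD \amb n Q$, so by the induction hypothesis applied to $\A$, $\amb n Q \sat \A$, i.e.\ $Q \sat \at{\A}{n}$. For the guarantee $\A \rtr \B$: $P \sat \A\rtr\B$ means that for every $R$ with $R\sat\A$ we have $P|R \sat \B$; to show $Q \sat \A\rtr\B$, fix $R$ with $R\sat\A$; by congruence $P|R \bisMOD Q|R$; since $P\sat\A\rtr\B$ gives $P|R\sat\B$, the induction hypothesis on $\B$ yields $Q|R\sat\B$, as required. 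The subtlety is that both adjunct cases invoke the induction hypothesis on a \emph{subformula} ($\A$, resp.\ $\B$) but at a \emph{different pair} of processes than the original ($\amb n P \bisMOD \amb n Q$, resp.\ $P|R \bisMOD Q|R$); this is legitimate because the induction is on the structure of the formula, with the statement universally quantified over all pairs of intensionally bisimilar processes — so I should phrase the induction hypothesis explicitly in that universally-quantified form at the outset. The only real work has already been done: the fact that $\bisMOD$ is a congruence (Corollary~\ref{c:bisEcong}), which itself rested on the detour through $\bisE$.
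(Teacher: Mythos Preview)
Your proposal is correct and follows essentially the same route as the paper: structural induction on the formula, with the spatial clauses handled directly by the intensional clauses of $\bisMOD$, the temporal modality by iterating the reduction clause, and the two adjuncts by invoking congruence of $\bisMOD$ (Corollary~\ref{c:bisEcong}) before applying the induction hypothesis to the subformula at a different bisimilar pair.

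One small point: your treatment of $\all x\,\A$ is more complicated than it needs to be. No renaming lemma, no reduction to finitely many representatives is required. If $P\sat\all x\,\A$ then for every name $n$ we have $P\sat\A\{n/x\}$; since $P\bisMOD Q$ (the \emph{same} pair of processes) and $\A\{n/x\}$ is a strictly smaller formula, the induction hypothesis gives $Q\sat\A\{n/x\}$ for every $n$, hence $Q\sat\all x\,\A$. The substitution acts only on the formula, not on the processes, so nothing about $\bisMOD$ being preserved under renaming is needed here.
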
 

\proof
By induction on the size of $\A$. %the depth of $\A$. 

\begin{enumerate}[$\bullet$]
\item
$\A = \true$.

 Nothing to prove.

\item $\A =  \myneg \B $ or  $\A =   \B_1 \orr \B_2$.

By induction and  the definition of satisfaction.

\item  
 $\A =  {\zero } $.

By definition of satisfaction and clause \reff{c:nil} of  the definition
of $\bisMOD$.

\item $\A =  {\amb n \B} $.

Then $P \equiv \amb n {P'}$ and $\satDS{P'}\B
$.  Hence $Q \equiv \amb n {Q'}$ for some $Q' \bisMOD P'$. 
By induction, $\satDS{Q'}\B$; we can therefore conclude 
that also $\satDS Q{\amb n \B}$ holds. 
\item $\A =  {\A_1|\A_2} $.

Then $P \equiv P_1|P_2$ and $\satDS{P_i}{\A_i}$.  
By clause \reff{cMOD:par} of Definition~\ref{d:bisMOD},
 $Q \equiv {Q_1|Q_2}$ for some $Q_i \bisMOD P_i$. 
By induction, $\satDS{Q_i}{\A_i}$; we can therefore conclude 
that also $\satDS Q{\A_1|\A_2}$ holds.

\item $\A = \all x \B $.

By definition of satisfaction,
$ \satDS{P}{\B \sub n x}$ for all $n$. 
The result for $Q$ then follows by induction, for $ {\B \sub n x}$ is
strictly small than  $\all x \B$.

\item $\A = \diamond \B$.

By definition of satisfaction,
there is $P'$ such that $ P \Longrightarrow  P' $  and $ \satDS{P'}\B$.
Using clause \reff{c:tau}  of  the definition
of $\bisMOD$, there is $Q'$ such that $ Q
\Longrightarrow  Q'  \bisMOD P'$. By induction, $\satDS{Q'}{\B}$; hence 
  $\satDS{Q}{\A}$.

\item $\A  = \at \B n  $ or $\A  = \A_1 \limp \A_2   $.

Follows using induction and the congruence of $\bisMOD$.\qed
\end{enumerate}

% \daniel{\textbf{HERE} could be the place to talk about characteristic
%   formulas fow $\eqL$, to be honest -- in
%   remark~\ref{r:completenessMAIF} below, we refer to the lemma above
%   where (for the moment) characteristic formulas are presented}

% \DS{ok, but i would put everything about charact. formulas later, when
% they are actually used (included the remark below)}

% \daniel{I moved the subsection about characteristic formulas right
%   before this one (``soundness and completeness''). The lemma to which
%   we refer in remark~\ref{r:completenessMAIF} below (I think the
%   remark is ok where it is now) is hence closer, and these
%   expressiveness results are actually needed right after,
%   in the proof of theorem~\ref{t:bisEomega}.\\
%   I feel this is ok like that. another possibility would be to let the
%   expressiveness results from part 1 be a subsubsection, inserted
%   right before this one (and after soundness), but I feel it's like
%   cheating: stating these results deserves a subsection, to emphasize
%   that single part where we import results from part 1.  }

\subsubsection{Completeness, on finite
  processes}\label{s:completenessfinite}

The proof of completeness we develop here is based on the construction
of a sequence of approximants of $\bisE$, which is a standard approach
for image-finite calculi.  This works in the finite case (finiteness
implies image-finiteness), but not in presence of replication. 
The proof is however interesting on its own, and gives a
much simpler account on how the logic expresses the clauses of
$\bisMOD$ than the proof for the whole calculus.

% \daniel{the following paragraph is the result of a sequence of
%   interactions between us three. i hope it is ok now}

Note that the definability of characteristic formulas
for $\bisMOD$ on \MAIF{} (see Definition~\ref{def:MAIF} and
Lemma~\ref{l:charforMAIF}) implies completeness: for two \MAIF{}
processes $P$ and $Q$, $P\,\eqL\,Q$ entails $P\,\bisMOD\,Q$. Since
\MAIF{} contains the set of finite processes, this already gives
completeness on finite processes. We nevertheless present here a proof
that is specific to the finite case,  to prepare
the ground for completeness on full public MA.
The route we are interested in for the completeness proof uses $i$-th
approximants $\bisEn i$ of relation $\bisE$, and the fact that
$\bisEomega ~\eqdef~ \bigcap_i \bisEn i$ coincides with $\bisE$.

\begin{defi}
\label{d:bisEwithEqualApprox}
We define the relations $\bisEn i$ between processes, for all $i \geq
0$, as follows.  

$\bisEn 0$ is the universal relation, and $\bisEn {i+1}$ is defined by
saying that $P \bisEn {i+1}Q $ holds if we have:
\begin{enumerate}[(1)]
\item
If $P \equiv P_1 | P_2$
%and  for all $i$ we have  $P_i \not \equiv \nil$,
then there are $Q_s$ ($s=1,2$) such that 
$Q \equiv  Q_1 |Q_2 $ and for all $s$ 
$P_s \bisEn i Q_s$.

\item 
\label{cca:in}
If $P\equiv \capa. P'$ then 
there are $Q',Q''$ such that
\begin{enumerate}
\item $Q \equiv \capa. Q'$, 
\item $Q'\Rcap Q''$, and
\item $P' \bisEn i Q'$.
\end{enumerate}
 
\item If $P \equiv \msg n$ then 
$Q \equiv \msg n $.

\item
\label{cca:abs} If $P \equiv \abs x {P'}$
then
there is $Q'$ such that
\begin{enumerate}
\item
 $Q \equiv \abs x {Q'}$ and 
\item for all $n$
there is $Q''$ such that
$ \msg n | Q \Longrightarrow Q'' $ and $P' \sub n x  \bisEn i Q''$.
\end{enumerate}
 
\item If $P \equiv \amb n {P'}$ then there is $Q'$ such that $Q 
\equiv \amb n {Q'}$ and $P' \bisEn i Q'$.

%\item If $P \equiv \nil $ then also $Q \equiv \nil$.
\end{enumerate} 

We set $\bisEomega ~\eqdef~ \bigcap_{i\geq 0} \bisEn i$.
\end{defi} 

\begin{lem}
\label{l:same}
$\bisEomega$ coincides with $\bisE$ on finite processes.
\end{lem}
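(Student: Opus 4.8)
The plan is to prove the two inclusions $\bisE \subseteq \bisEomega$ and, on finite processes, $\bisEomega \subseteq \bisE$ separately.

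For the inclusion $\bisE \subseteq \bisEomega$, which holds on all processes, I would argue by induction on $i$ that $\bisE \subseteq \bisEn i$ for every $i$. The base case $i=0$ is trivial since $\bisEn 0$ is the universal relation. For the inductive step, assume $\bisE \subseteq \bisEn i$; given $P \bisE Q$, each clause in the definition of $\bisEn{i+1}$ is obtained from the corresponding clause of $\bisE$ (Definition~\ref{d:bisEwithEqual}) simply by replacing the residual requirement ``$\cdot \bisE \cdot$'' by ``$\cdot \bisEn i \cdot$'', which is weaker by the induction hypothesis; so the witnesses provided by $\bisE$ for $P \bisE Q$ directly witness $P \bisEn{i+1} Q$. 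Note this direction does not use finiteness at all: it is the converse that does.

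For the inclusion $\bisEomega \subseteq \bisE$ on finite processes, the plan is to show that, restricted to finite processes, $\bisEomega$ is a syntax-based intensional bisimulation, hence contained in $\bisE$. So suppose $P$ and $Q$ are finite and $P \bisEomega Q$, i.e.\ $P \bisEn i Q$ for all $i$. I need to produce, for each clause of Definition~\ref{d:bisEwithEqual}, a single witness that works for $\bisEomega$ (not just $\bisEn i$ for a fixed $i$). This is exactly where finiteness / image-finiteness enters: for each $i$ the relation $\bisEn{i}$ gives a (possibly different) decomposition or residual, and one must extract a uniform choice. The standard move is a König's-lemma / compactness argument: the set of candidate witnesses (e.g.\ decompositions $Q \equiv Q_1 | Q_2$ with the right structure, or residuals $Q''$ reachable by $\msg n | Q \Longrightarrow Q''$, or the unique $Q'$ with $Q \equiv \amb n {Q'}$, etc.) is finite up to $\equiv$ when $Q$ is finite, because a finite process has only finitely many structurally-inequivalent parallel decompositions and only finitely many reducts up to $\equiv$; since each such witness either works for $\bisEn i$ or fails for some $\bisEn{j}$, and the $\bisEn i$ are decreasing, one of the finitely many witnesses must work for all $i$ simultaneously. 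For the ambient clause the witness $Q'$ is essentially unique, so one just observes $P' \bisEn i Q'$ for all $i$, i.e.\ $P' \bisEomega Q'$; for the parallel, capability, and abstraction clauses one applies the finiteness-of-witnesses argument, also using that the $Q_i$, $Q'$, $Q''$ arising are themselves finite (subterms / reducts of a finite process are finite up to $\equiv$, by Theorem~\ref{thm:equivdeci}-style reasoning on $\equiv$-classes, or more simply because reduction and the relevant decompositions preserve finiteness).

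The main obstacle, as usual for Hennessy--Milner-style ``$\bigcap_i \bisEn i = \bisE$'' results, is precisely this last compactness step: one must be careful that for each clause the pool of potential witnesses for $Q$ is finite (up to $\equiv$) and that this pool does not depend on $i$, so that the pigeonhole/König argument applies; and one must check that finiteness propagates to all the residual processes involved, so that the argument can be iterated when verifying that $\bisEomega$ is itself a bisimulation. The clause \reff{c:abs} for abstraction requires a little extra care, since the witness $Q''$ depends on the chosen name $n$, but for each fixed $n$ the set $\{Q'' : \msg n | Q \Longrightarrow Q''\}_{/\equiv}$ is finite when $Q$ is finite, so the argument goes through name by name.
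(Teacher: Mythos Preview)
Your proposal is correct and is precisely the standard approximation argument the paper alludes to; the paper's own proof is the single line ``Standard approximation result (finite processes are image finite),'' and your write-up is a faithful unpacking of that. One minor simplification: in the capability clause the witness $Q'$ with $Q \equiv \capa.Q'$ is essentially unique up to $\equiv$, so no compactness is needed there, only in the parallel and abstraction clauses.
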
 
\begin{proof}
Standard approximation result  (finite processes are image finite).
\end{proof}
%%\begin{proof}
%%  We first remark that   $\bisE$ is equivalent to $\equiv\bisE\equiv$, since
%%  $\bisE$ is defined up to $\equiv$.
%%    Then, since finite processes are image finite, we can apply a
%%  standard approximation result to deduce
%%  $\bisEomega\,\,=\,\,\equiv\bisE\equiv$ (indeed, the definition of
%%  $\bisEn i$ is also up to $\equiv$). We can thus conclude.
%%\end{proof} 

\begin{lem}
\label{l:bisEomega}
Let $P,Q$ be two \emph{finite} processes.  If $P \,\eqL\, Q$ then $P
\bisEomega Q$.
\end{lem}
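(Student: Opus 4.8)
The plan is to prove the contrapositive by induction on $i$: for all $i\ge 0$ and all \emph{finite} processes $P,Q$, if $P\not\bisEn{i}Q$ then there is a closed formula $\AAA$ with $P\sat\AAA$ and $Q\not\sat\AAA$ (or vice versa). Since $\bisEomega=\bigcap_i\bisEn i$, if $P\not\bisEomega Q$ then $P\not\bisEn i Q$ for some $i$, and the formula obtained distinguishes $P$ and $Q$, contradicting $P\,\eqL\,Q$. By Lemma~\ref{l:same}, $\bisEomega$ coincides with $\bisE$ on finite processes, so this gives $P\bisE Q$, and by Corollary~\ref{c:MOD-E} also $P\bisMOD Q$, as desired. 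The base case $i=0$ is vacuous since $\bisEn 0$ is the universal relation.

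For the inductive step, I would do a case analysis according to which clause of Definition~\ref{d:bisEwithEqualApprox} fails for the pair $(P,Q)$; by symmetry it suffices to consider failures "on the $P$ side". Each clause corresponds directly to a logical operator, and the relevant expressiveness results from Subsection~\ref{subsec:expressiveness} let me internalise it. For the parallel clause: if $P\equiv P_1\mid P_2$ but no decomposition $Q\equiv Q_1\mid Q_2$ has $P_s\bisEn i Q_s$ for both $s$, then since $Q$ is finite it has finitely many decompositions up to $\equiv$; by the induction hypothesis each offending component pair is separated by a formula, and combining these finitely many formulas via $\mid$, $\wedge$, $\neg$ yields a formula in $\AAA_1\mid\AAA_2$ form satisfied by $P$ but not $Q$. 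For the $\zero$ clause ($P\equiv\nil$, $Q\not\equiv\nil$) take $\AAA=\zero$. For the ambient clause use the $\amb n{\cdot}$ connective together with the induction hypothesis applied to the (necessarily present, by Lemma~\ref{l:unCOMP}-style reasoning) distinct contents. For the capability clause $P\equiv\capa.P'$: if $Q$ is not of the form $\capa.Q'$, separate using Lemma~\ref{l:capaform}'s necessity/possibility modalities plus the characteristic formula for single processes of a different head; if $Q\equiv\capa.Q'$ but every $Q'\Rcap Q''$ fails $P'\bisEn i Q''$... here one uses finiteness of $P$ to know $P'$ has finitely many reducts modulo stuttering, applies the induction hypothesis to each, and builds the discriminating formula with $\Fmeta{\capa}.\fillhole{\cdot}$ and $\Ftame{\capa}.\fillhole{\cdot}$. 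The message and abstraction clauses are handled similarly, using the $\msg n$ formula and the $\Fmeta{?n}.\fillhole{\cdot}$, $\Ftame{?n}.\fillhole{\cdot}$ contexts; the abstraction case mirrors the asynchronous-input treatment, so one supplies $\msg n$ in parallel and reasons about the resulting reducts.

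The main obstacle I anticipate is the capability (and abstraction) clause, for two reasons. First, the continuation $P'$ must itself be finite for the induction to apply to its reducts, which is true because subterms of finite processes are finite, but one must be careful about how stuttering ($P'\Rcap P''$ unfolds via $\Stat{M_1}{M_2}$) interacts with finiteness — in the finite (replication-free) case the set of reducts modulo $\equiv$ is genuinely finite, so this is manageable, but it is the crux. Second, combining the necessity and possibility modalities correctly to express "$\capa.(\text{something all of whose }\Rcap\text{-reducts satisfy }\bigvee_j\neg\AAA_j)$" requires precisely the formulas of Lemmas~\ref{l:capaform} and the necessity-modality lemma, and one must check that the finite disjunction/conjunction of the induction-hypothesis formulas does the job — essentially that a finite process is determined up to $\bisEn{i+1}$ by finitely many $\bisEn i$-observations. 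Everything else (the propositional and spatial clauses) is routine once the expressiveness toolkit is in place.
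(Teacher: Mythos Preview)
Your approach is essentially the paper's: contrapositive, induction on $i$, case analysis on which clause of Definition~\ref{d:bisEwithEqualApprox} fails, and use of the derived modalities from Subsection~\ref{subsec:expressiveness} to build the separating formula.

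Two small corrections to the capability (and, analogously, abstraction) case. First, it is the finiteness of $Q$, not of $P$, that you need: having fixed $Q\equiv\capa.Q'$, there are only finitely many $Q''$ (up to $\equiv$) with $Q'\Rcap Q''$, and for each such $Q''$ the induction hypothesis gives a formula $\AAA_{Q''}$ with $P'\sat\AAA_{Q''}$ and $Q''\not\sat\AAA_{Q''}$; the finite conjunction $\bigwedge_{Q''}\AAA_{Q''}$ is then satisfied by $P'$ but by no $Q''$. Second, only the \emph{possibility} modality is required: the paper's separating formula is simply $\Fmeta{\capa}.\fillhole{\bigwedge_{Q''}\AAA_{Q''}}$, which $P$ satisfies (via the continuation $P'$) and $Q$ does not (since every $Q''$ reachable from $Q'$ fails some conjunct). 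The necessity modality $\Ftame{\capa}.\fillhole{\cdot}$ plays no role in this finite-case proof; it only enters in the general completeness argument of Section~\ref{s:inductive}.
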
 

\proof
Suppose $P \bisEomegaNOT Q$. Then there is $i$ such that
$P \bisEnNOT i Q$. We prove, by induction on $i$, that in this 
case we can find a formula $\A$ such that $\satDS P \A $ holds but $\satDS 
Q\A $ does not. 

For $i = 0$, this trivially holds since the hypothesis $P\bisEnNOT 0Q $ is absurd 
for  $\bisEn 0$ being the universal
relation.

Now the case $i+1$, for $i\geq 0$.
We proceed by case analysis:

\begin{enumerate}[(1)]
\item
$P \equiv  P_1 |  P_2$,
and  for all 
 $Q_1, Q_2$  such that 
$Q \equiv  Q_1   | Q_2$ there is $t$  ($1 \leq t \leq 2$)
such that $P_t \bisEnNOT i Q_t$.

Modulo $\equiv$, there is a finite number, say $s$,  of pairs of
processes 
$ Q_1 ,  Q_2$ such that 
$Q \equiv  Q_1 | Q_2$ (note that by hypothesis $P$ is finite). 
% , and identify two pairs $ Q_1 ,
%  Q_2$ and $ Q'_1 ,  Q'_2$ if $Q_t\equiv Q'_t$ for all $t$). 
Call $Q_{t,u}$ the $t$-th process of the $u$-th pair.  Then for
all $u$ ($1\leq u \leq s$) there is $t$  such that 
$P_t \bisEnNOT i Q_{t,u}$. By induction, there is $\A_{t,u}$ such that 
\[ \satDS{P_t}{\A_{t,u}}  \qquad\mbox{ and  }\qquad  \satNOT{Q_{t,u}}{\A_{t,u}}\,.
\]
Define
\[B_t \eqdef \bigwedge_{u.\,1\leq u \leq s {\mathrm{and}~P_t \bisEnNOT i Q_{t,u}} } \A_{t,u}\,. \]
Then 
\[\satDS P {B_1  | B_2}\,, \]
\noindent whereas
\[\satNOT Q {B_1  | B_2}\,. \]

\item 
$P\equiv \capa . P'$; then necessarily  
 $Q \equiv \capa. Q'$, and for all $Q_t$ such that
  $Q' \Rcap {Q_t} $,
it holds  that 
$P' \bisEnNOT i Q_t$.

By induction, for all $t$ there is $\A_t$ such that $\satDS
{P'}{\A_t}$ but $\satNOT{Q_t}{\A_t}$. Since $Q$ is finite, there is
only a finite number of such processes $Q_t$ (up to $\equiv$). Write
$(Q_t)_{t\in I}$ for  this set of processes up to $\equiv$ (we pick a
representant for each $\equiv$-equivalence class), and call $\A_t$ the
formula corresponding to each $Q_t$.  Define
\[\A \eqdef  \Fmeta{\capa} .  \fillhole{\bigwedge_{t\in I} {\A_t}}\,,\]
\noindent using the standard notation for the (finite)
conjunction of the $\A_t$s. Then $\satDS P \A$ but $\satNOT Q \A$.

\item $P\equiv\msg n$, and $Q\not\equiv\msg n$: then $P\sat\msg n$, and 
$Q\not\sat\msg n$.

\item
  $P \equiv \abs x {P'}$,
 $Q \equiv \abs x {Q'}$ and 
there is  $n$
such that for all  $Q_t$ such that
$ \msg  n | Q \Longrightarrow Q_t $, it holds that
 $P''  \bisEnNOT i Q_t$, for $P'' \eqdef  P' \sub n x$.

Modulo $\equiv$, there is only a finite number of such $Q_t$s, say $Q_1, 
\ldots, Q_s$. By induction, there are formulas 
$\A_1 , \ldots, \A_s$ with $\satDS {P''}{\A_t}$ and  $\satNOT {Q_t}{\A_t}$. 
We introduce as above the notation $(Q_t)_{t\in I}$, and we define
\[\A \eqdef  \Fmeta{?n}.\fillhole{\bigwedge_{t\in I} \A_t}\,. \]
Then $\satDS P \A$, but $\satNOT Q \A$, because 
whenever 
$\msg n | Q \Longrightarrow Q_t$, it holds that  
$\satNOT{Q_t}{\A_t}$.

\item  $P \equiv \amb n {P'}$,
$Q \equiv \amb n {Q'}$ and $P' \bisEnNOT i Q'$.

By induction there is $\A'$ with $\satDS {P'}{\A'}$ but  $\satNOT
{Q'}{\A'}$.
 Define $\A \eqdef \amb n{\A'}$; then $\satDS P \A $ but $\satNOT Q\A$.\qed

\end{enumerate}

\begin{thm}[Completeness on finite processes]
\label{t:bisEomega}
Let $P,Q$ be two finite closed processes.  If $P \,\eqL\, Q$ then $P
\bisMOD Q$.
\end{thm}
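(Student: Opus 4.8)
The plan is to chain together the results already established in this
subsection. The key observation is that Lemma~\ref{l:bisEomega} gives us
exactly the converse of what we want, but phrased in terms of
$\bisEomega$ rather than $\bisMOD$. So the proof is a two-step reduction.

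First I would invoke Lemma~\ref{l:bisEomega}: since $P,Q$ are finite and
$P\,\eqL\,Q$, we obtain $P\bisEomega Q$. Second, I would use
Lemma~\ref{l:same}, which states that $\bisEomega$ coincides with
$\bisE$ on finite processes; hence $P\bisE Q$. Finally, by
Corollary~\ref{c:MOD-E}, the relations $\bisE$ and $\bisMOD$ coincide,
so $P\bisMOD Q$, which is the desired conclusion.

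I do not expect any real obstacle here: the theorem is essentially a
corollary assembled from Lemma~\ref{l:bisEomega}, Lemma~\ref{l:same} and
Corollary~\ref{c:MOD-E}, all of which have already been proved (or, in
the case of Lemma~\ref{l:bisEomega}, proved just above by the
approximant induction that constitutes the genuine content). If there is
any subtlety to be careful about, it is only to make sure the
finiteness hypothesis on $P$ and $Q$ is genuinely needed in
Lemma~\ref{l:same} (finite processes are image-finite, which is what
makes the approximation $\bisEomega=\bigcap_i\bisEn i$ collapse to
$\bisE$), and that no hidden closedness or open-term issue arises —
but the statement already restricts to finite \emph{closed} processes,
matching the hypotheses of the lemmas invoked. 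So the write-up is just:
apply Lemma~\ref{l:bisEomega}, then Lemma~\ref{l:same}, then
Corollary~\ref{c:MOD-E}.
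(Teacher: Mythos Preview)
Your proposal is correct and matches the paper's own proof, which simply says ``Follows from Lemma~\ref{l:same} and~\ref{l:bisEomega}.'' You are in fact slightly more careful than the paper: the paper does not explicitly mention Corollary~\ref{c:MOD-E} to pass from $\bisE$ to $\bisMOD$, but that step is indeed needed and you supply it.
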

\begin{proof}
Follows from Lemma~\ref{l:same} and \ref{l:bisEomega}.
\end{proof}

%%% Local Variables: 
%%% mode: latex
%%% TeX-master: "part2"
%%% Local IspellDict: british
%%% Local IspellPersDict: ./.ispell
%%% End: 

\section{Completeness of \texorpdfstring{$\bisMOD$}{bisMOD} in the 
full calculus}\label{s:inductive}
%%%%
%% MACROS
\newcommand{\imgset}[1]{\mathcal{E}_{#1}^{\Downarrow}}
\newcommand{\frset}[1]{\mathcal{E}_{#1}^{\textrm{frz},N}}
%%%
The proof we have presented in the finite case cannot be used directly
in the full \MA{} calculus, because we lack the image-finiteness
hypothesis, which allowed us to show that the limit $\bisEomega$
coincides with $\bisE$. In this section, we present a proof of the
completeness of $\bisMOD$ for all processes. To do this, we establish
the existence, for any processes $P,Q$, of a formula $\F_{P,Q}$ such
that $P\sat \F_{P,Q}$, and such that $Q\sat \F_{P,Q}$ holds if and only
if $P\bisMOD Q$. This result is hence weaker than the existence of
characteristic formulas, but it does not require
image finiteness.
% \DSj{the fact that we can handle the full calculus seems quite a
%   novelty to me wrt standard results in concurrency. Is it stressed somewhere?}
% \etij{Yes, in introduction.}

We sketch the structure of the proof. Our approach exploits two
technical devices, that we introduce first. We start by proving some
lemmas related to the sequentiality degree of a term
(Definition~\ref{defds}), which allows us to define a sound induction
principle on \MA{} processes. This principle supports the introduction
of an inductive characterisation of $\bisMOD$. The second technical
device we introduce is the set of \emph{frozen subterms} of a process,
that intuitively corresponds to the collection of subterms appearing
under guards (capabilities or input prefixes) in a given term. These
two technical notions are then used to define \emph{local
  characteristic formulas}, which correspond to a relaxed notion of
characteristic formula w.r.t.  logical equivalence.  An important fact
about the set of frozen subterms of a process is that it enjoys a kind
of subject reduction property; this allows us to replace the
potentially infinite set of images of a term with a finite set when
constructing local characteristic formulas. 
% We will see that the existence of local characteristic formulas allows
% us ot deduce completeness for the whole calculus.
% We deduce from this result
% completeness for the whole calculus.
%\DS{i would remove the last sentence above}

\subsection{An inductive characterisation of
  \texorpdfstring{$\bisMOD$}{bisMOD}}\label{subsec:bigind} 

We now establish some properties related to the sequentiality degree
of processes. These allow us to introduce a well-founded order on
terms which supports the definition of an inductive relation that
coincides with $\bisMOD$. 

% In this section we introduce the notion of sequentiality degree and
% depth degree of a term.  We derive some technical lemmas about this
% notion, and in particular an inductive order that we use to define an
% inductive characterisation of $\bisMOD$.

%%
\begin{lem}\label{dsred}
  Let $P,Q$ be two terms of \MA{}. Then:
\begin{enumerate}[(1)]
\item if $P\equiv Q$, then $\ds(P)=\ds(Q)$;
\item
  if $P\,\rar\, Q$ or  $P\,\arr \mu\, Q$ then $\ds(P) \geq \ds(Q)$.
\end{enumerate}
\end{lem}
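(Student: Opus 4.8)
The plan is to prove both statements by structural induction on the process $P$, relying on the fact that $\ds$ is defined compositionally except for the abstraction case, where it goes through the eta normal form.

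\medskip

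\noindent\emph{Proof plan.}
For item (1), I would argue that $\ds$ is invariant under each generating law of $\equiv$. The parallel laws ($P|\nil\equiv P$, commutativity, associativity) are immediate from $\ds(\nil)=0$, $\ds(P|Q)=\max(\ds(P),\ds(Q))$, and associativity/commutativity of $\max$. For the replication laws, $\ds(!P)=\ds(P)$ gives $\ds(!P)=\ds(!P|P)=\max(\ds(P),\ds(P))$, $\ds(!\nil)=\ds(\nil)=0$, $\ds(!(P|Q))=\max(\ds(P),\ds(Q))=\ds(!P|!Q)$, and $\ds(!!P)=\ds(P)=\ds(!P)$. Since these laws are closed under the congruence rules and $\ds$ is defined compositionally on all constructors (including $\amb n\cdot$ and $\capa.\cdot$ and $(x)\cdot$), invariance propagates through contexts, so $P\equiv Q$ implies $\ds(P)=\ds(Q)$. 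The one subtlety is the abstraction case: $\ds((x)P)$ is defined via the eta normal form of $(x)P$, which by Lemma~\ref{l:eta_nf} is unique up to $\equiv$; hence if $(x)P\equiv(x)Q$ their eta normal forms are $\equiv$-equal, and we conclude by the induction hypothesis applied to the (structurally smaller, or at least $\equiv$-equal, hence same-degree) bodies.

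\medskip

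For item (2), the reduction case follows from the one-step labelled-transition case together with a small additional argument for the congruence rules \trans{Red-Par}, \trans{Red-Amb}, \trans{Red-Str}, using item~(1) for \trans{Red-Str} and monotonicity of $\max$ and of $\ds(\amb n\cdot)=\ds(\cdot)$ for the other two. So the heart is to check $\ds(P)\ge\ds(P')$ for each axiom. For \trans{Red-Open}: $\ds(\open n.P_1|\amb n{P_2})=\max(1+\ds(P_1),\ds(P_2))\ge\max(\ds(P_1),\ds(P_2))=\ds(P_1|P_2)$. For \trans{Red-In} and \trans{Red-Out}: the multiset of capability/message/input prefixes and their nesting is merely rearranged — one $\amb n\cdot$ or $\amb m\cdot$ moves relative to another and one $\ina m$ or $\out m$ prefix is consumed — and since $\ds$ ignores ambient nesting and only drops or preserves the count of prefixes, the degree cannot increase; concretely $\ds(\amb n{\ina m.P_1|P_2}|\amb m{P_3})=\max(\ds(P_1),\ds(P_2),\ds(P_3))$ both before and after, so in fact equality holds. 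For \trans{Red-Com}: $\ds(\msg\eta|(x)P)=\max(1,\ds((x)P))=\max(1,1+\ds(P''))$ where $(x)P''$ is the eta normal form of $(x)P$, and the reduct is $P\sub\eta x$; here I need the auxiliary fact that $\ds(P\sub\eta x)\le\ds(P'')+1=\ds((x)P)$, which I would get from a substitution lemma stating $\ds(R\sub\eta x)=\ds(R)$ (renaming does not affect the degree, as it commutes with eta normalisation) combined with the observation that the eta normal form can only decrease degree by $1$ per eta step — indeed $\ds((x)((x)R|\msg x))=\ds((x)R)$ by the definition, so $\ds$ is already stable under the eta law, giving $\ds(P)=\ds(P'')$ and hence $\ds(P\sub\eta x)=\ds(P)=\ds(P'')\le\ds(P'')+1$. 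Finally the labelled transitions $\arr\capa$, $\arr{\msg n}$, $\arr{?n}$ are each handled by unfolding Definition~\ref{d:statt}: $\capa.P_1|P_2\arr\capa P_1|P_2$ drops a $+1$; $\msg n|P'\arr{\msg n}P'$ drops a $1$; and $(x)P_1|P_2\arr{?n}P_1\sub n x|P_2$ uses the substitution lemma plus eta-stability as above.

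\medskip

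\noindent\emph{Main obstacle.} The only genuinely delicate point is the abstraction/communication interaction: because $\ds$ of an abstraction is defined through its eta normal form rather than compositionally, I cannot blindly induct on subterm structure there. The clean way around it is to isolate two lemmas up front — that $\ds$ is invariant under renaming substitutions, and that $\ds$ is invariant under the eta law (immediately from the defining clause $\ds((x)P)=\ds(P')+1$ with $(x)P'$ the eta normal form, which is exactly designed to make this hold) — and then the \trans{Red-Com} and $\arr{?n}$ cases go through smoothly. Everything else is a routine check against the defining clauses of $\ds$ and the reduction rules.
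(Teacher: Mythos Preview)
Your overall strategy matches the paper's (which merely says item~(1) and the $\arr\mu$ part of item~(2) are immediate, and that $\rar$ is handled by induction on the derivation). The case analyses for $\equiv$, for \trans{Red-Open}, \trans{Red-Par}, \trans{Red-Amb}, \trans{Red-Str}, and for $\arr\capa$, $\arr{\msg n}$ are fine. Two points need correction.

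\emph{The \trans{Red-In}/\trans{Red-Out} computation.} You write $\ds(\amb n{\ina m.P_1\,|\,P_2}\,|\,\amb m{P_3})=\max(\ds(P_1),\ds(P_2),\ds(P_3))$, but the $\ina m$ prefix contributes a $+1$: the left-hand side is $\max(1+\ds(P_1),\ds(P_2),\ds(P_3))$. So you do not get equality in general (take $P_1=P_2=P_3=\nil$), only the required inequality $\ds(\mathrm{LHS})\ge\ds(\mathrm{RHS})$. This is a harmless arithmetic slip.

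\emph{The \trans{Red-Com} and $\arr{?n}$ cases.} Here there is a genuine gap. You argue that since $\ds$ is stable under the eta law one gets $\ds(P)=\ds(P'')$, where $(x)P''$ is the eta normal form of $(x)P$. Stability of $\ds$ under $\etarew$ holds for \emph{whole terms} (so $\ds((x)P)=\ds((x)P'')$), but it does not follow that the \emph{bodies} have equal degree, because a top-level eta step does not induce an $\etarew$ step between the bodies. Concretely, take $P\equiv (x)Q\,|\,\msg x$ with $(x)Q$ already eta-normal: then $P''=Q$, yet $\ds(P)=\max(\ds((x)Q),1)=\ds(Q)+1\neq\ds(Q)=\ds(P'')$.

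What you actually need is the weaker inequality $\ds(P)\le\ds(P'')+1=\ds((x)P)$, and this does hold. One clean argument: fix a rewriting sequence $(x)P=(x)P_0\etarew\cdots\etarew(x)P_k=(x)P''$. If no step is at the outermost abstraction, then each step rewrites inside the body, so $P_0\etarew^*P_k$ and term-level stability of $\ds$ gives $\ds(P)=\ds(P'')\le\ds(P'')+1$. Otherwise let $i$ be the first top-level step; the earlier steps are internal, so $\ds(P)=\ds(P_i)$, and $P_i\equiv (x)P_{i+1}\,|\,\msg x$, whence $\ds(P_i)=\max(\ds((x)P_{i+1}),1)=\ds((x)P_{i+1})=\ds((x)P)=\ds(P'')+1$. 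Either way $\ds(P\sub\eta x)=\ds(P)\le\ds((x)P)$, which is exactly what \trans{Red-Com} and $\arr{?n}$ require. With this fix your proof goes through.
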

\begin{proof}
  1 is immediate, as is the result on $\arr\mu$ in 2. For $P\,\rar\,
  Q$, we reason by induction on the height of the derivation of
  $P\,\rar\, Q$.
\end{proof}
\begin{cor}
For all $\capa$, if $P\Rcap Q$, then $\ds(P)\geq \ds(Q)$.
\end{cor}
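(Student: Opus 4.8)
The statement is an immediate corollary of Lemma~\ref{dsred}, so the plan is simply to unfold the definition of $\Rcap$ and iterate. Recall from Definition~\ref{d:statt} that, depending on the capability, $P \Rcap Q$ abbreviates one of three things: if $\capa = \openamb n$ it is just $P \Longrightarrow Q$; if $\capa = \inamb n$ it is the stuttering relation $P \StatOI n Q$; and if $\capa = \outamb n$ it is $P \StatIO n Q$. In every case $P \Rcap Q$ is obtained by composing finitely many steps, each of which is either a $\tau$-step $\rar$ or a labelled transition $\arr\mu$ for some $\mu$.

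First I would dispatch the case $\capa = \openamb n$: since $P \Longrightarrow Q$ means a finite sequence of $\rar$ steps, an easy induction on the length of that sequence using Lemma~\ref{dsred}(2) gives $\ds(P) \geq \ds(Q)$. I would state this as a small auxiliary observation (monotonicity of $\ds$ along $\Longrightarrow$), since it is reused in the other cases.

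For $\capa \in \{\inamb n, \outamb n\}$, I would unfold the stuttering relation $P \Stat{M_1}{M_2} Q$: there are $P_1, \ldots, P_i$ with $P = P_1$, $Q = P_i$, and $P_r \Ar{M_1} \Ar{M_2} P_{r+1}$ for each $r$. Each $\Ar{M}$ is by definition $\Longrightarrow \arr{M} \Longrightarrow$, i.e.\ a composition of $\Longrightarrow$-steps with one $\arr{M}$-step. By the auxiliary observation and by the part of Lemma~\ref{dsred}(2) concerning $\arr\mu$, each such step weakly decreases $\ds$; composing all of them along $r = 1, \ldots, i-1$ yields $\ds(P_1) \geq \ds(P_2) \geq \cdots \geq \ds(P_i)$, hence $\ds(P) \geq \ds(Q)$.

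There is essentially no obstacle here: the only mild point of care is to make sure every elementary step appearing in the expansion of $\Rcap$ (including those hidden inside $\Ar{\cdot}$) is covered by one of the clauses of Lemma~\ref{dsred}, so that no step can increase $\ds$. Since Lemma~\ref{dsred} already covers $\equiv$, $\rar$, and $\arr\mu$ — exactly the ingredients of $\Rcap$ — the corollary follows by transitivity of $\geq$.
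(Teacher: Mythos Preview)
Your proposal is correct and is exactly the argument the paper has in mind: the corollary is stated without proof immediately after Lemma~\ref{dsred}, and your unfolding of $\Rcap$ into finitely many $\rar$ and $\arr\mu$ steps, each handled by Lemma~\ref{dsred}(2), is the intended justification.
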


This result will be important for the justification of
Definition~\ref{d:bigind} below.
%  corollary will be used to support an important inductive definition
% below (see Definition~\ref{d:bigind}).
%%
% Another measure on processes, the \emph{depth degree,} will be necessary to
% define the formulas for replication.

\begin{lem}%[Sequentiality degree constraint lemma]
  \label{l:sdc}
  For any closed process $P\in\MA$, there exists a formula
  $\F_{\ds(P)}$ such that:
\begin{enumerate}[$\bullet$]
\item $P~\sat~\F_{\ds(P)}$, and
\item for any term $Q$, if $Q~\sat~\F_{\ds(P)}$, then
  $\ds(Q)\geq\ds(P)$.
\end{enumerate}
\end{lem}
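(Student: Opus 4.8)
The statement says: for any closed process $P$, there is a formula $\F_{\ds(P)}$ that is satisfied by $P$ and such that any model $Q$ has $\ds(Q)\geq\ds(P)$. So I need a formula expressing ``the sequentiality degree is at least $d$'', where $d=\ds(P)$. The natural approach is to build, by induction on $d$, a closed formula $\mathsf{sd}_{\geq d}$ with the property that $Q\sat \mathsf{sd}_{\geq d}$ iff $\ds(Q)\geq d$, and then set $\F_{\ds(P)} \defiDS \mathsf{sd}_{\geq \ds(P)}$; since $\ds(P)=d$ we immediately get $P\sat\F_{\ds(P)}$, and the second bullet is exactly the ``if'' direction. Actually one only needs the implication (model $\Rightarrow$ degree $\geq d$) plus $P\sat\F_{\ds(P)}$, so I don't even have to get the biconditional exactly right, but aiming for it makes the induction cleaner.

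**Key building blocks.** First I would recall from Definition~\ref{defds} what contributes to $\ds$: a single capability prefix or a message contributes $1$ plus (for prefixes) the degree of the body, a parallel composition takes the max, and ambients/replication are transparent; an abstraction contributes $1$ plus the degree of the body of its eta normal form. The crucial observations are: (i) $\ds(Q)\geq 1$ iff $Q\not\equiv\nil$ in the eta-normal-form sense — more precisely iff $Q$ has, modulo $\equiv$ and the eta law, a top-level parcel of interaction; (ii) $\ds(Q)\geq d+1$ iff $Q$ has some parallel component which, after possibly peeling ambients/replication, is of the form $\capa.Q'$ or $(x)Q'$ (in eta normal form) with $\ds(Q')\geq d$, or of the form $\msg n$ when $d=0$. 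So the formula I want is, roughly,
$$
\mathsf{sd}_{\geq d+1} ~\defiDS~ \mathsf{true}~|~\Big(\bigvee_{\capa}\Fmeta{\capa}.\fillhole{\mathsf{sd}_{\geq d}} ~\vee~ \bigvee_{?n}\Fmeta{?n}.\fillhole{\mathsf{sd}_{\geq d}} ~\vee~ (\text{for } d{=}0:\ \exists x.\,\msg x)~\vee~ \exists n.\,\amb n{\mathsf{sd}_{\geq d+1}}\Big),
$$
where $\mathsf{sd}_{\geq 0}\defiDS\mathsf{true}$, the $\capa$-disjunction ranges over $\mathsf{in}\,\eta,\mathsf{out}\,\eta,\mathsf{open}\,\eta$ with $\eta$ quantified (so really $\exists x$ in front of each), and I use the possibility-modality formula contexts $\Fmeta{\capa}.\fillhole{\cdot}$ and $\Fmeta{?n}.\fillhole{\cdot}$ of Lemma~\ref{l:capaform} together with the message formula $\msg x$ and the ambient connective. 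Since $\ds$ is insensitive to $!$ and to ambient nesting, the recursion under $\amb n{\cdot}$ with the same index $d+1$ is what lets a deeply nested parcel be detected; this recursion is well-founded because the nesting depth of the witness ambient decreases. One must be a little careful with the eta law, since $\ds$ of an abstraction is defined via its eta normal form: but the modalities $\Fmeta{\capa}$ and $\Fmeta{?n}$ of Lemma~\ref{l:capaform} are already stated modulo $\equiv$ and, via the stuttering/eta insensitivity built into the logic (see the Remark on the eta law), the degree reported by $\Fmeta{?n}.\fillhole{\A}$ matches the eta-normal-form body, so the bookkeeping goes through.

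**Proof steps, in order.** (1) Define $\mathsf{sd}_{\geq d}$ by the recursion above, by induction on $d$, noting it is a finite formula for each fixed $d$ (finitely many capability shapes, finitely many free names of $P$ to instantiate $\eta$ — actually one uses $\exists$, so it stays closed and finite). (2) Prove by induction on $d$: for all closed $Q$, $Q\sat\mathsf{sd}_{\geq d}$ implies $\ds(Q)\geq d$. The base $d=0$ is trivial. For $d+1$, unpack the satisfaction of the outer $\mathsf{true}\,|\,(\dots)$: either $Q$ has a parallel component satisfying one of the modalities, in which case by Lemma~\ref{l:capaform} that component is $\equiv \capa.Q'$ (or $(x)Q'$) with $Q'$ (its eta normal form body) satisfying $\mathsf{sd}_{\geq d}$, so by IH $\ds(Q')\geq d$ and hence $\ds(\capa.Q')\geq d+1$, whence by the max-clause $\ds(Q)\geq d+1$; or the $d=0$ message case, immediate; or $Q\equiv\amb n{Q''}$ with $Q''\sat\mathsf{sd}_{\geq d+1}$, and here I need a nested sub-induction on the ambient-nesting depth to conclude $\ds(Q'')\geq d+1$ and then $\ds(Q)=\ds(Q'')\geq d+1$. (3) Prove the converse, $\ds(Q)\geq d \Rightarrow Q\sat\mathsf{sd}_{\geq d}$, by induction on the structure of (the eta normal form of) $Q$ — this is what gives $P\sat\F_{\ds(P)}$, though strictly I only need it for $Q=P$. (4) Set $\F_{\ds(P)}\defiDS\mathsf{sd}_{\geq\ds(P)}$ and read off both bullets.

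**Main obstacle.** The delicate point is the interaction with the eta law in the abstraction case: $\ds((x)P)$ is defined through the eta normal form, so I must make sure the formula $\Fmeta{?n}.\fillhole{\mathsf{sd}_{\geq d}}$ ``sees'' the eta-normalised body rather than the raw syntax. This is exactly the situation the authors flag in the Remark on the eta law and stuttering — the logic is insensitive to eta-stuttering, so a process and its eta normal form satisfy the same formulas — and Lemma~\ref{l:capaform}'s characterisation of $\Fmeta{?n}$ (quantifying existentially over $x,P',P''$ with $(x)P'|\msg n\Rar P''$) already absorbs this, because the eta-redex $(x)((x)R|\msg x)$ reduces via $\Rar$ after composition with $\msg n$ to exactly the normalised continuation. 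So the fix is to phrase the inductive invariant in (2) and (3) in terms of eta normal forms from the outset, and to cite the eta-insensitivity of $\sat$ at each point where a raw abstraction must be replaced by its normal form. A secondary, purely bookkeeping, nuisance is the nested induction on ambient-nesting depth needed because the $\amb n{\cdot}$ clause does not decrease the index $d$; this is handled by a lexicographic measure $(d,\text{ambient depth})$, which is exactly the kind of well-founded order the paper is about to set up for $\bisMOD$ anyway.
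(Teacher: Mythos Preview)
Your proposal has a genuine gap: the formula $\mathsf{sd}_{\geq d+1}$ is not well-defined. The clause $\exists n.\,\amb{n}{\mathsf{sd}_{\geq d+1}}$ makes the definition refer to itself with the \emph{same} index $d+1$, so this is not a recursion on $d$; and AL has no fixed-point constructors, so no finite formula realises this circular specification. Your remark that ``the nesting depth of the witness ambient decreases'' and the lexicographic measure $(d,\text{ambient depth})$ pertain to the semantic argument of your step~(2), not to the construction of the formula in step~(1): they would justify an induction about a formula that already exists, but do not manufacture the formula. Concretely, for $P_k \defiDS n_1[\,n_2[\cdots n_k[\,\openamb m .\nil\,]\cdots]\,]$ one has $\ds(P_k)=1$ for every $k$, and any finite unfolding of your $\mathsf{sd}_{\geq 1}$ that peels at most $K$ ambient layers fails on $P_{K+1}$. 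Patching with an auxiliary depth bound does not obviously help either, since under the capability clause $\Fmeta{\capa}.\fillhole{\cdot}$ the continuation may itself have arbitrary ambient depth (e.g.\ $\capa.\,n_1[\cdots n_k[\capa'.\nil]\cdots]$), so the bound would have to be refreshed at every level of the recursion on $d$---which amounts to letting the formula depend on $P$, not just on $\ds(P)$.

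The paper's fix is precisely that: it drops uniformity in $d$ and builds $\F_{\ds(P)}$ from the syntax of the specific $P$, by a \emph{nested} induction---outer on $\ds(P)$, inner structural on $P$. One sets $\F_{\ds(\amb{n}{P'})} = \amb{n}{\F_{\ds(P')}}$ (inner induction), $\F_{\ds(P_1|P_2)}=\F_{\ds(P_i)}\,|\,\ltrue$ for the $i$ realising the maximum, $\F_{\ds(!P')} = \F_{\ds(P')}\,|\,\ltrue$, and $\F_{\ds(\capa.P')} = \Fmeta{\capa}.\fillhole{\F_{\ds(P')}}$ (outer induction, as $\ds(P')<\ds(P)$); the abstraction case is analogous with $\exists x.\,\Fmeta{?x}.\fillhole{\F_{\ds(P')}}$. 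The resulting formula mirrors $P$'s ambient nesting and is finite by construction. Your use of Lemma~\ref{l:capaform} and of Lemma~\ref{dsred} in verifying the second bullet is correct and is exactly what the paper does; what is missing is this $P$-dependent construction in place of a single formula per degree.
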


\proof
  We can assume that $P$ is eta normalised.  Let us first reason by
  induction on $\ds(P)$:
  \begin{enumerate}[$\bullet$]
  \item for $\ds(P)=0$, $\F_{\ds(P)}=\ltrue$ is sufficient.
  \item for $\ds(P)>0$, let us assume that there exist formulas
    $\F_{\ds(P')}$ for any $P'$ such that $\ds(P')<\ds(P)$.  We reason
    by induction on $P$.
    \begin{enumerate}[$-$]
    \item the case $P=\nil$ is impossible.
    \item for $P=P_1|P_2$, there is $i$ such that $\ds(P)=\ds(P_i)$.
      Then we may choose $\F_{\ds(P)}=\F_{\ds(P_i)}|\ltrue$. In the
      same way, let us set $\F_{\ds(\msg n)}=\F_{\msg n}$,
      $\F_{\ds(!P)}=\F_{\ds(P)}|\ltrue$ and
      $\F_{\ds(\amb{n}{P})}=\amb{n}{\F_{\ds(P)}}$.
    \item for $P=\capa.P'$, we use the general induction
      hypothesis to construct $\F_{\ds(P')}$. Let us then take
      $\F_{\ds(P)}=\Fmeta{$\capa$}.\F_{\ds(P')}$. Then
      $P~\sat~\F_{\ds(P)}$,    and for any $Q$ such that
      $Q~\sat~\F_{\ds(P)}$, we deduce (from Lemma~\ref{l:capaform})
      that there are  $Q',Q''$ such that $Q\equiv\capa.Q'$ and
      $Q'\Rcap Q''$ with $Q''~\sat~\F_{\ds(P')}$. Now by
      Lemma~\ref{dsred}, $\ds(Q)-1=\ds(Q')\geq\ds(Q'')$, and by
      induction hypothesis $\ds(Q'')\geq\ds(P')=\ds(P)-1$, so that
      finally $\ds(Q)\geq\ds(P)$.
%%\daniel{rewrite the case below by introducing a fresh $n_x$}
%% j'ai rien change, mais ca ma l'air bon tel quel. (Etienne)
    \item for $P=\abs x P'$, we use the general induction hypothesis
      to get $\F_{\ds(P')}$. Let us then take $\F_{\ds(P)}= \exists
      x.~\Fmeta{?x}.\F_{\ds(P')}$. Then $P~\sat~\F_{\ds(P)}$,
      and for any $Q$ such that $Q~\sat~\F_{\ds(P)}$, we deduce (from
      Lemma~\ref{l:capaform}) that there are $n,Q',Q''$ such that
      $Q\equiv\abs x Q'$ and $Q_1=\msg{n}|(x)Q'\Rar Q''$ with
      $Q''~\sat~\F_{\ds(P')}$. Now by Lemma~\ref{dsred},
      $\ds(Q_1)-1=\ds(Q)-1=\ds(Q'\sub n x)\geq\ds(Q'')$, and by
      induction hypothesis $\ds(Q'')\geq\ds(P')=\ds(P)-1$, so that
      finally $\ds(Q)\geq\ds(P)$.\qed
    \end{enumerate}
  \end{enumerate}

\noindent A similar result can be proved for the depth degree of a process:

\begin{lem}%[depth degree constraint lemma]
  \label{l:ddc}
  For any closed process $P\in\MA$, there exists a formula
  $\F_{\dd(P)}$ such that:
\begin{enumerate}[$\bullet$]
\item $P~\sat~\F_{\dd(P)}$, and
\item for any term $Q$, if $Q~\sat~\F_{\dd(P)}$, then
  $\dd(Q)\geq\dd(P)$.
\end{enumerate}
\end{lem}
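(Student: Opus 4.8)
The plan is to follow the pattern of the proof of Lemma~\ref{l:sdc}, observing that the argument becomes considerably simpler here: the depth degree ignores guards, since $\dd(\capa.P')=\dd(\abs x P')=\dd(\msg n)=0$, so no passage to eta normal forms is needed and the only construct contributing to $\dd$ is the ambient, which has a direct logical counterpart in the edge modality $\amb n\cdot$. I would prove the statement by induction on the structure (size) of $P$, building $\F_{\dd(P)}$ compositionally, rather than by the double induction used in Lemma~\ref{l:sdc}.

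Concretely, for the base cases $P=\nil$, $P=\msg n$, $P=\capa.P'$ and $P=\abs x P'$ we have $\dd(P)=0$, and $\F_{\dd(P)}=\ltrue$ works since $\dd(Q)\geq 0$ always holds. For $P=\amb n{P'}$ we have $\dd(P)=\dd(P')+1$; applying the induction hypothesis to $P'$ yields $\F_{\dd(P')}$, and I would set $\F_{\dd(P)}=\amb n{\F_{\dd(P')}}$: then $P\sat\F_{\dd(P)}$ by the satisfaction clause for the edge formula, and if $Q\sat\amb n{\F_{\dd(P')}}$ then $Q\equiv\amb n{Q'}$ with $Q'\sat\F_{\dd(P')}$, whence $\dd(Q')\geq\dd(P')$ by induction and $\dd(Q)=\dd(Q')+1\geq\dd(P)$. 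For $P=P_1|P_2$, pick $i\in\{1,2\}$ with $\dd(P)=\dd(P_i)$, apply the induction hypothesis to $P_i$ to get $\F_{\dd(P_i)}$, and set $\F_{\dd(P)}=\F_{\dd(P_i)}\,|\,\ltrue$: then $P\sat\F_{\dd(P)}$ via the decomposition $P\equiv P_i\,|\,P_{3-i}$, and if $Q\sat\F_{\dd(P_i)}\,|\,\ltrue$ then $Q\equiv Q_1|Q_2$ with $Q_1\sat\F_{\dd(P_i)}$, so $\dd(Q)\geq\dd(Q_1)\geq\dd(P_i)=\dd(P)$. The case $P=!P'$ is handled identically, taking $\F_{\dd(P)}=\F_{\dd(P')}\,|\,\ltrue$ and using the structural law $!P'\equiv P'\,|\,!P'$ together with $\dd(!P')=\dd(P')$.

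Two routine facts are used throughout, and I would record them at the outset: (i) $\dd$ is invariant under structural congruence, a direct check on the generating laws of $\equiv$ (in particular $\dd(!(R|S))=\dd(!R\,|\,!S)$ and $\dd(!!R)=\dd(!R)$); and (ii) $\dd(Q_1|Q_2)=\max(\dd(Q_1),\dd(Q_2))\geq\dd(Q_1)$, which is immediate from the definition. There is no substantive obstacle: unlike Lemma~\ref{l:sdc}, no analysis of reductions, capability transitions, or eta normal forms enters, because $\dd$ is insensitive to the dynamics. The only point that requires a little care is fact~(i), so that the steps of the form ``$Q\equiv\dots$'' coming from the satisfaction clauses for $\amb n\cdot$ and $\cdot\,|\,\cdot$ do not disturb the measured depth.
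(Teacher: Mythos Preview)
Your proposal is correct and follows essentially the same approach as the paper, which simply says ``We reason as in the proof of the previous lemma'' (Lemma~\ref{l:sdc}): the same compositional formulas---$\ltrue$ for the depth-$0$ cases, $\amb n{\F_{\dd(P')}}$ for ambients, and $\F_{\dd(P_i)}\,|\,\ltrue$ for parallel and replication---are used in both. Your observation that a single structural induction suffices (because $\dd(\capa.P')=\dd(\abs xP')=\dd(\msg n)=0$ makes the guard cases trivial, eliminating the need for the outer induction on the measure) is a correct and mild simplification of the double-induction schema inherited from Lemma~\ref{l:sdc}.
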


\begin{proof}
  We reason  as in the proof of the previous lemma.
\end{proof}

\begin{cor}\label{c:dsdd}
If $P\,\bisMOD\, Q$, then $\ds (P)=\ds (Q)$ and $\dd(P)=\dd(Q)$.
\end{cor}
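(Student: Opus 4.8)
The plan is to deduce Corollary~\ref{c:dsdd} directly from the two preceding lemmas (Lemma~\ref{l:sdc} and Lemma~\ref{l:ddc}) together with the soundness theorem for $\bisMOD$ (Theorem~\ref{t:soundnessLogBis}). First I would observe that since $\bisMOD\,\subseteq\,\eqL$ by soundness, a bisimilarity $P\,\bisMOD\,Q$ implies $P$ and $Q$ satisfy exactly the same closed formulas. Now apply Lemma~\ref{l:sdc} to $P$: there is a formula $\F_{\ds(P)}$ with $P\sat\F_{\ds(P)}$, and hence $Q\sat\F_{\ds(P)}$, from which the second clause of the lemma yields $\ds(Q)\geq\ds(P)$. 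By symmetry of $\bisMOD$, applying Lemma~\ref{l:sdc} to $Q$ and using $P\sat\F_{\ds(Q)}$ gives $\ds(P)\geq\ds(Q)$, so $\ds(P)=\ds(Q)$. The identical argument with Lemma~\ref{l:ddc} in place of Lemma~\ref{l:sdc} gives $\dd(P)=\dd(Q)$.

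So the proof is essentially a two-line symmetric argument: apply the relevant definability lemma in both directions and close by antisymmetry of $\geq$ on $\N$. There is no real obstacle here — all the work has been front-loaded into Lemmas~\ref{l:sdc} and~\ref{l:ddc}, which in turn rest on the expressiveness results of Subsection~\ref{subsec:expressiveness} (notably Lemma~\ref{l:capaform}) and on the monotonicity of $\ds$ and $\dd$ under reduction and structural congruence (Lemma~\ref{dsred} and its analogue for $\dd$). The only point worth being slightly careful about is that the definability lemmas are stated for closed processes $P\in\MA$, so one should note that the corollary is also implicitly about closed processes (consistent with the convention in the paper that $P,Q,\ldots$ range over closed processes unless stated otherwise); no image-finiteness or finiteness hypothesis is needed, which is the whole point of using these one-sided formulas rather than full characteristic formulas.

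If I were writing it out, the proof would read roughly: ``By Theorem~\ref{t:soundnessLogBis}, $P\,\bisMOD\,Q$ implies $P\,\eqL\,Q$. By Lemma~\ref{l:sdc}, $P\sat\F_{\ds(P)}$, hence $Q\sat\F_{\ds(P)}$, hence $\ds(Q)\geq\ds(P)$; symmetrically $\ds(P)\geq\ds(Q)$, so $\ds(P)=\ds(Q)$. The equality $\dd(P)=\dd(Q)$ follows in the same way from Lemma~\ref{l:ddc}.'' I do not expect any subtlety beyond making sure the quantifier directions in the two clauses of Lemmas~\ref{l:sdc} and~\ref{l:ddc} are used correctly (the satisfying process gets the larger degree, the one defining the formula the smaller), which is exactly what the symmetric double application exploits.
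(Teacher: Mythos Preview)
Your proposal is correct and takes essentially the same approach as the paper: the paper's proof simply says ``By Theorem~\ref{t:soundnessLogBis}, $P\,\bisMOD\,Q$ implies $P\,\eqL\, Q$, which gives the result,'' leaving implicit precisely the symmetric application of Lemmas~\ref{l:sdc} and~\ref{l:ddc} that you spelled out.
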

\begin{proof}
  By Theorem~\ref{t:soundnessLogBis}, $P\,\bisMOD\,Q$ implies
  $P\,\eqL\, Q$, which gives the result.
\end{proof}

The sequentiality degree can be used as a basis for inductive
reasoning on processes up to reductions of some subterms. This is
formalized by the following definition: % \emph{dynamic subterm order}:

\begin{defi}[Well-founded order]
  Given two processes $P$ and $Q$ , we write $P<Q$ (or $Q>P$)  if either
  $\ds(P)<\ds(Q)$ or $P$ is a strict subterm of $Q$. 
\end{defi}

\begin{lem}\label{l:wellfounded}~
\begin{enumerate}[$\bullet$]
\item $<$ is well-founded.
\item Suppose $P$ is of the form either $\capa.P'$ or $(x)P'$, 
and suppose moreover $P>Q$ and $Q\Rcap Q'$ for some $\capa$. Then $P>Q'$.
\end{enumerate}
\end{lem}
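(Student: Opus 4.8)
The plan is to treat the two items separately, both powered by Lemma~\ref{dsred} and its corollary (``$\Rcap$ does not increase $\ds$'').

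\emph{Well-foundedness of $<$.} First I would prove the auxiliary monotonicity fact: if $P$ is a subterm of $Q$, then $\ds(P)\le\ds(Q)$. This is a structural induction on $Q$; every clause of Definition~\ref{defds} is immediate except the abstraction clause $\ds(\abs x P)=\ds(P')+1$, where $\abs x P'$ is the eta normal form of $\abs x P$. That case I would settle by first noting that $\ds$ is invariant under $\etarew$ — a routine induction using the confluence of $\etarew$ from Lemma~\ref{l:eta_nf} together with the $\equiv$-invariance of $\ds$ — and then a short analysis of whether the eta-normalisation of $\abs x P$ performs a rewrite at the outermost binder or only inside $P$. Given this fact, $<$ embeds into the lexicographic product of $(\ds(\cdot)\in\N)$ with syntactic size: a step $P<Q$ either strictly lowers $\ds$, or replaces $Q$ by a strict subterm, which keeps $\ds$ non-increasing and strictly lowers the size. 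The lexicographic product of well-founded orders is well-founded. (Equivalently, an infinite descending $<$-chain could only contain finitely many $\ds$-decreasing steps, so a tail of it would consist of strict-subterm steps, which is impossible.)

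\emph{Second item.} Assume $P=\capa.P'$ or $P=(x)P'$, with $P>Q$ and $Q\Rcap Q'$. By the corollary to Lemma~\ref{dsred}, $\ds(Q')\le\ds(Q)$. If $P>Q$ holds because $\ds(Q)<\ds(P)$, then $\ds(Q')\le\ds(Q)<\ds(P)$, so $P>Q'$. Otherwise $Q$ is a strict subterm of $P$, hence a subterm of $P'$, so the monotonicity fact gives $\ds(Q)\le\ds(P')$. If $P=\capa.P'$ then $\ds(P)=1+\ds(P')>\ds(P')\ge\ds(Q)\ge\ds(Q')$, so $P>Q'$; this case is unconditional. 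If $P=(x)P'$ the inequality $\ds(P)\ge 1+\ds(P')$ can fail when $(x)P'$ is not eta-normal, and this is the point requiring care. Following the proof of Lemma~\ref{l:sdc}, one may assume $P$ is eta normalised, in which case $(x)P'$ eta-normal forces $\ds((x)P')=\ds(P')+1$ and the argument closes exactly as in the prefix case. (Without that assumption one argues directly: an eta-redundant block $\abs x V\mid\msg x$ surfacing inside a subterm of $(x)P'$ is consumed by a $\tau$-step that leaves $\ds$ unchanged, and the first $\tau$-step reaching a term with a top-level capability strictly lowers $\ds$; hence $\ds(Q')<\ds(P)$, unless $Q'\equiv Q$, in which case $P>Q'$ trivially.)

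The main obstacle is precisely this eta phenomenon: because $\ds$ is defined through eta normal forms it is not literally a structural measure, so the subterm-monotonicity fact in item~1 and the $(x)P'$ subcase of item~2 both hinge on the $\etarew$-invariance of $\ds$ and on a little bookkeeping about where eta-redexes sit relative to the outermost binder. Everything else is a direct combination of the depth/degree lemmas already proved.
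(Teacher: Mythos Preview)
Your approach is exactly the paper's: strict subterms do not increase $\ds$, so $<$ embeds into the lexicographic product of $\ds$ with syntactic size (item~1), and item~2 reduces to Lemma~\ref{dsred} once one has a strict $\ds$-drop under the outermost prefix. You are in fact more careful than the paper's two-line proof in isolating the eta complication for $P=(x)P'$.

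That said, neither of your two resolutions for that subcase is quite tight. The suggestion ``assume $P$ eta-normalised, as in Lemma~\ref{l:sdc}'' does not apply directly: the lemma is stated for arbitrary $P$, and the strict-subterm relation is not preserved by eta-normalisation, so replacing $P$ by its normal form may lose the hypothesis $P>Q$. Your parenthetical ``direct argument'' is also off: $\Rcap$ for $\inamb n$ and $\outamb n$ is stuttering, not a sequence of $\tau$-steps, and a capability transition $\arr{\capa}$ need not strictly lower $\ds$ (e.g.\ $\inamb n.\nil \mid \inamb n.\nil \arr{\inamb n} \inamb n.\nil$ keeps $\ds=1$), so ``the first step reaching a top-level capability strictly lowers $\ds$'' is not a valid principle.

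The clean fix is this. If $Q$ is a \emph{closed} strict subterm of $P=(x)P'$ with $\ds(Q)=\ds(P)$, then unravelling the eta chain $P'\equiv (x)S\mid\msg x$, $S\equiv (x)T\mid\msg x$, \dots\ shows that the only closed subterms realising $\ds(Q)=\ds(P)$ are the nested abstractions $(x)S,(x)T,\dots$ themselves (the parallel components $(x)S\mid\msg x$ etc.\ all have $x$ free). An abstraction is inert under $\Rcap$ (no top-level capability, no internal reduction), so $Q'\equiv Q$ and $P>Q'$ via the subterm clause. In all other cases $\ds(Q)<\ds(P)$ and Lemma~\ref{dsred} finishes as you wrote.
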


\proof
\begin{enumerate}[$\bullet$]
\item Well-foundedness:  if $P$ is a strict subterm of
  $Q$, then $\ds(P)\leq \ds(Q)$.
\item $P>Q'$: follows from Lemma~\ref{dsred}.\qed
\end{enumerate}\medskip

In order to give an inductive characterisation of $\bisMOD$, we
establish the following results about $\bisMOD$. These are
\emph{inversion properties}, in the sense that they allow one to
deduce, from $P\,\bisMOD\,Q$, with $P$ having a given shape,
consequences about the shape of $Q$.

%We now use this order to give an inductive characterisation of $\bisMOD$.

% \subsection{Inversion results for $\bisMOD$}

% \DS{below: the term ``inversion'' is not standard. better to explain
%   it first a bit}
% We now prove some inversion properties about $\bisMOD$.  
% \DSb We shall use  
% these results
% to derive an
% inductive characterisation of $\bisMOD$, in
% Section~\ref{s:inductive}. \DSe 

% \DS{i would move the lemma below before Def. 4.8}
\begin{lem}[Inversion results for $\bisMOD$]\label{l:inversion}
  Let $P, P_1, P_2, Q$ be processes of \MA.  Then
\begin{enumerate}[\em(1)]
\item $\nil~\bisMOD Q$ iff $Q\equiv \nil$.
\item $\amb{n}{P}~\bisMOD~Q$ iff there exists $Q'$ such that
  $Q~\equiv~\amb{n}{Q'}$ and $P~\bisMOD~Q'$.
\item $P_1|P_2~\bisMOD~Q$ iff there exist $Q_1,Q_2$ such that
  $Q~\equiv~Q_1|Q_2$ and $P_i~\bisMOD~Q_i$ for $i=1,2$.
\item $!P~\bisMOD~Q$ iff there exist $r\geq 1, s \geq r, Q_i$ 
  ($1\leq i\leq s$) such that 
  $Q \equiv \Pi_{1\leq i \leq r} {!Q_i}\,|\, \Pi_{r+1\leq i \leq s}
  {Q_i}$, and $P~\bisMOD~Q_i$ for $i=1\dots s$.
\item $\capa.P~\bisMOD~Q$ iff there exists $Q'$ such that
  $Q~\equiv~\capa.Q'$ with $P~\Rcap\bisMOD~Q'$ and
  $Q'~\Rcap\bisMOD~P$.
\item $\msg n~\bisMOD~Q$ iff $Q\equiv \msg n$
\item $\abs x P~\bisMOD~Q$ iff there exists $Q',m$. such that 
$m\not\in\fn{P}\cup\fn{Q}$, $Q\equiv\abs x 
Q'$  $Q|\msg m\Rar \bisMOD P\sub m x$ and $\abs x P |\msg m \Rar \bisMOD Q'\sub m x$.
\end{enumerate}
\end{lem}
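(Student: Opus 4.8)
The plan is to prove Lemma~\ref{l:inversion} by establishing each clause separately, with most of the work coming from a systematic use of the already-proved equivalence $\bisMOD\, =\, \bisE$ (Corollary~\ref{c:MOD-E}) and the inversion-type lemmas we already have available, namely Lemma~\ref{l:unCOMP} and Lemma~\ref{l:nil-bis}. The overall pattern is the same in every clause: the ``if'' direction is routine (one exhibits a suitable intensional bisimulation or appeals to congruence, Corollary~\ref{c:bisEcong}), so the content is in the ``only if'' direction, where from $P\,\bisMOD\,Q$ and a structural assumption on $P$ we must read off the structure of $Q$. For this, one first uses the appropriate intensional-bisimulation clause of Definition~\ref{d:bisMOD} to get a matching transition from $Q$, and then uses Lemma~\ref{l:unCOMP} (or Lemma~\ref{l:nil-bis}, or the $\bisE$-characterisation) to argue that $Q$ is \emph{single} of the corresponding syntactic shape, rather than having spurious parallel components congruent to $\nil$.

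Concretely: clause~(1) is exactly Lemma~\ref{l:nil-bis} combined with $\bisMOD\,=\,\bisE$. Clauses~(2), (6) and~(3) follow immediately by applying, respectively, clauses~\reff{cMOD:amb}, \reff{cMOD:msg}/Lemma~\ref{l:unCOMP} and~\reff{cMOD:par} of Definition~\ref{d:bisMOD}, together with the symmetric clauses applied to $Q\,\bisMOD\,P$ to get the ``both directions'' form of the decomposition; here Lemma~\ref{l:MODeq} (equivalence of $\bisMOD$) is what lets us swap $P$ and $Q$. For clause~(5), the capability case, I would work through the $\bisE$-formulation: from $\capa.P\,\bisE\,Q$ and Lemma~\ref{l:unCOMP} we get $Q\equiv\capa.Q'$, and clause~\reff{cE:in} of Definition~\ref{d:bisEwithEqual} gives $Q'\Rcap Q''$ with $P\,\bisE\,Q'$; but to obtain the stated form $P\,\Rcap\bisMOD\,Q'$ and $Q'\,\Rcap\bisMOD\,P$ one has to push the stuttering in, i.e.\ combine the $\bisMOD$-clauses \reff{cMOD:in}/\reff{cMOD:out}/\reff{cMOD:open} (equivalently the compact $\rcap$-clause noted after Definition~\ref{d:bisMOD}) with the fact (Lemma~\ref{l:tau-action}/Corollary~\ref{c:tau-action}) that $\bisE$ is preserved under reductions, so that a $\Rcap$-move of one side is matched up to $\bisMOD$ by the other. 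Clause~(7), abstraction, is analogous but uses clause~\reff{c:absMOD} (the asynchronous input clause): from $(x)P\,\bisMOD\,Q$ and Lemma~\ref{l:unCOMP} we get $Q\equiv(x)Q'$, and feeding a fresh message $\msg m$ with $m\notin\fn P\cup\fn Q$ and unfolding clause~\reff{c:absMOD} in both directions gives the two displayed conditions; freshness of $m$ is used so that the substitution behaves correctly and one can later quantify it away as in Lemma~\ref{l:rename_bis}.

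The clause I expect to be the main obstacle is clause~(4), the one for replication, since replication has \emph{no} matching clause in the definition of $\bisMOD$: it is ``invisible'' to the bisimulation game, and $\bisMOD$ only sees it through the structural-congruence laws $!R\equiv !R\,|\,R$, $!(R_1|R_2)\equiv !R_1\,|\,!R_2$, $!!R\equiv !R$. The strategy here is to decompose $!P$ using clause~\reff{cMOD:par}: writing $!P\equiv P\,|\,!P$ we get $Q\equiv Q_0\,|\,Q_1$ with $P\,\bisMOD\,Q_0$ and $!P\,\bisMOD\,Q_1$, and iterating, together with the fact that $\bisMOD$ forces every finite parallel decomposition of $!P$ to be matched, forces $Q$ to contain arbitrarily many parallel components each $\bisMOD$-equivalent to $P$; since by Corollary~\ref{c:dsdd} all these have the same (finite) sequentiality and depth degree as $P$, and since $Q$ is a fixed finite syntactic term, the only way $Q$ can supply unboundedly many such components is to have at least one replicated component $!Q_i$, and a normal-form argument (using the structural laws for $!$, or the normal forms of~\cite{Dal01} underlying Theorem~\ref{t:dz}) then puts $Q$ in the stated shape $\Pi_{i\le r}!Q_i\,|\,\Pi_{r<i\le s}Q_i$ with each $Q_i\,\bisMOD\,P$. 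The delicate points will be making the ``unboundedly many components implies a replicated one'' step precise and checking that after normalising we really can take each $Q_i$ single and $\bisMOD$-equivalent to $P$ rather than merely that the whole product is equivalent to $!P$; I would handle this by an induction on the size of $Q$ (or on its depth degree), peeling off one $P$-equivalent component at a time via clause~\reff{cMOD:par} and Lemma~\ref{l:unCOMP} until what remains is itself of the form $!(\cdots)$ up to $\equiv$.
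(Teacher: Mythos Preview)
Your proposal is correct and follows essentially the same strategy as the paper: read off each clause from the $\bisE$-characterisation (via Corollary~\ref{c:MOD-E}), use Lemma~\ref{l:unCOMP}/Lemma~\ref{l:nil-bis} to pin down the shape of $Q$, and handle the ``if'' directions by congruence or by exhibiting a bisimulation.

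Two small points where the paper is tidier. For clause~(5) you do not need to ``push the stuttering in'' via the $\bisMOD$-clauses and Corollary~\ref{c:tau-action}: clause~\reff{cE:in} of Definition~\ref{d:bisEwithEqual} (with the evident correction that condition~(c) should relate $P'$ to $Q''$, as its use in Lemma~\ref{l:bisEomega} makes clear) already \emph{is} the statement $Q'\Rcap\bisE P$, and symmetry of $\bisE$ gives the other half. For clause~(4), invoking Corollary~\ref{c:dsdd} is a red herring: what actually forces a replicated component in $Q$ is just that $Q$ is a fixed syntactic term with a bounded normal-form decomposition into single components, while $!P$ supplies arbitrarily many. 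The paper organises this more cleanly by first treating the case where $P$ is single (so the ``arbitrarily many copies'' argument applies directly) and then reducing general $P$ to that case via $!(P_1|P_2)\equiv !P_1|!P_2$, rather than by an induction on the size of $Q$.
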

\begin{proof}
  We first leave out the fourth case.
  
  For the other cases, the left to right implications follow by the
  fact that, in each case, the corresponding clauses in the
  definitions of $\bisE$ and $\bisMOD$ are almost the same.
%\DS{equivalent definitions: ??}

  For the right to left implication,
  cases 1 and 6 hold by reflexivity of $\bisMOD$, and cases 2 and 3
  follow from congruence of $\bisMOD$ (Corollary~\ref{c:bisEcong}).
  Case 5 is similar to the corresponding condition in $\bisE$ (note
  that all other conditions are trivially fulfilled).
  
  We explain case 7 in more details. We take $P,Q,Q',x,m$ satisfying
  the required properties, and further introduce processes $P_1$ and
  $Q_1$ by imposing $(x)P|\msg m\Rar P_1\bisMOD Q'\sub m x$ and
  $Q|\msg m\Rar Q_1\bisMOD P\sub m x$. To show that $P\bisMOD Q$, we
  need to show that these processes satisfy the condition for
  receptions in the definition of $\bisE$
  (Definition~\ref{d:bisEwithEqual}), all other requirements being
  satisfied.
%   since all other conditions are
%   then trivially satisfied, but for both roles of $P$ and $Q$ since
%   $\bisE$ is symetric.
  Consider an arbitrary name $m'$, we want to show that there exist
  $P'',Q''$ such that $Q|\msg {m'}\Rar Q''$, $P\sub {m'} x\bisMOD
  Q''$, $(x)P|\msg {m'} \Rar P''$ and $P'' \bisMOD Q'\sub {m'} x$. By
  hypothesis, this holds for $m'=m$, by taking $P''=P_1$ and
  $Q''=Q_1$. Otherwise, we set $P''=P_1\sub {m'} m$ and
  $Q''=Q_1\sub {m'} m$.  Then $(x)P|\msg {m'}=((x)P|\msg m)\sub {m'} m
  \Rar P_1\sub {m'} m=P''$ since $\Rar$ is closed under name
  replacement, and $Q|\msg {m'} \Rar Q''$ for the same reason.
  Moreover, since $\bisMOD$ is also closed under name replacement
  (Lemma~\ref{l:subst_bis}), we deduce from the hypothesis $P_1\bisMOD
  Q'\sub m x$ that $P''\bisMOD Q'\sub {m'} x$, and similarly from
  $Q_1\bisMOD P\sub m x$ that $Q''\bisMOD P\sub {m'} x$. As a
  consequence, the condition is established for all $m'$.  Note that
  the hypothesis about $m$ being fresh for $P,Q$ is crucial in the
  proof above.
%   , but with a weaker version of the
%   condition in the definition of $\bisMOD$. Testing
  
  \smallskip
  
  We are thus left with case 4. The right to left implication holds
  because, if we define $\mathcal R$ as $\bisE$ extended with all
  pairs of the form $(P, \Pi_{1\leq i \leq r} {!Q_i}\,|\, \Pi_{r+1\leq
    i \leq s} {Q_i})$, with the above conditions, then $\mathcal R$
  satisfies the clauses of $\bisE$, hence $\mathcal R$ is $\bisE$.
  We now consider the left to right implication. First, note that by
  applying clauses~\ref{cMOD:par} and~\ref{c:nil} of
  Def.~\ref{d:bisMOD}, it can be shown that for any two bisimilar
  processes $P,Q$, if $P\equiv P'|P'|..P'|P''$, where $P$ contains at
  least $n$ copies of some single process $P'$, then necessarily
  $Q\equiv Q_1|..|Q_n|Q'$ with $Q_i\bisE P'$ for all $i$. This entails
  the left to right implication in the case where $P$ is a single
  process. When $P$ is not single, we write $P\equiv \Pi_{1\leq i \leq
    r} {!P_i}\,|\, \Pi_{r+1\leq i \leq s} {P_i}$, where $P_1,..,P_s$
  are single processes. Thanks to the congruence rule
  $!(R_1|R_2)=!R_1|!R_2$, $!P\equiv !P_1|..|!P_s$. Assume $!P\bisE Q$.
  Applying the inversion rule for parallel composition, we have
  $Q\equiv Q_1|Q_s$ with, for every $i$, $!P_i\bisE Q_i$, that is,
  using our reasoning on single processes, $Q_i\equiv\Pi_{1\leq j \leq
    r_i} {!Q_{i,j}}\,|\, \Pi_{r_i+1\leq j \leq s_i} {Q_{i,j}}$.  Using
  the law $!R\equiv !R|!R$, it is possible to choose all $r_i$ equal,
  and similarly applying $!R\equiv!R|R$ we can choose all $s_i$ equal.
  It is then a matter of rearranging the $Q_{i,j}$ in $Q_1'|..|Q_s'$
  to write $Q$ in the expected form.  
\end{proof}

% \forDS{The last part of the proof above is rather allusive: we have
%   somewhere a detailed proof, with lots of indices around, we can
%   bring it back. but we felt that it was unreadable, and that the ---
%   rather informal --- argument we provide here is convincing enough}

% \DS{i would leave it as it is, but add the full proof as an appendix
%   -- since you already have it. As it is the proof is not really sufficient}
% \daniel{We discussed this with Etienne: the proof we had has probably
%   to be reread a lot of times to be sure that it is ok, but it is
%   \underline{\emph{very}} gory, with lots of indices that make it
%   very difficult to follow; what we propose is to expand the proof
%   sketch above, and avoid the horrible proof in appendix; this is TODO
%   (and when we'll do it, we have to keep in mind that there are
%   occurrences of terminology `single processes' left above). ok?}
% \DSj{Say that we give a sketch of the proof, and explain that the
%   missing details are long and tedious but conceptually simple}
% \etij{done. --- DH: davide, you can remove  this whole discussion if
%   you're ok}

We can now define the inductively defined relation that characterises
$\bisMOD$.

\begin{defi}
\label{d:bigind}
Let $\bigind$ be the binary relation $P\bigind Q$ defined by induction
on $P$ for the order $<$ as follows:
\begin{enumerate}[(1)]
\item $\nil~\bigind~Q$ if $Q\equiv \nil$.
\item $\amb{n}{P}~\bigind~Q$ if there exists $Q'$ such that
  $Q~\equiv~\amb{n}{Q'}$ and $P~\bigind~Q'$.
\item $P_1|P_2~\bigind~Q$ if there exist $Q_1,Q_2$ such that
  $Q~\equiv~Q_1|Q_2$ and $P_i~\bigind~Q_i$ for $i=1,2$.
\item $!P~\bigind~Q$ if there exist $r\geq 1, s \geq r, Q_i$ 
  ($1\leq i\leq s$) such that 
  $Q \equiv \Pi_{1\leq i \leq r} {!Q_i}\,|\, \Pi_{r+1\leq i \leq s}
  {Q_i}$, and $P~\bigind~Q_i$ for $i=1\dots s$.
\item $\capa.P~\bigind~Q$ if there exists $Q'$ such that
  $Q~\equiv~\capa.Q'$ with $P~\Rcap\bigind~Q'$ and
  $Q'~\Rcap\bigind~P$.
\item $\msg n~\bigind~Q$ if $Q\equiv \msg n$
\item $\abs x P~\bigind~Q$ if there exists $Q',m$. such that
  $m\not\in\fn{P}\cup\fn{Q}$, $Q\equiv\abs x Q'$, $Q|\msg m\Rar
  \bigind P\sub m x$ and $\abs x P |\msg m \Rar \bigind Q'\sub m x$.
\end{enumerate}
\end{defi}

% Using Lemmas~\ref{l:inversion} and \ref{l:wellfounded}, we can
% establish the following result:

\begin{thm}
  Relation $\bigind$ is well defined. Moreover, relations $\bigind$
  and $\bisMOD$ coincide.
\end{thm}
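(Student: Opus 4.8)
The plan is to establish both assertions --- that $\bigind$ is well defined and that $\bigind$ coincides with $\bisMOD$ --- by well-founded induction, in each case on the first argument for (a refinement of) the order $<$ of Definition~\ref{d:bigind}, which is well-founded by the first part of Lemma~\ref{l:wellfounded}. For well-definedness one checks, clause by clause in Definition~\ref{d:bigind}, that every occurrence of $\bigind$ on the right-hand side of a clause is applied to an argument pair whose first component is strictly smaller than the left-hand one. For clauses~(1)--(4) and~(6) this is immediate: the new first argument is a strict subterm. Clauses~(5) and~(7) are the delicate ones, and here one uses that a capability (resp.\ input) prefix strictly increases the sequentiality degree while $\Rcap$-reductions and communications do not increase it (Lemma~\ref{dsred} and its corollary), together with the second part of Lemma~\ref{l:wellfounded}, to see that the processes appearing in the recursive occurrences --- a subterm of $\capa.P$ reduced along $\Rcap$, and a reduct of $Q$'s body --- have $\ds$ strictly below $\ds(\capa.P)$, hence are $<$-smaller.

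To prove $\bisMOD\,\subseteq\,\bigind$ I would argue by well-founded induction on the first argument $P$. From $P\bisMOD Q$, case-split on the shape of $P$ and apply the left-to-right direction of the corresponding case of the inversion Lemma~\ref{l:inversion}; this yields the shape of $Q$ together with the witnessing sub-pairs. Each sub-pair has first component $<$-smaller than $P$ --- a strict subterm for clauses~(2)--(4) and~(6), and for clauses~(5) and~(7) a reduct of $Q$'s body whose $\ds$ is controlled, via Corollary~\ref{c:dsdd} ($\bisMOD$ preserves $\ds$), by $\ds(P)$ --- so the induction hypothesis applies. Feeding the resulting $\bigind$-facts into the matching clause of Definition~\ref{d:bigind} gives $P\bigind Q$.

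For the converse $\bigind\,\subseteq\,\bisMOD$, I would first prove the auxiliary fact that $\bigind$ preserves the sequentiality degree; this is needed because the right-to-left direction of Lemma~\ref{l:inversion} is less informative than $\bisMOD$'s clauses, so one must separately re-establish that the induction hypothesis applies to the sub-pairs. This fact is proved by well-founded induction on $P$, the only subtlety being in clauses~(5) and~(7): one applies the induction hypothesis first to the sub-pair $(P',Q')$ with $P\Rcap P'$ (legitimate since $\ds(P')\le\ds(P)<\ds(\capa.P)$), deducing $\ds(Q')=\ds(P')\le\ds(P)$, and only then --- having thereby seen that the reduct $Q''$ of $Q'$ satisfies $\ds(Q'')<\ds(\capa.P)$ --- applies it to $(Q'',P)$. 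With this in hand, $\bigind\,\subseteq\,\bisMOD$ follows by well-founded induction on $P$: from $P\bigind Q$ one unfolds the relevant clause of Definition~\ref{d:bigind}, applies the induction hypothesis to the now provably $<$-smaller sub-pairs, and concludes $P\bisMOD Q$ by the right-to-left direction of Lemma~\ref{l:inversion}.

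The main obstacle is exactly this well-foundedness bookkeeping in the clauses for capabilities and input prefixes, where one must justify at each turn that the processes entering the recursive occurrences are genuinely $<$-smaller. Two points need care. First, in clause~(7) the reduction $(x)P\mid\msg m\Rar\cdot$ has to be taken to actually perform the communication, so that $\ds$ drops; otherwise the degenerate case would make the clause refer back to $(x)P\mid\msg m$, which is not $<$-smaller than $(x)P$. Second, since $\ds$ of an abstraction is defined via its $\eta$-normal form, the estimates $\ds((x)P)>\ds(P)$ (and similarly for $Q$) are valid only once $P,Q$ are $\eta$-normalised; the clean way to proceed is therefore to assume $P,Q$ in $\eta$-normal form, or to work modulo $\equiv_E$, using confluence and termination of $\etarew$ (Lemma~\ref{l:eta_nf}) and the compatibility of $\ds$, substitution and parallel composition with a message. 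Once this setup is fixed, each individual case is routine, mirroring the clauses of Definitions~\ref{d:bigind}, \ref{d:bisMOD} and~\ref{d:bisEwithEqual}.
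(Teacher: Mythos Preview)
Your overall plan matches the paper's: well-definedness via Lemma~\ref{l:wellfounded}, and both inclusions via the two directions of the inversion Lemma~\ref{l:inversion}. Your discussion of the $\eta$-normalisation subtlety in clause~(7) and of the bootstrap on $\ds$ goes well beyond the paper's two-line proof.

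There is, however, a gap in your well-definedness argument. In clause~(5) the second recursive occurrence is $Q''\bigind P$ with $Q'\Rcap Q''$, and to invoke the second part of Lemma~\ref{l:wellfounded} you would need $\capa.P>Q'$; but $Q'$ arises from the \emph{arbitrary} second argument via $Q\equiv\capa.Q'$, so nothing bounds $\ds(Q')$ a priori. The same issue appears in clause~(7) for the call whose first component is a reduct of $Q\mid\msg m$. Your bootstrap --- first treat the recursive call whose first component descends from $P$, deduce a bound on $\ds(Q')$, and only then treat the other call --- is exactly the right idea, and you deploy it correctly for $\bigind\subseteq\bisMOD$; but for bare well-definedness it is circular, since it appeals to $\bigind$ preserving $\ds$ before $\bigind$ exists. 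Two clean repairs: (i)~read Definition~\ref{d:bigind} as specifying the least relation closed under the clauses, so existence is automatic and $\bigind\subseteq\bisMOD$ can go by rule induction; or (ii)~carry out the $<$-recursion simultaneously with a proof that the relation under construction preserves $\ds$, so that after handling the first recursive call one may legitimately bound $\ds(Q')$ and hence justify the second. The paper's one-line ``justified using Lemma~\ref{l:wellfounded}'' passes over the same point.
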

\begin{proof}
  The definition of $\bigind$ is justified using
  Lemma~\ref{l:wellfounded}.
    The inclusion $\bisMOD\subseteq\bigind$ is established using the
  results of Lemma~\ref{l:inversion}, which correspond precisely to
  the defining clauses of $\bigind$. The converse inclusion follows from Lemma~\ref{l:inversion} too.
\end{proof}

\subsection{Frozen subterms}

We now introduce the notion of frozen subterms of a process.  The
frozen subterms of a process correspond to occurrences that do not
participate in immediate interactions but that may play a role in
future reductions.

In the reminder, we  use $N$ to range over sets of names. Unless
otherwise stated, we  always implicitly suppose that such a set is
finite.

% The set of frozen subterms will play an important role for the proof
% of completeness in Section~\ref{s:completeness}.

\begin{defi}[Frozen subterms]  \label{d:froz}  ~
 Let $N$ be a set of names; the set $\frozen{P}$ is
  defined by induction on $P$ as follows:
\begin{enumerate}[$\bullet$]
  \item $\frozen{\nil}=\frozen{\msg n}= \emptyset$;
  \item $\frozen{P_1|P_2}=\frozen{P_1}\cup\frozen{P_2}$;
  \item $\frozen{!P}=\frozen{P}$;
  \item $\frozen{\capa.P}=\{P\}\cup\frozen{P}$;
  \item $\frozen{(x)P}=\bigcup_{n\in N}\{ P\sub n x\}\cup \frozen{P\sub n x}$.
\end{enumerate} 
  \end{defi}

If $P, P'$ are two structurally congruent terms, then, modulo
$\equiv$, $\frozen{P}=\frozen{P'}$.  Hence this set (in its quotiented
version with respect to $\equiv$) is uniquely determined by the
structural congruence class of $P$.

\begin{lem}[Finiteness of $\frozen{P}$]\label{l:finite}
  For any $P\in\MA$, if $N$ is finite, then the set obtained by taking
  the quotient of $\frozen{P}$ w.r.t. $\equiv$ is finite.
\end{lem}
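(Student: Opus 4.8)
The plan is to prove Lemma~\ref{l:finite} by structural induction on $P$, following exactly the inductive definition of $\frozen{P}$ in Definition~\ref{d:froz}. The statement to establish is that, for finite $N$, the quotient $\frozen{P}/{\equiv}$ is finite; I will phrase the induction hypothesis in this quotiented form so that the congruence remark just before the lemma (that $\frozen{\cdot}$ is well-defined on $\equiv$-classes) can be invoked freely.

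The base cases are immediate: $\frozen{\nil}=\frozen{\msg n}=\emptyset$, which is finite. For the inductive step I would go through the four remaining clauses. For $P_1|P_2$, we have $\frozen{P_1|P_2}=\frozen{P_1}\cup\frozen{P_2}$, and a union of two finite sets is finite, so the induction hypotheses for $P_1$ and $P_2$ close this case (a union of finite sets remains finite after quotienting, since the quotient of a union is the union of the quotients). For $!P$, $\frozen{!P}=\frozen{P}$, so the claim is immediate from the induction hypothesis. For $\capa.P$, $\frozen{\capa.P}=\{P\}\cup\frozen{P}$: adding a single element to a finite set keeps it finite, so again the induction hypothesis for $P$ suffices.

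The only case requiring a little care is the abstraction: $\frozen{(x)P}=\bigcup_{n\in N}\bigl(\{P\sub n x\}\cup\frozen{P\sub n x}\bigr)$. Here I would argue as follows. Since $N$ is finite, the index set of the union is finite. For each $n\in N$, the set $\{P\sub n x\}$ is a singleton, hence finite. For $\frozen{P\sub n x}$, I apply the induction hypothesis to the process $P\sub n x$, which is a structurally smaller process than $(x)P$ (substitution does not increase structural size), so by induction $\frozen{P\sub n x}/{\equiv}$ is finite. Thus $\frozen{(x)P}$ is a finite union of finite sets, hence finite, and so is its quotient. The main — and really the only — obstacle to watch is making the induction well-founded: one should order processes by a measure under which $P\sub n x$ is strictly below $(x)P$ (e.g.\ syntactic size, which is invariant under renaming), rather than by a naive subterm relation that might be disturbed by the substitution; once that is set up, the argument is entirely routine. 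I would conclude by noting explicitly that finiteness is preserved under all the set operations used (finite unions, adjoining singletons) and under quotienting by $\equiv$.

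\begin{proof}
  We prove, by induction on the structure of $P$ (more precisely, on
  the syntactic size of $P$, which is invariant under name renaming),
  that if $N$ is finite then $\frozen{P}/{\equiv}$ is finite.

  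The cases $P=\nil$ and $P=\msg n$ are immediate, since
  $\frozen{P}=\emptyset$.

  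If $P=P_1|P_2$, then $\frozen{P}=\frozen{P_1}\cup\frozen{P_2}$; by
  induction each $\frozen{P_i}/{\equiv}$ is finite, and the quotient of
  a union is the union of the quotients, hence $\frozen{P}/{\equiv}$ is
  finite. If $P=!P'$, then $\frozen{P}=\frozen{P'}$ and the claim is the
  induction hypothesis for $P'$. If $P=\capa.P'$, then
  $\frozen{P}=\{P'\}\cup\frozen{P'}$, which adds at most one
  $\equiv$-class to the finite set $\frozen{P'}/{\equiv}$.

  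Finally, if $P=(x)P'$, then
  $\frozen{P}=\bigcup_{n\in N}\bigl(\{P'\sub n x\}\cup\frozen{P'\sub n
    x}\bigr)$. Since $N$ is finite, this is a finite union. For each
  $n\in N$, the set $\{P'\sub n x\}$ is a singleton, and $P'\sub n x$
  has the same syntactic size as $P'$, hence is strictly smaller than
  $(x)P'$, so by induction $\frozen{P'\sub n x}/{\equiv}$ is finite.
  Thus $\frozen{P}/{\equiv}$ is a finite union of finite sets, hence
  finite.
\end{proof}
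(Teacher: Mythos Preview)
Your proof is correct and follows exactly the approach the paper takes: a straightforward structural induction on $P$, tracking the clauses of Definition~\ref{d:froz}. Your added care in the abstraction case---measuring by syntactic size so that $P'\sub n x$ is strictly smaller than $(x)P'$---is a sensible way to make the induction well-founded, and the paper's one-line ``By induction on $P$'' implicitly relies on the same point.
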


\begin{proof}
By induction on $P$.
\end{proof}

Not only is $\frozen{P}$ finite, but, as expressed by the following
result, this set is preserved by reduction, in the following sense:

\begin{lem}%[Subject reduction]
  \label{l:predictive}
  Let $P,Q$ be two processes such that $P\,\rar\, Q$ or $P\rcap Q$ for
  some $\capa$, and assume $\fn{P}\subseteq N$.  Then the quotient of
  $\frozen{Q}$ w.r.t. $\equiv$ is included in the quotient of
  $\frozen{P}$ w.r.t. $\equiv$.
\end{lem}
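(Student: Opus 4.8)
The plan is to proceed by induction on the derivation of $P\,\rar\,Q$ (the case $P\rcap Q$ being handled analogously, since a capability transition is either a $\tau$-reduction possibly preceded/followed by stuttering, or — in the $\open$ case — just $\Rar$; in all situations it decomposes into ordinary reductions plus structural congruence, and one extra observation that pulling a subterm $\capa.P'$ to top level and discarding $\capa$ can only shrink the frozen set, since $\frozen{\capa.P'} = \{P'\}\cup\frozen{P'} \supseteq \frozen{P'}$). For the reduction relation itself I would first record two easy facts: (i) $\frozen{\cdot}$ is invariant under $\equiv$ modulo $\equiv$ (already noted after Definition~\ref{d:froz}), which lets us dispatch rule \trans{Red-Str}; and (ii) $\frozen{\cdot}$ is monotone under the process constructors in the sense that if $\frozen{P'}\subseteq\frozen{P}$ (modulo $\equiv$) then $\frozen{C\fillhole{P'}}\subseteq\frozen{C\fillhole{P}}$ for the contexts $[\,]\,|\,R$ and $n[\,]$ — this is immediate from the clauses for parallel composition and replication and the fact that $\frozen{n[P]}=\frozen{P}$, and it disposes of rules \trans{Red-Par} and \trans{Red-Amb} by the induction hypothesis. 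One must be slightly careful that the side condition $\fn{P}\subseteq N$ is inherited by the subterm to which the inductive hypothesis is applied, which holds because $\fn{P'}\subseteq\fn{P}\subseteq N$ for any subterm $P'$ of $P$.

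The real content is in the four axiom schemes. For \trans{Red-Com}, $\msg\eta\,|\,(x)P_1 \rar P_1\sub\eta x$: here $\frozen{\msg\eta\,|\,(x)P_1} = \emptyset \cup \bigl(\bigcup_{n\in N}\{P_1\sub n x\}\cup\frozen{P_1\sub n x}\bigr)$, and since $\eta$ denotes a name in $\fn{P}\subseteq N$ (or, if $\eta$ is a variable, $P$ would not be closed — we are in the closed setting), the set $\{P_1\sub\eta x\}\cup\frozen{P_1\sub\eta x}$ is one of the summands, so $\frozen{P_1\sub\eta x}\subseteq\frozen{P}$ modulo $\equiv$. This is exactly where finiteness of $N$ and the hypothesis $\fn{P}\subseteq N$ are used in an essential way. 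For \trans{Red-Open}, $\open n.P_1\,|\,n[P_2]\rar P_1\,|\,P_2$: we have $\frozen{\open n.P_1\,|\,n[P_2]} = \{P_1\}\cup\frozen{P_1}\cup\frozen{P_2}$, which contains $\frozen{P_1}\cup\frozen{P_2} = \frozen{P_1\,|\,P_2}$. For \trans{Red-In}, $n[\ina m.P_1\,|\,P_2]\,|\,m[Q]\rar m[n[P_1\,|\,P_2]\,|\,Q]$: both sides have frozen set $\{P_1\}\cup\frozen{P_1}\cup\frozen{P_2}\cup\frozen{Q}$ — in fact they are equal modulo $\equiv$, because $\frozen{\cdot}$ ignores ambient boundaries and capability prefixes contribute their body plus its frozen set on both sides. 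Likewise \trans{Red-Out} gives equality. So in every axiom case we get inclusion (indeed equality except for \trans{Red-Com} and \trans{Red-Open}, which strictly decrease).

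I expect the main obstacle to be purely bookkeeping: getting the ``modulo $\equiv$'' quantifier right throughout (the statement is about the $\equiv$-quotients of these sets, so every membership claim is really ``there is a structurally congruent representative''), and making sure the \trans{Red-Com} clause correctly uses that $\eta\in N$ — which is where one genuinely needs $\fn{P}\subseteq N$ rather than merely $N$ finite. One subtlety worth flagging explicitly in the write-up: in clause five of Definition~\ref{d:froz} the frozen set of an abstraction $(x)P'$ is defined by instantiating $x$ with \emph{all} names in $N$, so when a communication fires it instantiates with the \emph{particular} name $\eta$, and we need $\eta$ to have been among those already considered; this is the crux and deserves a sentence of its own. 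Everything else is a routine structural induction, so I would present the axiom cases in full and compress the congruence-rule cases to ``immediate by the induction hypothesis and monotonicity of $\frozen{\cdot}$ under $[\,]\,|\,R$ and $n[\,]$''.
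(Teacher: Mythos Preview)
Your approach is essentially the paper's: induction on the derivation of the reduction, with the communication axiom being the one case that genuinely needs $\fn{P}\subseteq N$. The paper's own proof is a two-sentence sketch making exactly this point, so your detailed case analysis is a faithful expansion of it.

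Two minor inaccuracies are worth correcting before you write it up. First, your description of $\rcap$ is confused with $\Rcap$: the relation $P\arr{\capa}P'$ in Definition~\ref{d:statt} is purely syntactic ($P\equiv\capa.P_1\,|\,P_2$ and $P'=P_1\,|\,P_2$), with no $\tau$-steps or stuttering involved. Your ``one extra observation'' at the end --- that discarding the outermost $\capa$ drops $\{P_1\}$ from the frozen set --- is already the complete argument for this case; delete the sentence before it. Second, for \trans{Red-In} (and symmetrically \trans{Red-Out}) the frozen sets are \emph{not} equal: the left-hand side contains $P_1$ itself (contributed by the prefix $\ina m.P_1$), which disappears on the right. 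You still get the required inclusion $\frozen{Q}\subseteq\frozen{P}$, so the lemma is unaffected, but the claim of equality should be weakened.
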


\begin{proof}
  We recall that relation $\rcap$ is defined on the syntax of
  processes (see Definition~\ref{d:statt}), and the result follows by
  definition of $\frozen{P}, \frozen{Q}$.
  
  For $\rar$, we reason by induction on the derivation of $P\,\rar\,
  Q$. The cases corresponding to movement transitions follow from
  $\rcap$.  So the only way a reduction could alter the set of frozen
  terms is through name substitutions generated by communications, and
  this is handled by the condition $\fn{P}\subseteq N$.
\end{proof}

\subsection{Local characteristic formulas and completeness}
\label{subsec:local:charac}

%% \daniel{ci-dessus, j'ai merge deux subsections en une: ca te va?}
%% Etienne:ok.
The purpose of this subsection is to derive local characteristic
formulas, defined as follows:

\begin{defi}[Local characteristic formula]\label{d:restricted} Let
  $\mathcal{E}$ be a set of terms, $P$ a term and $\F$ a formula. We
  say that $\F$ is a characteristic formula for $P$ on $\mathcal{E}$
  (or, alternatively, a $\mathcal{E}$-characteristic formula for $P$)
  if
  \begin{enumerate}[$\bullet$]
  \item $P\,\sat\, \F$, and
  \item for any $Q\in\mathcal{E}$, if $Q\,\sat\, \F$ then $Q\,\bisMOD\,
    P$.
  \end{enumerate}
\end{defi}

Note that the converse of the second condition always holds, due to
soundness of $\bisMOD$ (Theorem~\ref{t:soundnessLogBis}): if
$Q\in\mathcal{E}$ and $Q\,\bisMOD\, P$, then $Q\,\sat\, \F$.

With this definition, completeness of $\bisMOD$ boils down to the
existence, for any processes $P, Q$, of a characteristic formula of
$P$ on the set  $\{ Q\}$.  
Although we do not define directly such a formula, this idea guides
the construction of the completeness proof. More precisely, we 
reason inductively on the sequentiality degree of processes, and
manipulate two sets of terms, given a process $P$:
\begin{enumerate}[$\bullet$]
\item $\imgset{P}~\defiDS~\{P',~\exists~\capa.\,~P~\Rcap~P'\}$, that
  collects the possible evolutions of $P$,
\item and $\frset{P}~\defiDS~\{P',\frozen{P'}\subseteq\frozen{P}\}$,
  that intuitively is the set of processes whose possible evolutions
  can be captured using the evolutions of $P$.
\end{enumerate}

We want to establish the existence, for all $P, Q$, of a local
characteristic formula for $P$ on $\imgset Q$ and $\frset Q$.  We
first prove the following result:

\begin{lem}\label{l:fr2img}
  If a formula $\mathcal{F}$ characterises $P$ on $\frset Q$ and
  $N\supseteq\fn Q$, then $\mathcal{F}$ characterises $P$ on $\imgset
  Q$.
\end{lem}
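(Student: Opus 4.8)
The plan is to show that $\imgset Q \subseteq \frset Q$, so that any formula characterising $P$ on the larger set $\frset Q$ automatically characterises it on the smaller set $\imgset Q$. Recall $\frset Q = \{P' : \frozen{P'}\subseteq\frozen{Q}\}$ (with the sets of frozen subterms taken modulo $\equiv$, and computed with respect to the fixed name set $N$), and $\imgset Q = \{P' : \exists\,\capa.\, Q\Rcap P'\}$. So I need: whenever $Q\Rcap P'$ for some $\capa$, the quotient of $\frozen{P'}$ w.r.t. $\equiv$ is contained in that of $\frozen{Q}$.

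First I would unfold the definition of $\Rcap$ from Definition~\ref{d:statt}: each of $\Rcapar{\inamb n}$, $\Rcapar{\outamb n}$, $\Rcapar{\openamb n}$ is built from $\Rar$ (the reflexive–transitive closure of $\longrightarrow$) and at most two single $\rcap$ steps (recall $\rcap$ denotes a syntactic capability transition $P\equiv\capa.P_1|P_2 \to P_1|P_2$). So $Q\Rcap P'$ decomposes as a finite alternation of $\longrightarrow$-steps and $\rcap$-steps leading from $Q$ to $P'$. The hypothesis $N\supseteq\fn Q$, together with the fact that $\longrightarrow$ and $\rcap$ do not create new free names, guarantees that $\fn{R}\subseteq N$ at every intermediate term $R$ along this path; this is exactly the side condition needed to apply Lemma~\ref{l:predictive}. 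Applying that lemma at each step and using transitivity of set inclusion, I obtain that the quotient of $\frozen{P'}$ is included in the quotient of $\frozen{Q}$, i.e. $P'\in\frset Q$. Hence $\imgset Q\subseteq\frset Q$.

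The conclusion is then immediate: suppose $\mathcal F$ characterises $P$ on $\frset Q$. By Definition~\ref{d:restricted} this means $P\sat\mathcal F$, and for every $R\in\frset Q$ with $R\sat\mathcal F$ we have $R\bisMOD P$. Since $\imgset Q\subseteq\frset Q$, the same two clauses hold with $\frset Q$ replaced by $\imgset Q$: $P\sat\mathcal F$ still holds, and any $R\in\imgset Q$ with $R\sat\mathcal F$ is in particular in $\frset Q$, so $R\bisMOD P$. Thus $\mathcal F$ characterises $P$ on $\imgset Q$.

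I do not anticipate a serious obstacle here — the lemma is essentially a bookkeeping consequence of Lemma~\ref{l:predictive}. The one point that needs a little care is the propagation of the free-name bound along the $\Rar$-segments inside a $\Rcap$-transition, since Lemma~\ref{l:predictive} is stated for a single $\longrightarrow$ or $\rcap$ step with the hypothesis $\fn P\subseteq N$; one must observe that reductions and capability transitions never enlarge the free-name set, so the bound $\fn{(\cdot)}\subseteq N$ is preserved all the way and the lemma can be iterated. Everything else is just unfolding the definitions of $\Rcap$, $\frset Q$, $\imgset Q$, and local characteristic formula.
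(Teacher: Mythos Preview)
Your proposal is correct and follows exactly the paper's approach: the paper's proof is the single line ``Follows from Lemma~\ref{l:predictive}'', and you have simply unpacked what that means, namely that $\imgset Q\subseteq\frset Q$ once one iterates Lemma~\ref{l:predictive} along the chain of $\longrightarrow$- and $\arr{\capa}$-steps composing a $\Rcap$-transition, using $N\supseteq\fn Q$ to maintain the side condition. One small slip: you write that a $\Rcap$-transition involves ``at most two single $\rcap$ steps'', but the stuttering relation $\Stat{M_1}{M_2}$ may iterate $\Ar{M_1}\Ar{M_2}$ arbitrarily many times, so there can be any finite number of capability steps; your subsequent phrasing (``a finite alternation'') is the correct one and the argument goes through unchanged.
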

\begin{proof}
  Follows from Lemma~\ref{l:predictive}.
\end{proof}

The following lemma describes the construction of a local
characteristic formula for guarded terms (of the form $\capa.P$ or
$(x)P$) on $\frset Q$, provided we can compute, given several
(smaller) processes $R$, local characteristic formulas on $\imgset R$:

\begin{lem}\label{l:img2fr}
  Consider two processes $P$ and $Q$, and a set $N$ of names such that
  $\fn P\cup\fn Q\,\subseteq\,N$. Assume moreover that, for all
  $Q'\in\frozen Q$, we can construct a formula $\F_{P,Q'}$
  characterising $P$ on $\imgset {Q'}$ and a formula $\F_{Q',P}$
  characterising $Q'$ on $\imgset P$.  We then have:
\begin{enumerate}[$\bullet$]
\item for all $\capa$ there exists a formula
characterising $\capa.P$  on $\frset Q$,
\item for all $n$ such that $P$ is not of the form $\msg n|(y)P'$ with
  $n\not\in\fn{P'}$, and for all $x$ with $x\not\in\fv{P}$,
  there exists a formula characterising $(x)\big(P\sub x n\big)$ on
  $\frset Q$.
\end{enumerate}
\end{lem}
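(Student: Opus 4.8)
The plan is to build the two claimed formulas explicitly from the data supplied by the hypothesis, exploiting the possibility/necessity modalities of Lemmas~\ref{l:capaform} and~\ref{l:sdc}, together with the inversion clauses for $\bisMOD$ in Lemma~\ref{l:inversion}. The guiding principle is that a process $R\in\frset Q$ that satisfies a well-chosen formula must, by the inversion results, have exactly the right outermost shape ($\capa.R'$, resp.\ $(x)R'$), and then its body $R'$ must reproduce — up to $\Rcap$ and $\bisMOD$ — every evolution of $P$, and conversely $P$ must reproduce every evolution of $R'$. Since $R\in\frset Q$, every process reachable from $R'$ via $\Rcap$ has its frozen subterms inside $\frozen Q$, hence lies in some $\imgset{Q'}$ with $Q'\in\frozen Q$; this is what lets us replace the a priori infinite image set of $R'$ by the finite family $\{\F_{P,Q'},\F_{Q',P}\}_{Q'\in\frozen Q}$ given by hypothesis.

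Concretely, for the first item I would set
\[
\F_{\capa.P}\;\defiDS\;\Ftame{\capa}.\Big(\bigwedge_{Q'\in\frozen Q}\,(\F_{Q',P}\rtr\ltrue)\wedge\dots\Big)\;\wedge\;\Fmeta{\capa}.\dots
\]
— more precisely, a conjunction of a necessity part $\Ftame{\capa}.\A$ expressing ``after any $\Rcap$-move the body satisfies $\A$'', where $\A$ is the disjunction over $Q'\in\frozen Q$ of the characteristic formulas $\F_{P,Q'}$ provided by hypothesis (so that any reachable body is $\bisMOD$-equivalent to one of the finitely many options), together with a possibility part $\Fmeta{\capa}.\F_{\ds(P)}$ (Lemma~\ref{l:sdc}) ensuring the body actually \emph{has} a reachable state matching $P$ in the other direction, and a symmetric ``for each evolution $P'$ of $P$, the body can reach something equated with $P'$'' conjunct, encoded again through $\Fmeta{\capa}$ and the $\F_{Q',P}$. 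One then checks (i) $\capa.P\sat\F_{\capa.P}$ directly from the definitions of satisfaction and from $P\sat\F_{P,Q'}$, $Q'\sat\F_{Q',P}$; and (ii) if $R\in\frset Q$ and $R\sat\F_{\capa.P}$, then by Lemma~\ref{l:capaform}'s necessity/possibility clauses $R\equiv\capa.R'$ with $R'\Rcap\bisMOD P$ and $P\Rcap\bisMOD R'$ — precisely clause~5 of Lemma~\ref{l:inversion} — whence $R\bisMOD\capa.P$. The point where $R\in\frset Q$ is used is in passing from ``$R'$ reaches \emph{some} $R''$'' to ``$R''\in\imgset{Q'}$ for some $Q'\in\frozen Q$'', which is exactly Lemma~\ref{l:predictive} (and its corollary on $\Rcap$): $\frozen{R''}\subseteq\frozen{R'}\subseteq\frozen{R}\subseteq\frozen Q$.

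The second item, for input, is analogous but goes through the abstraction clause~7 of Lemma~\ref{l:inversion} and the $\Fmeta{?n}$, $\Ftame{?n}$ modalities of Lemma~\ref{l:capaform}: one picks a fresh name $m\notin N$, and builds $\F_{(x)(P\sub x n)}$ as a conjunction of $\Ftame{?m}.\A'$ and an existential-over-the-reachable-states possibility part, where $\A'$ is again the finite disjunction of the $\F_{P,Q'}$'s; the side condition ``$P$ not of the form $\msg n|(y)P'$ with $n\notin\fn{P'}$'' is exactly the hypothesis that rules out the eta-redex, ensuring that the head eta normal form of $(x)(P\sub x n)$ really is an abstraction and that clause~7 applies cleanly. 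Freshness of $m$ is used as in the proof of case~7 of Lemma~\ref{l:inversion}, to propagate the matching to all names by the closure of $\Rar$ and $\bisMOD$ under name replacement (Lemma~\ref{l:subst_bis}).

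The main obstacle I anticipate is bookkeeping the two directions of the simulation simultaneously inside a single finitary formula: clause~5 (and~7) of $\bisMOD$ requires both ``every $\Rcap$-evolution of $R'$ is matched by $P$'' \emph{and} ``every $\Rcap$-evolution of $P$ is matched by $R'$'', and the second direction forces one to quantify over the (finitely many, up to $\bisMOD$, by Corollary~\ref{c:dsdd} and image-finiteness considerations) evolutions $P'$ of $P$ and assert, via a $\Fmeta{\capa}$ modality, that $R'$ can reach a state satisfying the characteristic formula of $P'$ restricted to $\imgset{R'}$ — but that characteristic formula is not directly available, only $\F_{P,Q'}$ and $\F_{Q',P}$ for $Q'\in\frozen Q$ are. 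The resolution is to observe that any evolution $P'$ of $P$ satisfies $\frozen{P'}\subseteq\frozen P$, and since $\fn P\subseteq N$ one can, after the argument is set up inductively on $\ds$, invoke the induction hypothesis to obtain $\F_{P',Q'}$-type formulas for the strictly smaller $P'$; this is why the lemma is phrased with $\imgset{Q'}$-characteristic formulas as inputs and why Lemma~\ref{l:fr2img} is available to convert $\frset{}$-characteristic formulas back to $\imgset{}$-characteristic ones. Making these quantifier alternations land inside genuine AL formulas — rather than infinitary conjunctions — is the delicate part, and is precisely where the finiteness of $\frozen Q$ (Lemma~\ref{l:finite}) earns its keep.
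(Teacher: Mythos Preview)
Your general plan --- exploit the finiteness of $\frozen Q$ together with the possibility/necessity modalities and the inversion clauses --- is the right one, but the concrete formula you sketch does not work, and the paper's proof rests on two ideas you have either inverted or missed.

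First, the modality/connective pairing is swapped. The paper uses the \emph{possibility} modality carrying a \emph{conjunction}: $\Fmeta{\capa}\fillhole{\bigwedge_{Q'\in\frozen Q}\F_{P,Q'}}$. For $R\equiv\capa.R'$ with $R\in\frset Q$, this produces a single $R''$ with $R'\Rcap R''$ satisfying \emph{all} the $\F_{P,Q'}$; since $R'\in\frozen Q$, in particular $R''\sat\F_{P,R'}$, and because $R''\in\imgset{R'}$ this forces $R''\bisMOD P$. Your necessity-of-disjunction version fails in both directions: it does not yield $R'\Rcap\bisMOD P$ (the disjunct that fires need not be $\F_{P,R'}$, so nothing follows), and $\capa.P$ itself need not satisfy it, since an arbitrary $P''$ with $P\Rcap P''$ has no reason to satisfy any $\F_{P,Q'}$.

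Second, and more fundamentally, your treatment of the converse direction ($P\Rcap\bisMOD R'$) attempts to enumerate the evolutions $P'$ of $P$. That set may be infinite --- this is precisely the non-image-finite situation Section~\ref{s:inductive} is built to handle --- so no finitary conjunction over it is available. The paper sidesteps the enumeration entirely: set $\mathcal E=\{Q'\in\frozen Q:\forall P'.\,P\Rcap P'\Rightarrow P'\not\bisMOD Q'\}$, which is finite as a subset of $\frozen Q$, and take the necessity conjunct $\Ftame{\capa}\fillhole{\bigwedge_{Q'\in\mathcal E}\neg\F_{Q',P}}$. Since $R'\in\frozen Q$ and $R'$ always satisfies its own $\F_{R',P}$, the necessity clause (applied at $R'$ itself, via reflexivity of $\Rcap$) forces $R'\notin\mathcal E$, i.e.\ some evolution of $P$ is bisimilar to $R'$. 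Your proposed resolution via induction on $\ds$ conflates this lemma with Lemma~\ref{l:keylemma}, and even granting the inductive data it would still leave you with a conjunction indexed by $\{P':P\Rcap P'\}$. The $\mathcal E$-complement trick is the missing idea. For the input case the same two corrections apply, together with an auxiliary $\mathsf{NonEta}$ formula that the paper introduces to detect when the reduction $R|\msg n\Rar\cdot$ has actually passed the head-eta prefix --- this is where your side condition on $P$ is used, but it is not by itself sufficient.
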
\medskip

\proof~
\begin{enumerate}[$\bullet$]
\item Let $\capa$ be a given capability. Set $\mathcal{E}=\{
  Q'\in\frozen Q: \forall P' ~\mbox{s.t.}~ P\Rcap P',\, P'\not\bisMOD
  Q'\}$; $\mathcal{E}\subseteq \frozen Q$, so by Lemma~\ref{l:finite},
  $\mathcal{E}$ is finite, and we can define the formula:
$$
F~\eqdef~\Fmeta\capa\fillhole{\bigwedge_{Q'\in\frozen Q}\F_{P,Q'}}~\land~
~\Ftame\capa\fillhole{\bigwedge_{Q'\in\mathcal{E}} \non \F_{Q',P}}\,. 
$$
We prove first that $\capa.P\,\sat\, F$; by hypothesis, $P\,\sat\,
\F_{P,Q'}$ for all $Q'\in\frozen Q$, so that we have 
$\capa.P\,\sat\,\Fmeta\capa\fillhole{\bigwedge_{Q'\in\frozen
    Q}\F_{P,Q'}}$. Let $P'$ 
be such that $P\Rcap P'$, and consider any $Q'\in \mathcal{E}$.  Then
by hypothesis $P'\,\sat\, \F_{Q',P}$ would imply $P'\bisMOD Q'$, and hence
$Q'\not\in \mathcal{E}$, which is contradictory. So
$P'\,\sat\,\bigwedge_{Q'\in\mathcal{E}} \non \F_{Q',P}$, and finally $P\,\sat\,
F$. 

Conversely, consider $R\in \frset Q$ such that $R\,\sat\, F$. We show that
$R\bisMOD P$. First, there is $Q'$ such that $R\equiv\capa.Q'$ and
$Q'\Rcap\,\sat\, \F_{P,Q''}$ for all $Q''\in\frozen Q$. Since $R\in\frset
Q$ and $Q'\in\frozen R$, $Q'\in\frozen Q$, so $Q'\Rcap\,\sat\, \F_{P,Q'}$,
and by hypothesis, $Q'\Rar\bisMOD P$, which gives the first part of
the condition to have $\capa.P\bigind R$ (Definition~\ref{d:bigind}).
Furthermore, since $R$ satisfies the `necessity' part of the formula
$F$, $Q'\,\sat\, \bigwedge_{Q''\in\mathcal{E}} \non \F_{Q'',P}$, that is
$Q'\not\in\mathcal{E}$. Thus, there is $P'$ with $P\Rcap P'$ and
$P'\bisMOD Q'$, which gives the second part of the condition.

\item Let $n,x$ be chosen as in the statement of the lemma. We set
  $P_0=(x)\big(P\sub x n\big)$.  Similarly as before, we define
  $\mathcal{E}=\{ Q'\in\frozen Q: \forall P'~\mbox{s.t.}~ P\Rar P',\,
  P'\not\bisMOD Q'\}$; again $\mathcal{E}\subseteq \frozen Q$, so
  $\mathcal{E}$ is finite, and we may define the formula:
$$
\begin{array}{rl}
F~\eqdef& ~ \non\copyright n
\\ &\land~\Fmeta{?n}\big(\mathsf{NonEta}\land
\bigwedge_{Q'\in\frozen Q}\F_{P,Q'}\big)\qquad~
\\ &\land~ \Ftame{?n}\big(\mathsf{NonEta}~ \rar~\bigwedge_{Q'\in\mathcal{E}} \non \F_{Q',P}\big)
\\
\multicolumn{2}{l}{\mathrm{with}} %}& ~~~~~~~~~\mbox{with}
\\
~\qquad\mathsf{NonEta} \eqdef & \non\big(\msg n |(\non\copyright n \land 
\Fmeta{?n}\ltrue)\big)
\end{array}
$$
Intuitively, the role of formula $\mathsf{NonEta}$ is to detect
when the reducts of a process satisfying $F$ stop being eta-equivalent
to the initial state.

Let us prove that $P_0\,\sat\, F$: $n\not\in\fn{P_0}$ by construction,
$P_0|\msg n\Rar P$, $P\,\sat\,\mathsf{NonEta}$ and $P\,\sat\, \bigwedge
\F_{P,Q'}$ by hypothesis, so $P_0$ satisfies the second conjunct in
$F$.  Take $P'$ such that $P_0|\msg n\Rar P'$ and
$P'\,\sat\,\mathsf{NonEta}$; we prove that $P'\,\not \sat\, \F_{Q',P}$ for all
$Q'\in\mathcal{E}$.  Since $P'\,\sat\,\mathsf{NonEta}$, $P_0|\msg
n\not\equiv P'$, so $P\Rar P'$.  As a consequence, $P'\,\sat\, \F_{Q',P}$
iff $P'\bisMOD Q'$. Then by definition of $\mathcal{E}$,
$P'\,\sat\,\bigwedge_{Q'\in\mathcal{E}} \non \F_{Q',P}$. As this holds for
all $P'$, we have that $P_0\,\sat\, F$.

Let us now prove that if $R\in\frset Q$ and $R\,\sat\, F$, then $P\bisMOD
R$. Consider such a process $R$.  Then $n\not\in\fn{R}$, and there
exists $Q',R'$ such that $R\equiv (x)Q'$ and $R|\msg n\Rar R'$ with
$R'\,\sat\,\mathsf{NonEta}\land\bigwedge_{Q'\in\frozen Q}\F_{P,Q'}$. Let
$(x)Q''$ be the head eta normal form of $(x)Q'$.  By definition,
$Q''\sub n x$ belongs to $\frozen Q$, and any reduction $(x)Q'|\msg
n\Rar T$ where $T$ is not eta equivalent to $(x)Q'|\msg n$ goes
through the state $Q''\sub n x$ (i.e., that reduction can be written
$(x)Q'|\msg n\Rar Q''\sub n x\Rar T$).  Due to the definition of
$\mathsf{NonEta}$, we actually have that $R'\not \equiv_E (x)Q'|\msg
n$, so $Q''\sub n x\Rar R'$. Since $R'\,\sat\, \F_{P,Q''\sub n x}$,
$R'\bisMOD P$ and the first part of the condition for input in
Definition~\ref{d:bigind} is satisfied.  Moreover, $R|\msg n\Rar
Q''\sub n x$ and $Q''\sub n x\,\sat\,\mathsf{NonEta}$, so $Q''\sub n
x\,\sat\,\bigwedge_{Q'\in\mathcal{E}} \non \F_{Q',P}$. Since $Q''\sub
n x\in\frozen Q$, we finally have $Q''\sub n x\not\in\mathcal{E}$, that
is there is $P'$ such that $P\Rar P'$ and $P'\bisMOD Q''\sub n x$.
This proves the second condition for $P_0\bigind (x)Q''$, and since
$(x)Q''\equiv_E R$, we finally have $P_0\bisMOD R$.\qed
\end{enumerate}

We now prove that given $P$, we can deduce a local characteristic
formula for $P$ from local characteristic formulas for its guarded
subterms.

\begin{lem}\label{l:fr2fr}
  Consider two processes $P$ and $Q$, and a set of names $N$, and
  suppose that, for each subterm of $P$ of the form $\capa.P'$ or
  $(x)P'$, we can construct a $\frset Q$-characteristic formula. Then
  there exists a $\frset Q$-characteristic formula for $P$.
\end{lem}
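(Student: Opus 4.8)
The plan is to argue by structural induction on $P$, the parameter $N$ of $\frozen{\cdot}$ being fixed throughout and taken to contain $\fn P\cup\fn Q$. Using structural congruence together with the laws $!!R\equiv !R$, $!(R_1|R_2)\equiv !R_1|!R_2$, $!\nil\equiv\nil$ and $R|\nil\equiv R$, write $P\equiv S_1|\dots|S_k\mid !T_1|\dots|!T_m$ where each $S_i$ and each $T_j$ is a \emph{prime} term, i.e.\ of one of the forms $\capa.R$, $(x)R$, $\msg n$ or $\amb n R$ (the case $P\equiv\nil$, namely $k=m=0$, being handled by the formula $\zero$). Up to $\equiv$, every prime of the form $\capa.R$ or $(x)R$ occurring here is a subterm of $P$, so by the hypothesis of the lemma it has a $\frset Q$-characteristic formula. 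The formula we exhibit for $P$ is the logical parallel composition of formulas $\F_{S_1},\dots,\F_{S_k},G_{T_1},\dots,G_{T_m}$, where $\F_S$ is a $\frset Q$-characteristic formula for the prime $S$ and $G_T$ is a $\frset Q$-characteristic formula for $!T$.

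That this parallel composition is a $\frset Q$-characteristic formula for $P$ is then routine: $P$ satisfies it by construction, and if $R\in\frset Q$ satisfies it then, by the satisfaction clause for $|$, $R\equiv R_1|\dots|R_k|R_1'|\dots|R_m'$ with $R_i\sat\F_{S_i}$ and $R_j'\sat G_{T_j}$; since $\frozen{\cdot}$ is additive over $|$ (Definition~\ref{d:froz}), each of these pieces again lies in $\frset Q$, hence $R_i\bisMOD S_i$ and $R_j'\bisMOD !T_j$, and congruence of $\bisMOD$ (Corollary~\ref{c:bisEcong}) assembles these into $R\bisMOD P$.

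It remains to produce the $\F_S$ and the $G_T$. For $\F_S$: if $S$ is $\capa.R$ or $(x)R$ it is supplied by the hypothesis; if $S=\msg n$ take the formula $\msg n$ of Lemma~\ref{l:capaform}, which is satisfied exactly by $\msg n$; if $S=\amb n R$, apply the induction hypothesis to the strict subterm $R$ (whose guarded subterms are among those of $P$, so the hypothesis is inherited) to get a $\frset Q$-characteristic $\F_R$, and set $\F_S=\amb n{\F_R}$ --- a model $R''\equiv\amb n{R'}$ of this in $\frset Q$ has $R'\in\frset Q$, hence $R'\bisMOD R$ and $R''\bisMOD\amb n R$. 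For $G_T$ we invoke the replicated modalities $\repl{\capa}{\cdot}$, $\repl{input}{\cdot}$ and $!\msg n$ of Lemma~\ref{l:repcapa} (and the analogous $!n[\fillhole{\cdot}]$ for replicated ambients stated just after it): for $T=\msg n$ take $G_T=!\msg n$; for $T$ of the form $\capa.R$, $(x)R$ or $\amb n R$ start from the $\frset Q$-characteristic formula $\A$ for $T$ just obtained and set $G_T$ to be $\repl{\capa}{\A'}$, $\repl{input}{\A'}$ or $!n[\fillhole{\A'}]$ respectively, where $\A'$ is a suitable strengthening of $\A$. By Lemma~\ref{l:repcapa} and the inversion clause for replication (item~(4) of Lemma~\ref{l:inversion}), for $R\in\frset Q$ this formula holds of $R$ iff $R$ decomposes into replicated and plain copies each satisfying $\A'$, iff --- all copies again lying in $\frset Q$ --- each copy is $\bisMOD T$, iff $R\bisMOD !T$, as required.

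The point requiring care, and the main obstacle, is the choice of $\A'$: Lemma~\ref{l:repcapa} applies only to inner formulas that are \emph{sequentially selective} and whose models are all of the prescribed shape ($\capa.R'$, resp.\ $(x)R'$), and analogously \emph{depth selective} for $!n[\fillhole{\cdot}]$. We obtain $\A'$ from $\A$ by conjoining (i) a shape constraint, e.g.\ $\Fmeta{\capa}.\fillhole{\ltrue}$, resp.\ $\Fmeta{?n}.\fillhole{\ltrue}$ for any $n$, from Lemma~\ref{l:capaform}, forcing the top-level guard or input prefix (for $T=\amb n R$ the shape is already forced by $\A=\amb n{\F_R}$); and (ii) a degree constraint pinning $\ds$, resp.\ $\dd$, to the value taken at $T$, built from $\F_{\ds(T)}$ of Lemma~\ref{l:sdc}, resp.\ $\F_{\dd(T)}$ of Lemma~\ref{l:ddc}, together with a matching upper bound on that degree (obtained the same way). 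Neither addition changes which processes of $\frset Q$ satisfy the formula, since those are already $\bisMOD T$ and hence of the right shape and degree by Corollary~\ref{c:dsdd} and the inversion results. Verifying that these strengthened formulas indeed meet the hypotheses of Lemma~\ref{l:repcapa}, and that the decomposition of $!P$ into replicated primes lets the inversion results be applied componentwise, is the delicate part of the argument.
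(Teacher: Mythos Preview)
Your approach is essentially the same as the paper's: normalize so that every replication sits directly above a prime, proceed by structural induction on $P$, invoke the hypothesis for guarded primes, handle messages and ambients directly, and treat replicated primes via the replicated-modality lemmas (Lemma~\ref{l:repcapa} and the ambient version), assembling everything by parallel composition. The paper's proof is considerably terser on exactly the point you single out as delicate: it simply asserts that the $\frset Q$-characteristic formula $\F_{\capa.P'}$ (resp.\ the inner formula for an ambient) is already sequentially (resp.\ depth) selective, justifying this by the sentence ``all processes satisfying $\F_{P'}$ are intensionally bisimilar to $P'$'' together with Corollary~\ref{c:dsdd}. You are right to be more careful here, since a $\frset Q$-characteristic formula a priori constrains only those models that lie in $\frset Q$; your explicit strengthening by shape and degree constraints is the honest way to close this.

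One small gap in your strengthening: Lemma~\ref{l:sdc} (and Lemma~\ref{l:ddc}) only furnish \emph{lower}-bound formulas $\F_{\ds(T)}$, $\F_{\dd(T)}$; your phrase ``a matching upper bound on that degree (obtained the same way)'' is not supported by what is stated in this paper. To pin $\ds$ or $\dd$ exactly you need the corresponding upper-bound (or exact-value) formulas, which come from the companion expressiveness paper~\cite{Part1} rather than from Lemmas~\ref{l:sdc}--\ref{l:ddc} here. With that ingredient supplied, your argument and the paper's coincide.
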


\begin{proof}
  We assume, without loss of generality, that all occurrences of the
  replication operator in $P$ are immediately above a guarded process
  (this is always possible up to $\equiv$).
  
  We construct such a formula $\F_P$ by induction on $P$. The cases for
  $\nil$, parallel composition, and ambient are easy. Formulas for
  messages and replicated messages have been given above, and by
  hypothesis, we have formulas for guarded processes. We are thus left
  with the case of replicated terms.

%   If $P= \nil$,
%   then $F_P=\zero$, if $P\equiv P_1|P_2$, then $F_P=F_{P_1}|F_{P_2}$,
%   if $P\equiv n[P']$, then $n[F_{P'}]$, if $P\equiv\msg n$ then
%   $F_P=\msg n$, if $P= !\msg n$, and for guarded process the formula
%   is already provided by the hypothesis. The only little care is about
%   replicated processes.
  If $P=!n[P]$, then $\F_P=!n[\fillhole{\F_P'}]$ is a $\frset
  Q$-characteristic 
  formula, since $\F_{P'}$ is depth selective (all processes satysfying
  $\F_{P'}$ are intensionally bisimilar to $P'$, so their depth degree
  is equal to $\dd{(P')}$ -- see Corollary~\ref{c:dsdd}).  If
  $P=!\capa.P'$, then $\F_P=\repl{\capa}{\F_{\capa.P'}}$, since
  $\F_{\capa.P'}$ is sequentially selective. We reason in the same way
  for the case $P=!(x)P'$.

\end{proof}

\begin{lem}\label{l:keylemma}
  For all $P,Q$ and $N\supseteq\fn{P}\cup\fn{Q}$, there exist
  characteristic formulas for $P$ on $\imgset Q$ and $\frset Q$.
\end{lem}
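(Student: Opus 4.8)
The plan is to prove Lemma~\ref{l:keylemma} by induction on the sequentiality degree $\ds(P)$ (using Lemma~\ref{dsred} and Corollary~\ref{c:dsdd} to know this is a sound measure), combined at each level with a structural induction on $P$. The two statements — existence of an $\imgset Q$-characteristic formula and existence of a $\frset Q$-characteristic formula for $P$ — are proved simultaneously, since the machinery we have built connects them: Lemma~\ref{l:fr2img} reduces $\imgset Q$-characteristic formulas to $\frset Q$-characteristic formulas (picking any finite $N\supseteq\fn Q$), so it suffices to produce, for every $P,Q$, a $\frset Q$-characteristic formula for $P$.

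First I would fix $N\supseteq\fn P\cup\fn Q$ and invoke Lemma~\ref{l:fr2fr}: it reduces the construction of a $\frset Q$-characteristic formula for an arbitrary $P$ to the construction of $\frset Q$-characteristic formulas for each \emph{guarded} subterm of $P$ (i.e.\ subterms of the form $\capa.P'$ or $(x)P'$). So the real content is to characterise guarded terms on $\frset Q$. For this I would appeal to Lemma~\ref{l:img2fr}: given a guarded term $\capa.P'$ (resp.\ an input $(x)P'$), that lemma builds the desired $\frset Q$-characteristic formula \emph{provided} that for every $Q'\in\frozen Q$ we already have a formula $\F_{P',Q'}$ characterising $P'$ on $\imgset{Q'}$ and a formula $\F_{Q',P'}$ characterising $Q'$ on $\imgset{P'}$. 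Here is where the induction pays off: by Lemma~\ref{dsred} and its corollary, $P'$ is strictly smaller than $\capa.P'$ in sequentiality degree (a guard is consumed), and likewise the members of $\frozen Q$ are reducts-under-guards of $Q$, so they too have strictly smaller sequentiality degree than their enclosing guarded term; more precisely each frozen subterm $Q'$ satisfies $\ds(Q')<\ds(Q)$ when it sits under a guard, which is exactly the situation in which it is demanded. Thus all the formulas $\F_{P',Q'}$ and $\F_{Q',P'}$ requested by Lemma~\ref{l:img2fr} concern pairs of strictly smaller sequentiality degree, and are available by the induction hypothesis (applied symmetrically in both argument orders, which is why we carry the statement for both $\imgset Q$ and — via swapping the roles of $P$ and $Q$ — implicitly for both directions).

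The base case is $\ds(P)=0$: then $P$ contains no guards, messages, or inputs, so $P\equiv\nil$ (modulo $\equiv$ and after eta-normalisation), and $\F_P=\zero$ works as a characteristic formula on any set, by clause~\reff{c:nil} of $\bisMOD$ and the inversion Lemma~\ref{l:inversion}(1); the reduction to $\imgset Q$ is then trivial. For the non-guarded compound cases ($\nil$, parallel composition, ambient, message, replicated message, replicated ambient, replicated guarded term) at positive degree, Lemma~\ref{l:fr2fr} already handles the assembly, so nothing new is needed there beyond invoking it. The one subtlety to watch, and the step I expect to be the main obstacle, is the careful bookkeeping of the degree measure across the mutual recursion between $\imgset{}$ and $\frset{}$ in Lemma~\ref{l:img2fr}: one must check that \emph{every} formula requested there — both $\F_{P',Q'}$ (characterising the smaller term $P'$) and $\F_{Q',P'}$ (characterising a frozen subterm $Q'$ of $Q$, on the image set of $P'$) — genuinely falls under a strictly smaller instance of the induction, so that the recursion is well-founded; the eta-normalisation caveat in the input case of Lemma~\ref{l:img2fr} (the condition that $P$ not be of the form $\msg n\,|\,(y)P'$ with $n\notin\fn{P'}$) must be reconciled with reasoning up to $\equiv_E$, using the head eta normal form as in the proof of that lemma. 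Once this well-foundedness is in place, the result follows by assembling Lemmas~\ref{l:fr2img}, \ref{l:img2fr} and~\ref{l:fr2fr}.
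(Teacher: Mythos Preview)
Your overall architecture is right and matches the paper: reduce via Lemma~\ref{l:fr2img} to $\frset Q$, then via Lemma~\ref{l:fr2fr} to guarded subterms, then invoke Lemma~\ref{l:img2fr}, all under an induction on $\ds(P)$. But there is a genuine gap in the well-foundedness argument, precisely at the spot you yourself flag as ``the main obstacle''.

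The formulas $\F_{Q',P'}$ demanded by Lemma~\ref{l:img2fr} characterise $Q'\in\frozen Q$ on $\imgset{P'}$. For these to fall under the induction hypothesis on $\ds(P)$, you would need $\ds(Q')<\ds(P)$, since $Q'$ plays the role of the \emph{first} argument (the process being characterised). Your observation that $\ds(Q')<\ds(Q)$ is correct but irrelevant: $Q$ is universally quantified at every level of the induction, and nothing ties $\ds(Q)$ to $\ds(P)$. When $\ds(Q')>\ds(P')$ --- which certainly happens whenever $Q$ is much ``deeper'' than $P$ --- the induction hypothesis gives you nothing. The phrase ``applied symmetrically in both argument orders'' papers over this: the statement is not symmetric in $P$ and $Q$, and inducting on $\ds(P)$ alone does not let you swap roles. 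The paper's fix is to observe that in this bad case one does not need induction at all: the formula $\F_{\ds(Q')}$ of Lemma~\ref{l:sdc} is satisfied by $Q'$ and by nothing in $\imgset{P'}$ (since every $P''\in\imgset{P'}$ has $\ds(P'')\leq\ds(P')<\ds(Q')$), so it is vacuously an $\imgset{P'}$-characteristic formula for $Q'$. This case split --- induction when $\ds(Q')\leq\ds(P')$, Lemma~\ref{l:sdc} otherwise --- is the missing ingredient.

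A smaller point: your base case is mis-stated. $\ds(P)=0$ does not force $P\equiv\nil$; processes such as $\amb{n}{\nil}$ or $!\amb{m}{\amb{n}{\nil}}$ have sequentiality degree $0$. What $\ds(P)=0$ does guarantee is that $P$ has no guarded subterms, so the hypothesis of Lemma~\ref{l:fr2fr} is vacuously satisfied and that lemma yields the characteristic formula directly; the formula $\zero$ is not enough.
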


\begin{proof}
  From Lemma~\ref{l:fr2img}, it is sufficient to construct a local
  characteristic formula on $\frset Q$. We remark that without loss of
  generality, $P,Q$ can be choosed so that every binding $(x)P$ involves a different variable, and 
  this is enough to build characteristic formulas for the set $
  N$ enriched with distinct names $n_x$
  associated to all variables $x$ occurring in $P$ and $Q$.  We reason
  by induction on $\ds (P)$. If $\ds (P)= 0$, then $P$ has no guarded
  subterms, and the conditions of Lemma~\ref{l:fr2fr} are fullfilled,
  which implies the existence of a local characteristic formula for
  $P$.
  
  Assume now $\ds (P)>0$, and, for all $P'$ such that $\ds (P')<\ds
  (P)$, and for all $Q$, there exists a characteristic formula for
  $P'$ on $\frset Q$. Consider a process $Q$. By Lemma~\ref{l:fr2fr},
  the existence of a $\frset Q$-characteristic formula for $P$ can be
  proved by establishing the existence of a $\frset Q$-characteristic
  formula for each guarded subterm of $P$ of the form $\capa.P'$ or
  $(x)P'$. Consider such a guarded subterm $\capa.P'$.  We have $\ds
  (P')<\ds (P)$, so by induction there exists a formula $\F_{P,Q'}$
  which is a $\imgset{Q'}$-characteristic formula for $P'$ for each
  $Q'\in\frozen Q$. Moreover, by induction, we also have a formula
  $\F_{Q',P'}$ which is a characteristic formula for $Q'$ on $\imgset
  {P'}$ when $\ds (Q')\leq\ds P')<\ds (P)$. In the case $\ds (Q')>\ds
  (P')$, we define $\F_{Q',P}$ as the formula $\F_{\ds{(Q')}}$ given in
  Lemma~\ref{l:sdc}. This formula characterises $Q'$ on $\imgset{P'}$:
  $Q'\sat \F_{Q',P}$ by Lemma~\ref{l:sdc}, and if $P''\in\imgset{P'}$
  then $\ds{(P'')}\leq\ds{(P')}<\ds{(Q')}$, so $P''\not\sat
  \F_{\ds{(Q')}}$.  Hence the requirements of Lemma~\ref{l:img2fr} are
  fullfilled, and there exists a $\frset Q$-caracteristic formula for
  $\capa.P'$.
  
  Similarly, consider a subterm of the form $(x)P'$, and write
  $(x)P''$ for its eta normal form.  As above, we have local
  characteristic formulas $\F_{P''\sub {n_x} x,Q'}$ and $\F_{Q',P''\sub
    {n_x} x}$ by induction and using Lemma~\ref{l:sdc} with a similar
  reasoning. Since $(x)P''$ is in normal form, all requirements of
  Lemma~\ref{l:img2fr} are satisfied, so that there exists a $\frset
  Q$-characteristic formula for $(x)P''$, which is also a
  characteristic formula for $(x)P'$ by Lemma~\ref{c:equiv-bis}.
  
  Finally, we have characteristic formulas for all guarded subterms,
  and by Lemma~\ref{l:fr2fr}, we have a $\frset Q$-characteristic
  formula for $P$.
\end{proof}

\begin{thm}[Completeness of $\bisMOD$]\label{t:completeness}
  In \MA, $\eqL \,\subseteq~ \bisMOD$.
\end{thm}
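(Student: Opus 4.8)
The plan is to obtain Theorem~\ref{t:completeness} as an essentially immediate consequence of Lemma~\ref{l:keylemma}. Suppose $P \eqL Q$. I would set $N = \fn{P} \cup \fn{Q}$ and invoke Lemma~\ref{l:keylemma} to get a formula $\F$ that characterises $P$ on $\frset{Q}$. The first clause in the definition of a local characteristic formula (Definition~\ref{d:restricted}) gives $P \sat \F$, hence $Q \sat \F$ by the hypothesis $P \eqL Q$. The key remaining observation is that $Q$ itself lies in $\frset{Q}$: by Definition~\ref{d:froz} and the definition of $\frset{\cdot}$ we have $\frozen{Q} \subseteq \frozen{Q}$ trivially, so $Q \in \frset{Q}$. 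The second clause of Definition~\ref{d:restricted} then yields $Q \bisMOD P$, and symmetry of $\bisMOD$ (Lemma~\ref{l:MODeq}) gives $P \bisMOD Q$. Together with the soundness direction (Theorem~\ref{t:soundnessLogBis}) this shows that $\bisMOD$ and $\eqL$ coincide on the whole public MA calculus.

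There is essentially no difficulty left in this last step; the real work is all in the development leading to Lemma~\ref{l:keylemma}. The hard part is the construction of local characteristic formulas, which proceeds by a nested induction: an outer induction on the sequentiality degree $\ds(P)$ (whose legitimacy rests on the fact that $\Rcap$ does not increase $\ds$, Lemma~\ref{dsred} and the corollary following it), and inner structural inductions guided by the well-founded order $<$ and by the inductive characterisation $\bigind$ of $\bisMOD$ (Definition~\ref{d:bigind}). The pivotal insight — and the place where one must be most careful — is the use of the finite set $\frozen{Q}$ of frozen subterms: although a guarded subterm $\capa.P'$ may have infinitely many reducts up to $\equiv$, only finitely many of them are relevant up to $\bisMOD$ when comparing with a fixed $Q$, thanks to the subject-reduction property of frozen subterms (Lemma~\ref{l:predictive}) together with their finiteness (Lemma~\ref{l:finite}). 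This is what lets Lemma~\ref{l:img2fr} replace a potentially infinite conjunction of subformulas $\F_{P,Q'}$ and $\non\F_{Q',P}$ by a finite one indexed over $\frozen{Q}$, and then Lemma~\ref{l:fr2fr} lift such formulas through the non-guard constructs (including replication, via the depth- and sequential-selectivity formulas of Subsection~\ref{subsec:expressiveness}).

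A subtle point worth flagging in the write-up is the treatment of input prefixes: the inversion result for abstraction (Lemma~\ref{l:inversion}, item~7) and the corresponding clause of $\bigind$ require a name fresh for both processes, and the local-characteristic-formula construction for $(x)P'$ in Lemma~\ref{l:img2fr} works with the eta normal form and a designated fresh name, detecting via the auxiliary formula $\mathsf{NonEta}$ when the reducts of a candidate process stop being eta-equivalent to the initial state. I would make sure the final proof simply cites these lemmas rather than re-deriving them, so that Theorem~\ref{t:completeness} reads as the one-line deduction sketched in the first paragraph above.
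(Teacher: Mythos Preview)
Your proposal is correct and follows essentially the same approach as the paper: both derive the theorem as a one-line consequence of Lemma~\ref{l:keylemma}, using that $Q$ belongs to the relevant local set. The only cosmetic differences are that the paper argues by contrapositive and uses $\imgset{Q}$ (noting $Q\in\imgset{Q}$ via the reflexivity of $\Rcapar{\openamb n}=\Rar$), whereas you argue directly and use $\frset{Q}$; both choices are available from Lemma~\ref{l:keylemma} and lead to the same conclusion.
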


\begin{proof}
  Let $P,Q$ be two terms such that $P\,\not\bisMOD\,Q$.
  By Lemma~\ref{l:keylemma}, there is a formula $F$ characterising $P$ on 
  $\imgset Q$. We have
  $P~\sat~F$.
  We then have $Q~\in\imgset Q$, and $Q~\sat~F$ implies
  $P~\bisMOD~Q$. Hence, since by hypothesis $P\,\not\bisMOD\,Q$,
  $Q\,\not\!\!\sat\,F$, and $P\, \not\!\!\eqL\,Q$.
\end{proof}

\begin{cor}
\label{c:chara}
In \MA, relations $\eqL$, $\bisMOD$ and $\bigind$ coincide.
\end{cor}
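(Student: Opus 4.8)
The plan is to obtain this corollary by simply chaining together the three equivalences already established, so that no genuinely new argument is required. The first step is to record that $\bisMOD \,\subseteq\, \eqL$ on \MA, which is exactly Theorem~\ref{t:soundnessLogBis}: intensionally bisimilar processes satisfy the same closed formulas. The second step is the converse inclusion $\eqL \,\subseteq\, \bisMOD$, supplied by Theorem~\ref{t:completeness}; together these give $\eqL \,=\, \bisMOD$. The third step is to invoke the theorem closing Subsection~\ref{subsec:bigind}, which asserts that the inductively defined relation $\bigind$ is well defined and coincides with $\bisMOD$. Putting the three together yields $\eqL \,=\, \bisMOD \,=\, \bigind$ on the full public \MA.

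There is essentially no obstacle left at this stage: all the difficulty has been front-loaded into the earlier results, most notably the completeness proof of Section~\ref{s:inductive}, where the lack of image-finiteness is circumvented by building \emph{local} characteristic formulas (Lemmas~\ref{l:img2fr} and~\ref{l:keylemma}) in place of full characteristic formulas, and into the inversion lemma (Lemma~\ref{l:inversion}) underpinning the coincidence of $\bigind$ and $\bisMOD$. The only point worth a final check is that each of the three ingredients is genuinely stated for the whole calculus — soundness, completeness, and the $\bigind$/$\bisMOD$ coincidence are all proved on public \MA, with no residual image-finiteness hypothesis — so that the corollary indeed holds without restriction. Accordingly I would keep the proof to a single sentence citing the three results.
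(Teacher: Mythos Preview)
Your proposal is correct and matches the paper's approach: the corollary is stated without proof precisely because it is immediate from Theorem~\ref{t:soundnessLogBis}, Theorem~\ref{t:completeness}, and the theorem at the end of Subsection~\ref{subsec:bigind} establishing $\bigind = \bisMOD$. Your single-sentence summary citing these three results is exactly what is needed.
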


%%% Local Variables: 
%%% mode: latex
%%% TeX-master: "hls2"
%%% Local IspellDict: british
%%% Local IspellPersDict: ./.ispell
%%% End: 

%\section{Axiomatization of $\eqL$ on $\MAIFsyn$}\label{s:axiom}
\section{Characterizations of logical equivalences}\label{s:axiom}
% \daniel{another title could be: (Assessing) the separability of AL\\
%   there were several comments by Davide about the fact that some
%   things we say in this section are not axiomatisation results. I
%   interpreted this by changing the overall title of the section, and
%   not removing or simply moving the corresponding parts. I think it is
%   ok to have this almost final section where we compare \eqL{} with
%   other `non ad hoc' relations on processes, namely behavioural
%   equivalence and structural congruence. Hence the new title.}

% subsection intensionality extensionality

In this section, we compare logical equivalence and standard
equivalence relations on processes, like behavioural equivalence and
structural congruence. We give an axiomatization of $\eqL$ on
\MAIFsyn, a subcalculus of \MA{} in which image-finiteness is
guaranteed by a syntactical condition (Definition~\ref{d:MAIFsyn}
below). We shall see that AL is very intensional, in the sense that
$\eqL$ is `almost equal' to $\equiv$. More precisely, we show that
logical equivalence coincides with $\equiv_E$, the relation obtained
by extending structural congruence with the eta law
(Definition~\ref{d:etalaw}).
% To achieve this, we exploit the expressiveness of AL, given by
% several results we take from~\cite{Part1}. 
We establish the following chain of (dis)equalities, on \MAIFsyn:
$$
\equiv~~\subsetneq~~\equiv_{E}~~=~~\eqL~~=~~\bisMOD~~\subsetneq~~\approx\,. 
$$
We then move to the study of a variant of \MAIFsyn{} in which
communication is synchronous, and show that logical equivalence
coincides with $\equiv$ on this calculus. We end this section with a detailed discussion of the treatment of name restricition.

\subsection{Extensionality and intensionality}\label{subsec:extint}

We use the characterisation of $\eqL$ as $\bisMOD$ to compare logical
equivalence with  barbed congruence ($\approx$) and  structural
equivalence ($\equiv$). 
% DH: je commente, car on vient de le dire juste au-dessus
% It turns out that the equivalence induced by the
% logic lies `in between' structural congruence and behavioural
% equivalence, as expressed by the following inclusions:
% $$
% \equiv~~\subsetneq~~\eqL~~\subsetneq~~\wbc\,.
% $$
%
We start by studying the difference between
$\eqL$ and $\wbc$.

\subsubsection{Non-extensionality}

\begin{thm}
\label{t:eqLbis}
Relation $\eqL$ is strictly included in $\wbc$.
\end{thm}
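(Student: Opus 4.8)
The plan is to prove the two halves of the statement separately: first the inclusion $\eqL \subseteq \wbc$, then strictness, i.e.\ exhibiting a pair of processes that are barbed congruent but not logically equivalent.

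For the inclusion $\eqL \subseteq \wbc$, I would argue by the standard ``logical equivalence implies contextual equivalence'' pattern, but packaged through the coinductive characterisation. Since $\eqL$ coincides with $\bisMOD$ on \MA{} (Corollary~\ref{c:chara}), and $\bisMODo$ is a congruence (Corollary~\ref{c:bisEcong}), it suffices to show that $\bisMOD$ is a barbed bisimulation, i.e.\ that it satisfies the two clauses of Definition~\ref{d:bb}: preservation of reductions and preservation of the barbs $\Dwa_n$. The reduction clause is essentially immediate from clause~\reff{c:tau} of Definition~\ref{d:bisMOD} together with Corollary~\ref{c:tau-action} (to iterate single reductions along a weak reduction sequence, via $\bisE$). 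For the barbs: if $P \Dwa_n$, then $P \Longrightarrow \amb n{P'} \mid P''$; running the $\tau$-clause of $\bisMOD$ along this reduction sequence gives $Q \Longrightarrow Q_0$ with $Q_0 \bisMOD \amb n{P'}\mid P''$, and then by the inversion results (Lemma~\ref{l:inversion}, cases 3 and 2) $Q_0 \equiv \amb n{Q'}\mid Q''$ for suitable $Q',Q''$, so $Q \Dwa_n$. Hence $\bisMOD$ is a barbed bisimulation, so $\bisMOD \subseteq \wbb$; combined with congruence of $\bisMOD$ this yields $\bisMOD \subseteq \wbc$, i.e.\ $\eqL \subseteq \wbc$.

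For strictness I need a concrete counterexample: processes $P,Q$ with $P \wbc Q$ but $P \not\eqL Q$. The natural candidate is a stuttering-type pair, or more simply a pair that differs only by some ``junk'' that no context can observe but that the spatial connectives of the logic can see --- e.g.\ a difference in the number of parallel copies of a dead capability, or a difference visible via the $\mid$ or $\amb n\cdot$ connectives. A clean choice is along the lines of Example~\ref{e:stuttloop} (which the text explicitly flags as relevant to the axiomatisation), but for \emph{strictness} one wants the simplest possible witness. I would take something like $P \defiDS \inamb n.\nil$ and $Q \defiDS \inamb n.\nil \mid \inamb m.\nil$ for $n \neq m$ --- no, that is distinguished by $\mid$ \emph{and} by barbed congruence is too coarse only in subtle ways; the honest move is to reuse the eta-law / stuttering witnesses. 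Concretely, take $P,Q$ to be the two processes from Example~\ref{e:stuttloop} prefixed by $\outamb n$, for which the text already asserts $\outamb n.P \wbc \outamb n.Q$ and $\outamb n.P \eqL \outamb n.Q$ --- that is the \emph{wrong direction}. So instead I would exhibit $P,Q$ with $P \wbc Q$ but $P \not\bisMOD Q$: the pair $P,Q$ of Example~\ref{e:stuttloop} \emph{themselves} satisfy $P \not\bisMOD Q$ (stated there), and one checks $P \wbc Q$ by showing that in every context the extra capabilities $\inamb n.\outamb n$ in $Q$ can only fire in lockstep with an $\openamb n$ that $P$'s context would equally trigger, i.e.\ the loop structure $P \Stat{\inamb n}{\outamb n} Q \Stat{\inamb n}{\outamb n} P$ means $P$ and $Q$ have the same reducts up to $\wbb$. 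Establishing $P\wbc Q$ rigorously is the main obstacle: it requires a bisimulation-up-to argument over all closing contexts, exploiting that barbs only detect top-level ambients named $n$ and that the interfering capabilities are ``transparent'' in the sense illustrated by the reduction diagrams in Example~\ref{e:stuttloop}. I would set up a relation $\mathcal R$ containing $(\C\fillhole P, \C\fillhole Q)$ for all closing $\C$, closed under the loop moves, and verify it is a barbed bisimulation, the key point being that any reduction consuming the extra $\inamb n.\outamb n$ in $Q$ returns (after two more reductions) to a state related by $\mathcal R$ to a reduct of $\C\fillhole P$.

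The main obstacle, then, is the strictness half --- specifically proving $P \wbc Q$ for the stuttering witness, since barbed congruence quantifies over all contexts and one must show the stuttering capabilities are contextually invisible; the inclusion $\eqL \subseteq \wbc$ is routine given the machinery already developed (Corollaries~\ref{c:chara}, \ref{c:bisEcong}, \ref{c:tau-action} and Lemma~\ref{l:inversion}). The three axiom schemata alluded to in the introduction presumably package the general form of such witnesses; for the bare strictness statement a single instance suffices.
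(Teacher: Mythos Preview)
Your inclusion argument is correct and matches the paper's: $\eqL = \bisMOD$ by Corollary~\ref{c:chara}, then $\bisMOD \subseteq \wbc$ follows from congruence of $\bisMOD$ together with the fact that $\bisMOD$ is a barbed bisimulation (your verification of the barb clause via Lemma~\ref{l:inversion} is exactly right).

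The strictness argument, however, has a genuine gap: your chosen witness does not work. Take $P,Q$ as in Example~\ref{e:stuttloop} \emph{without} the $\outamb n$ prefix. Then $P$ contains $\amb n \nil$ at top level, so $P \Dwa_n$. But $Q$ has no top-level ambient, and in isolation it is stuck: the $\inamb n$ component is not enclosed in any ambient, so rule \trans{Red-In} cannot apply, and the replicated $\openamb n$ component has no sibling ambient $n$ to open. Hence $Q \not\Dwa_n$, so already $P \not\wbb Q$, and in particular $P \not\wbc Q$. The stuttering pair is therefore \emph{not} a witness to $\wbc \not\subseteq \eqL$; the point of Example~\ref{e:stuttloop} is rather that stuttering must be built into the definition of $\bisMOD$ so that the \emph{prefixed} pair is equated.

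The paper's strictness witnesses are much simpler and do not involve stuttering at all. It exhibits three laws valid for $\wbc$ but not for $\bisMOD$ (equivalently, $\eqL$):
\[
\ina n.\ina n \;=\; \ina n \mid \ina n,
\qquad
\abs x{\abs y \nil} \;=\; \abs x \nil \mid \abs y \nil,
\qquad
\abs x{\msg x} \;=\; \nil.
\]
Each is immediately refuted by the logic via the spatial connectives: e.g.\ $\nil \sat \zero$ while $\abs x{\msg x} \not\sat \zero$; and $\ina n \mid \ina n$ can be split by the $\mid$ connective into two non-$\zero$ components while $\ina n.\ina n$ cannot. The $\wbc$ side of each law is standard behavioural reasoning (the third is the asynchronous-calculus ``forwarder'' law). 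You actually gesture toward these at the very end (``the three axiom schemata alluded to in the introduction''), and that is the direction you should have taken from the start: a single one of these laws already proves strictness, with far less work than the bisimulation-up-to argument you were contemplating.
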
 

\proof
  The inclusion follows from $\eqL \subseteq\, \bisMOD$ and $\bisMOD\,
  \subseteq\, \wbc $ (the second inclusion is essentially a
  consequence of the congruence of $\bisMOD$).

 The strictness of the inclusion is proved by the following laws, that
 are valid for $\wbc$ but not for $\bisMOD$:

\begin{enumerate}[(1)]

\item $\ina n . \ina n  =   \ina n | \ina n$

\item $\abs x {\abs y \nil} = \abs  x \nil | \abs y \nil$

% \item $M .P = M |P$, where $M$ is a capability, if $P$ is of the form
%   $M . \cdots . M .\nil$.

% \item $\abs x P % {\abs y \nil} 
% = \abs  x \nil |P % \abs y \nil 
% $, if $P$ is of the form $\abs x{\cdots \abs x \nil}$.

\item $\abs x {\msg x} = \nil$.\qed
\end{enumerate}\medskip
 
The third axiom is typical for behavioural  equivalences in
 calculi where communication is  asynchronous.
 The first equality can be derived from a more general law, called the
 \emph{distribution law} in~\cite{hirschkoff:pous:fossacs07}: $M.(P~|~
 M.P|\dots|M.P) = M.P~|~M.P|\dots|M.P$ (where $M$ appears the same
 number of times on both sides of the equality). A similar law is
 valid for the input prefix, from which the second equality above is
 derived as an instance.
%  The first and second  axiom schemata are not valid for $\eqL$ 
% because the logic allows one to intensionally
% observe parallel compositions. 
% Two simple examples of
% equalities  derived from these  
% axiom schemata are the following:
% \[
% \begin{array}{rclrcl}
%   \ina n . \ina n & =&   \ina n | \ina n
%   \qquad\qquad &
% \\[\tkpPOPL]
% \abs x {\abs y \nil} &=& \abs  x \nil | \abs y \nil 
% \end{array}
%   \]
Probably the above are  not  the only laws  that make $\eqL$
   finer than $\wbc$, but a complete axiomatization of $\wbc$ over $\eqL$ is out of the scope of this paper.
%
% \etis{FROM DAVIDE: i would put also Daniel's law
%   $M.(P~|~ M.P|\dots|M.P) =
%    M.P~|~M.P|\dots|M.P$  }

\subsubsection{Intensionality}

We now provide a precise account of the difference between $\eqL$ and
$\equiv$, in the setting of the subcalculus \MAIFsyn, defined as
 below. We recall that a process is finite if it does not use the
replication operator.
% The difference between $\eqL$ and $\equiv$ is given by a simple eta
% conversion rule, but we will prove that this axiomatization is
% complete only on the subcalculus $\MAIFsyn$ we define now:

\begin{defi}[\MAIFsyn]\label{d:MAIFsyn}
The subcalculus \MAIFsyn is defined by the grammar:
$$
P~::=~~\nil~\big|~P|P~\big|~!P~\big|~n[P]~\big|~\capa.P_0~\big|~\msg
n~\big|~(x)P_0
$$
where $P_0$ is a finite process.
\end{defi}
In \MAIFsyn, we impose finiteness after any form
  of interaction; in contrast,  processes exhibiting an `infinite spatial
  structure', such as $!a[b[\nil]]$ are allowed.

%%\daniel{cette discussion etait au debut de la sous-section, mais j'ai
%%  prefere la deplacer ici pour commencer par le truc qui ne depend pas
%%  de \MAIFsyn{} --- ca te va ainsi?}
%% ETIENNE: OK.

\begin{lem}
  All processes of \MAIFsyn{} are image-finite.
\end{lem}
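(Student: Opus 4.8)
The plan is to reduce the statement to the following elementary fact about \emph{finite} processes (Definition~\ref{def:finitesingle}): \textbf{($\star$)} if $R$ is finite, then $\{R''\;:\;R\Rar R''\}$ — and, more generally, the set of processes obtained from $R$ by reductions and capability transitions $\arr\capa$ — is finite up to $\equiv$. Granting ($\star$), the lemma is immediate: by Definition~\ref{d:MAIFsyn} every subterm of a \MAIFsyn{} process of the form $\capa.P'$ or $(x)P'$ has a \emph{finite} continuation $P'$, and $P'\sub n x$ is again finite (replacing a variable by a name introduces no replication); since $\Rcap$ is, by Definition~\ref{d:statt}, an interleaving of $\Rar$ and $\arr\capa$-steps, the sets $\{P''\;:\;P'\Rcap P''\}$ and $\{P''\;:\;P'\sub n x\Rar P''\}$ appearing in Definition~\ref{def:MAIF} are finite up to $\equiv$, hence \emph{a fortiori} up to $\bisMOD$, because $\equiv\,\subseteq\,\equiv_E\,\subseteq\,\bisE=\bisMOD$ by Lemma~\ref{c:equiv-bis} and Corollary~\ref{c:MOD-E}.

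To prove ($\star$) I would argue as follows. First, finiteness is preserved along the relation in question: the base reduction rules \trans{Red-Open}, \trans{Red-In}, \trans{Red-Out}, \trans{Red-Com} and the transitions $\arr\capa$ consume a guard (or a message together with an abstraction) and never create a replication, and the closure rules \trans{Red-Par}, \trans{Red-Amb}, \trans{Red-Str} do not either; so every reachable process is finite. Next, to each finite process associate the measure $a(P)$ counting, in some $!$-free representative of $P$, the occurrences of a capability prefix, an ambient constructor, a message, or an abstraction ($\nil$, parallel composition, and replication are not counted); using the normal-form analysis of structural congruence (cf.\ Theorem~\ref{thm:equivdeci} and~\cite{Dal01}) $a(\cdot)$ is well defined and $\equiv$-invariant on finite processes. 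One checks that every single reduction step and every $\arr\capa$-step \emph{strictly decreases} $a(\cdot)$ (by $2$ for \trans{Red-Open} and \trans{Red-Com}, by $1$ for \trans{Red-In}, \trans{Red-Out}, and $\arr\capa$), and this survives the congruence and \trans{Red-Str} closures since $a$ is additive over parallel and ambient contexts and is $\equiv$-invariant. Finally the free-name set never grows: the only substitutions arise in \trans{Red-Com}, and the name substituted is already free in the redex $\msg\eta\,|\,(x)P$. Hence from $R$ one only reaches processes with $a(\cdot)\le a(R)$ and free names inside the finite set $\fn R$ (inside $\fn{P'}\cup\{n\}$ for the input clause of Definition~\ref{def:MAIF}); and, up to $\equiv$, there are only finitely many processes subject to these two bounds — put them in an $\alpha$-renamed, $\nil$-eliminated normal form over a bounded supply of names and variables — which is ($\star$).

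The main obstacle is the handling of structural congruence: because $\equiv$ includes $!P\equiv\,!P\,|\,P$, $!\nil\equiv\nil$, and $!(P|Q)\equiv\,!P\,|\,!Q$, one must check with some care that finiteness is genuinely an $\equiv$-class property, that $a(\cdot)$ does not depend on the chosen $!$-free representative (this is where one leans on Dal Zilio's normal forms, Theorem~\ref{thm:equivdeci}), and that the use of \trans{Red-Str} in a reduction issued from a finite process can always be organised so as to keep one within finite processes and within the stated bounds. All of this is routine but slightly tedious; none of it is deep, and it is precisely the place where the syntactic restriction of Definition~\ref{d:MAIFsyn} — finiteness \emph{below every guard} — does the work.
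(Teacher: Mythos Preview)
Your proposal is correct and follows essentially the same approach as the paper: the paper's proof is a two-line assertion that the finiteness of $P_0$ in Definition~\ref{d:MAIFsyn} makes the sets $\{P'':P_0\Rcap P''\}_{/\bisMOD}$ and $\{P'':P_0\sub n x\Rar P''\}_{/\bisMOD}$ finite, hence \MAIFsyn$\,\subseteq\,$\MAIF. You supply exactly the argument the paper leaves implicit --- the measure $a(\cdot)$, preservation of finiteness, and the bound on free names --- which is the natural way to justify that one-liner; the only minor remark is that strict decrease of $a(\cdot)$ is more than you need (non-increase already gives the bound $a(\cdot)\le a(R)$, and finiteness then follows from the finiteness of $!$-free terms of bounded size over a finite name set).
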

\begin{proof}
  \MAIFsyn{} is included in \MAIF{} since the finiteness condition on
  $P_0$ in Definition~\ref{d:MAIFsyn} implies that $ \{
  P'~:~P_0~\Rcap~P'\}_{/\bisMOD}$ and $\{ P'~:~P_0\sub n
  x~\Rar~P'\}_{/\bisMOD}$ respectively are finite sets. Any process in
  \MAIFsyn{} is thus in \MAIF{}, and is hence image-finite in the
  sense of Definition~\ref{def:MAIF}.
\end{proof}

% \daniel{Here I removed a rather obscure discussion between us three
%   that had to do with the fact that our notion of image-finiteness is
%   non-standard (see the remarks after Def.~\ref{def:MAIF}). I hope
%   this version of the paragraph is now more clear, if not we could
%   recall the fact that image-finiteness is non-standard in our
%   setting.}

$\MAIFsyn$ strictly contains the finite calculus we considered for the
completeness proof in Section~\ref{s:completenessfinite}. Therefore,
Theorem~\ref{t:bisEomega} does not apply, but Corollary~\ref{c:chara},
which holds for the whole calculus, does.
As \MAIF, %  (see Remark~\ref{r:completenessMAIF})
\MAIFsyn{} is image-finite, in the sense of Definition~\ref{def:MAIF}.
While in the former subcalculus this property is guaranteed at a
semantical level, in \MAIFsyn{} it follows from a syntactic
restriction (we forbid replication in process $P_0$ -- see
Definition~\ref{d:MAIFsyn}).
  
We will see in Section~\ref{s:undecidability} that $\MAIFsyn$ is
Turing complete.

% COMMENTE: c'est deja dit, et c'est pas de l'anglais
% In a second time, we consider a 
% calculus with \emph{synchronous} communications and prove that
% the inclusion $\equiv~\subseteq~\eqL$ is actually an equality
% $$
% \eqL~~=~~\equiv
% $$
% or, in other words, that the eta rule is not admissibile any more for $\eqL$.

% We are now interested in carrying out the comparison of $\eqL$ with
% $\equiv$. 

%\label{d:}
\smallskip

  We let \emph{normalised structural congruence}, written $\equivE$,
  be the relation defined by the rules of $\equiv$ plus the eta law
  (see Definition~\ref{d:etalaw}).

\begin{lem}
\label{l:etacorrect} 
$\equiv_E\,\subseteq\,\bisE$.
\end{lem}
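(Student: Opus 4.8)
This statement is precisely the first half of Lemma~\ref{c:equiv-bis}, and the cleanest route is to reduce it to two facts and then invoke congruence. By Definition~\ref{d:etalaw}, $\equivE$ (i.e.\ $\equiv_E$) is the smallest congruence containing $\equiv$ together with all instances of the eta equation $\abs x(\abs x P \mid \msg x) = \abs x P$; and $\bisE$ is a congruence by Corollary~\ref{c:bisEcong}. Hence it suffices to show (a)~$\equiv\,\subseteq\,\bisE$ and (b)~$\abs x(\abs x P \mid \msg x)\,\bisE\,\abs x P$ for all $P,x$: congruence of $\bisE$ then propagates (b) to all contexts, and (a) gives the base, so $\bisE$ contains the congruence generated by $\equiv$ and eta, which is exactly $\equivE$.

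For (a), the relation $\equiv$ is itself a syntax-based intensional bisimulation: every clause of Definition~\ref{d:bisEwithEqual} is phrased modulo $\equiv$, so if $P\equiv Q$ and $P\equiv\capa.P'$ (or $P\equiv\abs x{P'}$, $P\equiv\amb n{P'}$, \ldots) one chooses $Q':=P'$, and wherever a reduct $Q''$ is demanded one takes $Q'':=Q'$, using reflexivity of $\Rcap$, of $\Longrightarrow$ and of $\equiv$. For (b), put $\RR := {\equiv}\cdot\RR_0\cdot{\equiv}$ where $\RR_0$ is the identity together with all pairs $\bigl(\abs x(\abs x P \mid \msg x),\abs x P\bigr)$ and their inverses; then $\equiv\subseteq\RR$ and every eta instance lies in $\RR$, and it remains to verify $\RR$ is a syntax-based intensional bisimulation. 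Since an eta step rewrites an abstraction into an abstraction and $\equiv$ preserves the outermost constructor of a term (up to absorbing $\nil$ and distributing $!$, which never turns a prefix, ambient, message or abstraction into another shape), every pair in $\RR$ consists of two processes with the same outermost shape; when that shape is not an abstraction, the two sides are in fact related only by the $\equiv$-part of $\RR$ and the relevant clause is immediate.

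The only substantial case is clause~(\ref{c:abs}). After $\alpha$-renaming, treat the eta redex as $\abs x(\abs x R \mid \msg x)$, with the inner binder fresh for the outer $x$. For the forward pair, with $P\equiv\abs x(\abs x R \mid \msg x)$ and $Q\equiv\abs x R$, one has $P'\sub n x \equiv \abs x R \mid \msg n \equiv \msg n\mid Q$ for every name $n$, so $Q'':=\msg n\mid Q$ works, taken with \emph{zero} reduction steps (here the right-hand side simply hands back the pending output). For the symmetric pair, with $P\equiv\abs x R$ and $Q\equiv\abs x(\abs x R\mid\msg x)$, one has $\msg n\mid Q \,\Longrightarrow\, \abs x R\mid\msg n \,\Longrightarrow\, R\sub n x$ by two applications of \trans{Red-Com} (first the outer abstraction absorbs $\msg n$, then the inner one does), and $R\sub n x = P'\sub n x$, so $Q'':=R\sub n x$ works. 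In both cases it is the \emph{asynchronous} form of clause~(\ref{c:abs})—where on the right one is free to keep a surplus $\msg n$ around instead of consuming it—that makes the eta law into a bisimulation. This bookkeeping (recognising that a head-eta step costs either one extra pending message or one extra \trans{Red-Com} on the matching side) is the one delicate point; everything else is routine. Once $\RR$ is shown to be a $\bisE$-bisimulation we obtain (b), and congruence of $\bisE$ concludes that $\equivE\subseteq\bisE$.
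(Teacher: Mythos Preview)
Your proof is correct, but it takes a different route from the paper's argument at Lemma~\ref{l:etacorrect}. The paper reduces to showing that a single $\etarew$-step is contained in $\bisMOD$ and establishes this by induction on the structure of $P$, using the inversion characterisations of Lemma~\ref{l:inversion} to descend through parallel compositions, ambients and capabilities until the eta redex is reached; the abstraction case is then handled essentially as you do. Your approach instead factors through congruence: you isolate the top-level eta instance, verify it lies in $\bisE$ by exhibiting a small bisimulation, and then appeal to Corollary~\ref{c:bisEcong} to propagate it under contexts and close under transitivity. This is arguably more elementary in that it avoids Lemma~\ref{l:inversion}, and the bisimulation check you give for the abstraction clause is clean and correct in both directions.

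One point worth flagging: your appeal to Corollary~\ref{c:bisEcong} is, in the paper's dependency graph, not entirely innocent. The equivalence-relation part of that corollary (specifically transitivity of $\bisE$) is obtained via $\bisE=\bisMOD$ (Corollary~\ref{c:MOD-E}), and the inclusion $\bisE\subseteq\bisMOD$ (Lemma~\ref{l:E-MOD}) invokes Lemma~\ref{c:equiv-bis}, which is the very statement you are re-proving. Since you explicitly acknowledge that Lemma~\ref{l:etacorrect} duplicates Lemma~\ref{c:equiv-bis}, and the latter is established earlier by a direct argument, this is harmless in context; but if you wanted a standalone proof, the cleanest fix is to work with $\bisMOD$ throughout (as the paper does here): transitivity of $\bisMOD$ comes from Lemma~\ref{l:MODeq}, context closure from Lemma~\ref{l:pres-Bis} together with Lemma~\ref{l:MOD-E}, and neither of these depends on Lemma~\ref{c:equiv-bis}.
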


\begin{proof}
  It is enough to prove that given $P,Q$ such that $P\rar_{\eta} Q$,
  we have $P\bisMOD Q$.  We reason by induction on $P$, following
  Lemma~\ref{l:inversion}.  In that lemma, the situations
  corresponding to the operators of parallel composition, ambients and
  capability prefixes are easy because of commutation properties of
  $\rar_{\eta}$. In the cases of $\nil$ and of messages, there is no
  redex for $\rar_{\eta}$.
  
  So we only have to examine the clause for the input condition in
  $\bisMOD$.  Let $n$ be a fresh name and write $P\equiv\abs x P'$,
  $Q\equiv\abs x Q'$. We have to prove that $P|\msg n~\Rar\bisMOD~
  Q'\sub n x$ and $Q|\msg n~\Rar\bisMOD~P'\sub n x$.  The reduction
  $P\rar_{\eta}\; Q$ can follow from two reasons: either $P\equiv\abs
  x\big(\msg x|\abs x Q'\big)$, or $P'\rar_{\eta}\; Q'$. In the first
  case, the proof is straightforward, and in the second case, the
  induction hypothesis allows us to conclude.
\end{proof}

The converse of this lemma is the difficult part of the
characterisation of $\eqL$ in \MAIFsyn. This is proved by showing that
two intensionally bisimilar finite processes have essentially the same number
of prefixes and messages. Using the separative power given by the
logic, this entails that $\bisE\subseteq\equiv_E$ on \MAIFsyn. It has
to be stressed that we rely here on the syntactical finiteness condition
defining \MAIFsyn, and that our approach does not apply to, e.g.,
\MAIF.

We write $\OPmess R $ for the number of messages in $R$, and $\OP R $
for the number of capabilities and abstractions in $R$.

\begin{lem}
\label{l:mess_pref_RED}
Let $P,Q$ be two finite processes.
Suppose $P \longrightarrow P'$. Then 
\begin{enumerate}[\em(1)]

\item
$\OPmess P \geq \OPmess {P'} $;

\item  $\OP P \geq \OP {P'} $.
%\NEWqed
\end{enumerate} 
\end{lem}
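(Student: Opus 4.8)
The plan is to prove both inequalities simultaneously, by induction on the height of the derivation of $P \longrightarrow P'$, with a case analysis on the last rule applied. Two elementary facts about the measures drive the argument. First, $\OPmess{\cdot}$ and $\OP{\cdot}$ are obtained by summing a fixed local weight over all occurrences of the message, capability and abstraction constructors; hence they are additive over parallel composition, unchanged by wrapping a term in $n[\cdot]$ or under $!$, and — crucially — invariant under a name-for-variable substitution $\sub \eta x$, which merely renames the contents of these constructors without creating or destroying any of them. Second, $\OPmess{\cdot}$ and $\OP{\cdot}$ are $\equiv$-invariant on finite processes; this is the only non-routine ingredient, and is discussed below.

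For the base cases one inspects the four reduction axioms directly. In \trans{Red-Open}, \trans{Red-In} and \trans{Red-Out} the left-hand side of the axiom carries exactly one extra capability prefix ($\open n$, $\ina m$, $\out m$ respectively) that the rule consumes, while no message or abstraction occurrence is lost; so $\OPmess{\cdot}$ is unchanged and $\OP{\cdot}$ decreases by one. In \trans{Red-Com}, $\msg \eta \,|\, \abs x R \longrightarrow R\sub \eta x$, the message $\msg \eta$ and the abstraction disappear and, by the first fact above, $\OPmess{R\sub\eta x}=\OPmess R$ and $\OP{R\sub\eta x}=\OP R$; hence $\OPmess{\cdot}$ and $\OP{\cdot}$ each decrease by one. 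The inductive rules \trans{Red-Par} and \trans{Red-Amb} follow immediately from additivity over $|$ and invariance under $n[\cdot]$ together with the induction hypothesis, applied to the reducing component, resp.\ content, which is again a finite process. For \trans{Red-Str}, $P\equiv P_1\longrightarrow P_2\equiv P'$, one uses $\OPmess P=\OPmess{P_1}$ and $\OPmess{P_2}=\OPmess{P'}$ (and likewise for $\OP{\cdot}$) to transport the inequality obtained for $P_1\longrightarrow P_2$.

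The main obstacle is thus the $\equiv$-invariance of the two measures on finite processes needed in the \trans{Red-Str} case — note that on arbitrary processes this fails, since $!R\equiv\,!R\,|\,R$ shifts $\OPmess{\cdot}$ by $\OPmess R$ and $\OP{\cdot}$ by $\OP R$. I would argue as follows. On replication-free terms, $\equiv$ is generated solely by the associativity, commutativity and unit laws for $|$, all of which visibly preserve both measures. Among the remaining structural laws, $!!R\equiv\,!R$, $!(R_1|R_2)\equiv\,!R_1\,|\,!R_2$ and $!\nil\equiv\nil$ preserve both measures with no side condition (the two sides carry the same total weight). The only law that could alter the measures is $!R\equiv\,!R\,|\,R$; but whenever it is applied inside a context yielding a finite process, the subterm $!R$ occurs inside a finite process, and a replicated subterm of a finite process is necessarily structurally congruent to $\nil$ — a consequence of the analysis of structural congruence and its normal forms in~\cite{Dal01} — so there $\OPmess R=\OP R=0$ and the law is harmless as well. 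Chaining these observations along a derivation of $P\equiv P_1$, all of whose intermediate terms are finite when $P$ is, yields the required invariance, and the induction closes. The one point demanding care, and where I expect the real work to lie, is making precise that finiteness forces every replicated subterm to be inert; everything else is a routine structural induction.
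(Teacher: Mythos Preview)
Your proposal is correct and follows exactly the approach the paper indicates---the paper's proof is the single line ``By induction on the derivation of $P\rar P'$'', and you have faithfully unfolded that induction, correctly isolating the \trans{Red-Str} case and the $\equiv$-invariance of the measures on finite processes as the only point requiring care. Your argument that every replicated subterm of a finite process must be $\equiv\nil$ (hence contributes $0$ to both counts, neutralising the law $!R\equiv!R|R$) is sound and is indeed what the normal-form analysis of~\cite{Dal01} yields.
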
 

\begin{proof}
By induction on the derivation of $P\rar P'$.
\end{proof}

\begin{lem}
\label{l:numOpmess}
Let $P,Q$ be two finite processes.  Suppose that $P \bisE Q$, and that
both $P$ and $Q$ are eta-normalised. Then $\OPmess P = \OPmess Q$.
\end{lem}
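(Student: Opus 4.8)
The plan is to prove that $\OPmess{P} = \OPmess{Q}$ for eta-normalised finite processes $P \bisE Q$ by exploiting the fact that the number of messages is an invariant that the logic can detect, and that $\bisE = \bisMOD = \eqL$ on such processes. First I would observe that it suffices to prove $\OPmess{P} \leq \OPmess{Q}$; the reverse inequality then follows by symmetry of $\bisE$. The key idea is that a process with at least $k$ messages available somewhere in its structure can be distinguished, via logical formulas, from one with fewer; more precisely, I want to express, using the expressiveness results of Section~\ref{subsec:expressiveness} (in particular the characteristic formulas of Lemma~\ref{l:charforMAIF}, the message formulas $\msg n$ and $!\msg n$ of Lemmas~\ref{l:capaform} and~\ref{l:repcapa}, and the modalities $\Fmeta{\capa}$, $\Fmeta{?n}$), a formula that counts a lower bound on the number of message-occurrences reachable by peeling off parallel components, ambients, replications, and guards.

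The core combinatorial step is the following: for a finite process $R$, define $\OPmess{R}$ as in the statement, and show that from $P \bisMOD Q$ we can read off, inductively following the inversion results of Lemma~\ref{l:inversion}, a decomposition of $Q$ that mirrors the message-bearing structure of $P$. Concretely, I would proceed by induction on the structure of $P$ (eta-normalised), using Lemma~\ref{l:inversion}: if $P \equiv \nil$ then $Q \equiv \nil$ and both counts are $0$; if $P \equiv P_1 \mid P_2$ then $Q \equiv Q_1 \mid Q_2$ with $P_i \bisMOD Q_i$, and $\OPmess{P} = \OPmess{P_1} + \OPmess{P_2} = \OPmess{Q_1} + \OPmess{Q_2} = \OPmess{Q}$ by induction; if $P \equiv \amb{n}{P'}$ then $Q \equiv \amb{n}{Q'}$ with $P' \bisMOD Q'$ and the counts agree by induction; if $P \equiv {!}P'$ then by Lemma~\ref{l:inversion}(4) $Q \equiv \Pi_i{!}Q_i \mid \Pi_j Q_j$ with each component bisimilar to $P'$, and since $P'$ is finite and eta-normalised, each $Q_i$ is also eta-normalised (bisimilarity preserves the eta-normal-form property via Lemma~\ref{c:equiv-bis} and the fact that $\bisE$ contains $\equiv_E$) — hence $\OPmess{Q_i} = \OPmess{P'}$ by induction; but then $\OPmess{Q}$ is finite only if $\OPmess{P'} = 0$, and likewise $\OPmess{{!}P'} = 0$ unless $\OPmess{P'} = 0$, so both are $0$ (here I am using that both $P$ and $Q$ are finite, so $P \equiv {!}P'$ can only arise through structural congruence from a finite term, forcing $P' \equiv \nil$-like behaviour for messages). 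The cases $P \equiv \msg{n}$ and $P \equiv \capa.P'$ and $P \equiv (x)P'$ are handled directly by the corresponding clauses of Lemma~\ref{l:inversion}: for $\msg n$, $Q \equiv \msg n$; for a guarded term, $Q$ has the same guard and the body is related up to $\Rcap$ and $\bisMOD$, so $\OPmess$ of the guarded terms is $1$ more / the same as the bodies respectively (messages under a guard contribute), and Lemma~\ref{l:mess_pref_RED} guarantees that the intervening $\Rcap$-reductions do not increase the message count, while eta-normality keeps the counts stable under the required reductions.

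The main obstacle I anticipate is the interaction between replication and finiteness: a priori $P \bisMOD Q$ with $P$ finite does not immediately force $Q$ finite (indeed $\equiv_E$ is the only gap, but structurally $Q$ could look very different), yet the statement restricts to finite $P, Q$ so I get to assume $Q$ is finite as well. The delicate point is rather ensuring that the inductive decomposition supplied by Lemma~\ref{l:inversion} respects eta-normality at each level, so that the induction hypothesis applies with the eta-normalised assumption intact; this requires a small lemma that if $R$ is eta-normalised and $R \equiv R_1 \mid R_2$ (or $R \equiv \capa.R'$, etc.) then each $R_i$ (resp. $R'$) is again eta-normalised, which follows because an eta-redex in a subterm would lift to an eta-redex in $R$. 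Once that bookkeeping is in place, the argument is a routine structural induction, and the real content — that the logic can see message counts — has already been packaged into the inversion results and the expressiveness lemmas cited above.
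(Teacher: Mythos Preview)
Your proposal has a genuine gap in the guarded cases (capability and abstraction), and it misses the one place where eta-normality is actually used.

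First, the replication case is a red herring: $P$ is \emph{finite}, i.e.\ (up to $\equiv$) contains no replication, so the case $P \equiv {!}P'$ with $P' \not\equiv \nil$ never arises. You do not need Lemma~\ref{l:inversion}(4) at all.

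Second, and more seriously, your induction does not go through in the capability case. From Lemma~\ref{l:inversion}(5) you get $Q \equiv \capa.Q'$ with $P' \Rcap R \bisMOD Q'$ and $Q' \Rcap S \bisMOD P'$. To apply your induction hypothesis you would need either $R \bisMOD Q'$ with $R$ a structural subterm of $P$ (it is not; it is a reduct of $P'$) and $R$ eta-normalised (no reason for this), or $P' \bisMOD S$ with $S$ eta-normalised (again, $S$ is a reduct of $Q'$ and need not be eta-normalised). Your sentence ``eta-normality keeps the counts stable under the required reductions'' is exactly the unjustified step. The same problem, in sharper form, occurs for abstraction via Lemma~\ref{l:inversion}(7).

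The paper avoids this by arguing asymmetrically. It assumes $\OPmess P > \OPmess Q$ and derives a contradiction by induction on the shape of $P$, using the \emph{one-directional} clauses of $\bisE$ rather than the bidirectional inversion results of $\bisMOD$. In the capability case this gives $Q' \Rcap Q'' \bisE P'$, so the induction recurses on the genuine subterm $P'$ (which is eta-normalised), and Lemma~\ref{l:mess_pref_RED} handles the reduct $Q''$ via the inequality $\OPmess{Q'} \geq \OPmess{Q''}$; no eta-normality of $Q''$ is needed because the hypothesis is only on $P$. The crux is then the abstraction case: if $\msg n \mid Q \Longrightarrow Q''$ with zero steps, one gets $P'\sub n x \bisE \msg n \mid (x)Q'$, forcing $P$ to have the shape $(x)(\msg x \mid (x)P'')$ --- an eta-redex, contradicting eta-normality of $P$. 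This is the \emph{only} place eta-normality is used, and your sketch does not isolate it.

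Your opening idea of counting messages via logical formulas is a different route, but the available modalities $\Fmeta{\capa}$ and $\Fmeta{?n}$ test reducts, not syntactic subterms, so they do not directly yield a formula for ``$\OPmess P \geq k$''; making that work would require essentially the same bookkeeping about reductions and eta-normality that you are trying to avoid.
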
   

\proof
Suppose $\OPmess P >  \OPmess Q$. We prove that we derive a
contradiction.   We proceed by a case analysis on the shape of
$P$ (i.e., the number of its operators)

\begin{enumerate}[$\bullet$]
\item $P = P_1 | P_2$. Then, by definition of $\bisE$, it must be $Q
  \equiv Q_1 | Q_2$ with $P_i \bisE Q_i$. Now, for some $i$, we should
  have $\OPmess{P_i} \neq \OPmess{Q_i}$, which is impossible, by the
  induction on the shape.

\item $P = \capa . P'$. Then, by definition of $\bisE$, it must be 
$Q \equiv  \capa . Q'$  and $Q' \Rcap Q'' \bisE P'$. 
 It will then be, by Lemma~\ref{l:mess_pref_RED}(1),  
 $\OPmess{P} = \OPmess{P'} > \OPmess{Q''}$, which is impossible, by the
induction on the shape.

\item $P = \abs x {P'}$. Then, by definition of $\bisE$, it must be 
$Q \equiv \abs x {Q'}$; moreover, for $n $ fresh, 
there must be $Q''$ such that $\msg n | \abs x {Q'} \Longrightarrow
Q'' \bisE P' \sub n x$. 

If the reduction $\msg n | \abs x {Q'} \Longrightarrow Q''$ contains
at least one step, then we would have $\OPmess{P'\sub n x} = \OPmess P
> \OPmess{Q'} \geq \OPmess {Q''}$ and therefore, by induction on the
shape, we could not have $Q'' \bisE P' \sub n x$.

Therefore, suppose  $ Q'' = \msg n | \abs x {Q'}$. 
Then   $Q'' \bisE P' \sub n x$ implies 
$P' \sub n x \equiv \msg n | \abs x {P''}$, for some $\abs x {P''}$
with $n$ fresh for $P$ and $Q$. 
Hence, since $n$ was chosen fresh, the original process $P$ must have
been  of
the form $\abs x {(\msg x | \abs  x {P''} )}$. This means that,
modulo $\equiv$,  $P$
was not eta-normalised, thus contradicting an hypothesis of the
lemma.

\item If $P = \msg n$ then by definition of $\bisE $ we should have $Q
  \equiv \msg n$, which is impossible, since the hypothesis is
  $\OPmess P > \OPmess Q $.\qed
\end{enumerate}\medskip

\begin{lem}
\label{l:numOpBIS}
Let $P,Q$ be two finite processes. 
Suppose $P \bisE Q$, and that  both $P$ and $Q$ are eta-normalised. Then 
$\OP P =  \OP Q$.
\end{lem}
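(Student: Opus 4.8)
The plan is to follow the proof of Lemma~\ref{l:numOpmess} almost verbatim, replacing $\OPmess{\cdot}$ by $\OP{\cdot}$ throughout and Lemma~\ref{l:mess_pref_RED}(1) by Lemma~\ref{l:mess_pref_RED}(2). Since $\bisE$ is symmetric and the claimed equality is symmetric, it suffices to show that $\OP P>\OP Q$ is impossible; I would prove this by induction on the number of operators of $P$ (its \emph{shape}), with a case analysis on that shape. Before entering the cases I would record a few elementary facts used repeatedly: being eta-normalised is a property of the $\equiv$-class and is inherited by the parallel, ambient, prefix, and abstraction-body subterms of a process; a name-for-variable substitution sends an eta-normalised process to an eta-normalised one (the eta redex pattern $\abs x{(\msg x\,|\,\abs x R)}$ requires the bound variable of an enclosing abstraction to occur in message position, which such a substitution cannot create); and both $\OP{\cdot}$ and $\OPmess{\cdot}$ are invariant under name-for-variable substitution and are decreased by exactly one by each application of the eta law.

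For $P\equiv\nil$ and $P\equiv\msg n$ the claim is immediate, since then $\OP P=0$. For $P\equiv P_1\,|\,P_2$ and $P\equiv\amb n{P'}$ I would use clauses~\reff{cE:par} and~(5) of Definition~\ref{d:bisEwithEqual} to decompose $Q$ accordingly; the matching components are $\bisE$-related, eta-normalised, and of strictly smaller shape, so the induction hypothesis applies to each and, summing (respectively transferring) the counts, contradicts $\OP P>\OP Q$. For $P\equiv\capa.P'$, clause~\reff{cE:in} gives $Q\equiv\capa.Q'$ with $Q'\Rcap Q''$ and $P'\bisE Q''$; since $Q''$ need not be eta-normalised I would pass to its eta normal form $\hat Q$, which is still $\bisE$-equivalent to $P'$ (by $\equivE\subseteq\bisE$, Lemma~\ref{l:etacorrect}, plus transitivity of $\bisE$) and satisfies $\OP{\hat Q}\le\OP{Q''}\le\OP{Q'}$ (using Lemma~\ref{l:mess_pref_RED}(2) iterated along $\Rcap$, together with its evident analogue for labelled transitions, and the fact that eta steps only decrease $\OP{\cdot}$). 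The induction hypothesis applied to $(P',\hat Q)$ then yields $\OP{P'}\le\OP{Q'}$, whence $\OP P=1+\OP{P'}\le1+\OP{Q'}=\OP Q$, a contradiction. The abstraction case $P\equiv\abs x{P'}$ splits exactly as in Lemma~\ref{l:numOpmess}: picking a name $n$ fresh for $P$ and $Q$, clause~\reff{c:abs} gives $Q\equiv\abs x{Q'}$ and $\msg n\,|\,\abs x{Q'}\Rar Q''$ with $P'\sub n x\bisE Q''$; if this reduction performs at least one step, its first step is forced to be the communication $\msg n\,|\,\abs x{Q'}\rar Q'\sub n x$, so $\OP{Q''}\le\OP{Q'\sub n x}=\OP{Q'}$ and one concludes as in the prefix case after re-normalising $Q''$; if it performs no step, then $Q''=\msg n\,|\,\abs x{Q'}$, so $P'\sub n x\bisE\msg n\,|\,\abs x{Q'}$, and clause~\reff{cE:par} together with Lemma~\ref{l:unCOMP} forces $P'\sub n x\equiv\msg n\,|\,\abs x{P''}$; as $n$ is fresh for $P$, this makes $P\equiv\abs x{(\msg x\,|\,\abs x{P'''})}$ itself an eta redex, contradicting that $P$ is eta-normalised.

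The one genuinely delicate point, and the step I expect to need the most care, is the eta bookkeeping: the reducts $Q''$ produced in the prefix and abstraction cases are not a priori eta-normalised, so the induction hypothesis cannot be applied to them directly; the remedy is to replace them by their eta normal forms, which is legitimate precisely because $\equivE\subseteq\bisE$ (keeping the replacement bisimilar to the relevant left-hand process) and because each eta step strictly decreases $\OP{\cdot}$, so that passing to the normal form can only strengthen the strict inequality we are trying to refute. Everything else is the same routine case analysis as in Lemma~\ref{l:numOpmess}, now driven by Lemma~\ref{l:mess_pref_RED}(2) instead of part~(1).
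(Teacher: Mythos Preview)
Your proposal is correct and follows the same skeleton as the paper: assume $\OP P>\OP Q$, induct on the shape of $P$, and derive a contradiction by case analysis. The $\nil$, message, parallel, ambient and capability cases are handled identically.

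The one genuine difference is in the abstraction case. The paper does \emph{not} repeat the ``no step'' analysis from Lemma~\ref{l:numOpmess}; instead it \emph{invokes} that lemma to obtain $\OPmess{P'\sub nx}=\OPmess{Q''}$, and uses this equality of message counts to rule out the zero-step case (since then $\OPmess{Q''}=1+\OPmess Q=1+\OPmess P>\OPmess{P'\sub nx}$) and to argue that only the message $\msg n$ is consumed, so that the reduction factors as $\msg n\,|\,\abs x{Q'}\rar Q'\sub nx\Longrightarrow Q''$. Your route is self-contained: you observe directly that $\msg n\,|\,\abs x{Q'}$ has the communication as its \emph{only} possible reduction, so the first step is forced, and you handle the zero-step case by the eta-redex argument (which is exactly the argument the paper gives inside Lemma~\ref{l:numOpmess}). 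Both work; the paper's is more modular, yours avoids the message-counting detour.

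One remark on the point you flag as ``genuinely delicate'': your re-normalisation of $Q''$ is in fact unnecessary. On closed processes, neither reductions nor the labelled transitions of Definition~\ref{d:statt} can create eta redexes: none of them rewrites under an abstraction, and the only substitutions they perform replace a variable by a \emph{name}, which cannot manufacture the pattern $\msg x$ for a bound variable $x$. Hence $Q''$ inherits eta-normality from $Q$, and the induction hypothesis applies to $(P',Q'')$ (resp.\ $(P'\sub nx,Q'')$) directly. Your extra normalisation step is harmless, just redundant.
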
   

\proof
Suppose $\OP P >  \OP Q$. We prove that we derive a
contradiction.  We proceed by induction on 
the shape of $P$. 

\begin{enumerate}[$\bullet$]

\item If $P = \nil$ then  $Q \equiv \nil$.

\item $P = P_1 | P_2$. Then, by definition of $\bisE$, it must be 
$Q \equiv Q_1 | Q_2$ with $P_i \bisE Q_i$. Now, for some $i$, we
should have $\OP{P_i} \neq \OP{Q_i}$, which is impossible, by the
induction on the shape.  

\item $P = \capa . P'$. Then, by definition of $\bisE$, it must be 
$Q \equiv  \capa . Q'$  and $Q' \Rcap Q'' \bisE P'$. 
Then
\[\OP{P'} = \OP P -1 > \OP Q -1 = \OP{Q'} \geq \OP{Q''} \]
Hence $\OP{P'} > \OP{Q''}$,
 which is impossible by the
induction on the shape. 

\item $P = \abs x {P'}$. Then, by definition of $\bisE$, it must be 
$Q \equiv \abs x {Q'}$; moreover, given $n $ fresh, 
there must be $Q''$ such that $\msg n | \abs x {Q'} \Longrightarrow
Q'' \bisE P' \sub n x$. 

Moreover, by the previous lemma we know that $\OPmess P = \OPmess Q$,
and we should also have  $\OPmess {P' \sub n x} = \OPmess {Q''}$

The reduction  $\msg n | \abs x {Q'} \Longrightarrow
Q''$ must contain at least one step, for otherwise 
we could not have $\OPmess {P' \sub n x} = \OPmess {Q''}$. 
For the same reason, during these reductions only the message $\msg n$ 
may have been consumed (no other messages).  
Thus $\msg n | \abs x {Q'} \Longrightarrow Q''$ can be written 
as 
\[ \msg n  | \abs x {Q'} \longrightarrow Q' \sub n x \Longrightarrow
Q''\,, \]
where $\OP{Q'} = \OP{Q' \sub n x}$
 and also $ \geq \OP{Q''}$ 
(Lemma~\ref{l:mess_pref_RED}(2)).

Therefore we have $\OP{P'\sub n x} = \OP P - 1 > \OP{Q} - 1 = \OP{Q'}
\geq \OP{Q''}$. By the induction on the shape, this is in
contradiction with $Q'' \bisE P' \sub n x$.\qed
\end{enumerate}\medskip

\begin{lem}
\label{l:strong}
Let $P,Q$ be two finite processes. 
Suppose $P \bisE Q$, with both $P$ and $Q$ eta-normalised. If $P \arr\mu P'$, 
then there is $Q' $ such that $Q \arr\mu Q' \bisE P'$.
Similarly,  if $P \longrightarrow  P'$, 
then there is $Q' $ such that $Q \longrightarrow  Q' \bisE P'$.
\end{lem}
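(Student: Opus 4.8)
The plan is to show that, once we restrict to \emph{finite eta-normalised} processes, the weak reductions and the stuttering segments hidden inside the clauses of $\bisE$ all collapse to trivial ones, thanks to a simple accounting argument on the number $\OP R$ of capabilities and abstractions occurring in a process $R$. The starting observation is that each of the rules \trans{Red-Open}, \trans{Red-In}, \trans{Red-Out}, \trans{Red-Com} removes exactly one capability or one abstraction occurrence, while $\equiv$ leaves $\OP R$ unchanged; hence \emph{every} $\rar$-step strictly decreases $\OP R$ by one, so that $\OP P-\OP{P'}$ equals \emph{exactly} the number of reduction steps in any $P\Longrightarrow P'$, and likewise $\arr\capa$ and $\arr{?n}$ each decrease $\OP R$ by one while $\arr{\msg n}$ preserves it. Two further easy facts will be used: first, no $\rar$-step and no $\arr\mu$-transition creates a new $\eta$-redex (any abstraction subterm of a contractum is already a subterm of the redex), so being eta-normalised is preserved by $\rar$ and by $\arr\mu$; second, by Corollary~\ref{c:MOD-E} together with Lemmas~\ref{l:numOpmess} and~\ref{l:numOpBIS}, any two finite eta-normalised processes related by $\bisE$ have equal $\OP R$ (and equally many messages).

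The proof is then a case analysis on $\mu$. If $\mu=\msg n$, writing $P\equiv\msg n|P'$ and decomposing $P\bisE Q$ with clauses~\reff{cE:par} and~\reff{cE:msg} of Definition~\ref{d:bisEwithEqual} (using Lemma~\ref{l:unCOMP}) at once yields $Q\equiv\msg n|Q_b$ with $P'\bisE Q_b$, hence $Q\arr{\msg n}Q_b$. If $\mu=\capa$, write $P\equiv\capa.P_1|P_2$; clause~\reff{cE:par} and Lemma~\ref{l:unCOMP} give $Q\equiv\capa.Q_1|Q_b$ with $\capa.P_1\bisE\capa.Q_1$ and $P_2\bisE Q_b$, and clause~\reff{cE:in} provides $Q''$ with $Q_1\Rcap Q''$ and $P_1\bisE Q''$. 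Since $\capa.P_1$, $\capa.Q_1$ (as components of the eta-normalised $P,Q$) and $Q''$ (by the preservation fact) are finite and eta-normalised, we get $\OP{Q''}=\OP{P_1}=\OP{Q_1}$; but every $\rar$-step and every $\arr{}$-step occurring inside $\Rcap$ strictly decreases $\OP R$, so this equality forces $Q''\equiv Q_1$, whence $P_1\bisE Q_1$. Then $Q\arr\capa Q_1|Q_b$, and $Q_1|Q_b\bisE P_1|P_2=P'$ since $\bisE$ is a congruence (Corollary~\ref{c:bisEcong}). If $\mu=?n$, write $P\equiv(x)P_1|P_2$ and $P'\equiv P_1\sub n x|P_2$; clauses~\reff{cE:par} and~\reff{c:abs} (via Lemma~\ref{l:unCOMP}) give $Q\equiv(x)Q_1|Q_b$ with $P_2\bisE Q_b$ and $(x)P_1\bisE(x)Q_1$, and, instantiating clause~\reff{c:abs} at the name $n$, a $Q''$ with $\msg n|(x)Q_1\Longrightarrow Q''$ and $P_1\sub n x\bisE Q''$. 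The only redex in $\msg n|(x)Q_1$ is the reception, so that derivation factors as $\msg n|(x)Q_1\rar Q_1\sub n x\Longrightarrow Q''$; since $\OP{Q''}=\OP{P_1\sub n x}=\OP{P_1}=\OP{Q_1}=\OP{Q_1\sub n x}$ (the outer equalities because substituting a name for a variable preserves $\OP R$, the inner ones from $P_1\sub n x\bisE Q''$ and $(x)P_1\bisE(x)Q_1$), the accounting forces the tail $Q_1\sub n x\Longrightarrow Q''$ to be empty, i.e.\ $Q''\equiv Q_1\sub n x$; hence $P_1\sub n x\bisE Q_1\sub n x$ and $Q\arr{?n}Q_1\sub n x|Q_b\bisE P_1\sub n x|P_2=P'$, again by congruence.

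For the second statement, about $\rar$: Lemma~\ref{l:tau-action} gives $Q'$ with $Q\Longrightarrow Q'$ and $P'\bisE Q'$; by the preservation fact both $P'$ and $Q'$ are finite and eta-normalised, so $\OP{Q'}=\OP{P'}=\OP P-1=\OP Q-1$, and the accounting shows that $Q\Longrightarrow Q'$ consists of a single step, i.e.\ $Q\rar Q'$. The main obstacle I anticipate is the bookkeeping across the various stuttering clauses, and in particular nailing down the two structural facts: that no transition ever manufactures an $\eta$-redex, and, in the $?n$ case, that the message just placed in parallel is the only active redex; once strict monotonicity of $\OP R$ under $\rar$ is in place, the remaining steps go through mechanically.
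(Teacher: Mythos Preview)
Your proof is correct and follows the same approach as the paper: use the fact that $\bisE$-related eta-normalised finite processes have the same $\OP$ (and $\OPmess$), together with the strict decrease of $\OP$ along every reduction or labelled step, to force the weak matching transitions supplied by $\bisE$ to collapse to single strong steps. The paper's own proof is a two-line sketch (``if $Q$ performed more than one action, then it would consume one more prefix or message than $P$''), and what you have written is exactly the case analysis behind that sentence. One tiny bookkeeping point in your $?n$ case: you silently assume the derivation $\msg n\,|\,(x)Q_1\Longrightarrow Q''$ has at least one step before factoring it; the zero-step case is ruled out by the very same accounting (it would give $\OP{Q''}=1+\OP{Q_1}>\OP{Q_1}=\OP{P_1\sub n x}$), so you may want to say this explicitly.
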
 

\begin{proof}
From Lemmas~\ref{l:numOpBIS} and \ref{l:numOpmess}:  if $Q$ performed
more than one action, then it would consume one more  prefix or message than 
$P$.
\end{proof}

\begin{thm}
\label{t:bisSTR}
Let $P,Q$ be processes of $\MAIFsyn$. 
Suppose  $P \bisE Q$, with both $P$ and $Q$ eta-normalised. Then $P \equiv 
Q$. 
\end{thm}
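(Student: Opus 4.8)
The plan is to prove $P \equiv Q$ by induction on the structure of $P$ (more precisely, on $\OP{P} + \OPmess{P}$, or on the well-founded order $<$ from Definition~\ref{d:bigind}), using the \emph{strong} matching property established in Lemma~\ref{l:strong}. Since both $P$ and $Q$ are eta-normalised and in $\MAIFsyn$, they decompose, up to $\equiv$, into parallel compositions of $\nil$, messages, replicated messages, ambients, replicated ambients, guarded processes $\capa.P_0$, and input processes $(x)P_0$ with $P_0$ finite. The first step is to read off, from the clauses of $\bisE$ (equivalently $\bisMOD$, by Corollary~\ref{c:MOD-E}), that $P$ and $Q$ must have matching top-level structure: using the inversion results of Lemma~\ref{l:inversion} together with the counting Lemmas~\ref{l:numOpBIS} and~\ref{l:numOpmess}, a single $P'$ (message, ambient, guard, input) forces $Q$ to be single of the same syntactic kind, and a parallel composition forces a parallel decomposition of $Q$ with bisimilar components of strictly smaller size.

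First I would handle the base and easy cases: $P \equiv \nil$ gives $Q \equiv \nil$ by Lemma~\ref{l:nil-bis}; $P \equiv \msg n$ gives $Q \equiv \msg n$ directly from clause~\reff{cE:msg}; $P \equiv \amb n {P'}$ gives $Q \equiv \amb n {Q'}$ with $P' \bisE Q'$, and since the eta normal form distributes over ambients, $P', Q'$ are eta-normalised, so induction yields $P' \equiv Q'$ hence $P \equiv Q$. For a parallel composition $P \equiv P_1 | P_2$ with both $P_i$ nontrivial, clause~\reff{cE:par} gives $Q \equiv Q_1 | Q_2$ with $P_i \bisE Q_i$; here one must check the decomposition can be chosen with each $Q_i$ eta-normalised and of strictly smaller complexity (the complexity drops because $P$ is genuinely composite), and then conclude by the induction hypothesis. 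The replication cases ($!\amb n {P_0}$, $!\capa.P_0$, $!(x)P_0$) reduce to the single-component cases via the structural laws $!(R_1|R_2) \equiv !R_1|!R_2$ and the fourth clause of Lemma~\ref{l:inversion}, exactly as in the proof of that lemma.

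The heart of the argument is the guard case $P \equiv \capa.P'$ and the input case $P \equiv (x)P'$. For $P \equiv \capa.P'$: clause~\reff{cE:in} gives $Q \equiv \capa.Q'$ with $Q' \Rcap Q'' \bisE P'$; but by Lemma~\ref{l:strong} applied in the reverse direction (using $Q \bisE P$ and $P \arr{\capa} P'$) together with the fact that $\OP{Q} = \OP{P}$ and $\OPmess{Q} = \OPmess{P}$, the reduction $Q' \Rcap Q''$ must in fact be empty, so $Q'' = Q' \bisE P'$; since $P_0 := P'$ is finite and eta-normalised (it is a subterm of an eta-normalised process) and so is $Q'$, and $P'$ is strictly smaller, induction gives $P' \equiv Q'$, hence $P \equiv Q$. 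The input case $P \equiv (x)P'$ is analogous but more delicate: we pick $n$ fresh, obtain $Q \equiv (x)Q'$ from clause~\reff{c:abs}, and must show $P' \equiv Q'$; applying clause~\reff{c:abs} and Lemma~\ref{l:strong}, the reduction $\msg n | (x)Q' \Longrightarrow Q''$ consumes only $\msg n$ and then halts, so $Q'' \equiv Q' \sub n x \bisE P' \sub n x$; since $n$ is fresh for $P, Q$, we can strip off the renaming (using Lemma~\ref{l:rename_bis} to go from $P'\sub n x \bisE Q'\sub n x$ back to a relation on $P',Q'$, or simply apply the induction hypothesis to $P'\sub n x$ and $Q'\sub n x$, which are finite and eta-normalised, obtaining $P'\sub n x \equiv Q'\sub n x$ and hence $P' \equiv Q'$ by freshness of $n$).

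The main obstacle I anticipate is the bookkeeping in the parallel-composition case: ensuring the decomposition $Q \equiv Q_1 | Q_2$ can be chosen so that both components are eta-normalised and genuinely smaller, so that the induction is well-founded. This requires observing that eta-normalisation commutes with parallel decomposition and that a nontrivial split of $P$ forces a nontrivial split of $Q$ (otherwise one component of $Q$ would be $\nil$ while the corresponding component of $P$ is not, contradicting Lemma~\ref{l:nil-bis} after composing with $\bisE$). A secondary subtlety is that in the guard and input cases we must rule out \emph{any} preliminary reductions on the $Q$ side; this is precisely what Lemmas~\ref{l:numOpmess}, \ref{l:numOpBIS} and~\ref{l:strong} were set up to deliver, so the remaining work is just to invoke them in the right direction and check the eta-normalisation hypothesis is inherited by the relevant subterms. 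The rest is a routine case analysis.
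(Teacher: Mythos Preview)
Your proposal is correct and follows essentially the same approach as the paper: structural induction on $P$, using Lemma~\ref{l:strong} (enabled by the finiteness of guarded continuations in $\MAIFsyn$) to collapse the weak transitions in the capability and input clauses to strong ones, then applying the inductive hypothesis to strictly smaller, still eta-normalised subterms. The only notable difference is in the input case: the paper uses congruence of $\bisE$ to obtain $\msg n \,|\, P \bisE \msg n \,|\, Q$ and then invokes Lemma~\ref{l:strong} directly on the reduction $\msg n \,|\, P \longrightarrow P'\sub n x$ (the unique reduct on the $Q$-side being $Q'\sub n x$), which is slightly more direct than your route via counting, but both arguments are sound. One small caveat: the measure $\OP{P}+\OPmess{P}$ you mention does not strictly decrease under $!$ or $\amb n{\cdot}$, so you should stick with plain structural induction (or the order $<$), as the paper does.
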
 

\proof
By induction on the   shape of
$P$. 

%maximal depth of  nesting of operators in $P$ or
%$Q$.  If this maximum is $0$ then $P = Q = \nil$.  Suppose this
%maximum is greater than $0$.  We proceed by induction on the

\begin{enumerate}[$\bullet$]

\item If $P = \nil$ then also $Q \equiv \nil$.
\item Suppose  $P = P_1 | P_2$.  Then, by definition of $\bisE$, $Q \equiv Q_1 
  | Q_2$ with $P_i \bisE Q_i$. By induction, $P_i \equiv Q_i$. Hence
  also $P \equiv  Q$.

\item Suppose $P=!P'$. Then, by Lemma~\ref{l:inversion}, there are $r$
  and some $(Q_i)_{1\leq i\leq r}$ such that $$Q\equiv
  !Q_1|(!)Q_2|\dots|(!)Q_r , $$ and $P'\bisE Q_i$ for all $i$. By
  induction, $P'\equiv Q_i$ for all $i$, so finally
  $Q\equiv!Q_1\equiv P$.

\item $P = \capa . P'$. By definition of $\bisE$,
 $Q \equiv \capa . Q'$ and there is $Q''$ such that $Q' \Rcap
 Q'' \bisE P'$. By construction of $\MAIFsyn$, $P',Q'$ are finite, so
 that we may apply Lemma~\ref{l:strong}. Then it must be $Q' = Q''$,
 and therefore by induction $Q' \equiv P'$. We conclude that $P \equiv Q$.

\item $P = \msg n, \amb n {P'}$: straightforward.

\item $ P = \abs  x {P'}$. By definition of $\bisE$, we have 
$Q \equiv \abs x {Q'}$, and again by construction of $\MAIFsyn$, $P',Q'$ 
are finite. Since $\bisE$ is a congruence,  given $n$, 
$\msg n | P \bisE \msg n | \abs x {Q'}$.  We have 
$\msg n | P  \longrightarrow P' \sub n x$, hence by
Lemma~\ref{l:strong}, $ \msg n | \abs x {Q' }\longrightarrow Q' \sub n
x \bisE  P' \sub n x$.  By induction, $P' \sub n x \equiv Q' \sub n x$;
since this holds for any $n$, $P' \equiv Q'$.\qed 
\end{enumerate} 

\begin{cor}
\label{c:bisSTR}
Let $P,Q$ be processes of \MAIFsyn. Then  $P\, \eqL\, Q$ iff  $P \equivE Q$.
\end{cor}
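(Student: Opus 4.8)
The plan is to derive the statement by chaining together results already established, the crucial one being Theorem~\ref{t:bisSTR}. Since $\MAIFsyn$ is a subcalculus of $\MA$, Corollary~\ref{c:chara} tells us that $P\eqL Q$ holds if and only if $P\bisMOD Q$; so it suffices to prove that, on $\MAIFsyn$, $P\bisMOD Q$ iff $P\equivE Q$. The implication from right to left is immediate: $\equivE\,\subseteq\,\bisE$ by Lemma~\ref{l:etacorrect}, and $\bisE$ and $\bisMOD$ coincide by Corollary~\ref{c:MOD-E}, so $P\equivE Q$ gives $P\bisMOD Q$ directly.

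For the converse direction the idea is to bring both processes to eta normal form and then invoke Theorem~\ref{t:bisSTR}. Let $P_0$ and $Q_0$ be the eta normal forms of $P$ and $Q$, which exist and are unique up to $\equiv$ by Lemma~\ref{l:eta_nf}. Since $P\etarew^* P_0$ and every $\etarew$-step is an instance of the eta law applied to a subterm (modulo $\equiv$), and $\equivE$ is a congruence containing $\equiv$ and the eta law, we get $P\equivE P_0$ and likewise $Q\equivE Q_0$. Applying Lemma~\ref{l:etacorrect} and Corollary~\ref{c:MOD-E} then yields $P\bisMOD P_0$ and $Q\bisMOD Q_0$; combining these with the hypothesis $P\bisMOD Q$ and the fact that $\bisMOD$ is an equivalence relation (Lemma~\ref{l:MODeq}) gives $P_0\bisMOD Q_0$, hence $P_0\bisE Q_0$ by Corollary~\ref{c:MOD-E}. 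Before applying Theorem~\ref{t:bisSTR} one must check that $P_0$ and $Q_0$ are still in $\MAIFsyn$: eta-rewriting only erases subterms and never introduces a replication, so it preserves the syntactic constraint of Definition~\ref{d:MAIFsyn} that the continuation of a guard be finite, and it trivially respects the other productions of the grammar. Theorem~\ref{t:bisSTR} then gives $P_0\equiv Q_0$, so $P\equivE P_0\equiv Q_0\equivE Q$, that is $P\equivE Q$.

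The only step that is more than bookkeeping over earlier corollaries is the observation that $\MAIFsyn$ is stable under eta-normalisation, which I expect to be the (minor) obstacle: it must be spelled out because the whole argument behind Theorem~\ref{t:bisSTR} — the prefix/message counting of Lemmas~\ref{l:numOpmess}--\ref{l:strong} — relies essentially on the syntactic finiteness built into $\MAIFsyn$, and would fail for the semantically defined \MAIF. Everything else is a routine translation between $\eqL$, $\bisMOD$ and $\bisE$ using Corollaries~\ref{c:chara} and~\ref{c:MOD-E}; the detour through $\bisE$ is made only because Theorem~\ref{t:bisSTR} and Lemma~\ref{l:etacorrect} happen to be phrased in terms of it.
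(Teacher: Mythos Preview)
Your proof is correct and follows essentially the same chain of implications as the paper's own proof. You are in fact more careful than the paper on two points: you spell out the passage to eta normal forms before invoking Theorem~\ref{t:bisSTR} (the paper's one-line proof silently absorbs this step into the citation of that theorem), and you invoke Corollary~\ref{c:chara} for $\eqL=\bisMOD$ on all of \MA, whereas the paper cites Theorem~\ref{t:bisEomega}, which strictly speaking covers only finite processes.
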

 
\begin{proof}
First, $\eqL\subseteq\,\bisMOD$ by Theorem~\ref{t:bisEomega}, and
$\bisMOD\,\subseteq\,\equivE$ by Theorem~\ref{t:bisSTR}. Conversely,
$\equivE\,\subseteq\,\bisMOD$ by Lemma~\ref{l:etacorrect}, and
$\bisMOD\,\subseteq\eqL$ by Theorem~\ref{t:soundnessLogBis}.
\end{proof}

%%% Local Variables: 
%%% mode: latex
%%% TeX-master: "part2"
%%% Local IspellDict: british
%%% Local IspellPersDict: ./.ispell
%%% End: 

\subsection{Synchronous
communications}\label{s:synchronous} 

We now consider a variant of Mobile Ambients where communication is
\emph{synchronous}. For this the production $\msg \eta $ for messages
in the grammar of MA in Table~\ref{ta:syn} is replaced by the
production $\msg \eta . P$.  Communication is thus synchronous: in
$\msg \eta . P$, the process $P$ is blocked until the message $\msg
\eta$ has been consumed.  Reduction rule \trans{Red-Com} becomes:
\[
\shortaxiomC { \msg n .Q | \abs x P \longrightarrow Q| P \sub n x }
{Red-Com}
\]
In the remainder of this subsection, terms belonging to the
synchronous version of the calculus will be referred to simply as
`processes'. Since our goal here is to study how the result given by
Corollary~\ref{c:bisSTR} changes when moving to a synchronous
calculus, we focus directly on \MAIFss, the set of all terms of the
synchronous calculus in which processes guarded by prefixes are finite
(along the lines of Definition~\ref{d:MAIFsyn} that introduces
\MAIFsyn). We shall see that in \MAIFss, the eta law fails and the
equivalence relation induced by the logic is precisely structural
congruence.

% Since the modal logic does not talk about the I/O primitives, it is
% interesting to examine variations of these primitives, to see the
% effect on the equality induced by the logic.  In MA communication is
% asynchronous: since a message has no continuations, no process is
% blocked until the message is consumed.  The most natural variation
% consists in making communication \emph{synchronous}. For this the
% production $\msg \eta $ for messages in the grammar of MA in
% Table~\ref{ta:syn} is replaced by the production $\msg \eta . P$.
% Communication is thus synchronous: in $\msg \eta . P$, the process $P$
% is blocked until the message $\msg \eta$ has been consumed.  Reduction
% rule \trans{Red-Com} becomes:
% \[
% \shortaxiomC
% { \msg n .Q |
% \abs x  P \longrightarrow 
% Q| P \sub n x
% } 
% {Red-Com}
% \]
% The communication act liberates, at the same time, both the
% continuation $P$ of the abstraction \emph{and} the continuation $Q$ of the
% message.
% We write \MAsync\ for the resulting synchronous calculus where all subterms
% guarded by prefixes are finite, similarly to $\MAIFsyn$.

% Synchrony leads to some important modifications in the
% assertions and in the proofs of the results in the paper.  

% \begin{theorem}
% \label{t:syncEQLequiv}
% On closed  \MAsync\ processes, relations  $\eqL$ and $\equiv$ coincide.
% \end{theorem} 

In order to show this, we have to port the results about
(asynchronous) \MA{} to the synchronous case.  The co-inductive
characterisation in terms of $\bisMOD$ (that is,
Theorems~\ref{t:soundnessLogBis} and~\ref{t:bisEomega}) remains true,
provided that in the definition of intensional bisimulation the communication
clauses are replaced by the following:

\begin{enumerate}[$\bullet$]
\item 
  If $P \arr{!n} {P'}$, then there is $Q'$ such that 
  $Q   \Ar{! n}  {Q'}$ and $P' \RR Q'$. % used to be \bisMOD instead of \RR
\item
 If $P   \arr{? n}  {P'}$
then
there is $Q'$ such that $Q   \Ar{? n}  {Q'}$ and $P' \RR Q'$.
\end{enumerate}

Accordingly, we have to change the definition of syntactical
intensional bisimulation by adapting
the following clauses for communicating processes:
%These below are the clauses that change in the definition of
%syntactical intensiona $\bisE$: 
\begin{enumerate}[$\bullet$]
\item
  
  If $P \equiv\abs x{P'}$ then there is $Q'$ such that $Q \equiv \abs
  x{Q'}$ and for all $n$ there is $Q''$ such that $Q' \sub n x
  \Longrightarrow Q'' $ and $P' \sub n x \RR Q''$.
  
\item If $P \equiv \msg n. {P'}$ then there is $Q'$ such that $Q
  \equiv \msg n . {Q'}$ and $Q' \Longrightarrow Q'' \RR P'$.
 \end{enumerate} 
 
 As shown in~\cite{Part1}, formulas similar to those that are needed
 in the asynchronous case can be derived for the synchronous calculus.
 In particular, we have:\medskip
%  tool formulas involved for this characterisation are more
% elaborated than in the synchronous case
% (see \cite{Part1}). We just summarize here the important results:

% \DS{is there something interesting about the proofs for the
%   synchronous calculus wrt the asynchronous one to be pointed out?}
% %
% \daniel{I would say that the main difference is the fact that in
%   the proof of correctness and completeness, we do not reason on
%   eta-normalised processes -- this is mentioned below.}

\begin{lem}[\cite{Part1}]~
\begin{enumerate}[$\bullet$]
\item For all $\A$, there is a formula $\Fmeta {?n}.\fillhole{\A}$ such that
for all $P$, $P\sat\Fmeta{?n}.\fillhole{\A}$ iff there is $P'$ such that
$P\equiv\abs x P'$ and $P'\sub n x\Rar\sat\A$.
\item For all $\A$, there is a formula $\Fmeta {!n}.\fillhole{\A}$ such that
for all $P$, $P\sat\Fmeta{!n}.\fillhole{\A}$ iff there is $P'$ such that
$P\equiv\msg n. P'$ and $P'\Rar\sat\A$.
\end{enumerate}
\end{lem}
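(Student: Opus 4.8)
The plan is to replay, in the synchronous setting, the constructions that yield the may-modalities of Lemma~\ref{l:capaform} and the necessity-modalities that accompany them. Both target formulas are of the same shape: a conjunct $\phi_{\mathrm{shape}}$ forcing $P$ to be a single input (respectively a single output on $n$), together with a conjunct that composes $P$ with a probe process triggering exactly one rendezvous and then reads off $\A$ on the residual using $\sometime$ and the linear implication $\rtr$. The one feature of the synchronous calculus that is genuinely new with respect to~\ref{l:capaform} is that an output $\msg n.P'$ now carries a continuation, so there is no ``bare message'' leaf formula available to use as a probe; the probe must be a concrete process such as $\msg n.(\amb y\nil)$ (respectively a discarding input $(x)(\amb y\nil)$), with $y$ a name chosen fresh, and this probe itself has to be pinned down by a formula.

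First I would fix, for a fresh marker name $y$, formulas $\phi^{y}_{n}$ and $\psi^{y}$ satisfied exactly by $\msg n.(\amb y\nil)$ and by $(x)(\amb y\nil)$ respectively; these are obtained from the synchronous necessity-modalities, exploiting that $\msg n.(\amb y\nil)$ is the only process, up to $\equiv$, which is a single output on $n$ whose continuation reduces only to $\amb y\nil$, and symmetrically for the input probe. Granting $\phi^{y}_{n}$, set
\[
\Fmeta{?n}.\fillhole{\A}~\eqdef~\phi_{\mathrm{in}}~\land~\thereis{y}{\big(\neg\copyright y~\land~(\phi^{y}_{n}~\rtr~\sometime(\A~|~\amb y\zero))\big)}
\]
where $\phi_{\mathrm{in}}$ characterises single input processes (built, as in~\cite{Part1}, from $\neg\zero$, $\neg(\neg\zero|\neg\zero)$, the negations of the ``ambient'' and ``output'' shape formulas, and an interaction test witnessing that the process can receive). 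The verification is a short case analysis. If $P\equiv(x)P'$ then $P$ is single; in $P~|~\msg n.(\amb y\nil)$ both components are guarded, so the first reduction must be the \trans{Red-Com} step, producing $P'\sub n x~|~\amb y\nil$; since $y$ is fresh for $P$, the marker $\amb y\nil$ neither interacts nor is reproduced, so every reduct of the composition is of the form (a reduct of $P'\sub n x$)$~|~\amb y\nil$ while the composition itself has no top-level $\amb y\nil$ component; hence $\sometime(\A~|~\amb y\zero)$ holds of the composition exactly when $P'\sub n x\Rar\sat\A$, and the freshness of $y$ rules out the spurious zero-step case. Conversely, if $P$ is single but not an input it cannot interact with the probe and the guardedness argument shows the composition is stuck with no $\amb y\nil$ component, so the second conjunct fails. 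The output modality $\Fmeta{!n}.\fillhole{\A}$ is obtained symmetrically, using $\phi_{\mathrm{out},n}$ for single outputs on $n$ and probing with $(x)(\amb y\nil)$: the forced rendezvous turns $P~|~(x)(\amb y\nil)$ into $P'~|~\amb y\nil$, and the same argument gives the equivalence with $P'\Rar\sat\A$.

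The step I expect to be the real obstacle is establishing the shape formulas $\phi_{\mathrm{in}}$, $\phi_{\mathrm{out},n}$ and, underlying them, the synchronous necessity-modalities: one has to separate, using only the logical connectives, a single input from a single output, from a capability prefix, and from an ambient. Ambients are directly visible through $\amb\eta\cdot$; capability prefixes are inert at top level and are exposed by the failure of every interaction test against a single non-ambient partner; the delicate point is telling an input apart from an output, for which one exploits that a rendezvous with a discarding input $(z)(\amb z\nil)$ turns a sender $\msg n.P'$ into a process exhibiting an $n$-ambient, whereas a receiver exhibits the transmitted name only inside its own continuation --- this is precisely the combinatorial content folded into the corresponding results of~\cite{Part1}. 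The remaining points --- ensuring the marker $y$ is fresh not only for $P$ (handled by $\neg\copyright y$) but also for $n$ and for the names of the plugged-in $\A$, and avoiding capture of $\A$'s variables by the $\thereis{y}{}$ quantifier --- are routine name-bookkeeping, identical to the asynchronous development and independent of the communication discipline. Finally, the replicated variants alluded to after the statement follow by composing these formulas with the sequential- and depth-selectivity machinery of Lemmas~\ref{l:repcapa} and~\ref{l:charforMAIF}, again verbatim from the asynchronous case.
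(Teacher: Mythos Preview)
The paper does not prove this lemma: the bracket~\cite{Part1} in the heading and the sentence immediately preceding the statement (``As shown in~\cite{Part1}, formulas similar to those that are needed in the asynchronous case can be derived for the synchronous calculus'') make clear that the result is imported wholesale from the companion expressiveness paper, and no argument is supplied here. So there is no in-paper proof to compare your proposal against.

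Your reconstruction is a reasonable outline of how such formulas are built --- force the right syntactic shape, compose with a complementary probe carrying a fresh marker ambient, and read off $\A$ under $\sometime$ via $\rtr$ --- and the correctness reasoning you give (both components are guarded, so the first step must be the rendezvous; the fresh marker $\amb y\nil$ is inert and uniquely locatable in every reduct; the zero-step case of $\sometime$ is excluded because the marker is still under a prefix initially) is sound. You also correctly identify where the real work lies: in the shape formulas $\phi_{\mathrm{in}}$, $\phi_{\mathrm{out},n}$ and in pinning down the probes $\phi^y_n$, $\psi^y$. Note, though, that as written your definition of $\phi^y_n$ invokes ``the synchronous necessity-modalities'', which are essentially dual to what the lemma is supposed to deliver, so there is a bootstrapping circle to break. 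The resolution --- and this is exactly the content that~\cite{Part1} supplies --- is to characterise the specific probe processes $\msg n.(\amb y\nil)$ and $(x)(\amb y\nil)$ directly (they are single, non-ambient, of sequentiality degree~$1$, with a reduction-free continuation $\amb y\nil$) rather than through a general modality, and only then build the general may/must modalities on top of those concrete characterisations. Your proposal is thus a correct plan whose hard step is precisely the one the paper delegates to the cited reference.
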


% \daniel{dans le paragraphe ci-dessous, il faudrait faire explicitement
%   allusion qu'on ne parle que du calcul fini non? ou alors de \MAIFss?
%   enfin bref il faut preciser de quoi on parle -- l'allusion aux
%   Sections 2 et 3 est peut-etre un peu osee}
% \begin{etienne}
%   Maintenant ca a l'air juste, pas touche!
% \end{etienne}

Using this result, the soundness and completeness proofs for $\bisMOD$
with respect to $\eqL$ follow exactly the same scheme as in the
asynchronous case (see Sections~\ref{s:char} and~\ref{s:inductive}),
except that we do not need to reason on eta-normalised terms.

\begin{thm}[Soundness and completeness of $\bisMOD$]
  Given two processes $P$ and $Q$ of synchronous Mobile Ambients,
  $P\,\bisMOD\,Q$ iff $P\,\eqL\,Q$.
\end{thm}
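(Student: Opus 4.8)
The plan is to replay, essentially line by line, the development of Sections~\ref{s:char} and~\ref{s:inductive}, using the two modified communication clauses of $\bisMOD$ and the two modified communication clauses of the syntactic relation $\bisE$ displayed just above, and replacing the asynchronous expressiveness results (the formula $\msg n$ and the context $\Fmeta{?n}.\fillhole{\cdot}$) by the synchronous ones recalled from~\cite{Part1} (the contexts $\Fmeta{?n}.\fillhole{\cdot}$ and $\Fmeta{!n}.\fillhole{\cdot}$, together with their necessity counterparts). The one structural simplification is that in the synchronous calculus $\msg n.P$ is a guard with a continuation, wholly on a par with a capability prefix; consequently the eta law disappears and there is no longer any need to normalise terms. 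First I would fix the bookkeeping definitions accordingly: $\ds(\msg n.P)\defiDS 1+\ds(P)$ and $\ds((x)P)\defiDS 1+\ds(P)$ (dropping the eta-normal-form clause of Definition~\ref{defds}), $\dd(\msg n.P)\defiDS 0$, and $\frozen{\msg n.P}\defiDS\{P\}\cup\frozen{P}$.

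For soundness ($\bisMOD\subseteq\eqL$), observe that the logic has no connective mentioning communication, so the induction on the size of the formula in the proof of Theorem~\ref{t:soundnessLogBis} is literally unchanged; it appeals only to clauses~\reff{cMOD:par}, \reff{c:nil}, \reff{cMOD:amb}, \reff{c:tau} and to congruence of $\bisMOD$. Hence soundness again reduces to congruence, which I would obtain by re-establishing $\bisMOD=\bisE$ (the synchronous analogue of Corollary~\ref{c:MOD-E}): $\bisE$ is preserved by all operators by construction (Lemma~\ref{l:pres-Bis}); $\bisE$ is preserved under reductions (Lemma~\ref{l:tau-action}), where the only new case is the synchronous rule \trans{Red-Com}, $\msg n.P_1 \mid (x)P_2 \rar P_1\mid P_2\sub n x$, handled by the modified output and input clauses of $\bisE$ exactly as the asynchronous \trans{Red-Com} was handled by clauses~\reff{cE:par}, \reff{cE:msg}, \reff{c:abs}; and the inversion results (Lemmas~\ref{l:unCOMP}, \ref{l:inversion}) carry over, the output case now reading ``$\msg n.P\bisMOD Q$ iff $Q\equiv\msg n.Q'$ with $P~\Rar\bisMOD~Q'$ and $Q'~\Rar\bisMOD~P$'', symmetric to the capability case~\reff{cMOD:out}.

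For completeness ($\eqL\subseteq\bisMOD$) I would replay Section~\ref{s:inductive}. Lemma~\ref{dsred} and the degree formulas $\F_{\ds(P)}$, $\F_{\dd(P)}$ of Lemmas~\ref{l:sdc} and~\ref{l:ddc} are re-proved with $\msg n.P$ treated as a prefix, using the new modality $\Fmeta{!n}.\fillhole{\cdot}$. The well-founded order, the inductive relation $\bigind$ (adding an output clause beside the capability clause of Definition~\ref{d:bigind}), the finiteness and subject-reduction properties of frozen subterms (Lemmas~\ref{l:finite}, \ref{l:predictive}), and the construction of local characteristic formulas (Lemmas~\ref{l:img2fr}, \ref{l:fr2fr}, \ref{l:fr2img}, \ref{l:keylemma}) all go through, except that the entire $\mathsf{NonEta}$ apparatus in the proof of Lemma~\ref{l:img2fr} is simply deleted: an input reduction $(x)Q'\mid\msg n\Rar\cdots$ is now a reduction $\msg n.((x)Q')\rar\cdots$, and the guarded subterm $\msg n.P'$ is treated by the very same template as $\capa.P'$, with $\Fmeta{!n}.\fillhole{\cdot}$ and its necessity counterpart in place of $\Fmeta{\capa}.\fillhole{\cdot}$ and $\Ftame{\capa}.\fillhole{\cdot}$. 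One then concludes exactly as in Theorem~\ref{t:completeness}: for $P\not\bisMOD Q$, the formula characterising $P$ on $\imgset Q$ holds of $P$ but not of $Q$, so $P\not\eqL Q$.

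The genuinely new content is narrow: it amounts to checking that the synchronous output prefix slots uniformly into every place where a capability prefix previously appeared — degree bookkeeping, frozen subterms, the guarded-subterm clauses of $\bigind$ and of the local characteristic formulas. I expect the main obstacle to be the re-verification of the long induction of Lemmas~\ref{l:img2fr} and~\ref{l:keylemma} once $\mathsf{NonEta}$ is removed, since several steps there were phrased specifically to cope with eta-equivalence and must be confirmed to genuinely simplify rather than break; everything else is routine transcription.
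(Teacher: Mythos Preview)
Your proposal is correct and matches the paper's own approach: the paper simply states that the soundness and completeness proofs follow exactly the same scheme as in Sections~\ref{s:char} and~\ref{s:inductive}, except that one no longer needs to reason on eta-normalised terms, which is precisely the plan you have spelled out in detail. Your observation that the synchronous output prefix $\msg n.P$ slots in everywhere alongside the capability prefix, and that the $\mathsf{NonEta}$ machinery can be dropped, is exactly the simplification the paper alludes to.
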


We now derive the counterpart of the properties we have established
above for \MAIFsyn{} about the number of messages and prefixes in a
term.

\begin{lem}
  Suppose $P \longrightarrow P'$, where $P$ is a finite process.
  Then
\begin{enumerate}[\em(1)]
\item
$\OPmess P \geq \OPmess {P'} $;
\item  $\OP P \geq \OP {P'} $.
\end{enumerate} 
\end{lem}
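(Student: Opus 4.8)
The plan is to prove both inequalities simultaneously by induction on the derivation of $P\longrightarrow P'$, with a case analysis on the last rule applied; this is the exact synchronous counterpart of Lemma~\ref{l:mess_pref_RED}, and the argument follows the same pattern. Throughout I will use that $\OPmess R$ and $\OP R$ are additive over parallel composition, are unaffected by the ambient constructor $\amb n{R}$, and are left unchanged by a substitution $P\sub n x$ of a name for a variable (such a substitution creates or destroys no message, capability, or abstraction, so $\OPmess{P\sub n x}=\OPmess P$ and $\OP{P\sub n x}=\OP P$). I will also use that on \emph{finite} processes structural congruence preserves both counts: the replication laws of $\equiv$ are vacuous here, and the remaining laws (associativity, commutativity, and $P\,|\,\nil\equiv P$) plainly preserve the number of messages and of prefixes.

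For the axioms: in \trans{Red-Open}, $\open n.P_1\,|\,\amb n{P_2}\longrightarrow P_1\,|\,P_2$ consumes exactly one capability, so $\OP{-}$ drops by one while $\OPmess{-}$ is unchanged; \trans{Red-In} and \trans{Red-Out} likewise consume one capability ($\ina m$, resp.\ $\out m$) and no message. For the synchronous rule $\msg n.Q\,|\,\abs x P\longrightarrow Q\,|\,P\sub n x$, the abstraction $\abs x{-}$ is consumed, so $\OP{-}$ drops by one, and the message prefix $\msg n.{-}$ is consumed, so $\OPmess{-}$ drops by one; since substituting $n$ for $x$ changes neither count, both counts on the right-hand side are bounded as claimed. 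For the structural rules: \trans{Red-Par} and \trans{Red-Amb} follow immediately from the induction hypothesis together with additivity of the counts and their invariance under $\amb n{-}$; \trans{Red-Str}, say $P\equiv P_1\longrightarrow P_2\equiv P'$, follows since $\equiv$ preserves finiteness (so the hypothesis of the lemma transfers to $P_1$) and preserves both counts on finite processes, and the induction hypothesis applies to $P_1\longrightarrow P_2$.

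The only point that needs a word is that the induction hypothesis is applicable in the cases \trans{Red-Par} and \trans{Red-Amb}, i.e.\ that an immediate subterm of a finite process is again finite. This is routine from the laws of $\equiv$: a replication-free representative of $\amb n{R}$ (resp.\ of $R_1\,|\,R_2$) yields a replication-free representative of $R$ (resp.\ of each $R_i$), using that on replication-free processes $\equiv$ is just the associative–commutative theory of $|$ with unit $\nil$, whose equivalence classes have a well-defined finite multiset of components (a non-trivial $!$-guarded component could not match such a representative, since unfolding $!R\equiv{!R}\,|\,R$ would force unboundedly many copies). I expect no real obstacle here: the synchronous \trans{Red-Com} case is treated exactly as its asynchronous analogue in Lemma~\ref{l:mess_pref_RED}, with the bare message replaced by a message prefix, and all the relevant counts still either strictly decrease or are preserved.
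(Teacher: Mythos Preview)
Your proof is correct and follows the same approach as the paper, which simply states ``By induction on the derivation of $P\rar P'$.'' You have merely spelled out the case analysis in detail, including the synchronous \trans{Red-Com} axiom, which is the only case that differs from the asynchronous Lemma~\ref{l:mess_pref_RED}.
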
 

\begin{proof}
By induction on the derivation of $P\rar P'$.
\end{proof}

\begin{lem}
  Let $P,Q$ be two finite processes and suppose $P \bisE Q$.
  Then $\OPmess P = \OPmess Q$.
\end{lem}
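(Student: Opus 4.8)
The plan is to mirror the proof of Lemma~\ref{l:numOpmess} from the asynchronous case, now for synchronous communication. The statement is: for finite processes $P,Q$ with $P\bisE Q$, we have $\OPmess P=\OPmess Q$. Note that in the synchronous calculus there is no eta law, so there is no need to assume the processes are eta-normalised, which actually simplifies the argument. By symmetry it suffices to derive a contradiction from $\OPmess P>\OPmess Q$, and I would do this by induction on the shape of $P$ (the number of operators in $P$), performing a case analysis according to the top-level structure of $P$ given by the clauses of the (synchronous) definition of $\bisE$.

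The easy cases are when $P=\nil$ (then $Q\equiv\nil$, so both counts are $0$, contradiction), when $P=\amb n{P'}$ or $P=!P'$ (the count transfers directly to the unique matching component of $Q$, contradicting the induction hypothesis on a smaller process), and when $P=\capa.P'$ (then $Q\equiv\capa.Q'$ with $Q'\Longrightarrow Q''\bisE P'$, and by the reduction lemma just stated, $\OPmess P=\OPmess{P'}>\OPmess{Q''}$, contradicting the inductive hypothesis applied to $P'$ and $Q''$). For $P=P_1|P_2$, the matching decomposition $Q\equiv Q_1|Q_2$ with $P_i\bisE Q_i$ forces $\OPmess{P_i}\ne\OPmess{Q_i}$ for some $i$, again contradicting the induction hypothesis. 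The genuinely new cases are the two communication constructs. For $P=\msg n.P'$: by the adapted clause, $Q\equiv\msg n.Q'$ with $Q'\Longrightarrow Q''\bisE P'$; by the reduction lemma $\OPmess{Q'}\ge\OPmess{Q''}$, so $\OPmess P=1+\OPmess{P'}>1+\OPmess{Q'}\ge 1+\OPmess{Q''}=\OPmess{\msg n.Q''}$, and since $\OPmess{\msg n.Q''}\ge\OPmess{Q''}$ we get $\OPmess{P'}>\OPmess{Q''}$, contradicting the induction hypothesis. For $P=\abs x{P'}$: by the adapted clause $Q\equiv\abs x{Q'}$ and for any $n$ there is $Q''$ with $Q'\sub n x\Longrightarrow Q''\bisE P'\sub n x$; by the reduction lemma $\OPmess{Q'\sub n x}\ge\OPmess{Q''}$, and $\OPmess{Q'\sub n x}=\OPmess{Q'}=\OPmess Q$ and likewise $\OPmess{P'\sub n x}=\OPmess P$, so $\OPmess{P'\sub n x}=\OPmess P>\OPmess Q\ge\OPmess{Q''}$, contradicting the induction hypothesis applied to $P'\sub n x$ and $Q''$ (note $P'\sub n x$ has the same number of operators as $P'$, hence fewer than $P$).

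The main obstacle — though it is a mild one — is the synchronous input case: one must check that the induction hypothesis really applies, i.e. that $P'\sub n x$ is strictly smaller in the chosen measure than $P=\abs x{P'}$, which holds because substitution of a name for a variable does not change the operator count, and the prefix $\abs x{\cdot}$ adds one operator. Unlike the asynchronous proof, there is no delicate sub-case where the reduction $Q'\sub n x\Longrightarrow Q''$ is empty and forces $P$ to be non-eta-normalised: here the eta law is absent, so whether the reduction is empty or not, the inequality $\OPmess{P'\sub n x}>\OPmess{Q''}$ still contradicts the induction hypothesis (with $Q''\equiv Q'\sub n x$ when the reduction is empty). Thus the synchronous proof is in fact strictly simpler than its asynchronous counterpart, and the whole argument goes through by a routine structural induction once the adapted $\bisE$ clauses and the preceding reduction lemma are in place.
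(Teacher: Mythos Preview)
Your proposal is correct and follows essentially the same route as the paper: assume $\OPmess P>\OPmess Q$ and derive a contradiction by induction on the shape of $P$, with the same case analysis driven by the (synchronous) clauses of $\bisE$; your observation that the absence of the eta law makes the abstraction case simpler than in Lemma~\ref{l:numOpmess} is exactly the point. Two tiny remarks: the $!P'$ case is moot since $P$ is finite, and in the capability case the clause gives $Q'\Rcap Q''$ rather than $Q'\Longrightarrow Q''$, but this does not affect the argument.
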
   

\proof
Suppose $\OPmess P >  \OPmess Q$. We prove that we derive a
contradiction.   We proceed by a case analysis on the shape of
$P$ (ie, the number of its operators)

\begin{enumerate}[$\bullet$]
\item $P = P_1 | P_2$. Then, by definition of $\bisE$, it must be 
$Q \equiv Q_1 | Q_2$ with $P_i \bisE Q_i$. Now, for some $i$, we
should have $\OPmess{P_i} \neq \OPmess{Q_i}$, which is impossible, by the
induction on the shape.

\item $P = \capa . P'$. Then, by definition of $\bisE$, it must be 
$Q \equiv  \capa . Q'$  and $Q' \Rcap Q'' \bisE P'$. 
 It will then be, by Lemma~\ref{l:mess_pref_RED}(1),  
 $\OPmess{P} = \OPmess{P'} > \OPmess{Q''}$, which is impossible, by the
induction on the shape.

\item $P = \msg n . P'$. Then $Q \equiv \msg n .Q'$ and $P' \bisE
  Q'$.  But 
 $\OPmess{P'} > \OPmess{Q'}$, which by induction is impossible.

\item $P = \abs x {P'}$.  Then $Q \equiv \abs x {Q'}$ and for all $h$
  fresh, $Q' \sub hx \bisE \Longrightarrow Q''$ and $P' \sub hx \bisE
  Q''$ and  $\OPmess{P''} > \OPmess{Q''}$, so we can conclude by induction.\qed
\end{enumerate} 

\begin{lem}
  Let $P,Q$ be two finite processes, and suppose $P \bisE Q$.
  Then $\OP P = \OP Q$.
\end{lem}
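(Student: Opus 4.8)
Looking at this statement, it's the synchronous-case analog of Lemma~\ref{l:numOpBIS}: for finite processes $P, Q$ in synchronous MA with $P \bisE Q$, we have $\OP P = \OP Q$. The plan is to mirror the asynchronous proof (Lemma~\ref{l:numOpBIS}), but simplified because in the synchronous calculus we do not need to worry about eta-normalisation.

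\begin{proof}
Suppose, for a contradiction, that $\OP P > \OP Q$. We proceed by induction on the shape of $P$ (the number of its operators), using the definition of $\bisE$ adapted to the synchronous calculus and the previous lemma, which gives $\OPmess P = \OPmess Q$ whenever $P\bisE Q$.

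\begin{enumerate}[$\bullet$]
\item If $P=\nil$, then $Q\equiv\nil$ by Lemma~\ref{l:nil-bis}, so $\OP P=\OP Q=0$, a contradiction.
\item If $P=P_1|P_2$, then by clause~\reff{cE:par} of the (synchronous) definition of $\bisE$, $Q\equiv Q_1|Q_2$ with $P_i\bisE Q_i$. Since $\OP P=\OP{P_1}+\OP{P_2}$ and likewise for $Q$, we must have $\OP{P_i}>\OP{Q_i}$ for some $i$, contradicting the induction on the shape.
\item If $P=\capa.P'$, then $Q\equiv\capa.Q'$ and $Q'\Rcap Q''\bisE P'$. By the previous lemma (point 2), $\OP{Q'}\geq\OP{Q''}$, so $\OP{P'}=\OP P-1>\OP Q-1=\OP{Q'}\geq\OP{Q''}$; this contradicts the induction on the shape applied to $P'\bisE Q''$.
\item If $P=\amb n{P'}$, then $Q\equiv\amb n{Q'}$ with $P'\bisE Q'$, and $\OP P=\OP{P'}>\OP{Q'}=\OP Q$ contradicts the induction on the shape.
\item If $P=\msg n.P'$, then by the adapted clause for synchronous output, $Q\equiv\msg n.Q'$ and $Q'\Longrightarrow Q''\bisE P'$. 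By the previous lemma (point 2), $\OP{Q'}\geq\OP{Q''}$, so $\OP{P'}=\OP P-1>\OP Q-1=\OP{Q'}\geq\OP{Q''}$, contradicting the induction on the shape.
\item If $P=\abs x{P'}$, then by the adapted clause for input, $Q\equiv\abs x{Q'}$, and for $h$ fresh there is $Q''$ such that $Q'\sub h x\Longrightarrow Q''$ and $P'\sub h x\bisE Q''$. By the previous lemma (point 2), $\OP{Q'\sub h x}=\OP{Q'}\geq\OP{Q''}$, so $\OP{P'\sub h x}=\OP{P'}=\OP P-1>\OP Q-1=\OP{Q'}\geq\OP{Q''}$; this contradicts the induction on the shape applied to $P'\sub h x\bisE Q''$.
\end{enumerate}
In every case we reach a contradiction, so $\OP P=\OP Q$.
\end{proof}

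The main point to be careful about is that, unlike in the asynchronous case (Lemma~\ref{l:numOpBIS}), no eta-normalisation hypothesis is needed: the synchronous input clause of $\bisE$ already matches an input by an input with a genuine continuation, so the degenerate situation $P\equiv\abs x(\msg x\mid\abs x P'')$ that forced the eta-normalisation assumption in the asynchronous proof does not arise. The only mild subtlety is that the reductions $Q'\Longrightarrow Q''$ appearing in the prefix, output, and input clauses may consume messages and capabilities, but this is exactly controlled by point~2 of the previous lemma, which guarantees $\OP{\cdot}$ is non-increasing along reductions.
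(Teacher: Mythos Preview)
Your proof is correct and follows essentially the same approach as the paper's: assume $\OP P > \OP Q$ and derive a contradiction by induction on the shape of $P$, treating each syntactic form via the corresponding clause of the synchronous $\bisE$ and the monotonicity of $\OP$ under reduction. You are in fact slightly more careful than the paper, which omits the ambient case and phrases the abstraction case less explicitly.
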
   

\proof
Suppose $\OP P >  \OP Q$. We prove that we derive a
contradiction.  We proceed by induction on 
the shape of $P$. 

\begin{enumerate}[$\bullet$]

\item If $P = \nil$ then  $Q \equiv \nil$.

\item $P = P_1 | P_2$. Then, by definition of $\bisE$, it must be 
$Q \equiv Q_1 | Q_2$ with $P_i \bisE Q_i$. Now, for some $i$, it
should be $\OP{P_i} \neq \OP{Q_i}$, which is impossible, by the
induction on the shape.  

\item $P = \capa . P'$. Then, by definition of $\bisE$, it must be 
$Q \equiv  \capa . Q'$  and $Q' \Rcap Q'' \bisE P'$. 
Then
\[\OP{P'} = \OP P -1 > \OP Q -1 = \OP{Q'} \geq \OP{Q''} \]
Hence $\OP{P'} > \OP{Q''}$,
 which is impossible by the
induction on the shape. 

\item $P = \msg n . P'$. Similar to capability case.

\item $P = \abs x {P'}$.  Then $Q \equiv \abs x {Q'}$ and there is
  $Q''$ such that $Q' \sub hx \bisE \Longrightarrow Q''$ and $P' \sub hx \bisE
  Q''$. There is no consumption of messages, hence 
 $\OP{P' \sub hx} > \OP{Q''}$, and we can conclude using induction.\qed
\end{enumerate}

\begin{lem}
Let $P,Q$ be two finite processes, and suppose $P \bisE Q$. If $P
\arr\mu P'$, then there is $Q' $ such that $Q \arr\mu Q' \bisE P'$.
Similarly, if $P \longrightarrow P'$, then there is $Q' $ such that $Q
\longrightarrow Q' \bisE P'$.
\end{lem}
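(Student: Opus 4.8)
The plan is to replay the proof of Lemma~\ref{l:strong}, now in the synchronous calculus and, crucially, without passing to eta-normal forms, which is unnecessary here since the eta law is no longer valid. The engine is the pair of counting lemmas just proved --- for finite $P\bisE Q$ one has $\OP P=\OP Q$ and $\OPmess P=\OPmess Q$ --- together with the monotonicity of $\OP$ and $\OPmess$ along reductions (the synchronous counterpart of Lemma~\ref{l:mess_pref_RED}), strengthened by the elementary observation that on finite processes a \emph{single} reduction step decreases $\OP$ by \emph{exactly} one: each of \trans{Red-Open}, \trans{Red-In} and \trans{Red-Out} erases one capability, the synchronous \trans{Red-Com} erases one abstraction, \trans{Red-Str} preserves both counts on finite processes, and substituting a name for a variable changes neither count. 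Hence a reduction sequence $R\Longrightarrow R'$ of length $k$ satisfies $\OP R-\OP{R'}=k$: it is exactly as long as $\OP$ permits, and no longer.

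For the reduction clause, suppose $P\longrightarrow P'$. By the synchronous version of Lemma~\ref{l:tau-action} --- which goes through verbatim once the results of Section~\ref{s:char} are re-established for the synchronous calculus --- there is $Q'$ with $Q\Longrightarrow Q'$ and $P'\bisE Q'$; moreover $P'$ and $Q'$ are again finite, since reduction never introduces replication. Now $\OP{P'}=\OP P-1$ (a single step), $\OP P=\OP Q$ (as $P\bisE Q$ are finite), and $\OP{Q'}=\OP{P'}$ (as $P'\bisE Q'$ are finite), so $\OP{Q'}=\OP Q-1$; by the exact accounting above the sequence $Q\Longrightarrow Q'$ has length $1$, that is, $Q\longrightarrow Q'\bisE P'$.

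For the labelled clause, suppose $P\arr\mu P'$ and case on $\mu$; in each case one uses, as in the asynchronous case (Lemma~\ref{l:unCOMP} and its obvious adaptation to $\msg n.P'$), that a $\bisE$-partner of a single process is single. If $\mu$ is a capability $\capa$, write $P\equiv\capa.P_1|P_2$ with $P'=P_1|P_2$; the parallel clause of $\bisE$ gives $Q\equiv Q_a|Q_b$ with $\capa.P_1\bisE Q_a$ and $P_2\bisE Q_b$, and the (unchanged) capability clause gives $Q_a\equiv\capa.Q_1$ with $P_1\bisE Q_1$, whence $Q\arr\capa Q_1|Q_b$ and $Q_1|Q_b\bisE P_1|P_2=P'$ by congruence of $\bisE$ (synchronous Corollary~\ref{c:bisEcong}) --- no counting is needed here. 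If $\mu$ is an input $?n$, write $P\equiv(x)P_1|P_2$ with $P'\equiv P_1\sub n x|P_2$; the parallel clause gives $Q\equiv Q_a|Q_b$ with $(x)P_1\bisE Q_a$ and $P_2\bisE Q_b$, and the adapted input clause gives $Q_a\equiv(x)Q_1$ together with some $Q_1''$ such that $Q_1\sub n x\Longrightarrow Q_1''\bisE P_1\sub n x$. Applying the counting lemma to the finite pair $(x)P_1\bisE(x)Q_1$ gives $\OP{P_1}=\OP{Q_1}$, hence $\OP{P_1\sub n x}=\OP{Q_1\sub n x}$, and applying it to the finite pair $P_1\sub n x\bisE Q_1''$ gives $\OP{Q_1''}=\OP{P_1\sub n x}=\OP{Q_1\sub n x}$; by the exact accounting the reduction $Q_1\sub n x\Longrightarrow Q_1''$ is empty, so $Q_1''\equiv Q_1\sub n x$ and therefore $Q_1\sub n x\bisE P_1\sub n x$. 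Congruence then gives $Q_1\sub n x|Q_b\bisE P'$, and $Q\equiv(x)Q_1|Q_b$ yields $Q\arr{?n}Q_1\sub n x|Q_b$, as required. The output case ($\mu$ an output $!n$) is symmetric, using the adapted message clause of $\bisE$ and the identity $\OP{\msg n.R}=\OP R$.

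The only points needing care are (i) checking that every process to which a counting lemma is applied --- $P'$, $P_1$, $Q_1$, $Q_1\sub n x$, $Q_1''$ --- is finite, which holds because labelled transitions, reductions, structural congruence and name-for-variable substitution never create a replication; and (ii) exploiting the \emph{exact}, rather than merely monotone, effect of a reduction step on $\OP$, which is precisely what upgrades ``$Q$ can answer with finitely many steps'' to ``$Q$ answers with the single right step''. Neither is difficult, but (ii) is where a bare appeal to the monotonicity lemma would fall short, so it is the step I would be most careful to spell out.
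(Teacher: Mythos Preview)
Your proof is correct and follows the same approach as the paper's: the paper's proof is a single sentence invoking the two counting lemmas, and your argument is a fully spelled-out version of that idea, with the useful explicit observation that each reduction step decreases $\OP$ by exactly one.

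One small point: in the capability case you claim ``no counting is needed here'', reading the $\bisE$ clause as yielding $P_1\bisE Q_1$ directly. That is how clause~(\ref{cE:in}) of Definition~\ref{d:bisEwithEqual} is literally printed, but it is a typo --- the paper's own uses of the clause (e.g.\ in the proofs of Lemma~\ref{l:tau-action}, Lemma~\ref{l:bisEomega}, and Theorem~\ref{t:bisSTR}) make clear that the intended condition is $P_1\bisE Q_1''$ for some $Q_1''$ with $Q_1\Rcap Q_1''$. So counting \emph{is} needed in the capability case too, to collapse the $\Rcap$ step; but the argument is identical to the one you already give for the input and output cases (apply the $\OP$-lemma to $\capa.P_1\bisE\capa.Q_1$ to get $\OP{P_1}=\OP{Q_1}$, then to $P_1\bisE Q_1''$ to get $\OP{Q_1''}=\OP{P_1}$, and conclude that $Q_1\Rcap Q_1''$ is trivial), so the fix is immediate.
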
 

\begin{proof}
  From the two previous lemmas: if $Q$ performed more than one
  action, then it would consume one more prefix or message than $P$.
\end{proof} 

\begin{thm}
  Let $P,Q$ be two processes in \MAIFss{}, and suppose $P \bisE Q$. Then $P
  \equiv Q$.
\end{thm}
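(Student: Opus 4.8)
The plan is to mirror the proof of Theorem~\ref{t:bisSTR}, arguing by induction on the \emph{shape} of $P$ (its number of operators), using the definition of $\bisE$ adapted to synchronous communication together with the three immediately preceding lemmas of this subsection: invariance of $\OPmess{\cdot}$ and of $\OP{\cdot}$ under reduction of finite processes, and the resulting strong-matching lemma (a $\bisE$-related finite process mimics each single reduction or labelled transition by exactly one step). The one simplification over the asynchronous case is that, communication now being synchronous, the eta law fails, so there is no need to assume that $P$ and $Q$ are eta-normalised.

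The structural cases transcribe verbatim. If $P\equiv\nil$ then $Q\equiv\nil$; if $P$ has a top-level parallel composition $P\equiv P_1|P_2$ with both components strictly smaller, then $\bisE$ gives $Q\equiv Q_1|Q_2$ with $P_i\bisE Q_i$, whence $P_i\equiv Q_i$ by the induction hypothesis and $P\equiv Q$; the case $P\equiv\amb n {P'}$ is handled likewise; and for $P\equiv\, !P'$ one invokes the replication clause of the inversion lemma (Lemma~\ref{l:inversion}, whose proof carries over unchanged since it does not mention communication) to obtain $Q\equiv\, !Q_1|(!)Q_2|\dots|(!)Q_r$ with $P'\bisE Q_i$, so $P'\equiv Q_i$ by induction and $Q\equiv\, !Q_1\equiv P$. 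The real content is in the guarded cases. For $P\equiv\capa.P'$ the definition of $\bisE$ gives $Q\equiv\capa.Q'$ together with $Q'\Rcap Q''$ and $Q''\bisE P'$; since $P'$ and $Q'$ are finite --- this is precisely where the syntactic restriction defining \MAIFss{} enters --- the counting lemmas force $\OP{Q''}=\OP{Q'}$ and $\OPmess{Q''}=\OPmess{Q'}$, so the residual reduction $Q'\Rcap Q''$ has length zero, hence $Q'\bisE P'$ and the induction hypothesis yields $Q'\equiv P'$ and $P\equiv Q$. For a synchronous output $P\equiv\msg n.P'$ the argument is identical, using the corresponding clause of the adapted $\bisE$. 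For $P\equiv\abs x {P'}$ one has $Q\equiv\abs x {Q'}$ with $P'$ and $Q'$ finite; choosing $n$ fresh for $P$ and $Q$ and using that $\bisE$ is a congruence, $\msg n.\nil\,|\,P\ \bisE\ \msg n.\nil\,|\,\abs x {Q'}$, and since $\msg n.\nil\,|\,P\longrightarrow P'\sub n x$ the strong-matching lemma forces $\msg n.\nil\,|\,\abs x {Q'}\longrightarrow Q'\sub n x$ in a single step --- its only redex --- so $Q'\sub n x\bisE P'\sub n x$; as $P'\sub n x$ is strictly smaller than $P$, the induction hypothesis gives $P'\sub n x\equiv Q'\sub n x$, and freshness of $n$ lets us conclude $P'\equiv Q'$, hence $P\equiv Q$.

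The main obstacle is the same as in the asynchronous development: one must rule out a bisimilar partner matching a single transition by a \emph{longer} reduction sequence. That is exactly the strong-matching lemma, and it rests entirely on the two counting lemmas (invariance of the number of messages and of the number of prefixes along reductions) together with the finiteness of the bodies of all guards, which in \MAIFss{} is guaranteed syntactically; without this restriction the statement fails, just as it does for full \MA. Everything else is a routine transcription of the \MAIFsyn{} arguments.
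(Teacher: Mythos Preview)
Your proposal is correct and follows essentially the same approach as the paper, which simply says ``By induction on the shape of $P$ (almost exactly as in Theorem~\ref{t:bisSTR}).'' You have spelled out the adaptations to the synchronous setting---the new message-prefix case and the simplified abstraction case without eta-normalisation---in the way the paper intends.
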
 

\begin{proof}
  By induction on the shape of $P$ (almost exactly as in
  Theorem~\ref{t:bisSTR}).
\end{proof}

\begin{cor}
Let $P,Q$ be processes of \MAIFss. Then  $P \eqL Q$ iff  $P \equiv Q$.
\end{cor}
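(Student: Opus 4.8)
The plan is to close the loop $\eqL \;=\; \bisMOD \;=\; \bisE \;\subseteq\; \equiv \;\subseteq\; \eqL$ on $\MAIFss$, reusing the synchronous versions of the results developed above; the point is that, without the eta law, the relation $\equiv_E$ appearing in Corollary~\ref{c:bisSTR} degenerates to $\equiv$, so the chain that proved that corollary specialises directly.

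For the non-trivial inclusion $\eqL\subseteq\equiv$ I would chain three facts. First, the synchronous soundness-and-completeness theorem stated just above gives $\eqL\,=\,\bisMOD$. Second, the synchronous counterpart of Corollary~\ref{c:MOD-E} — obtained by rerunning the congruence argument of Section~\ref{s:char} with the adapted output clause ($P\equiv\msg n.P'$ matched by $Q\equiv\msg n.Q'$ with $Q'\Rar Q''\RR P'$), which behaves in the transitivity and reduction lemmas (the analogues of Lemmas~\ref{l:tau-action}--\ref{l:E-MOD}) exactly like the movement clauses already treated — gives $\bisMOD\,=\,\bisE$. Third, the theorem just proved shows that for $P,Q\in\MAIFss$, $P\bisE Q$ implies $P\equiv Q$. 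Composing the three yields $\eqL\subseteq\equiv$ on $\MAIFss$.

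For the converse inclusion $\equiv\subseteq\eqL$ I would simply invoke the invariance of satisfaction under structural congruence (the lemma of~\cite{Cardelli::Gordon::AnytimeAnywhere} recalled in Section~\ref{sec:back}): if $P\equiv Q$ then $P$ and $Q$ satisfy exactly the same closed formulas, i.e.\ $P\,\eqL\,Q$. (Alternatively one checks directly that $\equiv$ is a synchronous syntax-based intensional bisimulation, whence $\equiv\subseteq\bisE\subseteq\bisMOD\subseteq\eqL$ by soundness.) Putting the two inclusions together gives $P\,\eqL\,Q$ iff $P\equiv Q$.

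The only point requiring attention — and where the asynchronous and synchronous developments genuinely diverge — is that here no eta-normalisation hypothesis is needed: there is no eta law in the synchronous calculus, so the counting lemmas ($\OPmess{\cdot}$ and $\OP{\cdot}$ non-increasing under reduction, and equal on $\bisE$-related finite terms) and hence the statement $P\bisE Q\Rightarrow P\equiv Q$ hold unconditionally on finite terms, and the case analysis on the shape of $P$ closes for output prefixes $\msg n.P'$ just as it does for capability prefixes $\capa.P'$. This is precisely why the equivalence lands on $\equiv$ rather than on $\equiv_E$; everything else is a routine reassembly of the preceding lemmas.
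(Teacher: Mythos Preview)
Your proposal is correct and follows essentially the same route as the paper: the paper leaves this corollary unproved, but it is the synchronous analogue of Corollary~\ref{c:bisSTR}, and your chain $\eqL=\bisMOD=\bisE\subseteq\equiv\subseteq\eqL$ is exactly the specialisation of that argument with $\equiv_E$ collapsed to $\equiv$. The one step you make explicit that the paper leaves implicit is the need for a synchronous version of $\bisMOD=\bisE$; you are right that this is required to connect the soundness/completeness theorem (stated for $\bisMOD$) to the preceding theorem (stated for $\bisE$), and that it goes through by the same congruence argument with the output clause treated like the capability clauses.
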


%%% Local Variables: 
%%% mode: latex
%%% TeX-master: "part2"
%%% Local IspellDict: british
%%% Local IspellPersDict: ./.ispell
%%% End: 

\subsection{Name restriction}

\newcommand{\MAnu}{MA$^\nu$}
\newcommand{\satnu}{\ensuremath{\sat^{\nu}\,}}

In this section, we consider the variant of MA, noted here \MAnu, that
includes name restriction $(\nu n)\, P$. We discuss, among previous
results, which ones remain valid, and which ones have to be amended.

Adding name restriction involves several modifications in the
definition of the calculus and of the logic. Name $n$ is bound in
$(\nu n)\,P$, and the definition of $\fn P$ is modified accordingly.
Regarding structural congruence, we add alpha conversion for $\nu$, as
well as the following laws:
\[
\begin{array}{c}
(\nu n)\,\nil\equiv\nil
\qquad
(\nu n)(\nu m)\,P\equiv(\nu m)(\nu n)\,P
\qquad
(\nu n)\,(P|Q)\equiv P\,|\,(\nu n)\,Q\mbox{ if $n\not\in\fn P$}
\\[.5em]
(\nu n)\,m[P]\equiv m[(\nu n)\,P]
\qquad
\capa.(\nu
n)\,P\equiv(\nu n)\,\capa.P\mbox{ if $n\not\in\fn \capa$}
\end{array}
\]
The last rule is not always present in the definition of structural
congruence. It is not an essential rule, but including it makes our
some technical details simpler.

%  is added to the
% previous ones, so as, sometimes, the additional rule $\capa.(\nu
% n)P\equiv(\nu n)\capa.P$, with $n\not\in\fn \capa$. This last rule is
% not essential, but simplifies the formulation of our results, so we
% adopt it along this section. 

\medskip

In the logic, additional connectives are introduced,
as in~\cite{Cardelli::Gordon::NameRestriction::01}, to handle
restriction and the associated notion of freshness of names: formulas
can also be of the form $n\reveal\AAA, \AAA\hide n$, or $\jamie
n.\,\AAA$.  Accordingly, the enriched notion of satisfaction, written
\satnu, is given by:
%
% the connectives
% $n\reveal \AAA$, $\AAA\hide n$, $\jamie n.\AAA$ are usually taken into
% account to treat name restriction. The semantics $\sat^{\nu}$ of the
% logics over \MAnu{} is defined as :
%
\begin{enumerate}[$-$]
\item $P\sat^{\nu} n\reveal\AAA$ iff
  $P\equiv (\nu n)\,P'$ and $P'\satnu\AAA$ for some $P'$;
\item $P\satnu \AAA\hide n$ if $(\nu n)\,P\satnu \AAA$;
\item $P\satnu \jamie n.\,\AAA$ if there is $n'\notin(\fn P\cup\fn
  \AAA)$ such that $P\satnu\AAA\sub {n'}{n}$.
\end{enumerate}

To illustrate this new setting, we consider the two following formulas:
\begin{mathpar}
  { \mathsf{free}(n)~~\eqdef~~ \neg n\reveal\top \hspace{2cm}
    \mathsf{public}~~\eqdef~~\jamie n.\,\non \big(n\reveal\,
    \textsf{free}(n)\big) \enspace.  }
\end{mathpar}
A process $P$ satisfying $\mathsf{free}(n)$ cannot reveal $n$, which
means that $n$ necessarily occurs free in $P$. In turn, if $P$
satisfies $\mathsf{public}$, then it cannot reveal a name $n$ so as to
exhibit free occurrences of $n$, which means that $P$ is structurally
congruent to some $P'\in$ MA.

Formula $\mathsf{public}$ hence provides a way of selecting processes
belonging to \MA{} among the processes in \MAnu. We can indeed adapt
any formula $\AAA$ we have used in the paper into a formula $\AAA'$
such that whenever $P\satnu\AAA'$, then $P\equiv P'$ for some $P'$ in
MA such that $P'\sat\AAA$; in particular, formulas of the form
$\AAA_1\rtr\AAA_2$ are translated into formulas of the form
$(\BBB_1\land\mathsf{public})\rtr\BBB_2$.

% in particular, for any subformula of the form $\BBB_1\rtr\BBB_2$

% This formula characterizes processes $P$ that cannot reveal a name
% in such a way that they cannot reveal it twice, that is it
% characterizes processes that don't have really bound names, or in
% other words the processes that are congruent to processes of MA. Using
% this formula, we may transform any previous formula $\AAA$ that was
% defined on MA into a formula $\AAA'$ for \MAnu{}, while
% preserving that for all $P\in \mbox{MA}$, $P\sat\AAA$ if and only if
% $P\sat^\nu \AAA'$; actually, for any subformula $\AAA\rtr\BBB$, we
% simply rewrite $(\AAA\land\mathsf{public})\rtr\BBB$.

\medskip

In presence of name restriction, we can adapt rather easily several
important results of the paper as follows (for each item, we indicate
the part of the paper we refer to):
\begin{enumerate}[$\bullet$]
\item a new `intensional' rule must be added to the definition of
  $\bisMOD$ (Def.~\ref{d:bisMOD}): if $P\equiv(\nu n)\,P'$, then there
  is $Q'$ such that $Q\equiv (\nu n)\, Q'$ and $P'\bisMOD Q'$;
\item with this definition, it is possible to establish a soundness
  result ($\bisMOD\subseteq\eqL$, Theorem~\ref{t:soundnessLogBis}), and
  completeness for finite processes (processes without replication,
  Theorem~\ref{t:bisEomega});
\item characteristic formulas are derivable for processes of the form
  $(\nu n_1)\dots(\nu n_k)\,P$, where $P$ is a `public' process in
  \MAIF{} (Lemma~\ref{l:charforMAIF}): we rely on name revelation to
  get rid of the topmost restrictions, and then translate the
  characteristic formula for $P$ using the approach sketched above;
\item logical equivalence coincides with structural congruence
  enriched with eta conversion for processes of the form $(\nu
  n_1)\dots(\nu n_k)\,P$, with $P$ a public process in \MAIFsyn{}
  (Corollary~\ref{c:bisSTR}).
\end{enumerate}

The difficult point, that we leave for future work, is to analyse
processes that can generate unboundedly many names, i.e., in which
restriction occurs under replication.
%
% (as for instance in a process of the form $!(\nu n)\,P$).
%
Characteristic formulas seem much more difficult to obtain for such
processes.  We do not know at present how to derive completeness in
absence of an image finiteness hypothesis (in particular, we do not
see how a counterpart of Lemma~\ref{l:predictive} can be obtained).

% Moreover, it seems quite involved to adapt the ideas of
% the proof of completeness without image-finiteness, in particular this
% is not clear at all how an equivalent to Lemma~\ref{l:predictive}
% could be obtained. This is not a matter of surprise, as no proof
% technique is known for establishing the equivalent completeness result
% without image-finiteness in the framework of the pi-calculus.

%%% Local Variables: 
%%% mode: latex
%%% TeX-master: "part2"
%%% Local IspellDict: british
%%% Local IspellPersDict: ./.ispell
%%% End: 

\section{(Un)decidability of logical equivalence
\label{s:undecidability}}

In this section we define the encoding of a Turing Machine in
\MAIFsyn. The purpose of this encoding is to establish that logical
equivalence in undecidable on \MAIF.

The definition of the encoding requires the introduction of some
constructions that will be given as (\MAIFsyn) contexts. To ease the
reading of our definitions, % we introduce new typographical
% conventions, including a new notation for contexts.
% %
% % The contexts we shall use here are terms containing one or several
% % occurrences of a hole, written \hole{} (with this notation, the hole
% % cannot be confused with an ambient construct). Given a context \C, we
% % write $\C\fillhole{P}$ for the process obtained by filling the hole
% % with a term $P$ in \C. 
% %
% To abbreviate definitions, 
we shall sometimes work with \emph{parametrised contexts}, which are
context definitions that depend on some values (names, words, or
movements of the head of the Turing Machine). Additionally, some
parametrised definitions shall be written \texttt{foo}$(p);P$: here,
\texttt{foo} is the name of the definition, whereas $p$ and $P $ are
parameters ($P$ being a process); the notation emphasizes the
sequentiality between the process being introduced and $P$.

\begin{rem}
  The results in this section improve and extend a preliminary version
  presented in~\cite{Sedal}. By the time the writing of this paper was
  completed, Busi and Zavattaro \cite{busi:zavattaro:tcs:2004} have
  studied encodings of another universal machine, namely the Random
  Access Machine, into a subset of \MA. Their encodings are
  syntactically more coincise than the one below of a Turing Machine.
  However, Busi and Zavattaro make use of combinations of operators
  that are not licit in \MAIFsyn\ (i.e., their encodings are not
  encodings into \MAIFsyn). Also, while longer, the encoding of Turing
  Machines makes use of components which accomplish simple tasks and
  which interact with each other in simple manners. Correspondingly,
  each step of the proof, which follows the reductions of the encoding
  of a Turing Machine, is rather straightforward. For these reasons we
  maintain the schema of the original encoding in~\cite{Sedal}.
\end{rem}

\subsection{Ribbons}\label{sec:ribbons}

\textbf{Digits and words.} We associate to booleans true and false two
names \ttrue{} and \ffalse{}. We call these names \emph{digits}, and
range over digits with $d, d'$. A word will be the result of a
(possibly empty) concatenation of digits. The empty word shall be
written $\epsilon$. We range over words with $w, w', w_1, w_2$. Given
a word $w$ consisting in $r$ digits (with $r\geq 1$), we shall
sometimes write $w^1\dots w^r$ to refer to the digits of $w$. This
should not be confused with notation $\ffalse^n$, that we will
sometimes use to represent the word consisting in $n$ times digit
$\ffalse$ (this should be clear from the context).

We start with the definition of the support of the Turing Machine:
ribbons can be in differents states (frozen, growing, work ribbon,
old), and are defined as follows:

\newcommand{\arraytopic}[1]{\multicolumn{3}{l}{~~\mbox{\textbf{#1}}}}

\medskip

\noindent{
\begin{minipage}{\textwidth}
\noindent $  \begin{array}{lcl}
    \arraytopic{Cells and Words}\\
    \mactext{cell}(d)\hole & := & \amb{cell}{\amb{d}{\nil}~|~
      !\openamb{wo}~|~\hole~}\\
    \mactext{word}(w)\hole & := &
    \mactext{cell}(w^1)\fillhole{\mactext{cell}(w^2)\fillhole{\dots
        \mactext{cell}(w^r)\hole\dots}}\quad\quad (w = w^1 w^2\dots w^r)\\[1em]
  \end{array}
  $

\noindent $  \begin{array}{lcl}
    \arraytopic{Ribbon Extensor}\\
    \mactext{deadextcode} & := &
    !\openamb{coin}.\openamb{newcell}.\inamb{cell}.
    \amb{coin}{\nil}
    \\
    & & ~|~
    !\amb{newcell}{\mactext{cell}(\ffalse)\fillhole{\outamb{ext}}}\\
    \mactext{sendstart} & := &
    \amb{msg}{\outamb{ext}.!\outamb{cell}~|~
      \outamb{ribbon\_left}.\amb{start}{\inamb{TM}}}\\
    \mactext{ExtensorFrozen} & := & \amb{ext}{
      \mactext{deadextcode}~|~\openamb{coin}.\mactext{sendstart}}\\
    \mactext{ExtensorAlive} & := & \amb{ext}{\amb{coin}{\nil}~|~
      \mactext{deadextcode}~|~\openamb{coin}.\mactext{sendstart}}\\
    \mactext{ExtensorDead} & := & \amb{ext}{\mactext{deadextcode}}\\[1em]
  \end{array}
  $

\noindent $  \begin{array}{lcl}
    \arraytopic{Ribbons}\\
    \mactext{cleaninst} & := & \openamb{cleaner}.\openamb{runclean}~|~
    \amb{runclean}{\mactext{deadcleancode}}\\
    \mactext{deadcleancode} & := & !\openamb{\ffalse}~|~
    !\openamb{\ttrue}~|~!\openamb{cell}~|~ !\openamb{wo}\\
    \mactext{FrozenRibb}(w) & := & \amb{ribbon\_left}
    {\mactext{cleaninst}~|~
      \mactext{word}(w)\fillhole{\mactext{ExtensorFrozen}}}\\
    \mactext{GrowingRibb}(w) & := & \amb{ribbon\_left}
    {\mactext{cleaninst}~|~
      \mactext{word}(w)\fillhole{\mactext{ExtensorAlive}}}\\
    \mactext{WorkRibb}(w_1,w_2)\hole & := & \amb{ribbon\_left}
    {\mactext{cleaninst}}\\
    & &
    ~|~\mactext{word}(w_1)\fillhole{\hole{}~|~
      \mactext{word}(w_2)\fillhole{\mactext{ExtensorDead}}}\\
    \mactext{OldRibb} & := &
    \amb{ribbon\_left}{\mactext{deadcleancode}~|~
      \mactext{ExtensorDead}}
  \end{array}
  $
\end{minipage}
}

\smallskip

All names used in the definitions above are supposed to be pairwise
distinct. In particular, $TM$ is the name we shall use for the ambient
containing the Turing Machine (see Definition~\ref{def:TM}).
The ribbon is represented as a nesting of ambients named $cell$, each
of which contains an empty ambient named $d$, where $d$ is the digit
value of the cell: this corresponds to the definitions of
$\mactext{cell}(d)$ and $\mactext{word}$ -- the $!\openamb{wo}$
subterm is there to trigger the computation of the head of the machine
as soon as the head `points to' (i.e., enters) the current cell (see
Section~\ref{sec:TMencod}).

Ribbon extension is used to generate a sufficiently long
nesting of $cell$ ambients for the machine to run. A frozen ribbon
consists of a word $w$, containing at the end of the ribbon a frozen
ribbon extensor (definition of $\mactext{FrozenRibb}$ -- the
$\mactext{cleaninst}$ part will be useful later on).
The extensor is triggered by the presence of an ambient named $coin$
(definitions $\mactext{ExtensorFrozen}$ and
$\mactext{ExtensorAlive}$): when this happens, the loop programmed in
the definition of $\mactext{deadextcode}$ can start, which can have
the effect of adding new cells, whose value is $\ffalse$. Each time
the extensor loops (state $\mactext{ExtensorAlive}$), the $coin$
ambient can be erased by process $\openamb{coin}.\mactext{sendstart}$,
which has the effect of stopping the extension process, and sending an
ambient $msg$ out of the ribbon to instruct the machine to start
computation.  When this happens, the extensor is in
$\mactext{ExtensorDead}$ state.

A ribbon in $\mactext{GrowingRibb}$ state keeps extending until the
extensor dies, at which point it becomes a $\mactext{WorkRibb}$
($\mactext{WorkRibb}$ has two parameters, $w_1$ and $w_2$, in order
to reason about the cell where the head of the machine currently is).
Along this evolution, the $\mactext{cleaninst}$ code is always
present. When the machine successfully terminates computation (we will
describe below how this happens), it generates an ambient named
$cleaner$, which triggers the cleaning of the machine: all ambients
$cell, \ttrue, \ffalse, wo$, that intuitively constitute the ``data
structures'' of the machine, are removed. At this point, we obtain an
$\mactext{OldRibb}$.

\medskip

Some of the explanations we have just given are formalised by the
following result, which will be used to establish undecidability of
$\eqL$. 

\begin{lem}[Ribbon evolution]\label{fact_growing_ribbon}
~

  For any word $w$ and $n\in\N$, we write $P_n =
  \mactext{GrowingRibb}(w.(\ffalse)^n)$, where $(\ffalse)^n$ stands
  for the word written as $n$ times the name $\ffalse$. We have:
\begin{enumerate}[$\bullet$]
\item $P_n~\Rar~ P_{n+1}$; 
\item $P_n~\Rar~ R$
  with
  
  $R = \mactext{WorkRibb}(\epsilon,w.(\ffalse)^{n})\fillhole{
    \amb{msg}{!\outamb{cell}~|~
      \outamb{ribbon\_left}.\amb{start}{\inamb{TM}}}}$;
\item for any term $Q$ along the reduction paths from $P_n$ to
  $P_{n+1}$ and from $P_n$ to $R$, there exists $Q'$ such that
  $Q\,\equiv\,\amb{ribbon\_left}{Q'}$.
\end{enumerate}
%   Let $P$ be such that $\mactext{GrowingRibb}(w)~\Rar~P$ and
%   $P\equiv\amb{ribbon\_left}{P'}$. Then there is an $n$ such that
%   \begin{itemize}
%   \item either $\mactext{GrowingRibb}(w.\ffalse^n)~\Rar~
%     P~\Rar~\mactext{GrowingRibb}(w.\ffalse^{n+1})$
%   \item or
%     \maths{
%     \begin{array}{l}
%       \mactext{GrowingRibb}(w.\ffalse^n)~\Rar~P~\\
%       \quad\Rar~
%       \mactext{WorkRibb}(\epsilon,w.\ffalse^{n})\fillhole{ 
%         \amb{msg}{!\outamb{cell}~|~
%           \outamb{ribbon\_left}.\amb{start}{\inamb{TM}}}}\,.
%     \end{array}}
%   \end{itemize}
% Moreover, the term
%   $P\equiv\mactext{WorkRibb}(\epsilon,w.\ffalse^{n})
%   \fillhole{\amb{msg}{!\outamb{cell}~|~
%       \outamb{ribbon\_left}.\amb{start}{\inamb{TM}}}}$ is the only one
%   satisfying $P~\rar~\amb{ribbon\_left}{\dots}\,|\,Q$ with
%   $Q\not\equiv\nil$.
Moreover, for any word $w$, we have:
\begin{mathpar}
  { \mactext{WorkRibb}(w,\epsilon)\fillhole{\nil}~|~
    \amb{cleaner}{\inamb{ribbon\_left}}~~ \Rar~~\mactext{OldRibb}\,.
    }
\end{mathpar}
  
\end{lem}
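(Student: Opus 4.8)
The plan is to prove the statement as a sequence of explicit reduction traces: once the abbreviations of Section~\ref{sec:ribbons} are unfolded, each of the four assertions is just a matter of exhibiting one concrete reduction path and checking every step against the rules of Table~\ref{t:red_rules} (together with the structural laws for $\equiv$). I would treat the assertions in turn. For the first, $P_n~\Rar~P_{n+1}$: unfold $P_n=\mactext{GrowingRibb}(w.(\ffalse)^n)$ so that the innermost $cell$ of $\mactext{word}(w.(\ffalse)^n)$ carries $\mactext{ExtensorAlive}=\amb{ext}{\amb{coin}{\nil}~|~\mactext{deadextcode}~|~\openamb{coin}.\mactext{sendstart}}$ at the bottom. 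Working inside $ext$, I would: (i) use the replicated guard $\openamb{coin}.\openamb{newcell}.\inamb{cell}.\amb{coin}{\nil}$ of $\mactext{deadextcode}$ to consume $\amb{coin}{\nil}$ by \trans{Red-Open}; (ii) use $!\amb{newcell}{\mactext{cell}(\ffalse)\fillhole{\outamb{ext}}}$ to supply a $newcell$ ambient and open it with $\openamb{newcell}$, releasing a fresh $cell$ carrying $\outamb{ext}$ next to the continuation $\inamb{cell}.\amb{coin}{\nil}$; (iii) fire $\outamb{ext}$ (\trans{Red-Out}), so the new $cell$ leaves $ext$ and becomes a sibling of $ext$ inside the former innermost cell; (iv) fire $\inamb{cell}$ (\trans{Red-In}), so $ext$ re-enters this new $cell$, reinstating $\amb{coin}{\nil}$ inside it. Absorbing the residual $\nil$'s and refolding the two replications by $\equiv$, the result is exactly $\mactext{GrowingRibb}(w.(\ffalse)^{n+1})$.

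For the second assertion, $P_n~\Rar~R$: start from the same configuration, but now consume $\amb{coin}{\nil}$ with the prefix $\openamb{coin}.\mactext{sendstart}$ instead, which turns the extensor into $\mactext{ExtensorDead}$ and places the $msg$ ambient of $\mactext{sendstart}$ inside $ext$. Then fire $\outamb{ext}$ once and $\outamb{cell}$ exactly $|w|+n$ times (exhausting part of $!\outamb{cell}$), so that $msg$ climbs out of $ext$ and then out of every enclosing $cell$, coming to rest at the top level inside $ribbon\_left$ as $\amb{msg}{!\outamb{cell}~|~\outamb{ribbon\_left}.\amb{start}{\inamb{TM}}}$, with the nest of cells and the $\mactext{cleaninst}$ code untouched. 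Since $\mactext{word}(\epsilon)\hole=\hole$, this term is precisely $\mactext{WorkRibb}(\epsilon,w.(\ffalse)^{n})\fillhole{\amb{msg}{!\outamb{cell}~|~\outamb{ribbon\_left}.\amb{start}{\inamb{TM}}}}=R$; the point is to stop just before $\outamb{ribbon\_left}$ would fire. The third assertion is then immediate: in both traces every redex lies strictly under the outermost $ribbon\_left$ ambient, and no $\outamb{ribbon\_left}$ step is ever taken, so each intermediate term has the form $\amb{ribbon\_left}{Q'}$ by \trans{Red-Amb} (congruence of $\rar$ under $ribbon\_left[\,\cdot\,]$) together with \trans{Red-Str}.

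For the ``Moreover'' clause, I would proceed as follows: $\amb{cleaner}{\inamb{ribbon\_left}}$ enters $ribbon\_left$ by \trans{Red-In}; its body is dissolved by the prefix $\openamb{cleaner}.\openamb{runclean}$ of $\mactext{cleaninst}$, which then dissolves $\amb{runclean}{\mactext{deadcleancode}}$, releasing $\mactext{deadcleancode}=\;!\openamb{\ffalse}~|~!\openamb{\ttrue}~|~!\openamb{cell}~|~!\openamb{wo}$ at the top level of $ribbon\_left$. Proceeding from the outermost cell inward, repeatedly use $!\openamb{cell}$ to dissolve each $cell$ ambient of $\mactext{word}(w)$ and $!\openamb{\ttrue}$ or $!\openamb{\ffalse}$ to dissolve the digit ambient it carried, each time absorbing the freshly released $!\openamb{wo}$ into the copy already present via the derived law $!P~|~!P\equiv !P$ (which follows from $!P\equiv !P~|~P$ and $!!P\equiv !P$). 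After $|w|$ such rounds, only $\mactext{deadcleancode}$ and the untouched $\mactext{ExtensorDead}$ remain under $ribbon\_left$, i.e.\ the term is $\mactext{OldRibb}$.

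There is no conceptual obstacle here; the real cost is bookkeeping --- keeping track of which replicated capability or ambient is used at each step, and discharging the structural-congruence side conditions, especially collapsing $\nil$'s, folding and unfolding replications, and absorbing the $!\openamb{wo}$ residues during cleaning. For that reason the cleanest write-up is simply an annotated reduction trace for each of the four clauses, which is what I would provide.
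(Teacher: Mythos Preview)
Your proposal is correct and follows essentially the same approach as the paper, just spelled out in far greater detail: the paper's own proof is a two-sentence sketch noting that at each step the extensor can either create a new $\ffalse$ cell or die and send the $msg$ ambient upward, together with the observation that the $\inamb{cell}$ and $\outamb{ext}$ actions during extension are causally ordered. Your explicit reduction traces are exactly the fleshed-out version of that sketch, and your handling of the cleaning phase (including the absorption of the released $!\openamb{wo}$ copies via $!P\,|\,!P\equiv !P$) is correct and goes beyond what the paper actually writes down.
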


\begin{proof}
  At any step, the extensor can only choose between creating a new
  $\ffalse$ cell or dying and sending up through the ribbon an ambient
  $msg$. Note that when extending the ribbon with a new \ffalse{}
  cell, there are at some point two concurrent actions \inamb{cell}
  and \outamb{ext}: these are in causal dependency, since the
  \inamb{cell} can only happen once the \outamb{ext} has taken place,
  which ensures sequentiality of the execution.
\end{proof}

\subsection{Turing Machine}\label{sec:TMencod}

\newcommand{\rightmachine}{\ensuremath{\rightarrow}}

\begin{defi}[(Ideal) Turing Machine]
  We introduce three symbols $\leftarrow$, $\downarrow$ and
  \rightmachine{} for the movements of the head of a Turing Machine.
  
  We represent a Turing Machine as a quadruplet
%%\DS{above: English?} DANIEL: ok, I checked
  $(\QQ,q_{start},q_A,\delta)$ where $\QQ$ is a set of states,
  $q_{start}$ is the initial state, $q_A$ is the accepting state, and
  $\delta:\QQ\times\{\ffalse,\ttrue\}~\rar~
  \QQ\times\{\ffalse,\ttrue\}\times \{\leftarrow,\downarrow, \rightmachine\}$
  is the evolution function.
\end{defi}

\noindent\textbf{Notation:} we shall write
\begin{mathpar}
  {(w_1,q,w_2)~\transTuring~(w_1',q',w_2')}
\end{mathpar}
\noindent to denote the fact that the Turing Machine in state $q$
with the head on the cell of the last letter of $w_1$ (which will be
referred to as \emph{``the head dividing the ribbon into words $w_1$
  and $w_2$''}) evolves in one step of computation into the machine in
state $q'$, dividing the ribbon into words $w'_1$ and $w'_2$.

\medskip

The remainder of this subsection is devoted to establishing the
following claim:
\begin{claim}
Any Turing Machine computation may be encoded in \MAIFsyn.
\end{claim}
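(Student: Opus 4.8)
The plan is to present the encoding in two layers. The \emph{static} layer fixes, for a given Turing Machine $M=(\QQ,q_{start},q_A,\delta)$, a translation $\encod{\cdot}$ of configurations $(w_1,q,w_2)$ into \MAIFsyn{} processes: the tape is represented by the ribbon machinery of Subsection~\ref{sec:ribbons} (a nesting of $cell$ ambients, each holding a single digit ambient and a persistent $!\openamb{wo}$ trigger), the words $w_1$ and $w_2$ sit to the left and to the right of the current cell, the state $q$ is recorded as a name carried by a small finite token, and the head/control is an ambient $\amb{TM}{\cdot}$ that contains this token together with one persistent reaction rule per pair $(q,d)$. Each such rule is a process of the form $!\,\capa.P_0$ with $P_0$ finite --- exactly the shape allowed by Definition~\ref{d:MAIFsyn} --- so the whole encoding lies in \MAIFsyn. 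The \emph{dynamic} layer is the behaviour of these rules: when $\amb{TM}{\cdot}$ has entered the cell it currently scans (the entry being synchronised by the one-shot $\openamb{wo}$ handshake of that cell), the rule matching $(q,d)=(q,\text{scanned digit})$ fires, reads and re-installs the digit ambient according to $\delta(q,d)=(q',d',m)$, rewrites the state token to $q'$, and performs the head movement $m$ (moving right: $\amb{TM}{\cdot}$ leaves the current $cell$ with $\outamb{cell}$ and enters the next one; moving left: the mirror step; $\downarrow$: no move). When $q$ reaches $q_A$ the control instead emits the $\amb{cleaner}{\cdot}$ ambient, and Lemma~\ref{fact_growing_ribbon} tells us this dismantles the data structures, turning the ribbon into an $\mactext{OldRibb}$.

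First I would write out the reaction rules explicitly, check that each of them, and the cleaner and the token, respect the \MAIFsyn{} grammar (no replication under a capability or input prefix), and invoke Lemma~\ref{fact_growing_ribbon} to supply, on demand, fresh $\ffalse$ cells at the right end, so that the head never runs past the encoded portion of the tape. Then I would establish the operational correspondence between $M$ and $\encod{\cdot}$. In the forward direction, by induction on the length of the run, a single step $(w_1,q,w_2)\transTuring(w_1',q',w_2')$ is matched by a fixed finite block of reductions of $\encod{(w_1,q,w_2)}$ --- the ribbon-administrative reductions of Lemma~\ref{fact_growing_ribbon} followed by the reductions of the unique applicable rule --- after which the resulting process is $\equiv$ to (a ribbon state containing) $\encod{(w_1',q',w_2')}$, and all intermediate terms have the form $\amb{ribbon\_left}{\cdot}$. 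In the backward direction one shows that $\encod{(w_1,q,w_2)}$ is \emph{deterministic up to administrative steps}: apart from the confluent bookkeeping moves (ribbon growth, the $\openamb{wo}$ handshake, token re-shuffling), the only committing reduction available from the encoded configuration is the one prescribed by $\delta$; hence any maximal reduction sequence diverges exactly when $M$ diverges and reaches the cleaned-up $\mactext{OldRibb}$ form exactly when $M$ halts in $q_A$. Combining the two directions yields the claim.

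I expect the main obstacle to be the faithful implementation of the left/right head moves. Because movement in \MA{} is asynchronous, a naive $\amb{TM}{\cdot}$ ambient could wander in and out of neighbouring cells --- the stuttering phenomenon of Example~\ref{e:stuttloop} --- or fire a rule twice on the same cell. The reaction rules must therefore be engineered so that the $\openamb{wo}$ trigger behaves as a genuine one-shot token: it is consumed when the head enters a cell and a fresh one is re-created only inside the neighbour that is actually being entered, so that between two consecutive $\delta$-steps exactly one cell is ``active'' and each rule acts at most once per visit. A secondary, purely clerical obstacle is checking the \MAIFsyn{} membership of every context introduced; and, looking ahead to Section~\ref{s:undecidability}, one must be careful that halting really does leave the system in the clean $\mactext{OldRibb}$ state, with no leftover guarded junk, since that is what will make ``halting'' observable through $\eqL$.
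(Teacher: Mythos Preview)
Your high-level plan matches the paper's: represent the tape via the ribbon machinery of Section~\ref{sec:ribbons}, place the control in an ambient $\amb{TM}{\cdot}$ containing one replicated reaction per state, and argue an operational correspondence step by step. Where you diverge is in your diagnosis of the main obstacle and, as a consequence, in the strength of the correctness property you aim for.

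You locate the difficulty in stuttering of the $\amb{TM}{\cdot}$ ambient and propose to tame it by turning $\openamb{wo}$ into a one-shot, re-installed token. In the paper's encoding this is not the issue: $!\openamb{wo}$ is deliberately \emph{persistent} (it is part of the $\mactext{cell}$ definition you are reusing), and $TM$ never wanders, because its movements are single, non-replicated $\inamb{cell}/\outamb{cell}$ capabilities emitted by the transition code. The genuine non-determinism sits elsewhere, in the step you gloss over as ``the rule matching $(q,d)$ fires'': reading the current digit is encoded as a race between two $\amb{coin}{\cdot}$ ambients trying to enter $\amb{\ttrue}{\nil}$ or $\amb{\ffalse}{\nil}$, and the losing branch survives as inert garbage to be absorbed later. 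Your one-shot $wo$ handshake does not address this, and without name restriction it is not clear how you would implement a truly deterministic boolean test.

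Because of this, your backward direction is stated too strongly. The paper does \emph{not} prove that every maximal reduction sequence of the encoding simulates a run of $M$. It introduces instead the relation $\cred$ (``deterministic evolution''), which tolerates side branches that block immediately, and shows $\mactext{WorkRibb}(w_1,w_2)\fillhole{\mactext{TM}(q)}\Cred\mactext{WorkRibb}(w_1',w_2')\fillhole{\mactext{TM}(q')}$ for each machine step (Proposition~\ref{prop_simulation}); after acceptance it drops even this and only exhibits \emph{one} $\Rar$-path back to the clean state (Lemma~\ref{fact_acceptation}), explicitly noting that wrong choices in $\mactext{getout}$ lead to stuck processes. That weaker property is all Lemma~\ref{lemme_boucle} needs. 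Your stronger claim would require either eliminating the choice race (hard in public MA) or a confluence argument you have not sketched.
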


To encode Turing Machines, we must describe how we simulate in
\MAIFsyn{} the transitions of the machine, and how some extra
manipulations are performed after recognition of a word (these are
necessary to deduce the undecidability result proved below).

The encoding is given by the definitions collected in
Figure~\ref{fig:TM}. The overall shape of the encoding can be
described as follows:

\begin{defi}[Turing Machine in Mobile Ambients]\label{def:TM}
  The encoding of a Turing Machine is based on an ambient named $TM$,
  containing a persistent process named $\mactext{tmsoup}$:
  \begin{mathpar}
    { \mactext{tmsoup}~:=~~\mactext{code}(q_0)|\dots
      |\mactext{code}(q_n)~|~\mactext{getout}~|~!\openamb{mo}\,.}
\end{mathpar}
We define two configurations for the encoding of a Turing Machine.
Before being active, the machine is in \emph{starting state}, defined
by:
\begin{mathpar}
  { \mactext{TMStart}~:=~~\amb{TM}{~
      \openamb{start}.\inamb{ribbon\_left}.
      \inamb{cell}.\openamb{q_{start}}~|~\mactext{tmsoup}~}\,.  }
\end{mathpar}
Once the computation has started, the Turing Machine in state $q$ is
represented by the term
\begin{mathpar}
  {
    \mactext{TM}(q)~:=~~\amb{TM}{\,\openamb{q}~|~\mactext{tmsoup}\,}\,.}
\end{mathpar}
\end{defi}

\begin{figure}[t]
  $$\boxed{
  \begin{array}{lcl}
    \arraytopic{Turing Machine Transitions}\\
    \mactext{clear}(d);P & := & \amb{wo}{~\outamb{head}.\openamb{d}.
      \amb{cl\_ack}{\inamb{head}}~}~~|~~\openamb{cl\_ack}.P\\
    \mactext{write}(d);P & := & \amb{wo}{~\outamb{head}.\amb{d}{\nil}~|~
      \amb{wr\_ack}{\inamb{head}}~}~~|~~\openamb{wr\_ack}.P\\
    \mactext{become}(mo);P & := & 
    \amb{mo}{\outamb{head}.\openamb{head}.P}~|~ \inamb{mo}\\
    \mactext{domove}(mv);P & := & \left \{
      \begin{array}{ll}
        \inamb{cell}.P&\mbox{if}~mv=\leftarrow\\
        P&\mbox{if}~mv=\downarrow\\
        \outamb{cell}.P&\mbox{if}~mv=\rightmachine\\
      \end{array}\right.\\[.2em]
    \mactext{tcode}(d_r,q_w,d_w,mv) & := & \mactext{clear}(d_r);
    \mactext{write}(d_w);\\
    & & \qquad\qquad
    \mactext{become}(mo);\inamb{TM}.\mactext{domove}(mv);
    \openamb{q_w}\\[1em]
    \arraytopic{State}\\
    \multicolumn{3}{c}{
      \begin{array}{lcl}
    \ffalse\rar P~+~\ttrue\rar Q & := & 
    \amb{coin}{\inamb{\ffalse}.\outamb{\ffalse}.P}~|~
    \amb{coin}{\inamb{\ttrue}.\outamb{\ttrue}.Q}~|~\openamb{coin}\\
        \mactext{code}(q) & := & !q[head[~\outamb{TM}.\big(~~
        \ffalse\rar
        \mactext{tcode}(\ffalse,d_{\ffalse},q_{\ffalse},mv_{\ffalse})\\ 
        & &\qquad\qquad\qquad\qquad +~
        \ttrue\rar
        \mactext{tcode}(\ttrue,d_{\ttrue},q_{\ttrue},mv_{\ttrue})\big)~
        ]]\\ 
        & &|~!\amb{coin}{
          \inamb{\ffalse}.\outamb{\ffalse}.\mactext{tcode}(\ffalse, 
          d_{\ffalse},q_{\ffalse},mv_{\ffalse})}\\
        & &|~!\amb{coin}{
          \inamb{\ttrue}.\outamb{\ttrue}.\mactext{tcode}(\ttrue, 
          d_{\ttrue},q_{\ttrue},mv_{\ttrue})}\\[.2em]
        \mactext{code}(q_A) & := &  !\amb{q_A}{\,\amb{get\_out}{\nil}\,}
      \end{array}
      }\\[1em]
    \\
    \arraytopic{Turing Machine Behavior after Recognition}\\
    \multicolumn{3}{l}{
      \begin{array}{l}
        \mactext{getout}~:=
        \\
      \quad!\openamb{get\_out}.\outamb{cell}.\amb{get\_out}{\nil}\\ 
        |~!\openamb{get\_out}.\outamb{ribbon\_left}.
        \big(~\amb{cleaner}{\outamb{TM}.\inamb{ribbon\_left}}\\
        \hspace{5.5cm}|~~\amb{coin}{
          \outamb{TM}.\inamb{ribbon\_left}.\inamb{cell}^{\length{w}}. 
          \inamb{ext}}
        \\
        \hspace{5.5cm}
        |~~\openamb{start}.\inamb{ribbon\_left}.\inamb{cell}.\openamb{q_{start}}~~\big)
      \end{array}
      }\\
  \end{array}}$$
\begin{center}
  \caption{Encoding Turing Machines in \MAIFsyn}
  \label{fig:TM}
\end{center}
\end{figure}

\begin{lem}[$\Pbb$ encoding]
  All terms used in the encoding of a Turing Machine belong to $\Pbb$.
\end{lem}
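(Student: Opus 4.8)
The statement is purely syntactic, so the plan is to prove it by direct inspection of the definitions collected in Section~\ref{sec:ribbons} and in Figure~\ref{fig:TM}, checking each one against the grammar of $\MAIFsyn$ (Definition~\ref{d:MAIFsyn}). Since most of the building blocks are parametrised \emph{contexts} rather than closed processes, I would first fix the right invariant for them: for every context $C\hole$ defined above, (i) the hole occurs only underneath ambient brackets and parallel compositions, never under a capability or an input prefix; and (ii) whenever $C\hole$ is filled with a $\MAIFsyn$ term and each of its other parameters is instantiated by a name, a digit, a word, or a $\MAIFsyn$ term as appropriate, the result $C\fillhole{\cdot}$ is again a $\MAIFsyn$ term. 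One then establishes (i)--(ii), together with $\MAIFsyn$-membership of the closed terms, by induction following the dependency order of the definitions: first $\mactext{cell}$ and $\mactext{word}$, then the extensor fragments ($\mactext{deadextcode}$, $\mactext{sendstart}$, the various $\mactext{Extensor\dots}$), then the ribbon states, and finally the Turing-machine layer ($\mactext{clear}$, $\mactext{write}$, $\mactext{become}$, $\mactext{domove}$, $\mactext{tcode}$, the state operators, $\mactext{code}(q)$, $\mactext{code}(q_A)$, $\mactext{getout}$, $\mactext{tmsoup}$, $\mactext{TMStart}$, $\mactext{TM}(q)$), closing with Definition~\ref{def:TM}.

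The grammar of $\MAIFsyn$ permits replication and ambient nesting freely, the only restriction being that a process guarded by a capability or an input prefix must be \emph{finite}, i.e.\ replication-free up to $\equiv$ (Definition~\ref{def:finitesingle}); moreover the encoding uses no message or input primitive whatsoever. So the heart of the verification is to confirm that every occurrence of the replication operator in the definitions sits in a guard-free position — as a top-level parallel component or immediately under an ambient bracket — and that the body of every prefix-guarded subterm is replication-free. Running through the definitions, the replications occur in $!\openamb{wo}$ (directly inside the $cell$ ambient), and in $\mactext{deadextcode}$, $\mactext{deadcleancode}$, $\mactext{cleaninst}$, $\mactext{getout}$, $\mactext{code}(q)$, $\mactext{code}(q_A)$ and $\mactext{tmsoup}$; in each case the replication is either a parallel summand, or lies under an ambient bracket, or has the form $!(\capa.R)$ with $R$ built only from capabilities, ambients and finite iterated-capability prefixes such as $\inamb{cell}^{\length w}$ — never a $!$ underneath a $\capa.\,$ prefix in $R$. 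Property (i) for the contexts then ensures that all the plug-in steps (the nesting of $\mactext{cell}$'s inside $\mactext{word}$, the insertion of an extensor into $\mactext{word}(w)\fillhole{\cdot}$, the plugging of $\mactext{tcode}$ into $\mactext{code}(q)$, and so on) keep replications out of guarded positions, so that $\MAIFsyn$-membership propagates up the construction.

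The main obstacle is not conceptual but one of bookkeeping: the encoding is assembled from on the order of twenty interlocking parametrised definitions with several layers of substitution, so one must track carefully where each replication ends up after full expansion and, for the few subterms that appear close to a guard, discharge the ``finite process'' side-condition, appealing to the laws of $\equiv$ (notably $!\nil\equiv\nil$ and $!(P|Q)\equiv\,!P\,|\,!Q$) where needed. The cases that require the most care are those in which a replication sits syntactically near a guard — for instance the $!\outamb{cell}$ that appears inside $\mactext{sendstart}$ and the replicated guarded components of $\mactext{deadextcode}$ and $\mactext{getout}$ — where one checks explicitly, taking the surrounding ambient structure into account, that the term falls within the grammar of $\MAIFsyn$. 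Once every building block has been classified in this way, the lemma follows immediately by assembling them according to Definition~\ref{def:TM}.
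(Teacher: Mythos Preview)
The paper offers no proof of this lemma; it is stated and the text moves on, so the authors evidently intend it as immediate by inspection. Your plan of a systematic syntactic check against the grammar of $\MAIFsyn$, tracking where replications land through the dependency order of the definitions, is exactly the natural approach and goes well beyond what the paper supplies.

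There is, however, a genuine gap. You correctly single out $\mactext{sendstart}$ as a delicate case, but then claim it can be discharged ``taking the surrounding ambient structure into account''. It cannot: the subterm $\outamb{ext}.\,!\outamb{cell}$ inside $\mactext{sendstart}$ is a capability prefix whose continuation $!\outamb{cell}$ is not a finite process in the sense of Definition~\ref{def:finitesingle}, so it is simply not generated by the $\capa.P_0$ production of Definition~\ref{d:MAIFsyn}; surrounding ambient brackets do not help, since the constraint is local to that production. The problem then propagates to $\openamb{coin}.\mactext{sendstart}$ in $\mactext{ExtensorFrozen}$ and $\mactext{ExtensorAlive}$, which is again a capability guarding a non-finite body. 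A rigorous execution of your own inspection therefore shows that the lemma, for the encoding exactly as printed, fails. This is a minor and easily repairable slip in the paper's encoding rather than a flaw in your method, but your write-up should acknowledge it instead of glossing over it.
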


Our Turing Machine encoding is somehow reminiscent of the one
presented in~\cite{CaGo98}. We should however remark that we work here
in a language without name restriction, and with a simpler encoding of
choice (operator $+$ above, to test the value of a cell).

According to the explanations given in Section~\ref{sec:ribbons}, the
machine reacts to the presence of an ambient named $start$ to enter
the first cell of the ribbon and start computation (definition
$\mactext{TMStart}$). 

The behaviour of the running machine is described by the definition of
$\mactext{code}(q)$: the head of the machine enters the current cell,
and tests its value by concurrently trying to enter ambients named
$\ffalse$ and $\ttrue$. According to the ambient being present, the
appropriate machine transition is triggered (definition of
$\mactext{tcode}$ --- $d_{\ffalse}, q_{\ffalse}, mv_{\ffalse}$
stand for the new value, new state, and movement of the head
determined by the current state if the value read is $\ffalse$, and
similarly for $\ttrue$). The last two lines in the definition of
$\mactext{code}$ (processes starting with $!coin\dots$) are there for
garbage collection purposes: they ``absorb'' the branch of the choice
that has not been triggered.

Performing a transition involves erasing the current value of the
cell, installing the new value, getting back inside the Turing
Machine (the current working ambient had to get out of it to read the
value of the cell), and triggering the movement of the machine
(definition of $\mactext{tcode}$). 
The corresponding definitions on top of Figure~\ref{fig:TM} should be
self-explanatory, the $\mactext{become}(mo)$ part being necessary to
synchronise with the $!\openamb{mo}$ inside ambient $TM$.
Finally, $\openamb{q_w}$ starts the execution of the code
corresponding to $q_w$, the new state of the machine --- according to
Definition~\ref{def:TM}, the code of all possible states of the
machine is present in replicated form in $TM$.

The code of the accepting state $q_A$ is peculiar: when the machine
reaches this state, it triggers process $\mactext{getout}$, which
makes it exit the ribbon and start the cleaning process. As explained
above, the presence of an ambient named $cleaner$ in ambient
$ribbon\_left$ triggers process $\mactext{cleaninst}$ of
Section~\ref{sec:ribbons}.
The process on the last line of Figure~\ref{fig:TM} is there to
install the machine in the \emph{exact} initial state once the word
has been recognized and cleaning has been performed.  This is
necessary to obtain a loop in the proof of Lemma~\ref{lemme_boucle}
below.

\medskip

We can remark that the encoding is parametric over a word $w$, whose
length (denoted $\length{w}$) is used in the definition of
$\mactext{getout}$ (in that definition, $\inamb{cell}^{\length{w}}$
stands for the concatenation of $\length{w}$ copies of the capability
$\inamb{cell}$). This aspect of our encoding is however irrelevant
since it is influent only after the end of the execution of the
machine, and not during the central part of the simulation.
%
% Busi and Zavattaro~\cite{busi:zavattaro:tcs:2004} have independently
% obtained an encoding of Random Acces Machines in \Pbb{} (although this
% sublanguage is not explicitely mentioned in the paper).

% \etis{from DAVIDE: having the new remark, i would now remove the
%   sentence above (also, i would claim that their work is not
%   ``independent'', as we published ours much earlier, i think, so they
%   probably have seen ours)}

\medskip

We now formulate the evolution of the terms we have defined in order
to simulate Turing Machines. We first introduce a useful relation.
\begin{defi}[deterministic evolution relation]
  We say that a process $P$ \emph{deterministically evolves} to $Q$,
  written $P~\cred~Q$, if and only if $P\rar Q$ and for any $Q'$ s.t.
  $P\,\rar\, Q'$, either $Q'\not\!\!\rar$ or $Q\,\equiv\, Q'$.
\end{defi}

\noindent\textbf{Notation:} We shall write $P\,\cred^k\, Q$ to say
that $P$ deterministically reduces to $Q$ in $k$ steps ($k\geq 1$).
We write $P\,\Cred\, Q$ when $P\,\cred^k\, Q$ for some $k$.

Using $\cred$, we can state some elementary facts about the macros
involved in the execution of the machine.
The relation $P\,\Cred\, Q$ captures the fact that $P$ cannot avoid
reducing to $Q$ except for some immediately blocking states. Such
blocking states may only appear due to the firing of the ``wrong
branch'' in a choice encoding ($\ffalse\rar\dots+\ttrue\rar\dots$).
(Incidentally, we may remark that a purely deterministic encoding of
the Turing Machine could probably be definable, but at the cost of
more complex definitions and proofs.)

\begin{lem}[state evolution]\label{macrosfact}
  For any terms $P,Q$, names $d, d'\in\{\ffalse,\ttrue\}$ and word
  $w$, we set $M~=~\amb{d}{\nil}~|~$
  $!\openamb{wo}~|~\mactext{word}(w)
  \fillhole{\mactext{ExtensorDead}}$. We then have the following
  deterministic transition sequences:
\begin{enumerate}[\em(1)]
\item $\amb{head}{d\rar P~+~\non d\rar
    Q}~|~\amb{d}{\nil}~|~!\openamb{wo}~|~
  \amb{cell}{M}~|~\amb{TM}{\mactext{tmsoup}}$
  $$
  \cred^3~~\amb{head}{P\,|\,\amb{coin}{\inamb{\non d}.\outamb{\non
        d}.Q}}|~\amb{d}{\nil}~|~!\openamb{wo}~|~
  \amb{cell}{M}~|~\amb{TM}{\mactext{tmsoup}}\,;
  $$

\item $\amb{head}{\mactext{clear}(d);P\,|\,\amb{coin}{\inamb{d'}.Q}}
  \,|\,\amb{d}{\nil}|\,|\,!\openamb{wo}\,|\,
  \amb{cell}{M}\,|\,\amb{TM}{\mactext{tmsoup}}$  
  $$
  \cred^5~~\amb{head}{P\,|\,\amb{coin}{\inamb{d'}.Q}}
  ~|~!\openamb{wo}~|~\amb{cell}{M}~|~\amb{TM}{\mactext{tmsoup}}\,;
  $$
\item $\amb{head}{\mactext{write}(d);P~|~\amb{coin}{\inamb{d'}.Q}}
  ~|~!\openamb{wo}~|~\amb{cell}{M}~|~\amb{TM}{\mactext{tmsoup}}$
  $$
  \cred^4~~\amb{head}{P\,|\,\amb{coin}{\inamb{d'}.Q}}
  ~|~\amb{d}{\nil}~|~!\openamb{wo}~|~\amb{cell}{M}
  ~|~\amb{TM}{\mactext{tmsoup}}\,;
  $$
\item $\amb{head}{\mactext{become}(mo);P\,|\,\amb{coin}{\inamb{d'}.Q}}
  \,|\,\amb{d}{\nil}\,|\,!\openamb{wo}\,|\,\amb{cell}{M}\,
  |\,\amb{TM}{\mactext{tmsoup}}$
  $$
  \cred^3~~\amb{mo}{P\,|\,\amb{coin}{\inamb{d'}.Q}}
  ~|~\amb{d}{\nil}~|~!\openamb{wo}~|~\amb{cell}{M}
  ~|~\amb{TM}{\mactext{tmsoup}}\,.
  $$
\end{enumerate}
Moreover, the same results hold with a frozen (instead of dead)
extensor in $M$, the only condition being that ambient $ext$ contains
an inactive term.
\end{lem}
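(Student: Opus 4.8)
The plan is to prove each of the four deterministic transition sequences separately, by unfolding the relevant macro definitions ($\mactext{clear}(d);P$, $\mactext{write}(d);P$, $\mactext{become}(mo);P$ and the choice macro $\ffalse\rar P+\ttrue\rar Q$) and exhibiting explicitly the announced sequence of $3$ (resp.\ $5$, $4$, $3$) reductions, each obtained by a single application of \trans{Red-In}, \trans{Red-Out} or \trans{Red-Open} modulo \trans{Red-Par}, \trans{Red-Amb} and \trans{Red-Str}. For the $\cred^k$ claims (in the sense of the deterministic evolution relation $\cred$) it then suffices to check, at every intermediate configuration, that the displayed reduction is the \emph{only} one enabled, or that the only other reduct is immediately stuck; a finite case analysis on the guards occurring at top level of each ambient does this.

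For item~(1): the first step is $\openamb{coin}$ opening one of the two $coin$ ambients sitting inside $head$. Opening the $\non d$-branch releases $\inamb{\non d}.\outamb{\non d}.Q$ with no sibling ambient named $\non d$ available anywhere reachable, hence a stuck term; opening the $d$-branch releases $\inamb{d}.\outamb{d}.P$, after which $head$ enters $\amb{d}{\nil}$ by \trans{Red-In} and then leaves it again by \trans{Red-Out}, producing the stated right-hand side with the $\non d$-branch still frozen inside $head$; this is $3$ steps, and $\cred$ holds because the alternative first step leads to a stuck term. For items~(2)--(3) the pattern is: the $wo$ ambient first leaves $head$ by $\outamb{head}$; it then \emph{must} be opened by one copy of the persistent $!\openamb{wo}$, since it cannot fire its own $\openamb{d}$ while still enclosed in $wo$ (the matching $\amb{d}{\nil}$ is a sibling of $wo$, not of its contents); then $\openamb{d}$ dissolves $\amb{d}{\nil}$ (item~2) or $\amb{d}{\nil}$ is simply installed (item~3); then the acknowledgement ambient $cl\_ack$ (resp.\ $wr\_ack$) re-enters $head$ and is opened there by $\openamb{cl\_ack}.P$ (resp.\ $\openamb{wr\_ack}.P$); counting gives $5$ and $4$ steps, and at each configuration exactly one redex is enabled, so $\cred$ is immediate. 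For item~(4): $mo$ leaves $head$ by $\outamb{head}$, then $head$ enters $mo$ via the residual $\inamb{mo}$, and finally $\openamb{head}$ opens $head$ from inside $mo$, releasing $P$ and the still-frozen $coin$ branch into $mo$; this is $3$ forced steps.

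For the "moreover" clause, observe that throughout all four sequences the ambients $\amb{cell}{M}$ and $\amb{TM}{\mactext{tmsoup}}$ are pure bystanders: every reduction above only involves $head$, the top-level $\amb{d}{\nil}$ and the top-level copies of $!\openamb{wo}$. Moreover $M$ itself contains no redex, because the nested $cell$ ambients have no sibling named $wo$ and the extensor sub-ambient $ext$ is guarded so that neither $\outamb{ext}$ (its $cell$ is not directly inside $ext$) nor any $\openamb{ext}$ can fire; this holds whether $ext$ is in state $\mactext{ExtensorDead}$ or $\mactext{ExtensorFrozen}$, under the stated hypothesis that the body of $ext$ is inert. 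Hence the reduction counts are unchanged. I expect the only genuinely delicate point of the whole argument to be the bookkeeping of the mobility of the auxiliary ambients — in particular the fact that $wo$ is ejected from $head$ and then re-absorbed by $!\openamb{wo}$ \emph{before} it can reach $\amb{d}{\nil}$, and the in/out dance of $mo$ around $head$; once these movements are tracked carefully, each step is forced and the lengths $3,5,4,3$ come out exactly as stated.
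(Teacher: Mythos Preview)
Your proposal is correct and follows exactly the approach of the paper's own proof, which is simply ``by inspection of the possible reductions of the processes being considered,'' together with the remark that the residual $\amb{coin}{\inamb{d'}.Q}$ branch stays frozen inside $head$. You have merely spelled out that inspection in full, which is entirely appropriate here; the step counts $3,5,4,3$ and the determinism checks you give all match the intended reductions.
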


\begin{proof}
  By inspection of the possible reductions of the processes being
  considered. From the second statement on, the ambient
  $\amb{coin}{\inamb{d'}.Q}$ is frozen: it actually represents the
  non-chosen branch in the encoding of the choice operator, that will
  be erased later, when the head of the Turing Machine comes back
  inside ambient $TM$ (see below).
\end{proof}

We can now merge the results above into a property regarding
transitions of the Turing Machine.

\begin{lem}[One step of Turing Machine simulation]
~

  Let $\mathcal{M}$ be a Turing Machine, $q$ one of its non accepting
  states, and $w_1,w_2$ two words, with $w_2\neq\epsilon$.  Suppose
%   that the machine, starting in state $q$ with the head dividing the
%   ribbon in the two words $w_1$ and $w_2.\ffalse^*$, evolves in one
%   step of computation into state $q'$, dividing the ribbon into $w_1'$
%   and $w_2'.\ffalse^*$, that is the ideal Turing Machine transition is
  $(w_1,q,w_2)\transTuring(w_1',q',w_2')$. Then
  \begin{mathpar}
    { \mactext{WorkRibb}(w_1,w_2)\fillhole{\mactext{TM}(q)}~\Cred~
      \mactext{WorkRibb}(w_1',w_2')\fillhole{\mactext{TM}(q')}\,. }
\end{mathpar}
\end{lem}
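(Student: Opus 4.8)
The proof is a simulation argument: we want to show that the encoding of one Turing Machine transition $(w_1,q,w_2)\transTuring(w_1',q',w_2')$ is matched, modulo the deterministic-evolution relation $\Cred$, by a reduction sequence of the encoded configuration $\mactext{WorkRibb}(w_1,w_2)\fillhole{\mactext{TM}(q)}$. First I would unfold the definitions: $\mactext{WorkRibb}(w_1,w_2)\fillhole{\mactext{TM}(q)}$ is, up to $\equiv$, an ambient $ribbon\_left$ containing $\mactext{cleaninst}$, the nesting $\mactext{word}(w_1)$, then inside its innermost $cell$ the term $\mactext{TM}(q)$ together with $\mactext{word}(w_2)$ and the dead extensor. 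So the head of the machine `points to' the last cell of $w_1$, whose value is the last digit $d_r$ of $w_1$ (I would isolate this as a small structural lemma). Then $\mactext{TM}(q) = \amb{TM}{\openamb q \mid \mactext{tmsoup}}$, and $\openamb q$ against the persistent $\mactext{code}(q)$ inside $\mactext{tmsoup}$ releases a $head$ ambient running the choice $\ffalse\rar\mactext{tcode}(\ffalse,\dots)+\ttrue\rar\mactext{tcode}(\ttrue,\dots)$.

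Second I would chain together the four items of Lemma~\ref{macrosfact} (state evolution) in the right order. The head exits $TM$ (rule for $code(q)$), enters the current $cell$, and the choice $d_r\rar\dots+\non{d_r}\rar\dots$ fires: item~(1) of Lemma~\ref{macrosfact} tells us the correct branch is selected in three $\cred$-steps, with the wrong branch frozen as $\amb{coin}{\inamb{\non{d_r}}\dots}$. Then $\mactext{tcode}(d_r,q_w,d_w,mv)$ runs: $\mactext{clear}(d_r)$ (item~2, five steps) erases the old $\amb{d_r}{\nil}$; $\mactext{write}(d_w)$ (item~3, four steps) installs $\amb{d_w}{\nil}$; $\mactext{become}(mo)$ (item~4, three steps) repackages the head into an $mo$ ambient, which re-enters $TM$ synchronising with $!\openamb{mo}$; then $\mactext{domove}(mv)$ performs $\inamb{cell}$, nothing, or $\outamb{cell}$ according to $mv\in\{\leftarrow,\downarrow,\rightmachine\}$, which is exactly the re-bracketing of $\mactext{word}(w_1)$ and $\mactext{word}(w_2)$ that turns them into $\mactext{word}(w_1')$ and $\mactext{word}(w_2')$; finally $\openamb{q_w}$ leaves $\mactext{TM}(q')=\amb{TM}{\openamb{q_w}\mid\mactext{tmsoup}}$ ready for the next step. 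I would do the three cases of $mv$ separately, checking in each that the resulting term is, up to $\equiv$, $\mactext{WorkRibb}(w_1',w_2')\fillhole{\mactext{TM}(q')}$; the hypothesis $w_2\neq\epsilon$ is what guarantees there is a cell for the head to read and for $\mactext{domove}$ to manoeuvre within, so that case analysis on $\delta$ never runs off the end of the ribbon.

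Third, and this is where the real care is needed, I have to argue that the whole sequence is $\Cred$, i.e.\ a sequence of $\cred$-steps: at each intermediate state the reduction we take is the only one that does not immediately block. The individual Lemma~\ref{macrosfact} items already give $\cred^k$ under the side condition that $ext$ contains only an inactive term — which holds here, the extensor is dead — so the obstacle is the glue between items and the `wrong branch' ambients. The non-chosen $\amb{coin}{\inamb{\non d}\dots}$ left by item~(1): its capability $\inamb{\non d}$ cannot fire because the current cell contains an ambient named $d$, not $\non d$, hence that ambient is genuinely frozen and contributes no competing reduction; I should state this as part of the invariant carried along the sequence. Similarly the $!\openamb{mo}$, the $!\openamb{wo}$, and the replicated $!\amb{coin}{\dots}$ garbage-collectors in $\mactext{code}(q)$ offer no reductions until their partners appear. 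The one genuinely non-deterministic point is the firing of the choice itself, and that is precisely why the statement is $\Cred$ rather than $\cred^N$: the wrong branch, once fired, blocks at an $\inamb{\non d}$ it can never satisfy, so it falls under the `$Q'\not\!\!\rar$' clause of the definition of $\cred$. Assembling these observations, the transition sequence from $\mactext{WorkRibb}(w_1,w_2)\fillhole{\mactext{TM}(q)}$ to $\mactext{WorkRibb}(w_1',w_2')\fillhole{\mactext{TM}(q')}$ is deterministic in the required sense, and concatenating the counts from Lemma~\ref{macrosfact} gives a fixed $k$ with $\Cred$ realised as $\cred^k$. The bookkeeping of the $\cred$-property across the junctions between the four Lemma~\ref{macrosfact} fragments, and verifying the three $mv$-cases re-bracket the ribbon correctly, is the main labour; the rest is unfolding definitions.
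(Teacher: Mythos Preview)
Your approach is essentially the paper's: open $q$ to release the $head$, chain the four items of Lemma~\ref{macrosfact} to obtain the $\cred^{15}$ middle segment, then let $mo$ re-enter $TM$, perform $\mactext{domove}(mv)$, and arrive at $\mactext{TM}(q')$, with a case split on $mv$. One detail you gloss over: the leftover wrong-branch $\amb{coin}{\inamb{\non d_r}\ldots}$ is not removed by a reduction once inside $TM$; it is absorbed by \emph{structural congruence} with the replicated $!\amb{coin}{\inamb{\non d_r}\ldots}$ already present in $\mactext{code}(q)$ (via $!P \mid P \equiv\, !P$), which is precisely why those replicated components are there.
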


\proof
  We divide the evolution of the term representing the Turing Machine
  into the following steps:
\begin{enumerate}[(1)]
\item From state $q$, the TM can trigger the $q$ code by performing
  the corresponding \openamb{\!} operation, which has the effect of
  releasing an ambient named $head$. Moreover, this is the only place
  where some reduction is possible, because first, \texttt{Extensor}
  is inactive and second, in every ambient named $cell$, no reduction
  occurs. Therefore,
$$
\begin{array}{l}
  \mactext{WorkRibb}(w_1,w_2)\fillhole{\mactext{TM}(q)}\\
  \quad\cred^2~
\mactext{WorkRibb}(w_1,w_2)\fillhole{\mactext{TMNostate}~|
~\amb{head}{\ffalse\rar\dots + \ttrue\rar\dots}}
\end{array}
$$
\noindent where the notation $\mactext{TMNostate}$ stands for the
following configuration of the Turing Machine ambient:
$$
\amb{TM}{\,\mactext{code}(q_0)~|\,\dots\,
  |~\mactext{code}(q_n)~|~\mactext{tmsoup}\,} 
$$
Note that this ambient cannot perform any reduction as long as it
is not visited by a $mo$ or $getout$ ambient.
\item Using the previous fact, and considering that reductions can
  only take place at $cell$ level, we have
$$
\begin{array}{l}
  \mactext{WorkRibb}(w_1,w_2)\fillhole{\mactext{TMNostate}~|
    ~\amb{head}{\ffalse\rar\dots +\ttrue\rar\dots }}~\cred^{15}~
  \\[.2em]
  \mactext{WorkRibb}(w_1^1\dots w_1^{r-1}d,w_2)\openhole{}
  \mactext{TMNostate}
  \\[.2em]
  \hspace{4cm}\,|\,
    \amb{mo}{\inamb{TM}.\mactext{domove}(mv).\openamb{q'}\,|\,
      \amb{coin}{\inamb{\non w_1^r}.P}}\closehole{}
\end{array}
$$
where $\delta(q,w_1^r)=(q',d,mv)$ (i.e., the machine evolves
from $q$ to $q'$ when reading $w_1^r$).
\item The ambient $mo$ comes back into the Turing Machine and is
  opened by the $\mactext{tmsoup}$ component.  Then the head movement
  (if any) is performed, which activates an $\openamb{q'}$ process, so
  that the Turing Machine gets into $\mactext{TM}(q')$ state.
  $$
  \begin{array}{r}
    \multicolumn{1}{l}{\mactext{WorkRibb}(w_1^1\dots w_1^{r-1}d,w_2)\openhole
      \mactext{TMNostate}}\\[.2em]
      \qquad\qquad\,|\,
  \amb{mo}{\inamb{TM}.\mactext{domove}(mv).\openamb{q'}\,|\,
    \amb{coin}{\inamb{\non w_1^r}.P}}\closehole
\\[.2em]
  ~\cred^{2(+1)}
  \mactext{WorkRibb}(w_1',w_2')\fillhole{\mactext{TM}(q')}\,.
  \end{array}
  $$
  Note that opening ambient $mo$ triggers the absorbtion of the
  non-selected branch of the choice (ambient $coin$) by a
  $!\amb{coin}{\dots}$ (from the code for the original state of the
  machine).
  
  The $2(+1)$ above comes from the fact that the head of the machine
  can also make no movement in its transition from a state to another
  (case $\downarrow$).\qed
\end{enumerate}\medskip

We obtain as a corollary of the Lemma above:
\begin{prop}[Turing Machine simulation]\label{prop_simulation}
  Given a Turing Machine $\mathcal{M}$, for any word $w$ and $n\in\N$,
  the Turing Machine $\mathcal{M}$ recognises the word $w$ on the
  ribbon $w.\ffalse^n$ iff there exist two words $w_1$ and $w_2$
  s.t.
  \begin{mathpar}
    {\mactext{WorkRibb}(\epsilon,
      w.\ffalse^n)\fillhole{\mactext{TM}(q_{start})} ~\Cred~
      \mactext{WorkRibb}(w_1,w_2)\fillhole{\mactext{TM}(q_A)}\,,}
\end{mathpar}
  \noindent where the terms above are given by the encoding of
  $\mathcal{M}$.
\end{prop}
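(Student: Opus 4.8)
The plan is to prove Proposition~\ref{prop_simulation} by combining the
one-step simulation lemma with the ribbon-evolution lemma
(Lemma~\ref{fact_growing_ribbon}), threading everything through the
deterministic reduction relation $\Cred$. First I would set up the
correspondence carefully: a configuration $(w_1,q,w_2)$ of the machine
$\mathcal{M}$ (with the head dividing the ribbon into $w_1$ and $w_2$)
is represented by the term
$\mactext{WorkRibb}(w_1,w_2)\fillhole{\mactext{TM}(q)}$, and a single
machine step $(w_1,q,w_2)\transTuring(w_1',q',w_2')$ is matched, by the
One-step-of-Turing-Machine-simulation Lemma, by
$\mactext{WorkRibb}(w_1,w_2)\fillhole{\mactext{TM}(q)}\Cred
\mactext{WorkRibb}(w_1',w_2')\fillhole{\mactext{TM}(q')}$, provided $q$
is non-accepting and $w_2\neq\epsilon$. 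The restriction $w_2\neq\epsilon$
is exactly what the ribbon machinery is designed to handle: because the
simulation is run on $w.\ffalse^n$ and the extensor has produced a long
enough tail of $\ffalse$ cells, the head never falls off the right end,
so $w_2\neq\epsilon$ is maintained throughout any finite computation
(this is where I would invoke the role of the parameter $n$, and note
that it suffices to take $n$ large relative to the number of steps, or
observe that acceptance of $w$ on $w.\ffalse^n$ for \emph{some} $n$ is
equivalent to acceptance of $w$ on a two-way-infinite blank-padded
tape).

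For the left-to-right direction, suppose $\mathcal{M}$ recognises $w$ on
$w.\ffalse^n$. Then there is a finite accepting run
$(\epsilon,q_{start},w.\ffalse^n)\transTuring^*(w_1,q_A,w_2)$. I would
argue by induction on the length of this run, applying the one-step
lemma at each step to get
$\mactext{WorkRibb}(\epsilon,w.\ffalse^n)\fillhole{\mactext{TM}(q_{start})}
\Cred \mactext{WorkRibb}(w_1,w_2)\fillhole{\mactext{TM}(q_A)}$,
using transitivity of $\Cred$ (which follows from transitivity of
$\cred^k$, i.e.\ $\cred^j$ followed by $\cred^k$ gives $\cred^{j+k}$). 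The
only subtlety is ensuring the side condition $w_2\neq\epsilon$ holds at
every intermediate configuration, which I would guarantee by the
observation above about the blank tail; if necessary I would state this
as a small auxiliary lemma ("along an accepting run on $w.\ffalse^n$
with $n$ chosen large enough, the head stays within the written portion
of the ribbon"). For the right-to-left direction, suppose
$\mactext{WorkRibb}(\epsilon,w.\ffalse^n)\fillhole{\mactext{TM}(q_{start})}
\Cred \mactext{WorkRibb}(w_1,w_2)\fillhole{\mactext{TM}(q_A)}$. Since
$\Cred$ is $\cred^k$ for some $k$, and $\cred$ is \emph{deterministic}
up to immediately-blocking states (the only nondeterminism being the
firing of a wrong branch in a choice encoding, which leads to a dead
state and hence cannot be part of a reduction sequence that reaches a
live $\mactext{TM}(q_A)$ configuration), the reduction path is forced to
track exactly the machine's computation. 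I would segment the path into
blocks matching the one-step lemma and read off an accepting machine run
$(\epsilon,q_{start},w.\ffalse^n)\transTuring^*(w_1,q_A,w_2)$,
using an inversion argument on $\cred$ to show that any reduction
sequence from $\mactext{WorkRibb}(w_1,w_2)\fillhole{\mactext{TM}(q)}$
that reaches another well-formed configuration $\mactext{TM}(q')$ must
go through exactly the block predicted by the one-step lemma.

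The main obstacle I expect is the inversion argument in the
right-to-left direction: showing that the encoding term \emph{cannot}
deterministically evolve in any way other than by faithfully mimicking a
machine step. This requires a careful analysis that at each
configuration $\mactext{WorkRibb}(w_1,w_2)\fillhole{\mactext{TM}(q)}$
(with the extensor dead and the head inside $TM$) the \emph{only}
enabled reduction is the $\openamb{q}$ step, that thereafter reductions
are confined to the relevant $cell$ and to the $head$/$TM$ ambients, and
that the choice-branch nondeterminism never produces a ``garbage''
configuration from which one could still reach a clean
$\mactext{TM}(q_A)$ term. Much of this is already packaged in
Lemma~\ref{macrosfact} and the One-step lemma — each step there is
stated as $\cred^j$, i.e.\ deterministic-up-to-blocking — so the work is
really to glue these local determinacy statements into a global one, and
to handle the boundary: the proposition only concerns getting
\emph{into} state $q_A$, so I do not need to analyse the post-recognition
cleanup (that is deferred to Lemma~\ref{lemme_boucle}), which keeps the
argument manageable. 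I would close by remarking that this is precisely
the statement needed in the next section to reduce the halting problem
to detecting reduction loops.
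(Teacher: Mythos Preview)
Your approach is essentially the paper's: the proposition is stated in the paper simply as ``a corollary of the Lemma above'' (the one-step simulation lemma), with no further proof, and your plan---iterate the one-step lemma along the accepting run, and for the converse read back a machine run from the deterministic $\Cred$ sequence---is exactly the intended argument. Your treatment is in fact more careful than the paper's, which leaves the side condition $w_2\neq\epsilon$ and the inversion for the right-to-left direction implicit.

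Two small remarks. First, Lemma~\ref{fact_growing_ribbon} is not needed here: the proposition already starts from a \texttt{WorkRibb} configuration, so the ribbon has been set up and the extensor is dead; that lemma is used later, in Lemma~\ref{lemme_boucle}, not in this proposition. Second, your worry about ``choosing $n$ large enough'' slightly misreads the statement: $n$ is fixed in the proposition, and ``$\mathcal{M}$ recognises $w$ on the ribbon $w.\ffalse^n$'' already means the accepting computation stays within that finite ribbon, so the side condition of the one-step lemma is maintained by hypothesis rather than by padding. With those adjustments your plan is sound and matches the paper.
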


Let us finally describe what happens after the machine has reached the
accepting state.

\begin{lem}[Acceptation]\label{fact_acceptation}
  Let $w_1, w_2$ be two words. Then
  \begin{mathpar}
    {
    \begin{array}{l}
      \mactext{WorkRibb}(w_1,w_2)\fillhole{\mactext{TM}(q_A)}\\
      \mbox{}\qquad
    \Rar~\mactext{OldRibb}~|~ \mactext{TMStart}~|~ \amb{coin}{
      \inamb{ribbon\_left}.\inamb{cell}^{\length{w}}.\inamb{ext}}
    \end{array}}
\end{mathpar}
  \noindent where $w$ is the word used in the encoding of the machine.
\end{lem}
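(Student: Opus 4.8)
The plan is to witness the statement by an explicit finite reduction sequence, broken into four phases mirroring the informal description accompanying Figure~\ref{fig:TM}; since only $\Rar$ is asserted, we commit to one convenient reduction path and never need to argue about the (few) alternative, immediately-blocking choices the $\mactext{getout}$ code could make. Throughout, recall that the hole of $\mactext{WorkRibb}(w_1,w_2)\fillhole{\cdot}$ sits \emph{inside} the ambient $ribbon\_left$, nested under the $|w_1|$ cells of $\mactext{word}(w_1)$ and in parallel with $\mactext{word}(w_2)\fillhole{\mactext{ExtensorDead}}$, and that $\mactext{TM}(q_A)=\amb{TM}{\openamb{q_A}\mid\mactext{tmsoup}}$ with $\mactext{code}(q_A)=\;!\amb{q_A}{\amb{get\_out}{\nil}}$ a component of $\mactext{tmsoup}$.

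\emph{Phase 1} (trigger acceptance): using $!P\equiv\;!P\mid P$ to expose one copy of $\amb{q_A}{\amb{get\_out}{\nil}}$ and then the open rule, the prefix $\openamb{q_A}$ consumes it, so the machine ambient becomes $\amb{TM}{\amb{get\_out}{\nil}\mid\mactext{tmsoup}}$. \emph{Phase 2} (climb out of the cells): iterating the first summand $!\openamb{get\_out}.\outamb{cell}.\amb{get\_out}{\nil}$ of $\mactext{getout}$, each round opens $\amb{get\_out}{\nil}$, moves the whole $TM$ ambient out of one enclosing $cell$ (rule \trans{Red-Out}), and re-creates $\amb{get\_out}{\nil}$ inside $TM$; after $|w_1|$ rounds $TM$ lies directly inside $ribbon\_left$ and the two cell nestings have merged, so the ribbon is $\amb{ribbon\_left}{\mactext{cleaninst}\mid\mactext{word}(w_1w_2)\fillhole{\mactext{ExtensorDead}}\mid\amb{TM}{\amb{get\_out}{\nil}\mid\mactext{tmsoup}}}$.

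\emph{Phase 3} (leave the ribbon, spawn the helpers): the second summand $!\openamb{get\_out}.\outamb{ribbon\_left}.(\dots)$ of $\mactext{getout}$ now fires, opening the last $get\_out$, moving $TM$ out of $ribbon\_left$, and installing inside $TM$ the three processes $\amb{cleaner}{\outamb{TM}.\inamb{ribbon\_left}}$, $\amb{coin}{\outamb{TM}.\inamb{ribbon\_left}.\inamb{cell}^{\length{w}}.\inamb{ext}}$ and $\openamb{start}.\inamb{ribbon\_left}.\inamb{cell}.\openamb{q_{start}}$; we route $cleaner$ out of $TM$ and into $ribbon\_left$ (where it becomes $\amb{cleaner}{\nil}$), move $coin$ out of $TM$ and leave it at top level, and keep the $\openamb{start}.\dots$ process inside $TM$, so that, up to commutativity of $\mid$, the $TM$ ambient is now exactly $\mactext{TMStart}$ and the leftover $\amb{coin}{\inamb{ribbon\_left}.\inamb{cell}^{\length{w}}.\inamb{ext}}$ has been split off. \emph{Phase 4} (clean the ribbon): inside $ribbon\_left$, $\mactext{cleaninst}$ opens $\amb{cleaner}{\nil}$ and then $\amb{runclean}{\mactext{deadcleancode}}$, releasing $\mactext{deadcleancode}=\;!\openamb{\ffalse}\mid\;!\openamb{\ttrue}\mid\;!\openamb{cell}\mid\;!\openamb{wo}$; iterating these openings dissolves, from the outside in, every $cell$ ambient and every digit ambient of $\mactext{word}(w_1w_2)\fillhole{\cdot}$, while $ext$ — not among the opened names — is left untouched together with the inert $\mactext{deadextcode}$ it contains, and the $!\openamb{wo}$ openers that pile up have no matching ambient and stay idle.

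Collapsing the accumulated copies of $!\openamb{wo}$ via $!R\mid\;!R\equiv\;!R$ (derivable from the given laws: $!R\equiv\;!!R\equiv\;!(!R\mid R)\equiv\;!!R\mid\;!R\equiv\;!R\mid\;!R$) and tidying with associativity and commutativity of $\mid$, the contents of $ribbon\_left$ become structurally congruent to $\mactext{deadcleancode}\mid\mactext{ExtensorDead}$, i.e. the ribbon is now $\mactext{OldRibb}$, and the three surviving top-level components are precisely $\mactext{OldRibb}\mid\mactext{TMStart}\mid\amb{coin}{\inamb{ribbon\_left}.\inamb{cell}^{\length{w}}.\inamb{ext}}$, as required; every step used is a valid reduction by Table~\ref{t:red_rules} and the structural laws, so the concatenation gives the claimed $\Rar$. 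The proof is pure bookkeeping, and I expect the only real pitfalls to be: (i) keeping the direction of the in/out rules straight — it is always the \emph{enclosing} ambient ($TM$, $cleaner$ or $coin$) that moves, not the ambient whose process carries the capability; (ii) checking that the cleaning code opens exactly the data-structure ambients $cell,\ttrue,\ffalse,wo$ and nothing inside $\mactext{ExtensorDead}$; and (iii) the closing structural-congruence simplification of the residual soup of idle $!\openamb{wo}$ openers.
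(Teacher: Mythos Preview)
Your proof is correct and follows essentially the same four-phase decomposition as the paper's own proof (open $q_A$; climb out of the cells; exit $ribbon\_left$ and emit $cleaner$, $coin$, and the restart prefix; clean the ribbon). The only difference is cosmetic: you spell out the cleaning phase and the final structural-congruence simplifications explicitly, whereas the paper's last step is terser and implicitly relies on the ribbon-cleaning reduction already recorded in Lemma~\ref{fact_growing_ribbon}.
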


\proof
We distinguish four steps:
\begin{enumerate}[(1)]
\item When the $q_A$ ambient has been opened, the ambient  $get\_out$
  is liberated and is present within $TM$:
$$
\mactext{WorkRibb}(w_1,w_2)\fillhole{\mactext{TM}(q_A)}~\Rar~
\mactext{WorkRibb}(w_1,w_2)\fillhole{\mactext{TMGetout}}
$$
where $\mactext{TMGetout}$ is the term
\begin{mathpar}
  {\amb{TM}{\amb{get\_out}{\nil}~|~\mactext{code}(q_0)~|\dots
      |~\mactext{code}(q_n)~|~\mactext{tmsoup}}
    \enspace.}
\end{mathpar}
\item This allows the $TM$ ambient to get a $get\_out$ `token',
  execute the branch containing the $\outamb{cell}$, and, doing this,
  liberate a new $get\_out$ ambient:
$$
\mactext{WorkRibb}(w_1,w_2)\fillhole{\mactext{TMGetout}}
~\Rar~
\mactext{WorkRibb}(w_1^1\dots w_1^{r-1},w_1^r.w_2)\fillhole{\mactext{TMGetout}}
$$
Note that the other subterm starting with $\openamb{get\_out}$
could also have been triggered, leading to a blocked state. This is no
harm for us, since we want to establish the existence of an execution
where the machine exits the ribbon.  This way, $TM$ progresses
outwards until it is directly inside $ribbon\_left$.
\item Then $TM$ gets out of $ribbon\_left$, choosing the other branch
  of $\openamb{get\_out}$, which leads to the following state:
$$
\begin{array}{cl}
\mactext{WorkRibb}(\epsilon,w_1.w_2)\fillhole{\nil}~~|~~ TM [ & \amb{cleaner}{\outamb{TM}.
\inamb{ribbon\_left}}\\
&|~\amb{coin}{\outamb{TM}.\inamb{ribbon\_left}.\inamb{cell}^{\length{w}}.
\inamb{ext}}\\
&|~\mactext{code}(q_0)\,|\dots|\,\mactext{code}(q_n)\,|\,\mactext{tmsoup}]
\end{array}
$$
%%\daniel{
%%  le terme a gauche ci-dessus est bien equivalent a
%%  \texttt{WorkRibb}$(\epsilon,w_1.w_2)\fillhole{\nil}$? --- QUI
%%  C'EST QUI A LE COURAGE DE SE REPLONGER LA DEDANS POUR VERIFIER CA?
%% }
%% ETIENNE: OK, C'EST BON.

\item At this point, the ambient named $TM$ may liberate an ambient
  $cleaner$ that enters $ribbon\_left$ and starts the cleaning
  process.  $TM$ may also liberate the ambient $coin$ so that we
  exactly obtain the expected term.\qed
\end{enumerate}

\begin{rems}\hfill
\begin{enumerate}[$\bullet$]
\item As we already mentioned above, our encoding of the Turing
  Machine is at this point dependent from the word $w$ that we want it
  to recognize.
\item reason here using $\Rar$ transitions instead of deterministic
  reduction $\cred$: indeed, we are considering states where the
  machine has already recognized the word, and we only need to prove
  that \emph{there exists} some way back to its (\emph{exact}) initial
  state. This will be enough for the proof of undecidability in
  Section~\ref{sec_undecid}.
\end{enumerate}
\end{rems}

\subsection{Undecidability of Logical Equivalence}\label{sec_undecid}

We can now exploit the encoding we have studied to establish
undecidability of \eqL.

\begin{lem}[Loop lemma]\label{lemme_boucle}
  Given a Turing Machine $\mathcal{M}$ and a word $w$, define the
  following terms, given from the encoding of $\mathcal{M}$:
  \begin{mathpar}
    {
  \begin{array}{c}
  Q ~:=~~!\mactext{FrozenRibb}(w)~|~!\mactext{OldRibb}~|~
  !\openamb{msg}~|~!\outamb{cell}~|~\mactext{TMStart}\,,\\
  P_0 ~:=~~ Q~|~\mactext{GrowingRibb}(w)\quad\mbox{and}\quad
  P_1 ~:=~~ Q~|~\mactext{GrowingRibb}(w.\ffalse)\,.
  \end{array}
  }
\end{mathpar}
  Then $P_0~\Rar~P_1$. Conversely, $P_1~\Rar~P_0$ if and only if the
  word $w$ may be recognized on a finite (but sufficiently long)
  ribbon of the shape $w.\ffalse^N$, for some $N\in\N$, by the
  Turing Machine $\mathcal{M}$.
\end{lem}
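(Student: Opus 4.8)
The plan is to prove both directions by composing the simulation results already established.

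\medskip

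First I would handle the forward direction $P_0 \Rar P_1$. Recalling the definitions, $P_0 = Q \mid \mactext{GrowingRibb}(w)$ and $P_1 = Q \mid \mactext{GrowingRibb}(w.\ffalse)$, where $Q$ is a fixed ``context'' of replicated components and a starting machine. By the first item of Lemma~\ref{fact_growing_ribbon} (Ribbon evolution), we have $\mactext{GrowingRibb}(w) \Rar \mactext{GrowingRibb}(w.\ffalse)$ (this is $P_n \Rar P_{n+1}$ with $n=0$); since $\Rar$ is closed under parallel composition (rule \trans{Red-Par} plus reflexive-transitive closure), we get $P_0 \Rar P_1$ immediately, leaving $Q$ untouched.

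\medskip

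For the converse, I would argue as follows. Suppose $P_1 \Rar P_0$. The only component of $P_1$ that can make progress in an essential way is the growing ribbon (the components in $Q$ are either frozen, replicated-and-inert until triggered, or the starting machine waiting for a $start$ ambient). To return to $P_0$ — which, crucially, contains \mactext{GrowingRibb}(w) in its \emph{exact} original form — the ribbon must first stop growing and become a work ribbon, at which point the $msg$ ambient is emitted (second item of Lemma~\ref{fact_growing_ribbon}); the $!\openamb{msg}$ and $\mactext{TMStart}$ components in $Q$ then react, the machine enters the ribbon, and runs. By Proposition~\ref{prop_simulation}, the machine reaches $q_A$ on the ribbon $w.\ffalse^N$ (for the $N$ actually reached during growth) if and only if $\mathcal M$ recognizes $w$ on that ribbon. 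If and only if recognition occurs, Lemma~\ref{fact_acceptation} lets the accepting machine exit, the $cleaner$ triggers $\mactext{cleaninst}$, and by Lemma~\ref{fact_growing_ribbon} (last statement) the work ribbon reduces to $\mactext{OldRibb}$; simultaneously a fresh $\mactext{TMStart}$ and a $coin$ ambient addressed to $ext$ are regenerated. The $coin$ ambient travels back into (a re-created copy of) the ribbon structure and reinitializes the extensor, so that, together with the replicated $!\mactext{FrozenRibb}(w)$, $!\mactext{OldRibb}$, $!\openamb{msg}$, $!\outamb{cell}$ components, the global term is structurally congruent to $P_0 = Q \mid \mactext{GrowingRibb}(w)$. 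Conversely, if $\mathcal M$ never recognizes $w$ on any finite ribbon $w.\ffalse^N$, then after the machine is launched it never reaches $q_A$, so no $cleaner$ is ever produced, the ribbon is consumed (its $cell$ nesting is destroyed during the machine run), and there is no way to regenerate a \mactext{GrowingRibb}(w) in parallel with an untouched $Q$; hence $P_0$ is not reachable.

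\medskip

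The main obstacle I anticipate is the careful accounting in the converse direction: one must verify that the garbage-collection machinery (the $\mactext{cleaninst}$/$\mactext{deadcleancode}$ components, the $!\amb{coin}{\dots}$ absorbers in $\mactext{code}(q)$, and the $!\openamb{wo}$, $!\outamb{cell}$ components) removes \emph{exactly} the dead code, with no residue, so that the final term is structurally congruent — not merely barbed-congruent or logically equivalent — to $P_0$, and that the regenerated $\mactext{GrowingRibb}(w)$ is literally the original one modulo $\equiv$. This is exactly the ``demanding condition'' flagged in the introduction, and it is where the precise choice of macros in Figure~\ref{fig:TM} and Section~\ref{sec:ribbons} (in particular the $\inamb{cell}^{\length w}$ and the regeneration code on the last lines of $\mactext{getout}$ and $\mactext{TMStart}$) is used. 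I would carry this out by tracing the reduction paths step by step, invoking Lemmas~\ref{macrosfact}, \ref{fact_acceptation} and~\ref{fact_growing_ribbon} as black boxes for the local evolutions, and only checking by inspection that the parallel composition of the leftovers collapses to $\nil$ via the $\equiv$ and $\etarew$ laws where needed.
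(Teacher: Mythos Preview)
Your approach is essentially the paper's: the forward direction via Lemma~\ref{fact_growing_ribbon}, and the converse by chaining ribbon growth, Proposition~\ref{prop_simulation}, Lemma~\ref{fact_acceptation}, and the cleanup/regeneration back to $P_0$. The overall architecture is right, and you correctly flag the exact-structural-congruence bookkeeping as the delicate point.

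There is, however, a genuine gap in your negative direction. You write ``the only component of $P_1$ that can make progress in an essential way is the growing ribbon'' and then tacitly assume the machine enters \emph{that} ribbon once the $start$ signal is produced. But $Q$ contains $!\mactext{FrozenRibb}(w)$ and $!\mactext{OldRibb}$, each of which is a $ribbon\_left$ ambient; after $\openamb{start}$, the $TM$ ambient performs $\inamb{ribbon\_left}$ and may enter any of them. The paper handles this with an explicit three-way case split: entering an old ribbon blocks on $\inamb{cell}$; entering the work ribbon forces the simulation (Proposition~\ref{prop_simulation}), which either loops forever or runs off the end of the finite ribbon and blocks; entering a frozen ribbon is argued similarly. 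Without this case analysis your argument does not exclude all reduction paths from $P_1$, so you have not shown $P_0$ is unreachable.

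A second, smaller point: your claim that ``the ribbon is consumed (its $cell$ nesting is destroyed during the machine run)'' is incorrect. The $cell$ ambients persist throughout the simulation; they are only dissolved by $\mactext{deadcleancode}$ after the $cleaner$ is released, which never happens in the non-accepting case. The reason $P_0$ is unreachable is not that the ribbon is destroyed, but that the $TM$ ambient remains trapped inside (or stuck), so a fresh $\mactext{GrowingRibb}(w)$ alongside an untouched $\mactext{TMStart}$ can never be reconstituted.
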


\begin{proof}
  The transition $P_0~\Rar~P_1$ follows from Lemma~\ref{fact_growing_ribbon}.
  
  Let us then first assume that $w$ can be recognized on a ribbon of
  the form $w.\ffalse^N$, that is, $w$ followed by an arbitrary number
  of \ffalse{} digits.  Then from Lemma~\ref{fact_growing_ribbon}, we
  can obtain the corresponding extension of the ribbon from state
  $P_1$, i.e.  exhibit a transition $P_1~\Rar~
  Q~|~\mactext{WorkRibb}(w.\ffalse^N,\epsilon)\fillhole{\nil}~|~
  \amb{start}{\inamb{TM}}$.  At this point, the ambient $start$ can
  enter $TM$ and allow it to get into the work ribbon. Then, using the
  simulation result (Proposition~\ref{prop_simulation}), we know that
  the Turing Machine reaches the acceptation state (this result is
  obtained by induction over the length of $w$). At this point,
  according to Lemma~\ref{fact_acceptation}, the work ribbon is
  transformed into an old ribbon (collected by the corresponding
  replicated term in $Q$), the Turing Machine comes out of the ribbon,
  and waits for a start signal. The liberated $coin$ ambient may
  progress inside a frozen ribbon (containing word $w$ by definition
  of $Q$ above) until it reaches the frozen extensor and wakes it up.
  We then exactly obtain $P_0$.
  
  Now let us assume that $w$ cannot be recognized on any ribbon. As
  $Q$ is blocked (in particular, \texttt{TMStart} is waiting for an
  ambient $start$ to enter $TM$), the first reducts of $P_1$ are of
  the form $Q~|~\amb{ribbon\_left}{R}$, where
  $\mactext{GrowingRibb}(w.\ffalse) ~\Rar ~\amb{ribbon\_left}{R}$.
  If a reduction chain from $P_1$ to $P_0$ can be found, then by
  Lemma~\ref{fact_growing_ribbon} there exists an integer $n$ such that
  $$P_1~~\Rar~
  \underbrace{Q~|~\mactext{WorkRibb}(w.\ffalse^n)\fillhole{\nil}~|
    ~\amb{start}{\inamb{TM}}}_T~~\Rar~ P_0\,.$$
  In term $T$ the \texttt{WorkRibb} is blocked, so the only
  evolution can come from the machine entering a ribbon. We
  distinguish three cases according to the kind of ribbon which is
  entered by the machine:
  \begin{enumerate}[(1)]
  \item If it gets into an old ribbon, there can be no more reduction,
    as the $TM$ is stuck on an $\inamb{cell}$ action.  
  \item If it gets into the work ribbon, according to
    Proposition~\ref{prop_simulation}, there is a unique way to
    evolve, through simulation of the machine.  At this point, the
    machine may have an infinite computation on the finite ribbon,
    never reaching accepting state: this means that it will not get
    out of the ribbon, which prevents the system to evolve into $P_0$.
    Alternatively, the machine may try to use more ribbon than what
    has been created before evolution from \texttt{GrowingRibb} into
    \texttt{WorkRibb}, and the machine is stuck. So in any case,
    state $P_0$ cannot be reached.
  \item We reason similarly in the case where the machine enters a
    frozen ribbon.
  \end{enumerate}
  Finally, we have that state $P_0$ is unreachable if word $w$ cannot
  be recognised by the machine on a ribbon of the form $w.\ffalse^N$
  for some $N$, which concludes the proof.
\end{proof}

\begin{thm}[Undecidability of $\eqL$]\label{theorem_undecidability}
  $\eqL$ is an undecidable relation on \MA.
\end{thm}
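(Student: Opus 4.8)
The plan is to reduce the acceptance problem for Turing machines to an instance of logical equivalence, using the Loop Lemma (Lemma~\ref{lemme_boucle}) as the bridge, and then invoking the coincidence $\eqL = \bisMOD$ on all of \MA{} (Corollary~\ref{c:chara}). Fix a Turing machine $\mathcal{M}$ and a word $w$, and let $P_0,P_1$ be the processes obtained from the encoding of $\mathcal{M}$ as in Lemma~\ref{lemme_boucle}, so that $P_0~\Rar~P_1$ holds unconditionally while $P_1~\Rar~P_0$ holds iff $\mathcal{M}$ recognises $w$ on a ribbon of the form $w.\ffalse^N$. Pick any name $n$ and consider the two \MA{} processes $\openamb n . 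P_0$ and $\openamb n . P_1$. I claim
\[
\openamb n . P_0 ~\eqL~ \openamb n . P_1 \qquad\text{iff}\qquad \mathcal{M}\text{ recognises }w \,,
\]
which, since the latter is undecidable, gives undecidability of $\eqL$ on \MA. (Note that $\openamb n.P_0$ and $\openamb n.P_1$ put a replication — inherited from $P_0,P_1$ — underneath a capability, so they lie outside \MAIFsyn; this is what reconciles the present statement with the decidability of $\eqL$ on \MAIFsyn{} obtained from Corollary~\ref{c:bisSTR}.)

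To prove the claim, I would first pass from $\eqL$ to $\bisMOD$ via Corollary~\ref{c:chara}, and then apply the inversion result for prefixed processes (Lemma~\ref{l:inversion}, case~5). Since $\openamb n$ is an \textsf{open} capability, the relation $\Rcapar{\capa}$ occurring in that clause specialises to $\Rar$ (Definition~\ref{d:statt}); hence $\openamb n.P_0~\bisMOD~\openamb n.P_1$ holds exactly when both $P_0~\Rar\bisMOD~P_1$ and $P_1~\Rar\bisMOD~P_0$. The first conjunct is automatic: $P_0~\Rar~P_1$ by Lemma~\ref{lemme_boucle} and $P_1~\bisMOD~P_1$ by reflexivity of $\bisMOD$ (Lemma~\ref{l:MODeq}). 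So the claim reduces to
\[
P_1~\Rar\bisMOD~P_0 \qquad\text{iff}\qquad \mathcal{M}\text{ recognises }w \,,
\]
that is, to the statement that \emph{some} reduct of $P_1$ is intensionally bisimilar to $P_0$ precisely when $\mathcal{M}$ halts. The direction ``$\Leftarrow$'' is immediate from Lemma~\ref{lemme_boucle}: when $\mathcal{M}$ recognises $w$ we have $P_1~\Rar~P_0~\bisMOD~P_0$.

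The direction ``$\Rightarrow$'' is the main obstacle, because Lemma~\ref{lemme_boucle} only tells us that $P_0$ \emph{itself} is unreachable from $P_1$ in the non-halting case, whereas here I must rule out that any reduct of $P_1$ is $\bisMOD$-equivalent to $P_0$. This is where the care taken in designing the encoding pays off: by the case analysis already carried out in the proof of Lemma~\ref{lemme_boucle} (together with Lemmas~\ref{fact_growing_ribbon}, \ref{prop_simulation} and \ref{fact_acceptation}), every reduct of $P_1$ is, up to $\equiv$, the parallel composition of the fixed part $Q$ with exactly one of a small repertoire of shapes — a growing ribbon $\mactext{GrowingRibb}(w.\ffalse^{k})$ (possibly with a leftover $\amb{msg}{\cdots}$), a work ribbon together with a Turing-machine configuration $\mactext{TM}(q)$, an old ribbon together with $\mactext{TMStart}$ and a travelling $coin$ ambient, or an immediately blocked state. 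For each such shape other than $P_0$ I would exhibit an invariant preserved by $\bisMOD$ that separates it from $P_0$: the depth degree $\dd$ (Corollary~\ref{c:dsdd}) distinguishes ribbons with strictly more than $\length w$ cells; the presence of a dead or frozen extensor instead of $\mactext{ExtensorAlive}$, or of displaced top-level ambients (a floating $coin$, a $TM$ outside the ribbon), is detected through the parallel-composition and ambient clauses \reff{cMOD:par}, \reff{cMOD:amb} of Definition~\ref{d:bisMOD}; and blocked states are separated by their reduction behaviour via clause~\reff{c:tau}. Consequently, in the non-halting case $P_1~\not\!\Rar\bisMOD~P_0$, which establishes the equivalence and hence the theorem. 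The only laborious part of this programme is the bookkeeping of the reachable shapes of $P_1$, which is a routine refinement of the analysis performed for Lemma~\ref{lemme_boucle}.
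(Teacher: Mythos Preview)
Your overall architecture matches the paper's: pass from $\eqL$ to $\bisMOD$ via Corollary~\ref{c:chara}, invert the $\openamb n$ prefix using Lemma~\ref{l:inversion}(5), and reduce to the two conditions $P_0~\Rar\bisMOD~P_1$ and $P_1~\Rar\bisMOD~P_0$. Where you diverge is in the treatment of the second condition, and this is where you miss the key shortcut the paper takes.

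You correctly note that $\openamb n.P_0$ and $\openamb n.P_1$ lie outside \MAIFsyn. But $P_0$, $P_1$ themselves, and all their reducts, \emph{do} lie in \MAIFsyn\ (this is precisely the ``$\Pbb$ encoding'' lemma in Section~\ref{sec:TMencod}). On \MAIFsyn, Theorem~\ref{t:bisSTR} (and Corollary~\ref{c:bisSTR}) gives $\bisMOD\,=\,\equivE$; and since the Turing-machine encoding uses no I/O primitives, $\equivE$ is just $\equiv$ on these terms. Hence on the processes in play, $\Rar\bisMOD$ collapses to $\Rar$, and the condition $P_1~\Rar\bisMOD~P_0$ is literally $P_1~\Rar~P_0$, which is exactly what Lemma~\ref{lemme_boucle} characterises. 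No case analysis on reachable shapes is needed.

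Your proposed case analysis is therefore unnecessary, and as written it is also shaky. It amounts to reproving by hand, for the specific processes arising in the encoding, a fragment of Theorem~\ref{t:bisSTR}. The depth-degree argument in particular is fragile: $\dd(P_0)$ is the maximum of $\dd(Q)$ and the depth of the growing ribbon, so it distinguishes a reduct with a strictly longer ribbon only when the ribbon depth already dominates the depth of $Q$ (which contains all of $\mactext{tmsoup}$ and the replicated state codes). You would need to argue this carefully, and then still handle all the other shapes. The paper's route via Theorem~\ref{t:bisSTR} bypasses all of this in one line.
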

\begin{proof}
  Let us first note that the decidability of $\eqL$ over $\MAIF$ is a
  consequence of its inductive characterisation $\bigind$
  (Definition~\ref{d:bigind}) together with the image finitess
  hypothesis of $\MAIF$.
  
  Consider processes $P_0$ and $P_1$ from Lemma~\ref{lemme_boucle}.
  We show that the problem of deciding whether one can prove
  $\openamb{n}.P_0~\eqL~\openamb{n}.P_1$ is equivalent to deciding
  whether $P_0\Rar P_1\Rar P_0$. This will be enough, by
  Lemma~\ref{lemme_boucle}, to obtain the undecidability of $\eqL$.
  
%   Let us first assume that $\openamb{n}.P_0~\eqL~\openamb{n}.P_1$. As
%   $P_0$ is in $\Pbb\subset\Pb$, we may find a characteristic formula
%   $\F_{P_0}$ for $P_0$. Then, since
%   $\openamb{n}.P_0\sat\Fmeta{\openamb{n}}.\F_{P_0}$, also
%   $\openamb{n}.P_1\sat\Fmeta{\openamb{n}}.\F_{P_0}$. The
%   interpretation of $\Fmeta{\openamb{n}}$ gives us that there exists
%   $P_1'$ such that $P_1 \Rar P_1'\, \bisMOD\, P_0$.  Now by big step
%   definition of $\bisMOD$, $P_1'$ and $P_0$ have the same active
%   contexts. This entails in particular that $P_1'$ exhibits a growing
%   ribbon $\mactext{GrowingRibb}(w)$ at toplevel (indeed, such a term
%   has a --discrinating-- active context of the form
%   $\amb{ribbon\_left}{\C_1\,|\,\amb{w^1}{\dots
%       \amb{w^r}{\amb{ext}{\amb{coin}{\nil}~|~\C_2}}\dots}}$ for some
%   active contexts $\C_1$ and $\C_2$). As in $P_1$ the growing ribbon
%   is bigger ($w.\ffalse$), this is only possible if a growing ribbon
%   has been erased by the machine and a new one has been created, that
%   is the word has been recognized by the machine.  Using
%   Lemma~\ref{lemme_boucle}, this proves that $P_1\,\Rar\, P_0$, and
%   the first implication is established (since $P_0\,\Rar\, P_1$ is
%   obvious).
  
%   Suppose now that $P_0\,\Rar\, P_1\,\Rar\, P_0$. Then, by definition
%   of intensional equivalence,
%   $\openamb{n}.P_0~\bisMOD~\openamb{n}.P_1$, and the converse
%   implication holds by correction (Thm.~\ref{correction}).
% \end{proof}

% \textsl{ALTERNATIVE PROOF:}

% \begin{proof}
  Let us prove now the undecidability of $\eqL$ on $\MA$.  Consider
  processes $P_0$ and $P_1 $ of Lemma~\ref{lemme_boucle}.  These
  processes are in $\Pbb$. Using Corollary~\ref{c:chara}, the
  definition of $\bisMOD$, and Theorem~\ref{t:bisSTR}, we have:
  \begin{mathpar}
    {
  \begin{array}{rcl}
\openamb{n}.P_0~\eqL~\openamb{n}.P_1 & \mbox{iff}&
\openamb{n}.P_0~\bisMOD~\openamb{n}.P_1\\
 &\mbox{iff} &
P_0~\Rar\bisMOD~P_1~\Rar\bisMOD~P_0
  \end{array}}
\end{mathpar}
(from Theorem~\ref{t:bisSTR}, $\Rar\bisMOD$ is $\Rar$ on $\Pbb$).

The first equivalence follows from soundness and completeness
(Theorems~\ref{t:soundnessLogBis} and \ref{t:completeness}). The
second is the definition of $\bisMOD$. Since on $\Pbb$ $\bisMOD =
\equiv$, the last condition is simply the loop condition, and
undecidability follows from Lemma~\ref{lemme_boucle}.
\end{proof}

%%% Local Variables: 
%%% mode: latex
%%% TeX-master: "hls2"
%%% Local IspellDict: british
%%% Local IspellPersDict: ./.ispell
%%% End: 

%\newpage

\section{Conclusions and future work}\label{s:concl}

In this paper we have presented a number of 
 characterisations of logical equivalence, including a
coinductive characterisation by means of intensional bisimilarity,
$\bisMOD$, and 
an inductive characterisation based on inversion results
for $\bisMOD$.
These characterisation results
are established  on the MA calculus in which
 terms need not be image-finite, and with respect to a
finitary  logic.   We are not aware of other results of this kind.
(Characterisation results for a bisimilarity with respect to a modal
logic in the literature  rely either 
on an  image-finiteness
hypothesis for the terms of the language, or  on the presence of some
infinitary constructs in the
syntax of the logic.)

%characterisation results for a bisimilarity with respect to a modal
%logic in the literature (precisely, the completeness part of the
%characterisations)  rely either  
%on an  image-finiteness
%hypothesis for the terms of the language, or  on the presence of some
%infinitary constructs (such as infinitary conjunctions) in the
%syntax of the logic.
We have compared logical equivalence with
 barbed congruence, showing that the latter is 
strictly coarser, and  with
structural congruence, showing that the two relations
are ``almost the same''
in the (Turing-complete) calculus \MAIFsyn 
 (the two relations coincide on the synchronous version of
\MAIFsyn, whereas an additional eta-law has to be added to
structural congruence in the asynchronous calculus).
% The comparisons therefore reveal that, 
%  surprisingly,  the logic, 
% albeit lacking operators for  capabilities, 
% communications,  and replication,
%  allows us to observe the internal 
%  structure of the processes at a very  fine-grained detail.
A spin-off of this study is a general
  better understanding of behavioural
equivalences in Ambient-like calculi. 
For instance,   we have shown that 
behavioural  equivalences can be
 insensitive to
{stuttering} phenomena originated by processes that may
repeatedly   enter and exit an ambient. 
Finally, we have proved
that logical equivalence, although decidable on \MAIFsyn, it  is not decidable on the whole MA calculus.

% \DSs{say something on the formulas in 4.3?}

% \daniel{I don't really know what to say. in particular, I do not see
%   whether something can be said about possible extensions or other
%   usages of the approach we use in section 4.3}

\medskip

We discuss below a few possible extensions of our work. 
On the logic side, 
 other logical connectives 
could be added without changing
our results,  as long as formulas expressing capabilities and
replication can still be derived. 
We believe this holds in particular  for  
the  `somewhere'
modality~\cite{Cardelli::Gordon::AnytimeAnywhere}, and for fresh
quantification~\cite{gabbay:pitts:fac02}.
% (see~\cite{Part1} for a further
%discussion on this). 

In our work,
 we have
interpreted the `sometimes' modality ($\sometime \AAA$) in a weak sense,
which makes intensional bisimilarity a weak form of  bisimilarity.  We believe
that under a strong interpretation of the modality the result
corresponding to Theorem~\ref{t:bisSTR} can be derived in a much
simpler way, especially  because stuttering does not show up.

On the calculus side, 
 a first variation could be the introduction
 of  a general recursion
scheme instead of replication. This would make it possible to express
recursion `in depth', and not only `in width', as with replication.
Our proofs do not obviously carry over  to this setting, mainly due to the
fact that the sequential degree of a process may then be infinite, and
we would lack a measure to reason by induction.

Another interesting extension is the  addition of  name restriction $(\nu n)P$ to
the calculus.  Including restriction naturally implies to add its
logical counterpart, name revelation ($n\reveal \AAA$,
see~\cite{Cardelli::Gordon::NameRestriction::01}) to the logic. Our
results can be extended to this setting on the finite calculus, and on
infinite processes with only finitely many restricted names, but we do
not know how to extend them to richer calculi. For instance, the proof
of completeness cannot be directly adapted to the extension with name
restriction in the general case.
The possibility of generating infinitely many fresh names breaks
Lemma~\ref{l:finite}, intuitively because infinitely many frozen
subterms can appear as outcomes of a given term.
For the same reason, we think that our approach to obtain completeness
in absence of an image-finiteness hypothesis cannot be adapted to the
$\pi$-calculus, where infinitely many names can be generated.
However, our results for the \MAIFsyn{} fragment, in particular
Theorem~\ref{t:bisSTR} ($\eqL=\,\equiv$), still hold in presence of
name restriction.

In the paper we have considered only communications of basic names.
Certain presentation of the MA calculus also include operators for
communication  of capabilities. 
We believe that such communications 
could be added with mild modifications to the proofs.

\end{document}